%% file: main.tex
\documentclass[11pt,a4paper]{article}
\usepackage[T1]{fontenc}
\usepackage{lmodern}
\usepackage[margin=1in]{geometry}
\usepackage{amsmath,amssymb,amsthm,mathtools}
\usepackage{array,booktabs,tabularx,longtable,enumitem}
\usepackage{graphicx}
\usepackage{tikz}
\usetikzlibrary{arrows.meta}
\usepackage{placeins}
\usepackage{etoolbox}
\usepackage[title]{appendix}
\usepackage[numbers,sort&compress]{natbib}
\setcitestyle{numbers,square}
\usepackage{microtype}
\usepackage[hidelinks]{hyperref}
\usepackage[nameinlink,capitalise,noabbrev]{cleveref}
\input{macros}

\newtheoremstyle{compactone}{7pt plus2pt minus1pt}{7pt plus2pt minus1pt}
 {\normalfont\itshape}{0pt}{\normalfont\bfseries}{}{.5em}{}
\newtheoremstyle{compacttwo}{7pt plus2pt minus1pt}{7pt plus2pt minus1pt}
 {\normalfont}{0pt}{\normalfont\itshape}{}{.5em}{}
\newtheoremstyle{compactthree}{7pt plus2pt minus1pt}{7pt plus2pt minus1pt}
 {\normalfont}{0pt}{\normalfont\bfseries}{}{.5em}{}
\theoremstyle{compactone}
\newtheorem{theorem}{Theorem}[section]
\newtheorem{introtheorem}{Theorem}

\newtheorem{lemma}[theorem]{Lemma}
\newtheorem{proposition}[theorem]{Proposition}
\newtheorem{corollary}[theorem]{Corollary}
\theoremstyle{compactthree}
\newtheorem{definition}[theorem]{Definition}
\newtheorem{example}[theorem]{Example}
\theoremstyle{compacttwo}
\newtheorem{remark}[theorem]{Remark}
\crefname{example}{Example}{Examples}
\Crefname{example}{Example}{Examples}

\newif\ifshownew
\shownewfalse
\NewDocumentEnvironment{newresult}{O{New}}{}{}
\NewDocumentEnvironment{reframed}{O{Reframed}}{}{}
\newcommand{\newinline}[1]{#1}

\AtBeginEnvironment{appendices}{%
 \setcounter{equation}{0}\setcounter{figure}{0}\setcounter{table}{0}%
}
\allowdisplaybreaks
\hypersetup{
 bookmarksdepth=2,
 pdftitle={Random Clocks on Computation Paths: How Far Can They Reach, and When Does Stepping Back Help?},
 pdfauthor={Farzad Didehvar and Saman Farhat},
 pdfsubject={Observation complexity, crossing probabilities, and stochastic accessibility},
 pdfkeywords={random clocks, crossing probabilities, observation complexity, stochastic accessibility, sparse probability laws, Markov chains}
}

\title{Random Clocks on Computation Paths: How Far Can They Reach, and When Does Stepping Back Help?}
\author{Farzad Didehvar\textsuperscript{1}, Saman Farhat\textsuperscript{1}\\[0.6em]
 \normalsize\textsuperscript{1}Amirkabir University of Technology\\
 \normalsize Tehran, Iran\\[0.5em]
 \small\href{mailto:didehvar@aut.ac.ir}{didehvar@aut.ac.ir}\\
 \small\href{mailto:saman.farhat@aut.ac.ir}{saman.farhat@aut.ac.ir}}
\date{}
\begin{document}
\maketitle
\begin{abstract}
We watch a computation only at moments chosen by a random clock that can leap ahead, pause, or step back, and we count looks rather than steps. How far can such a clock reliably see with polynomially many looks, and does stepping back help? Rare long leaps reach exponential and doubly exponential time, with infinitely many levels between polynomial and exponential time. A clock that tends to fall back sees only the start of the input, and below polynomial time we find infinitely many incomparable levels. Stepping back can help decisively. For clocks whose long leaps occur only at rare positions, no forward-only clock goes beyond polynomial time, while even single steps back raise the reach to exponential time. Stepping back cannot help, however, when the clock’s leaps ignore its position.
\end{abstract}
\noindent\textbf{Keywords:} random clocks; crossing probabilities; observation complexity; stochastic accessibility; sparse probability laws; Markov chains

\input{intro}
\input{model}
\input{polyregimes}
\input{negdrift}
\input{logtails}
\input{backward}
\input{discussion}

\section*{Statements and Declarations}
\noindent\textbf{Funding.} No specific funding was received for this research.

\noindent\textbf{Competing interests.} The authors declare no competing interests.

\noindent\textbf{Data availability.} This article is theoretical. No empirical datasets were generated or analysed. The mathematical constructions and proofs are provided in the article.

\noindent\textbf{Use of LLM.} The authors occasionally used LLM under their guidance to assist with developing the proofs, editing, and formatting. The authors critically reviewed, revised, and verified the final content and take full responsibility for the manuscript.
\label{end:maintext}
\clearpage
\begin{appendices}
\crefalias{section}{appendix}
\crefalias{subsection}{appendix}
\crefname{appendix}{Appendix}{Appendices}
\Crefname{appendix}{Appendix}{Appendices}
\input{app_zero_drift}
\input{app_antichain}
\input{app_effectiveness}
\input{app_moment}
\input{app_calibration}
\input{fuzzy}
\end{appendices}
\bibliography{references}
\end{document}

%% file: macros.tex
\newcommand{\N}{\mathbb N}
\newcommand{\Z}{\mathbb Z}
\newcommand{\E}{\mathbb E}
\newcommand{\Ptime}{\mathbf P}
\newcommand{\EXP}{\mathbf{EXP}}
\newcommand{\RDEF}{\mathbf{RDEF}}

\newcommand{\mue}{\mu_{\mathrm E}}
\newcommand{\CC}{\mathcal C}
\newcommand{\one}{\mathbf 1}
\DeclareMathOperator{\Geom}{Geom}
\DeclareMathOperator{\DTIME}{DTIME}
\newcommand{\REC}{\mathbf{REC}}
\newcommand{\RE}{\mathbf{RE}}

%% file: intro.tex
\section{Introduction}
\label{sec:introduction}

\subsection{The object, the motivation, and the question}
\label{sec:opening-motivation}

\paragraph{The object.}
A deterministic machine running on an input $x$ produces a path of
configurations $C^x_0,C^x_1,C^x_2,\dots$, one per step. We do not follow this
path step by step. Instead, an observer looks at it only at indices chosen by
a random clock. At each tick the clock may jump forward, step back, or stay
where it is, and the observer sees the configuration at the clock's current
index (\cref{fig:observation}). Once the machine accepts, its accepting
configuration repeats forever, so the input is recognised as soon as the clock
reaches or exceeds the first accepting index.

\begin{figure}[ht]
\centering
\begin{tikzpicture}[>=Stealth,font=\small,x=1.1cm,y=1cm]
 \fill[gray!15] (5,-.4) rectangle (8,.45);
 \draw[->] (0,0)--(8.3,0);
 \foreach \x/\lab in {0/0,1/1,2/2,3/{\cdots},5/{A},6/{A+1},7/{\cdots}}
  {\fill (\x,0) circle (1.6pt); \node[below=5pt] at (\x,0) {$\lab$};}
 \node[above=4pt] at (6.7,.85) {accepting tail $[A,\infty)$};
 \draw[->,thick] (1,.3) to[bend left=32] node[above] {forward jump} (6,.3);
 \draw[->,thick] (6,-.75) to[bend left=27] node[below] {step back} (5,-.75);
 \draw[->] (2,.15) to[out=110,in=70,looseness=7] (2.12,.15);
 \node at (2.05,.65) {hold};
 \node[draw,rounded corners,align=center] (clock) at (1.6,2.2) {random clock\\$X_t$};
 \node[draw,rounded corners,align=center] (obs) at (6.3,2.2) {observation\\$C^x_{X_t}$};
 \draw[->] (clock)--node[above] {tick $t$} (obs);
 \draw[->] (obs.south)--(6.3,1.6);
 \node[align=center] at (3.9,-1.6) {computation path $C^x_0,C^x_1,C^x_2,\ldots$};
\end{tikzpicture}
\caption{A tick updates the clock index $X_t$ and supplies the observation
$C^x_{X_t}$. Crossing means $X_t\ge A$ at some observed tick; persistent
acceptance makes that observation sufficient.}
\label{fig:observation}
\end{figure}
\FloatBarrier

\paragraph{Why skip steps?}
Computations are often reused rather than carried out from scratch. A program
can store a computed value and look it up when it is needed
again~\cite{Michie1968}. Studies of chess expertise likewise illustrate the
role of learned perceptual patterns and their recall~\cite{ChaseSimon1973}.
Language models provide another contrast: a prompt may elicit written
intermediate steps, or an answer directly~\cite{Wei2022}. These examples
motivate asking what is gained when access to a later state replaces
inspection of the intervening states. The extent of a shortcut may vary;
we represent this variation by a random clock, whose jump distribution may
also depend on the current position. We idealise access: a shortcut lands
on the true configuration. Uncertainty concerns how far one jumps, not
whether the result is correct. The examples motivate the abstraction;
they do not supply its configuration-access mechanism.

\paragraph{Why look back?}
A saved earlier state can be revisited, making another route forward
available. In our clocks a backward step returns to an earlier index; it does
not reverse the deterministic computation. It can return the observer to a
position from which a long jump starts. Backtracking in a reasoning trace is
an informal analogue. In the constructions of \cref{sec:backward}, large
forward jumps are available only at specified gate positions. A forward jump
can overshoot the next gate, while a step back by one can recover it. Within
the prescribed families, these returns give crossing rates that no permitted
forward-only clock achieves. For independent proposals, deleting backward
jumps never reduces reach (\cref{prop:deletion}).

\paragraph{Why count looks?}
Classical running time charges each executed transition. To measure access
along a fixed computation path, the \emph{jumpy observation model} instead
charges one unit per observed configuration, and nothing for producing
configurations, sampling the clock, or writing indices. This separates the
logical length of a computation from the number of observations needed to
recognise its outcome. Within this abstraction, threshold-crossing
probabilities determine the observation budget. The resulting classes
describe an observation resource, without asserting an ordinary machine
speedup or a physical implementation.

\paragraph{The question.}
The clock is fixed in advance, independently of the input and the machine;
a position-dependent clock uses the current index, not the contents of the
configuration. This restriction matters: an externally prescribed schedule
$X_t=2^t$ reaches index $2^{p(n)}$ after $p(n)$ observations, placing the
access resource in the schedule itself. For a fixed random clock, the
probability of reaching a threshold within $q$ ticks determines the languages
recognisable at confidence $\alpha$ (\cref{lem:reachability}). We ask which
jump laws and position-dependent kernels let polynomially many observations
recognise which languages, and when backward access enlarges this power
beyond every comparable forward-only clock in a specified family.

\paragraph{Background.}
Didehvar developed the programme of \emph{fuzzy-time computation} to
reconsider the role of uncertain time in computation~\cite{didehvar2023tc,didehvar2026tc}.
It motivates asking how access to successive configurations affects what
an observer can recognise within a given budget. The present paper is a
further attempt to formalise these ideas. As a first step in this formulation,
we study one perspective: stochastic access to a fixed deterministic
computation path and the resulting observation complexity. A random clock
selects indices by moving forward, backward, or remaining in place. This
perspective separates the computation's logical length from the observations
needed to reach its accepting tail.
Clocks with independent proposals are held random walks with one fixed law
$\mu$, called \emph{fuzzy clocks} in \cref{def:kt-walk}; we also consider
position-dependent kernels. For a fixed clock, ``fuzzy'' concerns probabilistic
uncertainty in the observed time index, rather than fuzzy-set membership.
Probability specifies that clock's motion. Possibility cuts~\cite{Zadeh1978}
express incomplete specification by selecting allowed clock parameters;
\cref{sec:fuzzy} studies observation budgets common to all the permitted
parameters.

\subsection{A gate that a forward jump misses}
The doubling gates of \cref{def:gated} illustrate the accessibility question.
At a gate $g=2^k$, the clock moves to $2g+1$, just beyond the next gate
$2g$. A return by one reaches that gate, where the next large jump is
available. A forward-only clock cannot revisit the gate just passed.
If its forward displacement away from gates is bounded by a fixed integer
$w$, accessing another gate requires travel using those bounded steps. The proof
of \cref{thm:gated} charges each gate's gain to this intervening travel
and bounds the reach of every forward-only member by a linear function
of the tick budget. Length-one returns instead allow logarithmic crossing
budgets. Squaring gates give the corresponding polynomial forward-only
bound and doubly logarithmic budget (\cref{thm:sq-gated}).

This comparison concerns all forward-only kernels within the prescribed
gate families. The parity clock makes a weaker comparison: deletion
changes its class from exponential to polynomial time, but replacing
returns with forward re-entry also restores exponential time
(\cref{rem:forward-reentry}). Independent-proposal clocks satisfy a
deletion obstruction: suppressing negative proposals never decreases
their reach (\cref{prop:deletion}).

\subsection{Main results}
For a clock started at zero and a fixed confidence $0<\alpha<1$, let
$R_\alpha(q)$ be the greatest index crossed by tick $q$ with probability
at least $\alpha$. Persistent acceptance gives the reachability equivalence
of \cref{lem:reachability}: a recogniser belongs to the class exactly when
its accepting times are bounded by $R_\alpha(Q(|x|))$ for some polynomial
budget $Q$. Sequential input access makes small radii restrictive, while
large radii permit longer computations to be observed.

The two principal construction results concern the order of classes below
$\Ptime$ and the necessity of backward access within gated families.
They are followed by the deletion comparison and the iid tail hierarchy.
The zero-drift example below is separate from the negative-drift antichain;
no ordering between them is established.

\begin{reframed}[Summary]
\begin{introtheorem}[Below $\Ptime$: locality and an antichain; \cref{thm:negative-all-moments,thm:zero-drift-counterexample,thm:antichain}]
\label{thm:summary-below-p}
Negative drift with positive forward probability and all positive moments
finite gives exactly the reverse-definite languages. One computable
zero-drift law has a class strictly between the reverse-definite languages
and $\Ptime$, and computable negative-drift laws give infinitely many
pairwise incomparable such classes, at every pair of confidences.
\ifshownew
\par\noindent\normalfont\footnotesize\textcolor{newcol}{New here: Theorem~\ref{thm:antichain}.}\par
\fi
\end{introtheorem}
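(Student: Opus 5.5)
This summary theorem combines three results, so I will establish its three clauses separately. Each uses the reachability equivalence of \cref{lem:reachability}: the class of a clock at confidence $\alpha$ consists of the languages recognised by machines whose accepting time on $x$ is at most $R_\alpha(Q(|x|))$ for some polynomial $Q$. Each clause therefore reduces to estimating the radius $R_\alpha(q)$ and then translating radius bounds into language-theoretic statements about how much of the input a machine can read.

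\textbf{Negative drift (\cref{thm:negative-all-moments}).} I will show $R_\alpha(q)=O(1)$ for every polynomial $q$.

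The upper bound comes first. With negative mean and all positive moments finite, the probability that the walk started at $0$ ever exceeds level $r$ decays faster than any inverse polynomial in $r$. I would prove this with a Doob maximal inequality applied to a moment or exponential supermartingale built from the negative drift. A union bound over $q$ ticks then gives
\[
\Pr\Bigl[\max_{t\le q}X_t\ge r\Bigr]\le q\cdot\sup_t\Pr[X_t\ge r].
\]
I need this to stay below $\alpha$ for a constant $r$ independent of $q$. I would use the ladder-height fact that the all-time maximum $\sup_t X_t$ is almost surely finite, which removes the factor $q$ entirely. Hence $R_\alpha(q)\le r_\alpha$ for all $q$.

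The lower bound uses the positive forward probability. It gives a fixed finite crossing radius with probability at least $\alpha$, whenever $\alpha$ is below the probability of the maximum reaching that level.

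Bounded accepting time means the machine reads only a constant-length prefix of the input. That is exactly the reverse-definite class $\RDEF$: membership is determined by the first $k$ symbols. The converse holds because any such language is decided in constant time by a lookup table.

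\textbf{Zero drift (\cref{thm:zero-drift-counterexample}).} I would pick a computable symmetric heavy-tailed law, for example $\mu(\pm 2^{2^j})\propto j^{-2}$ or a similar choice. The aim is that $R_\alpha(q)$ grows unboundedly but only sub-polynomially, for instance like $\log q$ or slower.

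Unbounded radius lets us observe computations that read an unbounded prefix, for instance of length $\log\log n$, which separates the class from $\RDEF$. A sub-linear radius rules out the language of palindromes, or any language that must read the whole input, which keeps the class strictly inside $\Ptime$.

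The estimates needed are a Hoeffding or maximal bound on the smaller jumps and a union bound on the rare large jumps.

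\textbf{Antichain (\cref{thm:antichain}).} I would build a computable family $\mu_i$ with negative drift but without all moments finite. The tails are placed so that the radius function $R^{(i)}_\alpha$ is large on an input-length band $B_i$ and bounded elsewhere, with the bands pairwise disjoint and sparse, for example towers $n\in[t_i(k),t_i(k)+1]$ interleaved across $i$.

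Incomparability then comes from a diagonal language $L_i$ that requires reading roughly $R^{(i)}$ symbols exactly on lengths in $B_i$. A clock $\mu_j$ with $j\ne i$ has only bounded radius on $B_i$, and a fooling-input argument excludes $L_i$ from the class of $\mu_j$.

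\textbf{Main obstacle.} The hard step is making radius profiles \emph{oscillate in $n$} for a single iid law, because $R_\alpha(q)$ is monotone in $q$. So the construction cannot control profiles by input length directly. I would instead exploit gaps between polynomial budgets: choose tail masses whose crossing budgets jump superpolynomially at scales specific to $i$, and encode the band structure through those scales. I would also check that the construction works uniformly at every pair of confidences, handling each confidence pair separately via monotonicity in $\alpha$.
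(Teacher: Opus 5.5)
Your negative-drift argument rests on a false claim. Finiteness of the all-time maximum is a fact about the free walk $S_t$. The held walk lies \emph{above} it ($X_t\ge S_t$, \cref{lem:kt-domination}) and keeps returning to low indices. With $\Pr(K>0)>0$, \cref{lem:kt-block-crossing} shows that every fixed target is hit with probability tending to one. Hence $\sup_tX_t=\infty$ almost surely and $R_\alpha(q)\to\infty$; for the biased nearest-neighbour law the growth is logarithmic (\cref{ex:negative-scale}).

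Your two halves are also inconsistent. If $R_\alpha(q)\le r_\alpha$ held for all $q$, prefix truncation would exclude $a^k\Sigma^*$ for every $k>r_\alpha$, so the class could not be all of $\RDEF$.

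What is actually provable is a tick-uniform bound that survives the union bound over ticks: $\Pr(\tau_T\le q)\le C_r(q+1)(1+T)^{1-r}$ for every $r$. The paper gets it from the Lyapunov function $(1+i)^r$ after replacing proposals below $-m$ by zero. This truncation is needed because $\E K^-$ may be infinite and exponential moments need not exist (\cref{ex:negative-no-exp-moment}). The bound gives only a \emph{subpolynomial} radius.

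The idea you are missing is sequential locality (\cref{lem:kt-sublinear-constant}). An input accepted at index $T$ has its length-$T$ prefix accepted at the same index. So an accepting-time bound $g(n)=o(n)$, such as $R_\alpha(Q(n))$, forces $T\le g(T)$ and hence a constant bound. Only then does your bounded-prefix characterisation apply (\cref{lem:reachability-dictionary}(iii)).

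The same lemma defeats your zero-drift plan. An unbounded but subpolynomial radius, such as $\log q$, gives exactly $\RDEF$. No sound recognizer can have unbounded $o(n)$ accepting time, so ``reading a prefix of length $\log\log n$'' is impossible: the machine cannot know $n$ without scanning to the first blank.

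A class strictly between $\RDEF$ and $\Ptime$ needs a radius that \emph{oscillates} (\cref{rem:sparse-oscillation}). It must stay below $T_j$ at budgets $T_j^{\,j}$; with the distinguishing-cell bound $A_M(x)\ge|x|+1$ this excludes $L_{\mathrm{even}}$. It must also be at least $cq^{2/3}$ along other budgets, which admits a language of sparse lengths $\lfloor N_j^{1/2}\rfloor$. The paper realises this with atoms $N_j$ of mass $2^{-j-1}/N_j$, balanced to mean zero by $\mu(-1)=1/2$. Slow stages are controlled by the drift $-2^{-j}$ of the truncated laws together with $N_j>2T_j^{\,j}$.

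Your example $\mu(\pm2^{2^j})\propto j^{-2}$ is not integrable, so it has no zero drift. Its positive tail is of order $1/\log_2\log_2T$, which gives a doubly exponential radius (compare \cref{cor:2exp-law}), because one large positive proposal crosses from any position (\cref{lem:proposal-maxima}).

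For the antichain you name the right obstacle, but you give no mechanism, and ``bounded radius on $B_i$'' is again false. The paper partitions the tower scales into the sets $J_e$ and gives $\mu^{\langle e\rangle}$ an atom $\mathrm{tw}_j^3$ of mass $\mathrm{tw}_j^{-4}$ only for $j\in J_e$. For $j\notin J_{e'}$, every lower atom of $\mu^{\langle e'\rangle}$ is at most $(\log_2\mathrm{tw}_j)^3$. Negative drift with $\kappa_j=\mathrm{tw}_{j-1}^{-3}$ then forces $\Pr(\tau_{\mathrm{tw}_j}\le\mathrm{tw}_j^{\,j})\to0$ \eqref{eq:slow-stage}. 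Meanwhile the budget $\lceil\ell_\alpha\rceil n^4$ hits the atom at lengths $n=\mathrm{tw}_j$ with $j\in J_e$. Since the slow bound tends to zero and the fast budget works at every $\alpha$, every pair of confidences is covered directly; monotonicity in $\alpha$ is not what does it.
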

\end{reframed}

For the fixed-return gated clock of \cref{sec:gated}, $\beta\in(0,1]$ is the probability of a one-step backward return at a landing index.

\begin{reframed}[Summary]
\begin{introtheorem}[Backward moves are necessary, and may be sparse; \cref{thm:gated,thm:sq-gated,cor:density-zero,prop:vanishing-beta}]
\label{thm:summary-sparse}
In a family whose only large forward moves are doublings at powers of
two, every forward-only clock has class within $\Ptime$, whereas steps
back by one, confined to a set of indices of density zero and possibly
with vanishing probability, give $\EXP$. The fixed-return gated clock
has crossing budget $(1+o(1))(1+\beta^{-1})\log_2T$ at every confidence.
With squaring gates, backward indices of counting order $\log\log N$
give $\mathbf{2EXP}$ while forward-only clocks stay within $\Ptime$.
\ifshownew
\par\noindent\normalfont\footnotesize\textcolor{newcol}{Entirely new.}\par
\fi
\end{introtheorem}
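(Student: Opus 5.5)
This statement is a summary. I would prove it by assembling four component results in the order of its sentences: \cref{thm:gated}, \cref{cor:density-zero}, \cref{prop:vanishing-beta}, and \cref{thm:sq-gated}. Each of these converts a crossing-budget bound into a class statement through the reachability equivalence of \cref{lem:reachability}. A language lies in the class exactly when its accepting times are bounded by $R_\alpha(Q(|x|))$ for a polynomial $Q$. Hence it suffices to show two things. For every forward-only kernel in the doubling family, $R_\alpha(q)=O(q)$, which gives a class within $\Ptime$. For the kernel with returns, $R_\alpha(q)\ge 2^{cq}$, which gives $\EXP$; for the matching upper bound, accepting times $2^{p(n)}$ of $\EXP$ machines are then reached with $O(p(n))$ looks.

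\emph{Forward-only bound.} Fix a forward-only kernel whose only large moves are $g\mapsto 2g+1$ at gates $g=2^k$, with displacement at most $w$ elsewhere. After leaving gate $2^k$ the clock sits at $2^{k+1}+1$, strictly past the next gate $2^{k+1}$. The next gate $2^{k+2}$ then lies $2^{k+1}-1$ ahead and must be reached by steps of size at most $w$. Reaching it therefore costs at least $(2^{k+1}-1)/w$ ticks. I would charge each gate's gain $2^{k+1}+1$ to this travel. This deterministically bounds the position after $q$ ticks by $C_w(q+1)$. The first gate from $0$ needs separate care: the clock reaches $1$ or $2$ by bounded steps. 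The bound then gives $R_\alpha(q)\le C_w q$ for every $\alpha$, so accepting times are polynomial and the class is within $\Ptime$.

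\emph{Backward clock.} At a landing index $2g+1$ the clock returns to $2g$ with probability $\beta$; otherwise it takes a bounded forward or hold step. Let $\tau_k$ be the number of ticks spent from gate $2^k$ until gate $2^{k+1}$ is reached. I would show that $\tau_k$ equals $1$ plus a geometric number of attempts with mean about $\beta^{-1}$. A failed attempt must not drift far away; this forces the design choice that failed attempts hold at the landing index. The $\tau_k$ are independent, so the time to reach $2^K$ is a sum of $K$ i.i.d.\ variables with mean $1+\beta^{-1}$. The law of large numbers, or Chebyshev's inequality, then gives the $(1+o(1))(1+\beta^{-1})\log_2 T$ budget at every confidence. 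This yields $\EXP$.

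\emph{Sparsity and vanishing probability.} The returns occur only at the indices $\{2^k+1\}$, which have counting function $\log_2 N$ and hence density zero; this gives \cref{cor:density-zero}. For \cref{prop:vanishing-beta}, let the return probability be $\beta_k\to0$ slowly, for example $\beta_k=1/k$. The expected budget is then $\sum_{k\le K}(1+1/\beta_k)=O(K^2)$, which is still polynomial in $\log T$, so the class is still $\EXP$. Concentration follows from variance bounds for sums of geometric variables.

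\emph{Squaring gates.} The squaring case repeats the argument with gates $g\mapsto g^2+1$. The forward-only travel bound becomes polynomial, giving a class within $\Ptime$. Returns at the gates give a budget of order $\log\log T$, hence $\mathbf{2EXP}$, with backward indices of counting order $\log\log N$.

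The main obstacle is the forward-only upper bound. It must hold for \emph{every} kernel in the family, including position-dependent ones that might hold or hop within the window of size $w$. I would first check that the charging argument is pathwise, so that no probabilistic estimate is needed. I would also check that the family's definition excludes forward moves that land exactly on a gate from far away.
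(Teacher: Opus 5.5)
Your decomposition matches the paper's. It charges gate gains to the bounded-step travel forced by the overshooting landing. It splits the return clock into iid cycles of length $1+\Geom(\beta)$ and uses the strong law for the $(1+\beta^{-1})\log_2T$ budget. It uses an expected-time bound for vanishing return probabilities, and repeats the scheme for squaring gates.

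\textbf{The gap: the upper inclusion is never proved.} You need $\CC_{\eta,\alpha}\subseteq\EXP$ for the return clocks, and $\CC_{\eta,\alpha}\subseteq\mathbf{2EXP}$ with squaring gates. A radius lower bound $R_\alpha(q)\ge2^{cq}$ only yields $\EXP\subseteq\CC_{\eta,\alpha}$. The sentence you label ``the matching upper bound'' (accepting times $2^{p(n)}$ reached with $O(p(n))$ looks) restates that same lower inclusion.
\begin{itemize}
\item For the fixed-$\beta$ clocks you could extract the missing direction from the lower half of your sharp asymptotics: $q_\alpha(T)\ge(1-o(1))(1+\beta^{-1})\log_2T$ forces $R_\alpha(q)\le2^{Cq}$ for large $q$. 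You do not draw this connection.
\item For the vanishing-probability clock you only bound the expected budget from above. Nothing there excludes accepting indices beyond exponential.
\end{itemize}
The uniform fix is the paper's part~(ii) of \cref{thm:gated,thm:sq-gated}. Every member of $\mathcal K_w$ satisfies $X_{t+1}\le2X_t+w$: a gate gives $2X_t+1$, and any other index gives at most $X_t+w$. Hence $X_t\le w(2^t-1)$ surely. Simulating a witness through $w2^{Q(n)}$ steps then gives an exponential-time decider; \cref{prop:vanishing-beta} invokes exactly this. With squaring gates, $Y_t=X_t+w$ satisfies $2Y_{t+1}\le(2Y_t)^2$, so $X_t\le(2w)^{2^t}$, and the same simulation gives $\mathbf{2EXP}$.

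\textbf{Two bookkeeping points in the forward-only charge.}
\begin{itemize}
\item The gain at a gate $g$ is $g+1$, not the landing index. Charge it to the travel preceding $g$. That travel is at least $g/2-1$: the previous landing $2g'+1$ overshoots $2g'$, so the next visited gate satisfies $g\ge4g'$. Charging to the travel after $g$ would leave the last gate visited before tick $t$ uncovered.
\item With squaring gates the charge is quadratic, $g^2\le4\omega^2$. You then also need $\sum_r\omega_r^2\le(\sum_r\omega_r)^2\le(wt)^2$, not just a linear sum.
\end{itemize}
With these adjustments your argument coincides with the paper's.
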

\end{reframed}

For the return-enabled parity clock of \cref{sec:activation}, $\beta\in(0,1/2]$ is the backward-return probability at a positive even index. Write $\ell_\alpha=-\ln(1-\alpha)$ for $0<\alpha<1$, where $\ln$ is the natural logarithm.

\begin{reframed}[Summary]
\begin{introtheorem}[When backward moves help; \cref{thm:monotone-rows,thm:activation,thm:sharp}]
\label{thm:summary-backward}
If the forward-displacement law is stochastically nondecreasing in
position, deleting backward moves never lowers the class. For a
parity-dependent clock, a step back by one reopens a heavy-sampling phase:
the class rises from $\Ptime$ to $\EXP$ and the crossing budget falls from
$(1+o(1))T$ to $(1+o(1))(1+\beta^{-1})\ell_\alpha\log_2T$, where
$\ell_\alpha=-\ln(1-\alpha)$.
\ifshownew
\par\noindent\normalfont\footnotesize\textcolor{newcol}{New here: Theorem~\ref{thm:monotone-rows}.}\par
\fi
\end{introtheorem}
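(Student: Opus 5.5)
This is an intro summary theorem, so I would prove it by assembling its three parts from the main-text results it cites, and I would check carefully that the constants agree.

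\emph{Monotone rows.} The first sentence is \cref{thm:monotone-rows}, which I would prove by coupling. Take a kernel whose forward-displacement law $\mu_x$ on $\{1,2,\dots\}$ (conditioned on moving forward) is stochastically nondecreasing in $x$. Form the deleted clock by replacing each backward proposal with a hold. Run both clocks with a common uniform variable per tick, and realise each forward jump through the quantile function of $\mu_x$. I would show by induction on $t$ that the deleted clock $Y_t$ stays at least the original clock $X_t$:
\begin{itemize}
\item If the original moves back or holds, the deleted clock holds, so $Y_{t+1}=Y_t\ge X_t\ge X_{t+1}$.
\item If both move forward, then $Y_t\ge X_t$ together with monotonicity of the quantiles in position gives $Y_t+F^{-1}_{Y_t}(U)\ge X_t+F^{-1}_{X_t}(U)$.
\end{itemize}
The step I expect to be the main obstacle is the \emph{type} of move. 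The probability of moving forward may itself depend on position and may differ between the two clocks. I would therefore include the forward probability in the stochastic-order hypothesis as the paper defines it, and couple the move type through the same uniform variable. Once $Y_t\ge X_t$ holds for all $t$, running maxima are ordered too, so $R_\alpha$ can only increase. \Cref{lem:reachability} then turns this into inclusion of classes.

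\emph{Parity clock.} The second part is \cref{thm:activation} together with \cref{thm:sharp}. Without returns, heavy sampling happens only at even indices, and a forward step leaves the even phase. Heavy jumps therefore occur at most a bounded number of times before the clock moves with bounded steps, so $R_\alpha(q)=(1+o(1))q$. \Cref{lem:reachability} then gives $\Ptime$.

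With returns, each visit to the heavy phase produces a doubling-type gain with some probability $p$. Reaching index $T$ needs about $\log_2 T$ successes. Every failed attempt costs a geometric number of return ticks, with mean $\beta^{-1}$, plus one tick for the attempt itself.

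The budget $\ell_\alpha\log_2 T$ reflects the confidence: the probability of seeing no heavy success in $m$ trials decays like $e^{-m}$ at the appropriate scale. I would make this exact by reading off the per-phase success law from \cref{sec:activation} and applying a renewal and large-deviation count. This yields $(1+o(1))(1+\beta^{-1})\ell_\alpha\log_2T$ as the upper bound.

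For the matching lower bound I would use the fact that each doubling needs at least one heavy tick and at least one return. The sharpness constant is the delicate part, and there I would rely on \cref{thm:sharp} rather than derive it again.

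Finally, crossing an exponential threshold $T=2^{p(n)}$ within a polynomial budget gives $\EXP$ via \cref{lem:reachability}.
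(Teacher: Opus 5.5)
Your monotone-rows argument is the paper's coupling from \cref{thm:monotone-rows}, once you adopt, as you end up doing, the law $\varphi_i$ on $\N$ whose atom at $0$ carries all backward and holding mass. The conditional-on-forward law you start from would not suffice. There is one small slip. When the original clock does not move forward, the deleted clock need not hold under the common uniform, because $F^{-1}_{Y_t}(U)$ can be positive when $F^{-1}_{X_t}(U)=0$. This is harmless, since only $Y_{t+1}\ge Y_t$ is needed. The paper covers all cases with the single inequality $X_{t+1}\le X_t+F^{-1}_{X_t}(U_{t+1})\le Y_t+F^{-1}_{Y_t}(U_{t+1})=Y_{t+1}$.

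The parity-clock part has a genuine gap, because the mechanism you describe is not that of $\eta_\beta$.
\begin{itemize}
\item \emph{The parity is reversed.} Heavy samples are drawn at odd indices, and the jump $1+2^m$ lands on an even index. The even-state move $+2$ keeps the clock even. So at $\beta=0$ exactly one heavy sample is drawn, from index $1$. The paper then uses $X_q=2+Y+2F_{q-2}$ with $F_{q-2}\sim\mathrm{Bin}(q-2,1/2)$ to get $q^{(0)}_\alpha(T)\sim T$.
\item \emph{With $\beta>0$ there are no doublings.} Crossing does not need $\log_2T$ successes. Each completed cycle has length $G_j+1$ with $G_j\sim\Geom(\beta)$, so its mean is $d_\beta=1+\beta^{-1}$, and it draws a fresh sample from $\mue$. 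A single sample $Y_r\ge T$ already forces a hit, because the return $-1$ and the next $+1$ cancel.
\item \emph{Your heuristic gives the wrong constant.} If crossing needed about $\log_2T$ successes of fixed probability, the law of large numbers would make $\tau_T/\log_2T$ concentrate. The budget would then carry no factor $\ell_\alpha$. That is the gated clock of \cref{thm:gated}(iii), whose limit is $1+\beta^{-1}$.
\item \emph{Where $\ell_\alpha$ actually comes from.} It comes from the per-sample success probability $1/h(T)\approx1/\log_2T$. The paper proves the sandwich \eqref{eq:sharp-sandwich},
\[
1-\E(1-1/h(T))^{N_q}\le\Pr(\tau_T\le q)\le1-\E(1-1/h((T-2q)/q))^{N_q}.
\]
It uses the renewal count $N_q/q\to1/d_\beta$, which is independent of the samples, together with $\max_{t\le q}X_t\le2q+\sum_{r\le N_q}Y_r$. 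It then shows that both sides tend to $1-e^{-c/d_\beta}$ at $q=\lfloor c\log_2T\rfloor$.
\end{itemize}
Your ``renewal and large-deviation count'' over doubling successes would not produce this limit. Your lower-bound idea, that each doubling needs a heavy tick and a return, has no counterpart in this clock.

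You also argue only $\EXP\subseteq\CC_{\eta_\beta,\alpha}$. The reverse inclusion needs the exponential ceiling of \cref{lem:ceiling}: every positive displacement is at most $1+Y_t$ for iid copies $Y_t$, followed by a union bound. Alternatively, it can be read off from the sharp asymptotics via \cref{lem:reachability-dictionary}(iv).

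Citing \cref{thm:sharp} as a black box does close the statement, as the paper's summary does. But then the heuristics you supply for it are incorrect, not merely informal.
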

\end{reframed}

\begin{reframed}[Summary]
\begin{introtheorem}[Above $\Ptime$: the tail determines the class; \cref{prop:tail-sandwich,thm:kt-exp,thm:qp-hierarchy,cor:2exp-law}]
\label{thm:summary-tail}
For independent jumps the reach $R_\alpha(q)$ lies between the
$\alpha$-quantile of the largest of $q$ proposals and $q$ times that
quantile. Consequently one fixed computable law has class exactly
$\EXP$; for each $k\ge1$ one fixed computable law has class exactly
$\mathbf{QP}_k=\DTIME(2^{O(\log^k n)})$; and one fixed computable law
has class $\mathbf{2EXP}$. All these classes are independent of the
confidence, and
$\Ptime=\mathbf{QP}_1\subsetneq\mathbf{QP}_2\subsetneq\cdots
\subsetneq\EXP\subsetneq\mathbf{2EXP}$.
\ifshownew
\par\noindent\normalfont\footnotesize\textcolor{newcol}{New here: Proposition~\ref{prop:tail-sandwich}, Theorem~\ref{thm:qp-hierarchy}, Corollary~\ref{cor:2exp-law}.}\par
\fi
\end{introtheorem}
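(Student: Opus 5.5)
The statement combines \cref{prop:tail-sandwich,thm:kt-exp,thm:qp-hierarchy,cor:2exp-law}. I would prove the tail sandwich first and obtain each class identity from it through the reachability equivalence (\cref{lem:reachability}). Throughout, let $\xi_1,\xi_2,\dots$ be the iid proposals of the fuzzy clock (\cref{def:kt-walk}), let $X_t$ be the clock index (kept nonnegative), and let $M_q=\max_{t\le q}\xi_t^+$.

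\emph{Sandwich.} For the lower bound, the position before tick $t$ is at least $0$, so $X_t\ge\xi_t$ whenever $\xi_t>0$. Hence $\{M_q\ge T\}$ forces crossing $T$ by tick $q$, and $R_\alpha(q)$ is at least the $\alpha$-quantile of $M_q$. For the upper bound, no index up to tick $q$ exceeds $\sum_{t\le q}\xi_t^+\le qM_q$. So crossing $T$ by tick $q$ has probability at most $P(M_q\ge T/q)$, and $R_\alpha(q)$ is at most $q$ times that quantile. The one point to check against the model is that holds and backward moves never push the index below $0$; this is what makes the lower bound valid even with negative proposals.

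\emph{Quantile computation.} Since $P(M_q<m)=(1-\bar F(m))^q$ with $\bar F(m)=P(\xi\ge m)$, the $\alpha$-quantile is roughly the $m$ solving $q\,\bar F(m)=\ell_\alpha$. I would choose explicit computable tails with rational values:
\begin{itemize}
\item $\bar F(m)=1/\log_2(m+2)$ gives a quantile $2^{\Theta(q/\ell_\alpha)}$, so $R_\alpha(q)=2^{\Theta(q)}$;
\item $\bar F(m)=2^{-(\log_2 (m+2))^{1/k}}$ gives $R_\alpha(q)=2^{\Theta(\log^k q)}$;
\item $\bar F(m)=1/\log_2\log_2(m+4)$ gives $R_\alpha(q)=2^{2^{\Theta(q)}}$.
\end{itemize}
In each case the extra factor $q$ in the sandwich, and the dependence on $\alpha$, change only constants inside the exponent or a constant factor in $q$. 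Polynomial budgets $Q$ absorb these, which is why the classes are independent of the confidence.

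\emph{Class identification.} By \cref{lem:reachability}, a language is in the class iff some machine for it has accepting times at most $R_\alpha(Q(|x|))$ for a polynomial $Q$. For the first law this bound ranges over $2^{\mathrm{poly}(n)}$, which gives $\EXP$. For the second, $2^{O(\log^k Q(n))}=2^{O(\log^k n)}$, which gives $\mathbf{QP}_k$. For the third it ranges over $2^{2^{\mathrm{poly}(n)}}$, which gives $\mathbf{2EXP}$. Two technical points need care in both directions. First, an arbitrary language in $\DTIME(f)$ must be recognised by a machine whose accepting time is exactly controlled, using time-constructibility and padding the run. Second, on rejected inputs the machine never accepts, so the characterisation is one-sided in the way \cref{lem:reachability} already handles. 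The strict chain $\Ptime=\mathbf{QP}_1\subsetneq\mathbf{QP}_2\subsetneq\cdots\subsetneq\EXP\subsetneq\mathbf{2EXP}$ then follows from the deterministic time hierarchy theorem, since $2^{\log^{k+1}n}$ dominates $2^{O(\log^k n)}$ by more than a log factor, and similarly at the top two levels.

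I expect the main obstacle to be the two-sided precision of the quantile estimates. The constants must be uniform enough that the $\Theta$ in the exponent translates exactly into the polynomial closure of the time bound, with no loss in the $\mathbf{QP}_k$ case where $\log^k$ of a polynomial must remain $O(\log^k n)$. I would handle this with explicit inequalities $1-x\le e^{-x}$ and $1-x\ge e^{-2x}$ for $x\le 1/2$, applied at $m$ chosen as $\bar F^{-1}(c/q)$ for two suitable constants $c$.
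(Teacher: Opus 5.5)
Your proposal is correct and follows the paper's route. It uses the proposal-maxima sandwich $u_\alpha(q)\le R_\alpha(q)\le q\,u_\alpha(q)$ of \cref{lem:proposal-maxima}, explicit slowly varying computable tails with radii $2^{\Theta(q)}$, $2^{\Theta(\log^k q)}$ and $2^{2^{\Theta(q)}}$, conversion to classes through \cref{lem:reachability} with clocked simulation, and the time hierarchy theorem; your smooth tails are computable but not rational-valued, unlike the paper's atoms $2^J$, $2^{J^k}$, $2^{2^J}$, which is harmless.

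Two small corrections. First, the lower inclusions need no padding or exact control of accepting time, since \cref{lem:reachability} only asks for an upper bound. Second, derive strictness from one intermediate time bound that contains all of $\mathbf{QP}_k$, as the paper does with $f_k(n)=2^{B(n)^k\lceil\sqrt{B(n)}\rceil}$, where $B(n)=\lceil\log_2(n+2)\rceil$. Separating from each $2^{c\log^k n}$ separately does not exclude a language from the union $\mathbf{QP}_k$.
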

\end{reframed}

\noindent\textbf{Further results.}
Positive drift and the nondegenerate regular zero-drift regimes give $\Ptime$
(\cref{thm:kt-collapse,prop:kt-zero-drift}); a polynomial forward tail can
overcome negative drift (\cref{prop:kt-polynomial-tail}). Computable iid
laws collectively give $\REC$, and one noncomputable law gives $\RE$
(\cref{cor:computable-union,prop:arbitrary-re}). Common budgets over
parameter sets depend on allowed-set infima (\cref{sec:fuzzy}).
\Cref{fig:landscape,tab:classification} collect the regimes.

\begin{figure}[ht]
\centering
\begin{tikzpicture}[>=Stealth,font=\footnotesize,x=1cm,y=.9cm]
 \node (re) at (0,10) {$\RE$};
 \node (rec) at (0,9) {$\REC$};
 \node (two) at (0,8) {$\mathbf{2EXP}$};
 \node (exp) at (0,7) {$\EXP$};
 \node (dots) at (0,6) {$\vdots$};
 \node (qp) at (0,5) {$\mathbf{QP}_2$};
 \node (p) at (0,4) {$\Ptime=\mathbf{QP}_1$};
 \node[draw,rounded corners,align=center,text width=4.5cm,fill=gray!7] (ac) at (0,2.5)
  {one zero-drift class\\and a negative-drift antichain\\{\scriptsize\cref{thm:zero-drift-counterexample,thm:antichain}}};
 \node (rdef) at (0,1) {$\RDEF$};
 \node (frozen) at (0,0) {$\{\varnothing,\Sigma^*\}$};
 \foreach \lo/\hi in {frozen/rdef,rdef/ac,ac/p,p/qp,qp/dots,dots/exp,exp/two,rec/re}
  {\draw[->] (\lo)--node[right=2pt,inner sep=1pt,font=\scriptsize] {$\subsetneq$} (\hi);}
 \draw[->] (two)--node[right=2pt,inner sep=1pt,font=\scriptsize] {$\subseteq$} (rec);
 \node[right] at (2.8,10) {busy-beaver law};
 \node[right] at (2.8,9) {computable laws (union)};
 \node[right] at (2.8,8) {$K=2^{2^J}$; squaring gates};
 \node[right,align=left] at (2.8,7) {$\mue$; parity $\eta_\beta$; gated $\eta^{\mathrm{gate}}_\beta$};
 \node[right] at (2.8,5) {$\mu_{[k]}$};
 \node[right,align=left] at (2.8,4) {positive drift / regular\\nondegenerate zero drift};
 \node[right] at (2.8,2.5) {sparse laws};
 \node[right] at (2.8,1) {negative drift, all positive moments};
 \node[right] at (2.8,0) {frozen law};
 \node[align=right,text width=2.5cm] (forward) at (-3.8,4.4) {forward-only gated\\clocks stay here};
 \draw[->] (forward.east)--(p.west);
\end{tikzpicture}
\caption{The landscape of observation classes. Each sparse class lies
strictly between $\RDEF$ and $\Ptime$; the box asserts no ordering between
the zero-drift class and the negative-drift antichain.}
\label{fig:landscape}
\end{figure}

In \cref{tab:classification}, $\delta_a$ denotes the point-mass probability law concentrated at $a$, including $a=0$ and $a=1$. For an increment $K$, its positive and negative parts are $K^+=\max(K,0)$ and $K^-=\max(-K,0)$ (\cref{def:kt-moments}). The deletion kernel $\mathsf D\eta$ replaces backward moves by holds at their starting indices and keeps forward moves unchanged (\cref{def:deletion}).

\begin{table}[ht]
\centering
\footnotesize
\renewcommand{\arraystretch}{1.03}
\begin{tabularx}{\linewidth}{@{}Xp{.19\linewidth}p{.25\linewidth}@{}}
\toprule
Hypotheses & Class & Reference\\
\midrule
\multicolumn{3}{@{}l}{\emph{Independent jumps}}\\
Some finite positive moment & $\subseteq\Ptime$ & \cref{thm:kt-positive-moment}\\
Integrable, positive mean & $\Ptime$ & \cref{thm:kt-collapse}\\
Integrable zero drift; $\mu\ne\delta_0$; finite variance or bounded positive jumps & $\Ptime$ & \cref{prop:kt-zero-drift}\\
Positive forward probability; $\E K^->\E K^+$; all positive moments finite & $\RDEF$ & \cref{thm:negative-all-moments}\\
Polynomial lower forward tail and some finite positive moment & $\Ptime$ & \cref{prop:kt-polynomial-tail}\\
All positive moments infinite & $\Ptime$ possible & \cref{prop:infinite-moments-p}\\
Computable law & $\subseteq\REC$ & \cref{prop:computable-rec}\\
\midrule
\multicolumn{3}{@{}l}{\emph{Position-dependent kernels}}\\
Monotone forward ability & $\CC_{\eta,\alpha}\subseteq\CC_{\mathsf D\eta,\alpha}$ & \cref{thm:monotone-rows}\\
Parity clock: no returns / fixed $\beta>0$ & $\Ptime$ / $\EXP$ & \cref{thm:activation,thm:sharp}\\
Doubling gates: forward-only / returns & $\subseteq\Ptime$ / $\EXP$ & \cref{thm:gated}\\
Density-zero backward-index support; return probability $m^{-c}$, $c>0$, at $2^m+1$ & $\EXP$ & \cref{cor:density-zero,prop:vanishing-beta}\\
Squaring gates: forward-only / returns & $\subseteq\Ptime$ / $\mathbf{2EXP}$ & \cref{thm:sq-gated}\\
\bottomrule
\end{tabularx}
\caption{Selected fixed-confidence regimes, supplementing
\cref{fig:landscape}. Each row holds for $0<\alpha<1$;
common-budget results are in \cref{sec:fuzzy}.}
\label{tab:classification}
\label{tab:kernel-classification}
\end{table}

\subsection{Techniques and related work}
\label{sec:related-work}
Holding an inadmissible proposal at its starting position is a proposal-rejection
rule, a standard way to construct Markov kernels
\cite[Section~3.2]{LevinPeres2017}. It differs from Lindley's reflected
update, which resets an overshoot below zero to zero \cite{Lindley1952}.
The held walk dominates the reflected walk but is not monotone in its
starting position (\cref{rem:lindley-comparison}). Reflected-walk upper
bounds therefore require a separate argument for the held rule.

The elementary drift regimes use the strong law, the central limit theorem,
and moment bounds. Foster--Lyapunov methods
\cite[Chapters~11, 14 and~15]{MeynTweedie1993} and exponential drift
analysis \cite{Hajek1982} provide the standard framework for controlling
negative-drift walks. Pemantle--Rosenthal \cite{PemantleRosenthal1999}
prove uniform moment bounds under negative drift, including a one-sided
truncation version in their Corollary~3. Our proof of
\cref{thm:negative-all-moments} gives the held-rule crossing estimate
directly by truncating negative proposals in the conditional drift. It
allows infinite negative mean. The subsequent identification with
bounded-prefix recognition uses sequential locality, related to classical
one-tape arguments \cite{Hennie1965}, rather than a further stability theorem.

The relation between sums, maxima, and slowly varying tails is classical
\cite{Darling1952,Kasahara1986}. The independent-sum limit theorem in
Croydon--Fribergh--Kumagai \cite[Section~5]{CroydonFriberghKumagai2013}
is a relevant probability comparison. Here elementary proposal-maxima
bounds suffice for the explicit tails used in \cref{sec:kt-tails}.
The contribution is their realisation as fixed computable clocks with
specified observation classes. The deterministic time hierarchy theorem
supplies the strict separations between the familiar time classes
\cite[Section~3.1]{AroraBarak2009}.

The sparse constructions address a different realisation problem.
Their explicit laws have crossing radii that alternate between effective
fast and slow stages. Finite rational searches protect all previously
chosen slow stages from future atoms in the zero-drift construction.
A partition of tower scales gives the negative-drift antichain.
Distinguishing input cells supply accepting-time lower bounds, while
specific length languages give witnesses above the bounded-prefix class.
Thus the class comparisons are proved for the constructed laws;
an arbitrary resource-function hierarchy alone does not provide those
probabilistic realisations.

Restart strategies for Las Vegas algorithms use independent new runs
\cite{LSZ1993}. Our phase-reuse argument instead changes accessibility
within the clock on one fixed deterministic path. The deletion coupling
compares forward-displacement laws, and the gates compare all forward-only
members of specified kernel families. In the sparse gated examples,
density and counting orders concern the support of backward indices in
index space. They do not measure the frequency of backward moves along
a trajectory. Vanishing probabilities are specified at the doubling
landing indices (\cref{prop:vanishing-beta}).

The fuzzy-time computation background is described in
\cref{sec:opening-motivation}.

\subsection{Organisation}
\Cref{sec:preliminaries} defines the model and reachability dictionary;
\cref{sec:kt-collapse,sec:kt-taxonomy,sec:kt-tails} classify iid clocks,
and \cref{sec:backward} treats backward moves.
\Cref{sec:discussion} lists open problems. The appendices contain the
sparse constructions, effectiveness proofs, calibrations, and common-budget
results.

%% file: model.tex
\section{Model and the reachability dictionary}
\label{sec:preliminaries}
\label{ch:kernel-time-computation}

The \emph{jumpy observation model} describes an observer who sees the
configuration at a randomly chosen index, with consecutive indices chosen
by a random clock that may jump. We define the computation path, clock,
and acceptance rule. The access convention and
sequential locality connect crossing estimates to language classes.

\subsection{Machines, languages, and classes}
The machine reads its input sequentially, which constrains clocks with small reach.
We fix the machine conventions and recall the complexity classes used below.

A \emph{language} over a finite alphabet $\Sigma$ is a set $L$ of finite
input words; a machine \emph{decides} $L$ when it halts and accepts
exactly the words of $L$. Polynomial time, written $\Ptime$, is the class
of languages decidable with the number of executed transitions bounded by
a polynomial in the input length. Exponential time is
$\EXP=\bigcup_{k\ge1}\DTIME(2^{n^k})$, with the usual asymptotic
running-time convention. Equivalently, a language has a deterministic
decider with bound $2^{p(n)}$ for one fixed polynomial $p$, after
enlarging the bound to cover every input length. \newinline{\label{txt:2exp}Doubly exponential time is
$\mathbf{2EXP}=\bigcup_{k\ge1}\DTIME(2^{2^{n^k}})$.} As for
$\EXP$, a language belongs to $\mathbf{2EXP}$ exactly when it has a
deterministic decider with bound $2^{2^{p(n)}}$ for one fixed polynomial
$p$. \newinline{\label{txt:qp}For an integer $k\ge1$ put
$\mathbf{QP}_k=\bigcup_{c\ge1}\DTIME(2^{c(\log_2(n+2))^k})$.}
Thus $\mathbf{QP}_1=\Ptime$, and every $\mathbf{QP}_k$ is contained in $\EXP$.
The decidable languages are denoted $\REC$, and the semidecidable (recursively
enumerable) languages, accepted by machines that halt on yes-instances
but need not halt on no-instances, are denoted $\RE$. Thus
$\Ptime\subseteq\REC\subsetneq\RE$. For example, over $\{0,1\}$, the prefix language $101\,\Sigma^*$ is decided
in three transitions, while $L_{\mathrm{even}}=\{x:|x|\text{ is even}\}$
requires a full input scan. The reverse-definite languages $\RDEF$,
defined below, depend on only a bounded initial segment.

\begin{definition}[Machine and logical path]
\label{def:kt-config-path}
Fix a finite nonempty input alphabet $\Sigma$. A machine has a finite tape
alphabet $\Gamma\supseteq\Sigma$ and a blank $b\in\Gamma\setminus\Sigma$.
Its tape has cells $1,2,\ldots$. Initially it contains $x\in\Sigma^*$ in
cells $1,\ldots,|x|$ and blanks elsewhere. The head starts at cell $1$ in a
fixed state $q_0$, independent of $x$. A transition reads and writes the
scanned cell, changes state, and moves the head by $-1$, $0$, or $+1$; a
left move at cell $1$ leaves the head there. Input access is sequential.
There is no input-length oracle or free access to an end marker.

The transition function is deterministic and defined on every nonhalting
state and tape symbol. There are distinct accepting and rejecting halting
states $q_{\rm acc},q_{\rm rej}$. We allow $q_0$ to be either halting state;
this means that the machine halts before reading any symbol.
A configuration records the state, head position, and full tape content.
For each input $x$, let
\[
 \rho_M(x)=(C_0^x,C_1^x,\ldots)
\]
be the logical path obtained by executing transitions and repeating a
halting configuration forever. The first halting index is
$T_M(x)=\inf\{i\ge0:C_i^x\text{ is halting}\}$, with $\inf\varnothing=\infty$.
The first accepting index is $A_M(x)=T_M(x)$ if this halt accepts, and
$A_M(x)=\infty$ otherwise. Thus a rejecting halt has finite $T_M(x)$ but
infinite $A_M(x)$; a nonhalting path has both indices infinite.
\end{definition}

We use $\N=\{0,1,2,\ldots\}$. Budgets are polynomials
$Q:\N\to\N$ with nonnegative integer coefficients, including the zero
polynomial. Every polynomial upper bound has a majorant of this form on $\N$. We make the same enlargement for polynomial exponents in
exponential-time bounds, taking $p(0)\ge1$ when needed. Machine-model
scope is discussed in \cref{rem:kt-model-scope,rem:model-robustness}.

\subsection{The held walk and the access rule}

The held walk defines an \emph{independent-proposal random clock}, or \emph{fuzzy clock}.
One tick supplies a configuration under the following access rule.

\begin{definition}[Law and held walk]
\label{def:kt-time-measure}
\label{def:kt-walk}
Fix one probability law $\mu$ on $\Z$, independently of the input and the
machine. Let $K_1,K_2,\ldots$ be independent with law $\mu$. The logical
position at tick $t$ is $X_t$, where
\begin{equation}
 X_0=0,\qquad
 X_{t+1}=\begin{cases}
 X_t+K_{t+1},&X_t+K_{t+1}\ge0,\\
 X_t,&X_t+K_{t+1}<0.
 \end{cases}
 \label{eq:hold}
\end{equation}
An inadmissible negative jump is \emph{held} at its starting position. This
differs from the reflected update $\max(0,X_t+K_{t+1})$; the held rule is
not generally order-preserving between starting positions. For an integer
target $T\ge0$, put
\[
 \tau_T=\inf\{t\ge0:X_t\ge T\},\qquad \inf\varnothing=\infty.
\]
In the random-walk literature, $\tau_T$ is the first passage time into $[T,\infty)$.
We write $\Pr_i$ for the same held walk started at $i\ge0$, and
omit the subscript when $i=0$.
\end{definition}

\noindent\textbf{Conventions.}
Configurations are supplied at unit access cost, and the observer tests
whether the state accepts. Configuration generation, proposal sampling,
and index writing are excluded from that cost. Laws and parameters are
fixed in advance, with each parameter fixed throughout a run. Witnesses
and budget constants may depend on the language, clock, and confidence.
No effectiveness is assumed unless stated. These observation resources
make no claim of a physical implementation or ordinary machine speedup.

\label{rem:resource-models}
Classical complexity charges transitions; the present access convention charges observations.

\begin{definition}[One-sided acceptance]
\label{def:kt-ac1}
A budget of $Q$ ticks observes the configurations at ticks $0,\ldots,Q$:
there are $Q+1$ observations, including the free initial configuration.
Their indices can repeat. For $0<\alpha<1$, a language
$L\subseteq\Sigma^*$ belongs to $\mathbf P^*_{\mu,\alpha}$ if there are a
deterministic machine $M$ and a polynomial $Q$ such that
\begin{enumerate}[label=(\roman*),leftmargin=*]
\item for every $x\notin L$, $A_M(x)=\infty$;
\item for every $x\in L$, $A_M(x)<\infty$ and
$\Pr(\tau_{A_M(x)}\le Q(|x|))\ge\alpha$.
\end{enumerate}
Set $\mathbf P^*_\mu=\bigcap_{0<\alpha<1}\mathbf P^*_{\mu,\alpha}$.
Its quantifiers are $\forall\alpha\,\exists M\,\exists Q\,\forall x$: each
confidence may receive its own witness, and no single witness is required
to serve all levels.
\end{definition}

\label{rem:confidence-intersection}
Equalities at each fixed confidence pass to the intersection class.

\noindent\textbf{Kernel classes.}
For a time-homogeneous stochastic kernel $\eta$ on $\N$, fixed independently
of the machine and input and started at zero, define $\CC_{\eta,\alpha}$
by \cref{def:kt-ac1} with the kernel's hitting probability in place of the
held law's probability. Set
$\CC_\eta=\bigcap_{0<\alpha<1}\CC_{\eta,\alpha}$.
One tick supplies the configuration at the current index under the same
access convention. For $\eta=\eta_\mu$ in \eqref{eq:kernel}, these classes
are $\mathbf P^*_{\mu,\alpha}$ and $\mathbf P^*_\mu$.

Write $\ell_\alpha=-\ln(1-\alpha)$ for the confidence logarithm; $\ln$ is the natural logarithm.

On an accepting input, the accepting indices form the tail
$[A_M(x),\infty)\cap\N$. Acceptance by tick $q$ records a visit to this
tail at some tick $0\le t\le q$, even if the clock later jumps below
$A_M(x)$. Soundness applies to the entire path of each no-instance,
including nonhalting paths.

\begin{lemma}[Reachable logical time]
\label{lem:reachability}
Fix a held law or a stochastic kernel on $\N$, started at zero, and
$0<\alpha<1$. For every $q\in\N$, the quantity
\[
 R_\alpha(q)=
 \max\{T\in\N:\Pr(\tau_T\le q)\ge\alpha\}
\]
is well defined and finite. It is nondecreasing in $q$, and
$R_\alpha(0)=0$. A language belongs to the fixed-confidence class
exactly when some deterministic recognizer $M$ accepts precisely that
language and some allowed polynomial $Q$ satisfies
\[
 A_M(x)\le R_\alpha(Q(|x|))\qquad(x\in L).
\]
Here $A_M(x)$ is finite on $L$ and infinite off $L$.
\end{lemma}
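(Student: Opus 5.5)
The plan rests on one monotonicity fact. For fixed $q$, the map $T\mapsto \Pr(\tau_T\le q)$ is nonincreasing in $T$, because $\{X_t\ge T'\}\subseteq\{X_t\ge T\}$ whenever $T\le T'$, so $\tau_T\le\tau_{T'}$ pathwise.

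For $T=0$ we have $\tau_0=0$, so the set in the definition of $R_\alpha(q)$ contains $0$ and is nonempty. For finiteness, the event $\{\tau_T\le q\}$ equals $\{\max_{0\le t\le q}X_t\ge T\}$. The random variable $\max_{t\le q}X_t$ is almost surely finite, since $X_0=0$ and each $X_t$ is an $\N$-valued random variable obtained from finitely many transitions. Hence $\Pr(\max_{t\le q}X_t\ge T)\to0$ as $T\to\infty$, by continuity of measure along the decreasing events. Since $\alpha>0$, only finitely many $T$ qualify, and together with monotonicity in $T$ the qualifying set is an initial segment $\{0,\dots,R_\alpha(q)\}$.

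Monotonicity of $R_\alpha$ in $q$ follows from $\{\tau_T\le q\}\subseteq\{\tau_T\le q+1\}$. For $R_\alpha(0)=0$, note that $\Pr(\tau_T\le0)=\Pr(X_0\ge T)=0<\alpha$ for $T\ge1$ because $X_0=0$. The same argument covers a kernel on $\N$ started at zero, since only the path-space description of $\tau_T$ is used.

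For the characterisation, the key step is the initial-segment property: $\Pr(\tau_T\le q)\ge\alpha$ if and only if $T\le R_\alpha(q)$. One direction is the definition of the maximum; the other is monotonicity in $T$. Condition (i) of \cref{def:kt-ac1} states that $A_M(x)=\infty$ off $L$, and condition (ii) requires $A_M(x)<\infty$ and $\Pr(\tau_{A_M(x)}\le Q(|x|))\ge\alpha$ on $L$. With $T=A_M(x)$ and $q=Q(|x|)$, condition (ii) is equivalent to $A_M(x)\le R_\alpha(Q(|x|))$, which already forces finiteness. So the two descriptions coincide with the same witness pair $(M,Q)$. Here "$M$ accepts precisely $L$" means $A_M$ is finite exactly on $L$, which is conditions (i) and finiteness in (ii) together.

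The only delicate point is the finiteness of $R_\alpha(q)$ in the kernel case. It needs the observation that a stochastic kernel on $\N$ yields finite values almost surely at each of the finitely many ticks up to $q$, with no tail assumption. I would state this explicitly rather than appeal to moments.
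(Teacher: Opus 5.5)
Your proposal is correct and follows the paper's proof essentially step for step. Both arguments note that the running maximum of finitely many $\mathbb{N}$-valued positions is finite, so $\Pr(\tau_T\le q)=\Pr(\max_{t\le q}X_t\ge T)$ decreases to zero; the eligible targets then form a nonempty finite initial segment, and this initial-segment property yields the radius properties and the equivalence with condition (ii).
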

\begin{proof}
For finite $q$, the running maximum
$Z_q=\max_{0\le t\le q}X_t$ is surely finite: each of the finitely
many positions takes its value in $\N$. Thus
$\Pr(\tau_T\le q)=\Pr(Z_q\ge T)$ decreases to zero as the integer
$T$ increases. Target zero is always eligible. The eligible targets
therefore form a nonempty finite initial segment of $\N$, whose maximum
is $R_\alpha(q)$. Increasing $q$ can only enlarge this segment. At tick
zero the only eligible target is zero.

For a recognizer accepting exactly $L$, one-sided soundness already
holds off $L$. On each yes-instance, the initial-segment property gives
$\Pr(\tau_{A_M(x)}\le Q(|x|))\ge\alpha$ if and only if
$A_M(x)\le R_\alpha(Q(|x|))$.
\end{proof}

For a fixed clock and confidence, define the crossing budget by
\begin{equation}
 q_\alpha(T)=\min\{q\in\N:\Pr_0(\tau_T\le q)\ge\alpha\},
 \qquad \min\varnothing=\infty.
 \label{eq:quantile-def}
\end{equation}
The quantities used throughout the paper have the following meanings.
\begin{center}
\small
\begin{tabularx}{\linewidth}{@{}lX@{}}
\toprule
Term & Meaning\\
\midrule
Threshold $T$ & A logical index, usually $A_M(x)$\\
Crossing time $\tau_T$ & First tick with $X_t\ge T$\\
Crossing probability & $\Pr(\tau_T\le q)$\\
Crossing budget $q_\alpha(T)$ & Fewest ticks giving crossing probability at least $\alpha$\\
Reachable radius $R_\alpha(q)$ & Farthest threshold crossed by $q$ ticks with probability at least $\alpha$\\
\bottomrule
\end{tabularx}
\end{center}
The budget and radius are inverse views of the same crossing probabilities:
$q_\alpha(T)\le q$ exactly when $T\le R_\alpha(q)$.
\Cref{lem:reachability-dictionary} translates radius growth into class bounds.

\begin{definition}[Moments and drift]
\label{def:kt-moments}
For $K\sim\mu$, put $K^+=\max(K,0)$ and $K^-=\max(-K,0)$.
For $r>0$, the positive moment $m_r=\E[(K^+)^r]$ is allowed to be infinite.
We use the finite drift $v=\E K$ only when
$\E|K|=\E K^++\E K^-<\infty$.
Under this condition the variance is $\E[(K-v)^2]$, possibly infinite.
Otherwise an extended mean may be $+\infty$ or $-\infty$; if both parts
are infinite, $\E K$ is undefined. None of these cases is assigned a
finite drift. A positive exponential moment means
$\E e^{\theta K}<\infty$ for some $\theta>0$.
Since $K$ is integer valued, $\Pr(K>0)>0$ holds exactly when
$\mu(k_0)>0$ for some integer $k_0\ge1$.
\end{definition}

\begin{definition}[Computable law]
\label{def:kt-computable-law}
A probability law $\mu$ on $\Z$ is computable if an algorithm, on input
$(k,p)\in\Z\times\N$, returns a rational $d(k,p)$ with
$|d(k,p)-\mu(k)|\le2^{-p}$. The law has total mass one, as usual.
Thus the point probabilities are computable uniformly in $k$, with
certified errors. No bound on the running time of this algorithm is part
of the definition.
\end{definition}

This agrees with the standard notion of a computable probability measure
on the discrete integers; \cref{app:effectiveness} explains the
equivalence and derives an effective tail bound and a decidability result.

\subsection{The transition kernel and two basic bounds}
\label{sec:kt-measure-kernel}
The kernel identifies the held boundary rule, and two bounds compare its reach with free proposals.
A Markov kernel on $\N$ is a matrix of nonnegative transition probabilities
whose rows sum to one. For \eqref{eq:hold} it is
\begin{equation}
 \eta_\mu(i,j)=\begin{cases}
 \mu(j-i),&j\ne i,\\
 \mu(0)+\Pr(K<-i),&j=i.
 \end{cases}
 \label{eq:kernel}
\end{equation}
The row sum is $\sum_{k\ge-i}\mu(k)+\Pr(K<-i)=1$. Off the diagonal, the
transition probability depends only on the displacement; conversely, this
property characterizes the held kernels among stochastic kernels, with a
unique recovering law (\cref{prop:kt-hold-representation}).

\label{app:kernel}
\begin{remark}[Intrinsic representation and uniqueness]
\label{lem:kt-eta-recovery}
\label{prop:kt-hold-representation}
A stochastic kernel $\eta$ on $\N$ is $\eta_\mu$ for a unique probability
law $\mu$ on $\Z$ if and only if there is a function
$\varphi:\Z\setminus\{0\}\to[0,1]$ with
$\eta(i,j)=\varphi(j-i)$ whenever $i\ne j$.
\end{remark}
\begin{proof}
Necessity follows from \eqref{eq:kernel}. For sufficiency, row sums give $\sum_{k\ge-i,k\ne0}\varphi(k)\le1$; monotone convergence gives $\sum_{k\ne0}\varphi(k)\le1$~\cite[Theorem~1.5.7]{Durrett2019}.
Set $\mu(k)=\varphi(k)$ for $k\ne0$ and $\mu(0)=1-\sum_{k\ne0}\varphi(k)$; stochasticity determines $\eta(i,i)=\mu(0)+\sum_{k<-i}\mu(k)$.
For $k\ne0$, any row $i\ge\max(0,-k)$ recovers $\mu(k)=\eta(i,i+k)$, and normalization recovers $\mu(0)$, proving uniqueness; a row $i\ge w$ recovers every mass of a law supported in $[-w,w]$.
\end{proof}

The hold rule never moves farther down than an executed negative proposal.
It also never gains more than the sum of the positive proposals.

\begin{lemma}[Comparison with the free walk]
\label{lem:kt-domination}
With $S_t=\sum_{j=1}^tK_j$, a held walk started at $i\ge0$ satisfies
$X_t\ge i+S_t$ for every $t$. Started at zero, it also satisfies
\begin{equation}
 \max_{0\le s\le q}X_s\le\sum_{j=1}^qK_j^+.
 \label{eq:positive-sum}
\end{equation}
Consequently $\Pr_i(\tau_T\le q)\ge\Pr(S_q\ge T)$ for $i\ge0$.
\end{lemma}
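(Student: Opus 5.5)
I will prove \cref{lem:kt-domination} by induction on the tick index, comparing the held update \eqref{eq:hold} with the free update step by step, and then derive the probability bound as a corollary.

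\textbf{Lower bound.} I will show $X_t-i-S_t\ge0$ for all $t$, with the same proposals $K_j$ driving both walks. At $t=0$ both sides equal $i$. For the step $t\to t+1$ there are two cases. If the proposal is executed, then $X_{t+1}=X_t+K_{t+1}$ and the difference $X_t-(i+S_t)$ is unchanged. If the proposal is held, then $X_t+K_{t+1}<0$, so $K_{t+1}<-X_t\le0$. Then $X_{t+1}=X_t>X_t+K_{t+1}\ge i+S_t+K_{t+1}=i+S_{t+1}$. In both cases the inequality is preserved. Equivalently, the difference $X_t-i-S_t$ is nondecreasing: it increases by $-K_{t+1}-X_t>0$ exactly at holds.

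\textbf{Upper bound \eqref{eq:positive-sum}.} Starting at zero, I will show $X_t\le\sum_{j\le t}K_j^+$ by induction. An executed step gives $X_{t+1}=X_t+K_{t+1}\le X_t+K_{t+1}^+$. A held step gives $X_{t+1}=X_t\le X_t+K_{t+1}^+$. So each step raises $X$ by at most $K_{t+1}^+$. Since the partial sums $\sum_{j\le s}K_j^+$ are nondecreasing in $s$, every $X_s$ with $s\le q$ is bounded by $\sum_{j\le q}K_j^+$, and hence so is the running maximum.

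\textbf{Consequence.} On the event $\{S_q\ge T\}$, the lower bound gives $X_q\ge i+S_q\ge T$, because $i\ge0$. Hence $\tau_T\le q$ on this event. The walk started at $i$ is built from the same independent proposal sequence, so this pathwise event inclusion yields $\Pr_i(\tau_T\le q)\ge\Pr(S_q\ge T)$.

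There is no real obstacle. The only point needing care is the held case of the lower bound, where one must use $X_t\ge0$ to conclude $K_{t+1}<0$.
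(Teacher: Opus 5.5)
Your proof is correct and follows the paper's argument essentially verbatim: the same two-case induction (executed versus held, using $X_t\ge0$ to force $K_{t+1}<0$) for the lower bound, the stepwise bound $X_{t+1}\le X_t+K_{t+1}^+$ for the running maximum, and the pathwise inclusion $\{S_q\ge T\}\subseteq\{\tau_T\le q\}$ for the probability bound. One small slip in your aside: at a hold the difference $X_t-i-S_t$ increases by exactly $-K_{t+1}$, not by $-K_{t+1}-X_t$. Since the increment is still positive, monotonicity of the difference and the rest of your proof are unaffected.
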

\begin{proof}
For the lower bound we use induction on $t$; the case $t=0$ is $X_0=i=i+S_0$.
If the proposal $K_{t+1}$ is executed, then
$X_{t+1}=X_t+K_{t+1}\ge i+S_t+K_{t+1}=i+S_{t+1}$.
If it is held, then $X_t+K_{t+1}<0\le X_t$, so $K_{t+1}<0$ and
$X_{t+1}=X_t\ge i+S_t> i+S_{t+1}$.

For the upper bound, both cases give $X_{t+1}\le X_t+K^+_{t+1}$. Started at zero,
this yields $X_s\le\sum_{j=1}^s K^+_j$ for every $s$. The right-hand side is
nondecreasing in $s$, which proves \eqref{eq:positive-sum}.

Finally, with common proposals, $S_q\ge T$ implies $X_q\ge i+S_q\ge T$, hence
$\tau_T\le q$. This inclusion of events gives
$\Pr_i(\tau_T\le q)\ge\Pr(S_q\ge T)$.
\end{proof}

\begin{remark}[Comparison with the reflected walk]
\label{rem:lindley-comparison}
Under common proposals and $X_0=W_0=0$, the held walk dominates Lindley's walk $W_{t+1}=\max(0,W_t+K_{t+1})$: if $X_t\ge W_t$, the three cases $K_{t+1}\ge-W_t$, $-X_t\le K_{t+1}<-W_t$, and $K_{t+1}<-X_t$ preserve this order.
Writing $S_0=0$ and $S_t=\sum_{j=1}^tK_j$, this gives $X_t\ge W_t=S_t-\min_{0\le s\le t}S_s$, and reversing the iid increments shows that $W_t$ has the law of $\max_{0\le s\le t}S_s$ at each fixed $t$.
Upper bounds for $W$ therefore do not bound $X$, so \cref{sec:kt-taxonomy} proves held-walk drift estimates directly.
\end{remark}

\begin{lemma}[Elementary proposal-maxima comparison]
\label{lem:proposal-maxima}
For iid held proposals and $q\ge1$, put $J_q=\max_{1\le j\le q}K_j^+$.
For integer $T\ge1$,
\[
 \{J_q\ge T\}\subseteq\{\tau_T\le q\}
 \subseteq\{J_q\ge T/q\}.
\]
The integer quantile
$u_\alpha(q)=\max\{u\in\N:\Pr(J_q\ge u)\ge\alpha\}$ is finite and satisfies
$u_\alpha(q)\le R_\alpha(q)\le q u_\alpha(q)$.
For $q=0$ set $u_\alpha(0)=0$, agreeing with $R_\alpha(0)=0$.
For $q\ge1$ and every real $z>0$, define $\bar F(z)=\Pr(K^+\ge z)$.
Then
\[
 \Pr(J_q\ge z)=1-(1-\bar F(z))^q;
 \qquad \Pr(J_q\ge z)\ge\alpha
 \ \Longleftrightarrow\ \bar F(z)\ge1-(1-\alpha)^{1/q}.
\]
The tail threshold is real; the quantile remains over integer targets.
\end{lemma}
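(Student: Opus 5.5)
The plan has three parts: prove the two event inclusions pathwise, pass them to quantiles through the monotonicity of $R_\alpha$ from \cref{lem:reachability}, and then compute the law of $J_q$ using independence.

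\textbf{Left inclusion.} Suppose $J_q\ge T$. Pick the first index $j\le q$ with $K_j\ge T$. Since $X_{j-1}\ge0$ and $K_j>0$, the proposal $X_{j-1}+K_j\ge0$ is admissible, so it is executed. This gives $X_j=X_{j-1}+K_j\ge T$, hence $\tau_T\le j\le q$.

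\textbf{Right inclusion.} Use \eqref{eq:positive-sum} from \cref{lem:kt-domination}. On $\{\tau_T\le q\}$,
\[
T\le\max_{s\le q}X_s\le\sum_{j=1}^qK_j^+\le qJ_q,
\]
so $J_q\ge T/q$.

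\textbf{Finiteness of $u_\alpha(q)$.} The variable $J_q$ is an almost surely finite integer. Hence $\Pr(J_q\ge u)\downarrow0$ as $u\to\infty$. The value $u=0$ is always eligible, so the set of eligible $u$ is a nonempty finite initial segment of $\N$, exactly as in the proof of \cref{lem:reachability}.

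\textbf{Lower bound $u_\alpha(q)\le R_\alpha(q)$.} Put $u=u_\alpha(q)$. If $u=0$ the bound is trivial. If $u\ge1$, the left inclusion gives $\Pr(\tau_u\le q)\ge\Pr(J_q\ge u)\ge\alpha$, so $u$ is an eligible target for $R_\alpha(q)$.

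\textbf{Upper bound $R_\alpha(q)\le qu_\alpha(q)$.} Put $T=R_\alpha(q)$; we may assume $T\ge1$. The right inclusion gives $\Pr(J_q\ge T/q)\ge\alpha$. Since $J_q$ is an integer, $\{J_q\ge T/q\}=\{J_q\ge\lceil T/q\rceil\}$. So $\lceil T/q\rceil$ is eligible, which gives $\lceil T/q\rceil\le u_\alpha(q)$ and therefore $T\le q\lceil T/q\rceil\le qu_\alpha(q)$. This integer-rounding step is the only place needing care, because the tail threshold $T/q$ is real while the quantile is taken over integers.

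\textbf{The case $q=0$.} Both quantities are $0$ by convention and by \cref{lem:reachability}.

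\textbf{Distribution of $J_q$.} For real $z>0$, the variables $K_j^+$ are iid, so
\[
\Pr(J_q<z)=\prod_{j=1}^q\Pr(K_j^+<z)=(1-\bar F(z))^q,
\]
which is the stated formula for $\Pr(J_q\ge z)$. The equivalence follows by rearranging $1-(1-\bar F)^q\ge\alpha$ into $(1-\bar F)^q\le1-\alpha$. Because $t\mapsto t^{1/q}$ is strictly increasing on $[0,1]$, this is equivalent to $1-\bar F\le(1-\alpha)^{1/q}$, that is, $\bar F(z)\ge1-(1-\alpha)^{1/q}$.

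No step is a genuine obstacle. The main points to watch are the admissibility of the large jump in the left inclusion and the ceiling argument in the upper quantile bound.
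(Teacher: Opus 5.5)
Your proof is correct and follows the paper's argument step for step: a single proposal of size at least $T$ is executed from any nonnegative position, \eqref{eq:positive-sum} gives $\max_{s\le q}X_s\le qJ_q$, and the initial-segment property gives finiteness. As in the paper, the ceiling $\lceil T/q\rceil$ handles the integer rounding in $R_\alpha(q)\le q\,u_\alpha(q)$, and independence gives the tail formula and the threshold equivalence.
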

\begin{proof}
A proposal at least $T$ crosses from every nonnegative position.
Conversely, \eqref{eq:positive-sum} gives
$\max_{s\le q}X_s\le\sum_{j\le q}K_j^+\le qJ_q$.
The maximum $J_q$ is finite almost surely, so its eligible integer targets
form a nonempty finite initial segment, as in \cref{lem:reachability}.
If $T$ is reachable at confidence $\alpha$, then
$\lceil T/q\rceil$ is eligible for $J_q$, hence $T\le q u_\alpha(q)$.
The lower quantile bound follows from the first inclusion, also when
$u_\alpha(q)=0$. Independence gives the tail formula and threshold.
\end{proof}

The factor $q$ can matter: for $\delta_1$, $u_\alpha(q)=1$ but
$R_\alpha(q)=q$. This estimate alone does not classify arbitrary clocks.

\begin{lemma}[Supported forward blocks]
\label{lem:kt-block-crossing}
If $p_0=\mu(k_0)>0$ for some $k_0\ge1$, then for $T\ge1$,
$m=\lceil T/k_0\rceil$, $j\ge1$, and every $i\ge0$,
\[
 \Pr_i(\tau_T>jm)\le(1-p_0^m)^j\le e^{-jp_0^m}.
\]
Thus every fixed target has a constant tick budget at any fixed confidence.
\end{lemma}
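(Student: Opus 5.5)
The plan is a block argument. Split the first $jm$ ticks into $j$ consecutive blocks of $m$ ticks each. In any single block, the event that all $m$ proposals equal $k_0$ forces a crossing of $T$ if it has not already happened. The product bound then comes from the Markov property at block boundaries.

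\textbf{One block.} Fix a starting position $y\ge0$. If $K_{s+1}=\cdots=K_{s+m}=k_0$, then each proposal is positive. Each is therefore executed, since $X+k_0\ge0$ and the held rule never blocks a nonnegative proposal. After the $m$ ticks the position is $y+mk_0\ge mk_0\ge T$ because $m=\lceil T/k_0\rceil$. So the event $\{K_{s+1}=\cdots=K_{s+m}=k_0\}$ implies $\max_{s\le t\le s+m}X_t\ge T$. Its probability is $p_0^m$ by independence, and it is independent of $\mathcal F_s=\sigma(K_1,\dots,K_s)$.

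\textbf{Iterate.} Let $E_r$ be the event that the proposals in block $r$ (ticks $(r-1)m+1,\dots,rm$) all equal $k_0$. The events $E_1,\dots,E_j$ are independent, since they concern disjoint sets of iid proposals. Each has probability $p_0^m$. On $\bigcup_r E_r$ we have $\tau_T\le jm$: if $E_r$ occurs and $\tau_T>(r-1)m$, the one-block argument from the position $X_{(r-1)m}\ge0$ gives $\tau_T\le rm$. Hence
\[
\Pr_i(\tau_T>jm)\le\Pr\Bigl(\bigcap_{r=1}^j E_r^c\Bigr)=(1-p_0^m)^j .
\]
The bound holds for every starting $i\ge0$, because the argument only used nonnegativity of the position. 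There is no need to invoke the Markov property explicitly; independence of the proposal blocks suffices. The second inequality is $1-x\le e^{-x}$.

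\textbf{Constant budget.} For fixed $T$ and $\alpha$, choose $j$ with $e^{-jp_0^m}\le1-\alpha$, that is, $j\ge\ell_\alpha p_0^{-m}$. Then $q=jm$ ticks suffice, and this $q$ does not depend on the starting point.

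The only point needing care is that the held rule never suppresses positive proposals. This is immediate from \eqref{eq:hold}, so I expect no real obstacle.
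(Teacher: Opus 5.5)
Your proposal is correct and follows the paper's proof: you split the first $jm$ proposals into $j$ disjoint blocks, observe that an all-$k_0$ block executes every jump and reaches $T$ from any nonnegative position, obtain $(1-p_0^m)^j$ from independence of the blocks, then apply $1-x\le e^{-x}$ and take $j=\lceil p_0^{-m}\ell_\alpha\rceil$ for the constant budget. Your opening reference to the Markov property is unnecessary, as you note yourself; independence of the disjoint proposal blocks is exactly what the paper uses.
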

\begin{proof}
Partition the proposals into $j$ disjoint blocks of length $m$.
The event that all proposals in a block equal $k_0$ has probability $p_0^m$.
The independent-trial formula gives failure probability $(1-p_0^m)^j$. Each
such block reaches $T$ from any nonnegative starting position, since all
its jumps execute and $mk_0\ge T$.
If no hit occurs, all these events fail. For confidence $\alpha$, take
$j=\lceil p_0^{-m}\ell_\alpha\rceil$.
\end{proof}

For nonnegative proposals, equality holds in \eqref{eq:positive-sum}.
For a law on $\{0,1\}$ with $\mu(1)=p>0$ and target $1$, the block
failure probability is $(1-p)^q$ from zero. Larger-target block budgets
can exceed the optimal budget.

\begin{proposition}[Frozen laws]
\label{prop:kt-stuck-walk}
If $\Pr(K>0)=0$, then $X_t=0$ almost surely for every $t$ and
$\mathbf P^*_{\mu,\alpha}=\mathbf P^*_\mu=\{\varnothing,\Sigma^*\}$
for every $0<\alpha<1$.
\end{proposition}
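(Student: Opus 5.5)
First I will show that the walk never leaves zero, by induction on $t$ in \eqref{eq:hold}. For the base case, $X_0=0$. For the step, suppose $X_t=0$. Almost surely $K_{t+1}\le0$, since $\Pr(K>0)=0$. If $K_{t+1}=0$ the jump is executed and $X_{t+1}=0$; if $K_{t+1}<0$ the proposal is inadmissible and is held, so again $X_{t+1}=0$. A countable intersection of almost-sure events is almost sure, so $X_t=0$ for all $t$ simultaneously with probability one. A shorter route uses \eqref{eq:positive-sum}: $0\le X_s\le\sum_{j\le q}K_j^+=0$ almost surely. Consequently $\Pr(\tau_T\le q)=0$ for every $T\ge1$ and every $q$, while $\tau_0=0$. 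Hence $R_\alpha(q)=0$ for all $q$ and every $0<\alpha<1$.

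Next I will feed this into \cref{lem:reachability}. A language $L$ belongs to $\mathbf P^*_{\mu,\alpha}$ exactly when some recognizer $M$ accepts precisely $L$ with $A_M(x)\le R_\alpha(Q(|x|))=0$ for $x\in L$. In other words, $A_M(x)=0$, so the initial configuration $C_0^x$ is already accepting. By \cref{def:kt-config-path} the initial configuration has the fixed state $q_0$, which does not depend on $x$. Thus $A_M(x)=0$ for one input forces $q_0=q_{\rm acc}$, and then $A_M(y)=0$ for every $y$, giving $L=\Sigma^*$. Otherwise $L$ has no members, giving $L=\varnothing$.

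For the converse inclusion, both languages are realised directly. A machine with $q_0=q_{\rm acc}$ accepts $\Sigma^*$ with $A_M\equiv0\le R_\alpha(0)$ under the zero budget polynomial. A machine with $q_0=q_{\rm rej}$ has $A_M\equiv\infty$ and accepts $\varnothing$, where the yes-condition is vacuous. This gives $\{\varnothing,\Sigma^*\}$ at every fixed confidence, and intersecting over $\alpha$ gives the same for $\mathbf P^*_\mu$.

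The only delicate point is the identification $A_M(x)=0\iff q_0=q_{\rm acc}$. This relies on the model's convention that halting is read from the state alone and that the start state is independent of the input. That convention rules out any input-dependent acceptance at index zero.
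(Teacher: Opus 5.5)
Your proof is correct and follows the paper's argument. The walk is frozen at zero, so every accepting index must be $0$; the input-independent start state then forces $L\in\{\varnothing,\Sigma^*\}$, and the immediately accepting and immediately rejecting machines realise both languages. Routing the first step through $R_\alpha(q)=0$ and \cref{lem:reachability} only makes explicit the paper's observation that a yes-instance must have $A_M(x)=0$.
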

\begin{proof}
Every proposal at zero is either zero or held. A yes-instance must
therefore have $A_M(x)=0$. The initial state is independent of $x$: if it
accepts, soundness forces $L=\Sigma^*$; otherwise completeness forces
$L=\varnothing$. The immediately accepting and immediately rejecting
machines attain both languages.
\end{proof}

\subsection{Sequential locality and the reachability dictionary}
\label{sec:reachability-dictionary}

This is the standard sequential-input locality observation: a head moving
by at most one cell per transition cannot read beyond the first $T$ cells
in $T$ transitions; related crossing-sequence arguments for one-tape
machines go back to Hennie~\cite{Hennie1965}. We give the proof for our
tape convention.

\begin{lemma}[Prefix truncation]
\label{lem:kt-prefix-locality}
For a deterministic machine and an integer $T\ge0$, let $u$ be the prefix
of $x$ of length $\min(|x|,T)$. The runs on $x$ and $u$ have the same
states and head positions through index $T$. If either run first halts
at an index at most $T$, the other first halts at the same index and in
the same state.
\end{lemma}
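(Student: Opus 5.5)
The plan is to prove, by induction on the step index $s$ with $0\le s\le T$, a strengthened invariant. For each $s$, the runs on $x$ and on $u$ agree in state and head position. Moreover, their tape contents agree on cells $1,\ldots,T-s+1$, or more simply on every cell the head can still reach before index $T$. The cleanest choice is a distance bound rather than a shrinking window. At index $s$, the head is at a cell at most $s+1$, because it starts at cell $1$ and moves by at most one per transition. The tapes of the two runs agree on every cell $c$ with $c\le T$ or $c$ already written.

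Concretely, the claim at index $s$ has three parts: (a) both runs are in the same state; (b) both heads are at the same cell $h_s\le s+1$; (c) for every cell $c\le T$ the scanned symbols coincide, and cells beyond $T$ have never been scanned in either run unless $s$ exceeds $T-1$.

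The base case $s=0$ uses that $x$ and $u$ agree on cells $1,\ldots,\min(|x|,T)$. On cells $\min(|x|,T)<c\le T$, both tapes are blank when $|x|\le T$, since then $u=x$. When $|x|>T$, the window $c\le T$ is covered by the common prefix $u$. So the initial tapes agree on cells $1,\ldots,T$, with the same state $q_0$ and head at cell $1$.

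For the inductive step from $s<T$ to $s+1$, the head is at $h_s\le s+1\le T$. By (c), the scanned symbol is the same in both runs. If the common state is nonhalting, determinism gives the same written symbol, the same new state and the same move. The left-boundary convention is applied identically, so the new head position $h_{s+1}\le s+2$ agrees. The write lands at cell $h_s\le T$ in both runs and preserves agreement on cells $\le T$. If the common state is halting, both paths repeat the configuration by \cref{def:kt-config-path}, and the agreement persists trivially. This yields equal states and head positions for all indices $0,\ldots,T$.

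For the halting clause, suppose one run first halts at index $i\le T$. The states agree at every index $\le T$, so the other run is in the same halting state at index $i$ and in nonhalting states at all earlier indices. Hence it also first halts at $i$, in the same accept or reject state.

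The only delicate point is the bookkeeping that the scanned cell at index $s<T$ is at most $T$. This holds because $h_s\le s+1$. The agreement is only needed through the transition from $T-1$ to $T$, which reads cell $h_{T-1}\le T$. I do not expect a genuine obstacle beyond stating the invariant with exactly this range.
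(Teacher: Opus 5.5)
Your proof is correct and follows the paper's argument: induction through index $T$ with the invariant that states, head positions and the contents of cells $1,\ldots,T$ agree, using that before transition $t+1$ the head scans at most cell $t+1\le T$, and then reading off the first-halting clause from equality of states. The abandoned shrinking-window phrasing and the second clause of your part (c) are unnecessary. The operative invariant is just agreement on cells $1,\ldots,T$ together with $h_s\le s+1$.
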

\begin{proof}
If $|x|\le T$, then $u=x$. Otherwise the initial tapes agree in cells
$1,\ldots,T$. Before transition $t+1$, for $0\le t<T$, the head is at
most at cell $t+1\le T$. Induction shows that the state, head position,
and contents of these cells agree through index $T$. Cell $T+1$ may be
the final head position but is not read during these $T$ transitions.
In the truncated input its blank therefore causes no earlier difference.
The case $T=0$ compares initial states without reading a cell.
Equality of states through $T$ gives the assertion about first halting.
\end{proof}

\begin{lemma}[Sublinear accepting time is constant]
\label{lem:kt-sublinear-constant}
Let $g:\N\to[0,\infty)$ be nondecreasing and satisfy
$g(n)/n\to0$ as $n\to\infty$. Under the sequential tape conventions of
\cref{def:kt-config-path}, suppose $M$ accepts exactly $L$ and
$A_M(x)\le g(|x|)$ for every $x\in L$.
Then $A_M(x)\le t_{\rm loc}$ on $L$ for one integer $t_{\rm loc}$ depending only on $g$.
If $M$ is total and $T_M(x)\le g(|x|)$ on every input, its total
halting time is likewise bounded by $t_{\rm loc}$.
\end{lemma}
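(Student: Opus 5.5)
Since $g(n)/n\to0$, we can fix an integer $n_0\ge1$ such that $g(n)<n$ for every $n\ge n_0$. Then put
\[
 t_{\rm loc}=\max\bigl(n_0,\lceil g(n_0)\rceil\bigr).
\]
Because $g$ is nondecreasing, this gives $g(n)\le t_{\rm loc}$ for all $n\le n_0$. The constant depends only on $g$. The task is to show that every accepting input has $A_M(x)\le t_{\rm loc}$. Inputs with $|x|\le n_0$ satisfy this directly. So the real work concerns long inputs, and there I will argue by strong induction on $|x|$, using \cref{lem:kt-prefix-locality} to truncate.

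\textbf{Inductive step.} Let $x\in L$ with $|x|=n>n_0$. Write $A=A_M(x)$. By hypothesis $A\le g(n)<n$. Let $u$ be the prefix of $x$ of length $A$; since $A<n$, this prefix is strictly shorter than $x$. The run on $x$ first halts at index $A\le A$, in the accepting state. By \cref{lem:kt-prefix-locality} with $T=A$, the run on $u$ also first halts at index $A$ in the accepting state. Hence $u\in L$ and $A_M(u)=A$. Two cases remain:
\begin{itemize}
 \item If $|u|\le n_0$, then $A=A_M(u)\le g(|u|)\le g(n_0)\le t_{\rm loc}$.
 \item If $|u|>n_0$, the inductive hypothesis applied to $u$ gives $A=A_M(u)\le t_{\rm loc}$.
\end{itemize}

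\textbf{Total machines.} The same argument works with $T_M$ in place of $A_M$. Truncating an input of length $n>n_0$ at length $T_M(x)<n$ preserves the first halting index, and it no longer matters which halting state is reached. Hence $T_M(x)\le t_{\rm loc}$ on every input.

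\textbf{Main obstacle.} The one step that needs care is a convention check. The prefix lemma uses truncation length $\min(|x|,T)$, and it preserves states through index $T$ even though the head may sit on cell $T+1$. So I must confirm that the first halting index $A$ is actually observed within the window $[0,A]$ that the lemma covers. It is, because the lemma is stated for halting at an index at most $T$ and we take $T=A$. Beyond this check, the argument is a routine induction.
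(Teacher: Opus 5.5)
Your proof is correct and follows the paper's route: truncate an accepted $x$ at its accepting index $A$, use \cref{lem:kt-prefix-locality} and soundness to get $u\in L$ with $A_M(u)=A$, and then apply the bound $g$ to the prefix. The strong induction is harmless but unnecessary: since $|u|\le A$ and $g$ is nondecreasing, you already have $A\le g(|u|)\le g(A)$, which forces $A<n_0$ directly (this is the paper's one-line finish, and it also shows that your second case $|u|>n_0$ never occurs).
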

\begin{proof}
For $x\in L$ put $T=A_M(x)$ and truncate it to $u$ of length
$\min(|x|,T)$. By \cref{lem:kt-prefix-locality}, $M$ first accepts $u$
at index $T$. Soundness implies $u\in L$, so
\[
 T=A_M(u)\le g(|u|)\le g(T).
\]
Choose an integer $R_{\rm loc}\ge1$ such that $g(t)<t$ for every integer
$t\ge R_{\rm loc}$. Such an $R_{\rm loc}$ exists because $g(t)/t\to0$.
The displayed inequality forces $T<R_{\rm loc}$, so
$t_{\rm loc}=R_{\rm loc}-1$ works for every accepting input.
For a total machine use $T=T_M(x)$; its bound applies to $u$ regardless
of the halting state.
\end{proof}

The proof only needs $g(t)<t$ eventually. If $A_M(x)\le c|x|+b$ on accepting inputs with
$0\le c<1$ and $b\ge0$, prefix truncation gives $T\le cT+b$, hence
$A_M(x)\le\lfloor b/(1-c)\rfloor$. The same holds for total halting
time. At the linear boundary a scan to the first blank takes $|x|+1$
transitions, so unbounded times are possible.

\begin{corollary}[Logarithmic accepting time]
\label{lem:kt-log-constant}
If $M$ accepts exactly $L$ and
$A_M(x)\le c(\log_2(|x|+1)+1)$ on $L$ for a fixed $c>0$,
then its accepting time is bounded by a constant depending only on $c$.
The corresponding statement holds for total halting time bounded by this
expression on every input.
\end{corollary}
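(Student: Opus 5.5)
The plan is to reduce the statement to \cref{lem:kt-sublinear-constant} by exhibiting a suitable nondecreasing function $g$. Put $g(n)=c(\log_2(n+1)+1)$ for $n\in\N$, so $g:\N\to[0,\infty)$.

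First, $g$ is nondecreasing, because $\log_2$ is increasing. Second, $g(n)/n\to0$: $\log_2(n+1)/n\to0$ and $c/n\to0$. The hypothesis $A_M(x)\le g(|x|)$ on $L$ is exactly the assumed bound. \Cref{lem:kt-sublinear-constant} then gives an integer $t_{\rm loc}$, depending only on $g$ and hence only on $c$, with $A_M(x)\le t_{\rm loc}$ for all $x\in L$.

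For the total-halting-time version, assume $M$ is total and $T_M(x)\le g(|x|)$ on every input. The second clause of the same lemma applies verbatim and bounds $T_M(x)$ by the same constant.

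There is no real obstacle. The only points to check are that $g$ is defined at $n=0$ and is nonnegative, which holds since $g(0)=c>0$, and that the constant depends only on $c$. That dependence is inherited from the lemma's choice of $R_{\rm loc}$ as the first integer beyond which $g(t)<t$. If an explicit constant is wanted, one can take $R_{\rm loc}$ to be any integer $t\ge1$ with $c(\log_2(t+1)+1)<t$ holding for all larger integers; such a threshold exists since $t-c\log_2(t+1)$ is eventually increasing.
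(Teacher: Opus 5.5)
Your proposal is correct and matches the paper's proof: you apply \cref{lem:kt-sublinear-constant} to $g(n)=c(\log_2(n+1)+1)$ after checking that it is nonnegative, nondecreasing and $o(n)$. The total-halting-time version follows, as you say, from the lemma's second clause, with the constant depending only on $g$ and hence only on $c$.
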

\begin{proof}
Apply \cref{lem:kt-sublinear-constant} to
$g(n)=c(\log_2(n+1)+1)$, which is nonnegative, nondecreasing and $o(n)$.
\end{proof}

\begin{definition}[Reverse-definite language]
\label{def:reverse-definite}
A language is reverse definite if, for some fixed integer $k\ge1$,
$A\subseteq\Sigma^k$, and $F\subseteq\Sigma^{<k}$,
\begin{equation}
 L=F\ \cup\ \bigcup_{v\in A}v\Sigma^*.
 \label{eq:reverse-definite}
\end{equation}
Write $\RDEF$ for this class.
Membership depends on the first $k$ symbols,
with shorter words handled separately. This is the classical reverse-definite
class~\cite[Section~4.2]{BrzozowskiLiu2012}. The prefix language
$101\,\Sigma^*$ of \cref{sec:preliminaries} is reverse definite with
$k=3$ and $A=\{101\}$; the even-length language is not, since for any $k$
the words $a^{2k}$ and $a^{2k+1}$ share their length-$k$ prefix and have
opposite membership.
\end{definition}

The original theory of definite events and automata is due to
Brzozowski and to Perles, Rabin, and
Shamir~\cite{Brzozowski1963,PerlesRabinShamir1963}. Definite events depend
on a bounded suffix; reversing words gives the bounded-prefix counterpart
used here.

\begin{lemma}[Bounded accepting time and bounded prefixes]
\label{lem:finite-prefix-class}
A language has a sound recognizer with a constant accepting-time bound
if and only if it is reverse definite. Every reverse-definite language
has a total constant-time decider. If a clock reaches every finite target
at confidence $\alpha$ with some finite budget, then
$\RDEF\subseteq\CC_\alpha$, where $\CC_\alpha$ denotes its class.
In particular this holds for any iid held law with $\Pr(K>0)>0$.
\end{lemma}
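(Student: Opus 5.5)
The statement has four parts: the two directions of the equivalence, the total constant-time decider, and the class inclusion. I would prove them in that order.

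\emph{From bounded accepting time to $\RDEF$.} Suppose $M$ accepts exactly $L$ and $A_M(x)\le c$ for every $x\in L$. Put $k=c+1\ge1$. Let $F$ be the words of $L$ of length less than $k$, and let $A=L\cap\Sigma^k$. Take a word $x$ with $|x|\ge k$ and prefix $v$ of length $k$. We show $x\in L$ exactly when $v\in L$.
\begin{itemize}
\item If $x\in L$, then $A_M(x)\le c<k$. By \cref{lem:kt-prefix-locality} with $T=c$, the runs on $x$ and on its prefix of length $c$ first halt at the same index in the same state. Applying the lemma again to $v$ and that same prefix, $M$ accepts $v$ at index $A_M(x)$, so $v\in L$.
\item Conversely, if $v\in L$, the same two applications show that $M$ accepts $x$ at index $A_M(v)\le c$.
\end{itemize}
Words shorter than $k$ are handled by $F$. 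This gives the representation \eqref{eq:reverse-definite}.

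\emph{From $\RDEF$ to a total constant-time decider.} Given \eqref{eq:reverse-definite}, build a machine that reads cells $1,\dots,k$ in order, moving right each step. It stores the symbols read in a finite state, treating the first blank as end of input. After at most $k$ transitions the state determines either the whole word (if it is shorter than $k$) or its length-$k$ prefix. The machine then halts, accepting exactly when the word is in $F$ or the prefix is in $A$. No end marker or length oracle is needed, because the blank signals the end. This decider is total, with halting time at most $k$, and it is in particular a sound recognizer with accepting time at most $k$. That gives both the reverse direction of the equivalence and the second sentence of the statement.

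\emph{Class inclusion.} Take a reverse-definite $L$ and its decider with $A_M(x)\le k$ on $L$. By hypothesis, target $k$ is reached at confidence $\alpha$ within some finite budget $q_0$. Since crossing $k$ also crosses every smaller target, $A_M(x)\le k\le R_\alpha(q_0)$ for all $x\in L$. The constant polynomial $Q\equiv q_0$ is allowed, so \cref{lem:reachability} gives $L\in\CC_\alpha$. For an iid held law with $\Pr(K>0)>0$, \cref{lem:kt-block-crossing} supplies a constant budget for every fixed target at every confidence, so the hypothesis holds.

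The only delicate point is the first direction. One must check that accepting at an index at most $c$ lets the prefix argument run with $T=c<k$. One must also check that the case $|x|<k$ is fully absorbed into $F$. I expect no real obstacle beyond this bookkeeping.
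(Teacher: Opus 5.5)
Your proof is correct and follows the paper's argument. Prefix truncation turns a constant accepting-time bound into the representation with $A=L\cap\Sigma^k$ and $F=L\cap\Sigma^{<k}$, a left-to-right finite-control scan of at most $k$ cells gives the total constant-time decider, and a finite budget for that constant target (supplied by \cref{lem:kt-block-crossing} in the iid case) gives the inclusion. The only difference is cosmetic: the paper takes $k=\max(1,t_0)$ and applies \cref{lem:kt-prefix-locality} once with $T=k$, whereas your choice $k=c+1$ passes through the length-$c$ prefix with two applications, which is equally valid.
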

\begin{proof}
Suppose $M$ accepts exactly $L$ by a constant index $t_0$ on all
accepted inputs. Put $k=\max(1,t_0)$. Membership is equivalent to
acceptance by index $k$. By \cref{lem:kt-prefix-locality}, words of
length at least $k$ have the same outcome as their length-$k$ prefixes.
Thus \eqref{eq:reverse-definite} holds with
$A=L\cap\Sigma^k$ and $F=L\cap\Sigma^{<k}$.
Conversely, for fixed $k,A,F$, a finite-control decider reads at most the
first $k$ cells. If it encounters a blank after $j<k$ symbols, it decides
membership in $F$ in $j+1\le k$ transitions. If it reads $k$ symbols,
it decides whether their prefix is in $A$ in $k$ transitions. This also
handles the empty input by reading its first blank. Its time bound is
constant. A finite budget reaching that constant target at confidence
$\alpha$ observes every accepting tail; no-instance soundness is
inherited from the decider. For an iid law,
\cref{lem:kt-block-crossing} supplies the finite-target hypothesis.
\end{proof}

\begin{lemma}[Reachability dictionary]
\label{lem:reachability-dictionary}
Fix a clock on $\N$, started at zero, and a confidence $0<\alpha<1$.
Write $\CC_\alpha$ for its class and $R_\alpha(q)$ for
\cref{lem:reachability}'s radius. Then the following hold.
\begin{enumerate}[label=(\roman*),leftmargin=*]
\item If $R_\alpha(q)\le C(q+1)^d$ for finite $C,d>0$, then
$\CC_\alpha\subseteq\Ptime$.
\item If $R_\alpha(q)\ge c q^\varepsilon$ eventually, with
$c,\varepsilon>0$, then $\Ptime\subseteq\CC_\alpha$.
\item If $1+R_\alpha(q)=O((q+1)^\varepsilon)$ for every
$\varepsilon>0$, and every finite target is reachable at confidence
$\alpha$ with a finite budget, then $\CC_\alpha=\RDEF$.
\item If eventually
$2^{c q^\varepsilon}\le R_\alpha(q)\le2^{C(q+1)^d}$ for positive
$c,\varepsilon,C,d$, then $\CC_\alpha=\EXP$.
\end{enumerate}
\end{lemma}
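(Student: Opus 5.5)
All four parts go through \cref{lem:reachability}: a language lies in $\CC_\alpha$ exactly when some recognizer $M$ for it and some budget polynomial $Q$ satisfy $A_M(x)\le R_\alpha(Q(|x|))$ on $L$. Each part therefore converts a growth bound on $R_\alpha$ into a bound on accepting times, or conversely picks a polynomial $Q$ so that $R_\alpha(Q(n))$ dominates a given time bound.

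\textbf{Upper bounds.} For (i), take $L\in\CC_\alpha$ with witness $(M,Q)$. Then $A_M(x)\le C(Q(|x|)+1)^d$, a polynomial $p(|x|)$, on $L$. Soundness of $M$ is not yet a decider, so I clock $M$: simulate it for $p(|x|)$ steps, accept if it has accepted, reject otherwise. This is polynomial after computing $p(|x|)$, which is possible because $p$ is bounded by an explicit polynomial with nonnegative integer coefficients (I enlarge $C$ and $d$ to integers), so $L\in\Ptime$.

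For (iv), upper half, the same clocking argument with bound $2^{C(Q(n)+1)^d}$ places $L$ in $\EXP$, again after enlarging to integer constants so that the clock is computable in exponential time.

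For (iii), upper half, given $\varepsilon>0$ we get $A_M(x)\le C_\varepsilon(Q(|x|)+1)^\varepsilon$. Choose $\varepsilon$ small relative to $\deg Q$ so that this bound $g(n)$ is nondecreasing and $o(n)$. \Cref{lem:kt-sublinear-constant} then makes the accepting time constant, and \cref{lem:finite-prefix-class} gives $L\in\RDEF$. If $Q$ is the zero polynomial, $g$ is already constant. The lower half of (iii) is exactly the finite-target clause of \cref{lem:finite-prefix-class}.

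\textbf{Lower bounds.} For (ii), a language in $\Ptime$ has a total decider with time $\le p(n)$. I modify it to loop forever on rejection, so that $A_M=\infty$ off $L$ and $A_M(x)\le p(|x|)$ on $L$. Choose an integer $k\ge1/\varepsilon$ and a constant $a$ with $c\,(aQ_0(n))^\varepsilon\ge p(n)$ for all large $n$, where $Q_0(n)=(n+1)^{k\deg p+k}$ or similar. Put $Q=aQ_0+N_0$, with the constant $N_0$ ensuring that $Q(n)$ lies in the range where the ``eventually'' hypothesis holds. Monotonicity of $R_\alpha$ then gives $R_\alpha(Q(n))\ge p(n)$.

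For (iv), lower half, take a decider with time $2^{p(n)}$ and again make it loop on rejection. We need $cQ(n)^\varepsilon\ge p(n)$, which holds for $Q$ a large constant multiple of $(n+1)^{\lceil \deg p/\varepsilon\rceil+1}$ plus a threshold constant. Then $R_\alpha(Q(n))\ge 2^{p(n)}$.

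\textbf{Main obstacle.} The step needing most care is handling the bookkeeping: integer coefficients, the ``eventually'' thresholds, and small $n$. Small inputs can be absorbed by adding a constant to $Q$, since $R_\alpha$ is nondecreasing. The other delicate point is the choice of $\varepsilon$ in (iii) so that the bound is genuinely $o(n)$ regardless of the degree of $Q$.
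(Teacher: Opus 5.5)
Your proposal is correct and follows the paper's proof essentially step by step. In both, (i) and the upper half of (iv) come from a clocked simulation through the radius bound. Part (ii) and the lower half of (iv) come from a polynomial budget dominating the time bound, with a constant added for small inputs. Part (iii) takes $\varepsilon<1/\deg Q$ to get an $o(n)$ bound, then applies \cref{lem:kt-sublinear-constant} and \cref{lem:finite-prefix-class}.

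Two cosmetic points. Making the decider loop on rejection is unnecessary, since a rejecting halt already has $A_M(x)=\infty$. In the upper half of (iv), the finitely many budgets below the ``eventually'' threshold should be absorbed explicitly using the finiteness of each $R_\alpha(q)$, as the paper does.
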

\begin{proof}
For a witness $(M,Q)$, \cref{lem:reachability} bounds its accepting
indices by $R_\alpha(Q(n))$. In (i), choose an integer polynomial
majorant for $C(Q(n)+1)^d$. Simulate $M$ through that logical index,
checking index zero, and reject on timeout. Completeness ensures acceptance on every
yes-instance, while soundness excludes acceptance on a no-instance. Reading the input length and polynomial simulation overhead
preserve polynomial time. This gives (i). The upper estimate in (iv)
gives an exponential-time decider in the same way. The finite number of
exceptional budgets is absorbed in the bound using the finite radii.

For (ii), let $H(n)$ be a polynomial logical-time bound of a decider.
Choose an integer polynomial $Q$ of sufficiently large degree and
coefficient that $cQ(n)^\varepsilon\ge H(n)$ and the eventual radius
estimate applies. Enlarge its constant term for the finitely many small
input lengths. The radius tends to infinity, so every fixed target also
has a finite adequate budget. \Cref{lem:reachability} now gives the
inclusion, with the same sound decider. For the lower inclusion in (iv),
use $H(n)=2^{p(n)}$ and choose $Q$ with $cQ(n)^\varepsilon\ge p(n)$.
Any classical machine-model simulation overhead can be absorbed in
$p$. This proves (iv).

For (iii), write $Q(n)\le a(n+1)^d$ with $d\ge1$ and $a\ge1$.
Choose $\varepsilon<1/d$. The subpolynomial estimate gives a
nondecreasing upper bound $g(n)=O((n+1)^{d\varepsilon})=o(n)$ for
$R_\alpha(Q(n))$, after enlarging a constant to cover small budgets.
By \cref{lem:kt-sublinear-constant}, every accepting time is then
bounded by a constant. \Cref{lem:finite-prefix-class} identifies the
language as reverse definite and, using the finite-target hypothesis,
proves the lower inclusion.
\end{proof}

\begin{remark}[Machine-model robustness]
\label{rem:kt-model-scope}
\label{rem:model-robustness}
Prefix truncation holds for finitely many initially blank work tapes and sequential input heads: in $T$ transitions each head reads only the first $T$ input cells.
Thus the $\RDEF$ results and the $n+1$ distinguishing-cell lower bound in \cref{app:zero-drift} remain valid without a length oracle.
The $\Ptime$, $\mathbf{QP}_k$, $\EXP$, $\mathbf{2EXP}$, $\REC$, and $\RE$ classifications are invariant under polynomially related sequential models~\cite{AroraBarak2009}; overheads are absorbed by radius bounds or halting envelopes, and for $\mathbf{QP}_k$ only change the constant.
The sparse law's stage budgets $T_j^j$ dominate every fixed polynomial slowdown, preserving its separation from $L_{\mathrm{even}}$, although its exact class may depend on the machine model.
\end{remark}

%% file: polyregimes.tex
\section{Independent jumps up to P}
\label{sec:kt-collapse}

This section bounds crossing probabilities by forward moments and proves
linear or quadratic crossing budgets under suitable drift assumptions.
The moment bound gives containment in $\Ptime$; the budget estimates
give equality.

\begin{theorem}[Positive-moment containment]
\label{lem:kt-reach}
\label{thm:kt-positive-moment}
If $\E[(K^+)^r]<\infty$ for some $r>0$, then
$\mathbf P^*_{\mu,\alpha}\subseteq\Ptime$ for every $0<\alpha<1$.
More precisely, for $0<r\le1$, $m_r=\E[(K^+)^r]<\infty$,
$T\ge1$, and $q\ge0$,
\begin{equation}
 \Pr(\tau_T\le q)\le\frac{q m_r}{T^r}.
 \label{eq:moment-crossing}
\end{equation}
For such $0<r\le1$, a witness $(M,Q)$ therefore has
$A_M(x)\le(m_r Q(|x|)/\alpha)^{1/r}$ on every yes-instance with
$A_M(x)>0$.
\end{theorem}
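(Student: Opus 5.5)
The plan is to reduce to $0<r\le1$, prove \eqref{eq:moment-crossing} from the positive-sum bound of \cref{lem:kt-domination}, and then feed the resulting radius bound into \cref{lem:reachability} and \cref{lem:reachability-dictionary}(i).

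\emph{Reduction.} If $\E[(K^+)^r]<\infty$ for some $r>1$, then $(K^+)^{1}\le 1+(K^+)^r$. Hence $m_1<\infty$, and it suffices to treat $0<r\le1$.

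\emph{Crossing bound.} Fix $0<r\le1$, $T\ge1$ and $q\ge0$. The case $q=0$ is trivial: $X_0=0<T$, so the left side of \eqref{eq:moment-crossing} is zero. For $q\ge1$, by \eqref{eq:positive-sum} the event $\{\tau_T\le q\}=\{\max_{s\le q}X_s\ge T\}$ is contained in $\{\sum_{j\le q}K_j^+\ge T\}$. Since $0<r\le1$, the map $y\mapsto y^r$ is subadditive on $[0,\infty)$. Therefore $\bigl(\sum_{j\le q}K_j^+\bigr)^r\le\sum_{j\le q}(K_j^+)^r$, and the event above lies in $\{\sum_{j\le q}(K_j^+)^r\ge T^r\}$. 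Markov's inequality and linearity of expectation then give $\Pr(\tau_T\le q)\le q m_r/T^r$. Subadditivity of $y^r$ is the only nontrivial ingredient. It is the step where the restriction $r\le1$ is essential, and I would justify it in one line via concavity: $(a+b)^r\le a^r+b^r$ for $a,b\ge0$.

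\emph{Accepting-time bound.} Let $(M,Q)$ be a witness, and let $x$ be a yes-instance with $T=A_M(x)\ge1$. Condition (ii) of \cref{def:kt-ac1} together with \eqref{eq:moment-crossing} gives $\alpha\le Q(|x|)m_r/T^r$. Hence $T\le(m_rQ(|x|)/\alpha)^{1/r}$. If $m_r=0$, this inequality is impossible, so there are no such inputs; this is consistent with the frozen case of \cref{prop:kt-stuck-walk}.

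\emph{Containment in $\Ptime$.} Equivalently, $R_\alpha(q)\le\max\{0,(m_rq/\alpha)^{1/r}\}\le C(q+1)^{1/r}$ with $C=\max(1,(m_r/\alpha)^{1/r})$, so \cref{lem:reachability-dictionary}(i) applies. Alternatively, one can argue directly: simulate $M$ for $\lceil(m_rQ(n)/\alpha)^{1/r}\rceil$ steps, which is polynomially many, accept if it accepts by then, and otherwise reject. Soundness of $M$ handles no-instances. I expect no real obstacle beyond the subadditivity step and the bookkeeping for $T=0$ and $q=0$.
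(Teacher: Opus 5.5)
Your proposal is correct and follows the paper's proof essentially step for step. It uses the same reduction to $0<r\le1$ via $y\le1+y^r$, the same chain of \eqref{eq:positive-sum}, subadditivity of $y\mapsto y^r$ and Markov's inequality, the same accepting-time extraction (with $m_r=0$ leaving no positive accepting indices), and the same radius bound fed into \cref{lem:reachability-dictionary}(i). For your direct-simulation alternative, the decider should simulate through a hardcoded integer polynomial majorant of $(m_rQ(n)/\alpha)^{1/r}$ rather than that expression itself, since $m_r$ and $\alpha$ need not be computable; this is what the dictionary lemma does.
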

\begin{proof}
If a moment of order $r>1$ is finite, the first moment is finite because
$y\le1+y^r$ for $y\ge0$. We may therefore assume $0<r\le1$.
By \eqref{eq:positive-sum}, subadditivity of the $r$th power, and Markov's
inequality~\cite[Theorem~1.6.4]{Durrett2019},
\[
 \Pr(\tau_T\le q)
 \le\frac{\E[(\sum_{j=1}^qK_j^+)^r]}{T^r}
 \le\frac{\sum_{j=1}^q\E[(K_j^+)^r]}{T^r}
 =\frac{q m_r}{T^r}.
\]
Completeness gives the claimed bound
when $A_M(x)>0$. If $m_r=0$,
then $K^+=0$ almost surely, and \cref{prop:kt-stuck-walk} applies.

For $m_r>0$, \eqref{eq:moment-crossing} bounds the radius by
$(m_rq/\alpha)^{1/r}$. Apply
\cref{lem:reachability-dictionary}(i).
\end{proof}

A law with all positive moments infinite can still have class $\Ptime$; see \cref{app:moment-boundary}.

The sharpness of the moment-bound coefficient is illustrated in \cref{app:calibration-1}.
Recall the crossing budget $q_\alpha(T)$ from \eqref{eq:quantile-def}.
All asymptotic comparisons keep the law and confidence fixed and let the
integer target $T$ tend to infinity; their constants may depend on both.
For $\mu=\delta_1$, $\tau_T=T$ from zero,
so the linear scale in the next lemma cannot be improved in general.

\subsection{Positive drift without a variance assumption}
Positive drift makes the free walk advance linearly with high probability.
The comparison lemma transfers that estimate to the held walk.

\begin{lemma}[Linear crossing budget]
\label{lem:kt-crossing}
If $\E|K|<\infty$ and $v=\E K>0$, then for each $0<\alpha<1$ there
is an integer $c_{\alpha,\rm drift}\ge1$ such that, for every $T\ge1$ and $i\ge0$,
\[
 \Pr_i(\tau_T\le c_{\alpha,\rm drift} T)\ge\alpha.
\]
\end{lemma}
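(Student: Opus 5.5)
The plan is to reduce the statement to the free walk through \cref{lem:kt-domination}, which gives $\Pr_i(\tau_T\le q)\ge\Pr(S_q\ge T)$ for every starting index $i\ge0$, with $S_q=\sum_{j=1}^qK_j$. It therefore suffices to find an integer $c\ge1$ such that $\Pr(S_{cT}\ge T)\ge\alpha$ for every $T\ge1$. The key feature is that this bound involves only the free walk, so it is automatically uniform in $i$. We never need monotonicity of the held walk in its starting position, which fails by \cref{rem:lindley-comparison}.

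For the free walk we use the weak law of large numbers, which holds under $\E|K|<\infty$ alone; no variance is needed. Fix an integer $c\ge1$ with $cv\ge2$, so that $S_{cT}\ge T$ holds whenever $S_{cT}/(cT)\ge v/2$. The weak law gives $\Pr(|S_n/n-v|<v/2)\to1$ as $n\to\infty$. Choose $N$ such that this probability is at least $\alpha$ for all $n\ge N$. Then every $T$ with $cT\ge N$ satisfies $\Pr(S_{cT}\ge T)\ge\alpha$.

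The main obstacle is the finitely many small targets $T<N/c$, where the law of large numbers does not yet apply. One option is to enlarge $c$ so that $cT\ge N$ for all $T\ge1$, that is, to take $c\ge\max(2/v,N)$. This works because the weak-law estimate depends only on $n=cT$ being at least $N$, and enlarging $c$ keeps $cv\ge2$. Hence $c_{\alpha,\rm drift}=\max(\lceil 2/v\rceil,N)$ serves every $T\ge1$ at once. As a fallback, \cref{lem:kt-block-crossing} applies, since $v>0$ forces $\Pr(K>0)>0$: it gives each fixed small target a finite budget uniformly in $i$, and taking a maximum over finitely many targets absorbs them into $c$. The primary route is cleaner, so we use it.
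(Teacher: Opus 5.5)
Your proposal is correct and is essentially the paper's proof: comparison with the free walk via \cref{lem:kt-domination}, the law of large numbers (the paper invokes the strong law and passes to convergence in probability, which is your weak-law step), and the choice $c_{\alpha,\rm drift}\ge\max\{1,N_\alpha,2/v\}$ so that $cT\ge N_\alpha$ covers every small target. The fallback through \cref{lem:kt-block-crossing} is therefore not needed.
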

\begin{proof}
The strong law of large numbers gives $S_q/q\to v$ almost surely,
and hence in probability~\cite[Theorem~2.4.1]{Durrett2019}.
Its integrability and independence hypotheses hold for the increments.
Choose an integer $N_\alpha\ge1$
such that $\Pr(S_q/q\ge v/2)\ge\alpha$ for every integer
$q\ge N_\alpha$. Choose an integer
$c_{\alpha,\rm drift}\ge\max\{1,N_\alpha,2/v\}$. For $T\ge1$ and
$q=c_{\alpha,\rm drift} T$, we have $q\ge N_\alpha$ and $vq/2\ge T$.
Therefore \cref{lem:kt-domination} gives
$\Pr_i(\tau_T\le q)\ge\Pr(S_q\ge T)\ge\alpha$.
The choice $c_{\alpha,\rm drift}\ge N_\alpha$ covers every small positive target.
\end{proof}

In particular $R_\alpha(q)\ge\lfloor q/c_{\alpha,\rm drift}\rfloor$
for every $q\in\N$, by monotonicity in the budget.

\begin{theorem}[Positive-drift equality]
\label{thm:kt-collapse}
If $\E|K|<\infty$ and $\E K>0$, then
$\mathbf P^*_{\mu,\alpha}=\mathbf P^*_\mu=\Ptime$ for every
$0<\alpha<1$.
\end{theorem}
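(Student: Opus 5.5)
The argument sandwiches the reachable radius between two linear functions of the budget and then applies \cref{lem:reachability-dictionary} in both directions. Each fixed confidence $0<\alpha<1$ is handled separately. The intersection statement $\mathbf P^*_\mu=\Ptime$ then follows because each fixed-confidence class equals $\Ptime$, and so does their intersection.

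For the upper inclusion, the hypothesis $\E|K|<\infty$ gives $m_1=\E K^+<\infty$. We can therefore invoke \cref{thm:kt-positive-moment} with $r=1$, which yields $\mathbf P^*_{\mu,\alpha}\subseteq\Ptime$ directly. Concretely, \eqref{eq:moment-crossing} gives $R_\alpha(q)\le m_1q/\alpha$. Since $\E K>0$, we have $m_1>0$, so we are not in the frozen case of \cref{prop:kt-stuck-walk}. Part (i) of \cref{lem:reachability-dictionary} then applies with $C=m_1/\alpha$ (enlarged to be positive if needed) and $d=1$.

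For the lower inclusion, \cref{lem:kt-crossing} with $i=0$ gives $R_\alpha(q)\ge\lfloor q/c_{\alpha,\rm drift}\rfloor$ for every $q$, as recorded after that lemma. For $q\ge 2c_{\alpha,\rm drift}$ this is at least $q/(2c_{\alpha,\rm drift})$. The eventual bound $R_\alpha(q)\ge cq^\varepsilon$ therefore holds with $\varepsilon=1$ and $c=1/(2c_{\alpha,\rm drift})$. Part (ii) of \cref{lem:reachability-dictionary} then gives $\Ptime\subseteq\mathbf P^*_{\mu,\alpha}$.

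There is no real obstacle, since both directions are direct citations. The only points to check are the following. First, the case $m_1=0$ in the moment bound cannot occur, because positive drift forces $\Pr(K>0)>0$. Second, the floor in the linear lower bound must be absorbed into an eventual power bound. Third, the lower inclusion uses the same polynomial-time decider as witness, with the input-length-independent initial state handled by the dictionary lemma.
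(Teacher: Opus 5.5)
Your proposal is correct and follows the paper's proof. Containment comes from \cref{thm:kt-positive-moment} via the finite first positive moment, and the reverse inclusion from the linear crossing budget of \cref{lem:kt-crossing}; the only cosmetic difference is that the paper writes down the budget $Q(n)=c_{\alpha,\rm drift}(H(n)+1)$ for a polynomial-time decider directly, rather than passing through \cref{lem:reachability-dictionary}(ii).
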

\begin{proof}
The first positive moment is finite, so
\cref{thm:kt-positive-moment} gives containment in $\Ptime$.
Conversely, take a deterministic polynomial-time decider $M$ with integer
polynomial time bound $H$. For confidence $\alpha$ use
$Q(n)=c_{\alpha,\rm drift}(H(n)+1)$. The preceding lemma reaches $H(n)+1$, hence
every accepting index at most $H(n)$, with probability at least $\alpha$.
The same machine is sound on every no-instance. This proves equality.
\end{proof}

\subsection{Zero drift}
With zero drift and finite nonzero variance, quadratically many ticks give
a fixed positive crossing probability. Repeating blocks raises the
confidence; \cref{app:calibration-2} calibrates the scale and the alternatives.

\begin{lemma}[Quadratic crossing blocks]
\label{lem:kt-zero-crossing}
If $\E K=0$ and $0<\sigma^2=\E K^2<\infty$, there are an integer
$b\ge1$ and a number $p\in(0,1)$ such that for every $T\ge1$,
$i\ge0$, and integer $j\ge1$,
\[
 \Pr_i(\tau_T>jbT^2)\le(1-p)^j.
\]
\end{lemma}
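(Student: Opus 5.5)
The plan is to reduce to a single block of $bT^2$ ticks started from an arbitrary position. It suffices to find $b\ge1$ and $p\in(0,1)$ with
\[
 \inf_{i\ge0}\Pr_i(\tau_T\le bT^2)\ge p\qquad(T\ge1).
\]
The iteration over $j$ blocks then follows from the Markov property at the deterministic times $bT^2,2bT^2,\ldots$. On the event $\{\tau_T>kbT^2\}$ the position $X_{kbT^2}$ is some nonnegative integer. The next block, driven by fresh proposals, crosses $T$ with conditional probability at least $p$ whatever that position is. Induction gives $(1-p)^j$. If $i\ge T$ then $\tau_T=0$ and there is nothing to prove, so only starting points $0\le i<T$ matter, and in fact all $i\ge0$ are handled uniformly.

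For the single block, I will use the comparison \cref{lem:kt-domination}. It gives $X_t\ge i+S_t\ge S_t$ for every $t$, so
\[
 \Pr_i(\tau_T\le q)\ge\Pr\Bigl(\max_{0\le t\le q}S_t\ge T\Bigr)\ge\Pr(S_q\ge T).
\]
The held rule only helps, which is why the lower bound transfers even though upper bounds would not (\cref{rem:lindley-comparison}). Thus the estimate reduces to a free-walk statement uniform in $T\ge1$: $\Pr(S_{bT^2}\ge T)\ge p$.

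For the free walk I will use the central limit theorem~\cite{Durrett2019}: $S_q/(\sigma\sqrt q)$ converges in law to a standard normal $Z$. Take $b$ with $\sigma\sqrt b\ge2$. Then
\[
 \Pr(S_{bT^2}\ge T)\ge\Pr\bigl(S_{bT^2}\ge\tfrac12\sigma\sqrt{bT^2}\bigr)\to\Pr(Z\ge\tfrac12)>0.
\]
So for $T\ge T_0$ the probability is at least $p_1=\tfrac12\Pr(Z\ge\tfrac12)$.

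The main obstacle is uniformity over the finitely many small targets $1\le T<T_0$, where the CLT gives nothing directly. Zero mean with $\sigma^2>0$ forces $\Pr(K>0)>0$, so some $k_0\ge1$ has $\mu(k_0)>0$. Then \cref{lem:kt-block-crossing} gives, for each fixed $T<T_0$, a positive crossing probability within $m=\lceil T/k_0\rceil\le T\le bT^2$ ticks, uniformly in the start $i$. Concretely, this probability is at least $\mu(k_0)^{m}\ge\mu(k_0)^{T_0}$. Setting $p=\min(p_1,\mu(k_0)^{T_0})$, shrunk if necessary to lie in $(0,1)$, finishes the block estimate and hence the lemma.
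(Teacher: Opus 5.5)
Your proof is correct and follows the paper's route: the free-walk comparison of \cref{lem:kt-domination}, a central-limit lower bound for one block of $bT^2$ ticks that is uniform in the start, and iteration by conditioning at the deterministic block boundaries. The only difference is in the small targets: the paper also requires $b\ge N$, where $N$ is the CLT threshold (together with $b\ge\sigma^{-2}$), so that $bT^2\ge N$ already holds at $T=1$ and no small-target case arises; your patch through \cref{lem:kt-block-crossing} is also valid, but it is not needed.
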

\begin{proof}
Let $\Phi$ be the standard normal distribution function and put
$p=(1-\Phi(1))/2>0$. By the classical iid (Lindeberg--L\'evy) central
limit theorem~\cite[Theorem~3.4.1]{Durrett2019}, there is an integer
$N\ge1$ such that
\[
 \Pr\left(\frac{S_q}{\sigma\sqrt q}\ge1\right)\ge p
 \quad\text{for every }q\ge N.
\]
Choose an integer $b\ge\max\{1,N,\sigma^{-2}\}$. For $q=bT^2$,
$\sigma\sqrt q\ge T$ and $q\ge N$, including when $T=1$.
Thus $\Pr(S_q\ge T)\ge p$ and the free-walk comparison gives
$\Pr_i(\tau_T\le q)\ge p$ from every $i\ge0$.

At the end of a block, condition on the complete past and on not yet
having hit $T$. The current position is below $T$. The next block uses
fresh independent increments, so the same lower bound $p$ applies from
that position. The tower property~\cite[Theorem~4.1.13]{Durrett2019},
applied at these deterministic block
boundaries, proves the display. The block-hit events need not be independent.
The free-walk bound holds from each starting position, so it supplies
the uniform conditional estimate needed here.
\end{proof}

There is also a zero-drift case with infinite variance that follows from a
stopped-square argument. Only positive overshoots need to be bounded.

We write $\mathbb E_i$ for expectation under the law of the held walk started at $X_0=i$, in parallel with the notation $\Pr_i$ for probability. For nonnegative quantities $a$ and $b$, write $a\wedge b=\min\{a,b\}$. Thus $q\wedge\tau_T$ is the number of ticks elapsed until either tick $q$ or the first crossing of $T$, whichever occurs first, and $X_{q\wedge\tau_T}$ is the clock position at that time.

\begin{lemma}[positive jumps at zero drift]
\label{lem:kt-zero-bounded-positive}
Suppose $\E|K|<\infty$, $\E K=0$, $\mu\ne\delta_0$, and
$K\le w$ almost surely for an integer $w\ge1$. Negative jumps may be
unbounded. Then some integer $k_0\ge1$ has $\mu(k_0)>0$. Fix one and put
$\delta_\mu=k_0^2\mu(k_0)>0$. Then, for $0\le i<T$,
\[
 \E_i\tau_T\le\frac{(T-1+w)^2-i^2}{\delta_\mu},\qquad
 \Pr_i(\tau_T>q)\le\frac{(T-1+w)^2-i^2}{\delta_\mu q}
 \quad(T,q\ge1).
\]
Choose an integer
$b_{\rm sq}\ge\max\{1,\lceil2w^2/\delta_\mu\rceil\}$.
Then $\Pr_i(\tau_T>jb_{\rm sq}T^2)\le2^{-j}$ for $j\ge1$,
uniformly below $T$.
\end{lemma}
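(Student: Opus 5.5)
The plan is a stopped-square (Lyapunov) argument applied directly to the held walk; the free-walk comparison of \cref{lem:kt-domination} is not needed. First, the existence of $k_0$: since $\mu\ne\delta_0$ and $\E K=0$ with $\E|K|<\infty$, we cannot have $K\le0$ almost surely. Otherwise $\E K=0$ would force $K=0$ almost surely. Hence $\Pr(K>0)>0$, and by \cref{def:kt-moments} some $k_0\ge1$ has $\mu(k_0)>0$.

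The key step is a one-step drift bound for $X_t^2$ below the target. From position $x\ge0$ the executed increment is $D=K\one\{K\ge-x\}$, so
\[
 \E\bigl[(x+D)^2-x^2\bigr]=2x\,\E D+\E D^2 .
\]
By zero mean, $\E D=-\E[K;K<-x]\ge0$. Also $\E D^2\ge\E[K^2;K\ge1]\ge k_0^2\mu(k_0)=\delta_\mu$. Thus the conditional increment of $X_t^2$ is at least $\delta_\mu$ from every position. Integrability is not an issue: $D\le w$, and on the event $x<T$ the only negative parts are executed moves that stay in $[0,x]$, so $|x+D|\le T-1+w$.

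Next I will set $Y_t=X_{t\wedge\tau_T}^2-\delta_\mu(t\wedge\tau_T)$ and check that it is a submartingale under $\Pr_i$ for $i<T$. Before $\tau_T$ the walk lies in $[0,T-1]$, and at $\tau_T$ it lies in $[T,T-1+w]$ because positive jumps are at most $w$. So $Y_t$ is bounded for each fixed $t$, and its increments have nonnegative conditional mean by the drift bound. Taking expectations at deterministic time $q$ gives
\[
 \delta_\mu\,\E_i(q\wedge\tau_T)\le\E_iX_{q\wedge\tau_T}^2-i^2\le(T-1+w)^2-i^2 .
\]
Monotone convergence as $q\to\infty$ gives the bound on $\E_i\tau_T$. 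Markov's inequality, using $q\Pr_i(\tau_T>q)\le\E_i(q\wedge\tau_T)$, gives the tail bound.

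For the block statement, take $q=b_{\rm sq}T^2$. The elementary inequality $T-1+w\le wT$, equivalent to $(w-1)(T-1)\ge0$, then gives
\[
 \Pr_i(\tau_T>q)\le\frac{w^2T^2}{\delta_\mu b_{\rm sq}T^2}\le\frac12
\]
uniformly in $0\le i<T$. Finally I iterate over $j$ consecutive blocks. At each deterministic block boundary, on the event of no crossing so far, the position lies in $[0,T-1]$. The Markov property of the held walk, with fresh iid proposals, then applies the uniform bound $1/2$ from that position, as in the proof of \cref{lem:kt-zero-crossing}. The tower property yields $2^{-j}$. The main point needing care is the drift bound's sign of $2x\,\E D$ under holding, together with boundedness of the stopped position; both are handled above, and the unbounded negative jumps are harmless because held or executed negative moves never leave $[0,x]$.
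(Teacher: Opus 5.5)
Your proposal is correct and follows essentially the same route as the paper. Both arguments use the one-step squared-gain bound $2x\,\E D+\E D^2\ge\delta_\mu$ under the hold rule, the stopped-square estimate with the stopped position bounded by $T-1+w$, monotone convergence and Markov's inequality, and then $T-1+w\le wT$ with iterated conditional blocks. Writing it via the submartingale $X_{t\wedge\tau_T}^2-\delta_\mu(t\wedge\tau_T)$ only repackages the paper's finite summation of conditional gains.
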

\begin{proof}
Zero mean and nondegeneracy give a supported $k_0>0$. Set
$\delta_\mu=\mu(k_0)k_0^2$. For a position $i\ge0$, the conditional squared gain is
\[
 \E[X_{t+1}^2-X_t^2\mid X_t=i]
 =\sum_{k\ge-i}(2ik+k^2)\mu(k)\ge \delta_\mu.
\]
Indeed $\sum_{k\ge-i}k\mu(k)=-\sum_{k<-i}k\mu(k)\ge0$, and the
quadratic term includes $\delta_\mu$. These sums are finite because executed jumps
lie in $[-i,w]$. For $\tau=\tau_T$, finite summation of conditional
gains and the tower property give
\[
 \E_i X_{q\wedge\tau}^2-i^2
 \ge \delta_\mu\sum_{s=0}^{q-1}\Pr_i(\tau>s)=\delta_\mu\E_i(q\wedge\tau).
\]
The stopped position is at most $T-1+w$. Monotone convergence as
$q\to\infty$ gives the expectation bound, and Markov's inequality gives
the tail bound. Since $T-1+w\le Tw$ for integers $T,w\ge1$, the expectation
is at most $w^2T^2/\delta_\mu$, uniformly below $T$.
Choose an integer
$b_{\rm sq}\ge\max\{1,\lceil2w^2/\delta_\mu\rceil\}$.
Markov's inequality gives success at least $1/2$ in each block of
$b_{\rm sq}T^2$ ticks, uniformly below the target. Fresh conditional blocks iterate as in \cref{lem:kt-zero-crossing}.
\end{proof}

\begin{theorem}[Zero-drift equality]
\label{prop:kt-zero-drift}
Suppose $\E|K|<\infty$, $\E K=0$, and $\mu\ne\delta_0$.
If either $\E K^2<\infty$ or the positive jumps are bounded, then
$\mathbf P^*_{\mu,\alpha}=\mathbf P^*_\mu=\Ptime$ for every
$0<\alpha<1$.
\end{theorem}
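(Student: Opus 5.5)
The proof has two inclusions, and both follow from results already stated. For the upper inclusion I would use the integrability hypothesis directly. Since $\E|K|<\infty$, the first positive moment $m_1=\E K^+$ is finite. \Cref{thm:kt-positive-moment} with $r=1$ then gives $\mathbf P^*_{\mu,\alpha}\subseteq\Ptime$ for every $0<\alpha<1$, and this holds in both cases of the hypothesis.

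For the lower inclusion I would show that the crossing budget is quadratic in the target, uniformly in the starting point. That is, for each $\alpha$ I want an integer $c_\alpha$ with $\Pr_0(\tau_T\le c_\alpha T^2)\ge\alpha$ for every $T\ge1$. The two cases supply this estimate as follows.
\begin{itemize}
\item If $\E K^2<\infty$, then $\mu\ne\delta_0$ together with zero mean forces $\sigma^2=\E K^2>0$. \Cref{lem:kt-zero-crossing} gives $\Pr_0(\tau_T>jbT^2)\le(1-p)^j$. I would take $j=\lceil \ell_\alpha/p\rceil$, so that $(1-p)^j\le e^{-jp}\le1-\alpha$.
\item If the positive jumps are bounded by $w$, then $K\le w$ almost surely, and \cref{lem:kt-zero-bounded-positive} gives $\Pr_0(\tau_T>jb_{\rm sq}T^2)\le2^{-j}$. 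I would take $j=\lceil\log_2(1/(1-\alpha))\rceil$.
\end{itemize}
In both cases this yields $R_\alpha(q)\ge\lfloor\sqrt{q/c_\alpha}\rfloor$, which is eventually at least $c q^{1/2}$ for some $c>0$.

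With the radius bound in hand, \cref{lem:reachability-dictionary}(ii) with $\varepsilon=1/2$ gives $\Ptime\subseteq\mathbf P^*_{\mu,\alpha}$. Alternatively, one can argue directly: for a polynomial-time decider with time bound $H$, the budget $Q(n)=c_\alpha(H(n)+1)^2$ works, and soundness is inherited from the decider. Intersecting over $\alpha$ then gives $\mathbf P^*_\mu=\Ptime$, as in \cref{thm:kt-collapse}.

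I expect no real obstacle here. The only point needing care is the degenerate-variance case: one must note that $\mu\ne\delta_0$ with mean zero gives positive variance, so \cref{lem:kt-zero-crossing} applies. In the bounded case, \cref{lem:kt-zero-bounded-positive} already handles nondegeneracy itself.
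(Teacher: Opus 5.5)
Your proposal is correct and follows the paper's proof step for step. Containment comes from the finite first positive moment via \cref{thm:kt-positive-moment}. The reverse inclusion comes from \cref{lem:kt-zero-crossing} (finite variance, with $\sigma^2>0$ forced by $\mu\ne\delta_0$) or \cref{lem:kt-zero-bounded-positive} (bounded positive jumps), iterated over enough blocks to push failure below $1-\alpha$, which gives a budget of the form $c_\alpha(H(n)+1)^2$; your optional route through \cref{lem:reachability-dictionary}(ii) is also valid but unnecessary.
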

\begin{proof}
Containment follows from the finite first positive moment.
In the finite-variance case, nondegeneracy gives $\sigma^2>0$.
Use \cref{lem:kt-zero-crossing} with block constant $b$ and success
bound $p$. In the bounded-positive case use
\cref{lem:kt-zero-bounded-positive} with block constant $b_{\rm sq}$
and success bound $1/2$. Denote the chosen constant by $b_0$ and its
success bound by $p_0>0$. For a polynomial-time
decider with bound $H$, choose an integer $j_\alpha\ge1$ with
$(1-p_0)^{j_\alpha}\le1-\alpha$. The integer polynomial
$Q(n)=j_\alpha b_0(H(n)+1)^2$ is a valid budget. Soundness is unchanged.
The excluded law $\delta_0$ is covered by \cref{prop:kt-stuck-walk}.
\end{proof}

%% file: negdrift.tex
\section{Below P: sequential locality, an intermediate class, and an antichain}
\label{sec:kt-taxonomy}

Small crossing radii force bounded-prefix recognition through sequential locality. Negative drift gives this class under all positive moments, while sparse laws give intermediate and incomparable classes.

\subsection{Negative drift gives \texorpdfstring{$\RDEF$}{RDEF}}
\label{sec:negative-all-moments}

Polynomial Lyapunov bounds give a subpolynomial radius under negative mean
and all positive moments~\cite[Chapters~11 and~14]{MeynTweedie1993}.
When $\E K^-=\infty$, the untruncated uniform increment-moment hypothesis
of Pemantle--Rosenthal does not apply; their Corollary~3 permits one-sided
truncation~\cite{PemantleRosenthal1999}. The proof below uses this principle
directly at the level of the held walk's conditional drift. The class
identification then follows from sequential locality.

For an event $E$, $\one_E$ is its indicator, equal to one on $E$ and zero otherwise; for a random variable $Y$, $\E[Y;E]$ denotes $\E[Y\one_E]$ whenever this expectation is defined.

\begin{theorem}[Negative mean and all positive moments]
\label{thm:negative-all-moments}
Suppose $\Pr(K>0)>0$,
$\E[(K^+)^s]<\infty$ for every $s>0$, and
$\E K^->\E K^+$. The last condition permits $\E K^-=\infty$,
hence extended mean $-\infty$. For every integer $r\ge2$ there is a
finite constant $C_r$ such that, from zero,
\begin{equation}
 \Pr(\tau_T\le q)\le C_r\frac{q+1}{(1+T)^{r-1}}
 \qquad(q\in\N,\ T\ge1).
 \label{eq:polynomial-drift-crossing}
\end{equation}
Consequently, for every $0<\alpha<1$,
$\mathbf P^*_{\mu,\alpha}=\mathbf P^*_\mu=\RDEF$.
\end{theorem}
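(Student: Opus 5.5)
Our plan is to establish the crossing bound \eqref{eq:polynomial-drift-crossing} with a polynomial Lyapunov function $V(x)=(1+x)^r$, and then to read off the class from \cref{lem:reachability-dictionary}(iii).

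\textbf{Step 1: truncated negative drift.} We first choose a truncation level $N$ such that $\E[K\vee(-N)]=\E K^+-\E[K^-\wedge N]=:-2\gamma<0$. Such an $N$ exists by monotone convergence, since $\E K^->\E K^+$ and $\E K^+<\infty$; this holds even when $\E K^-=\infty$. Under the held rule, a proposal $k<0$ from $x$ either moves to $x+k$ or holds at $x$. For $x\ge N$, every proposal $k\ge-N$ executes, and any larger negative proposal leaves the position at most $x-N$ in the executed case and $x$ in the held case. We plan to bound $V(X_{t+1})$ from above by $V(x+(K\vee(-N)))$ whenever the proposal is not held, and to handle the held case separately. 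In the held case $K<-x\le -N$, so $V$ does not increase, whereas $x+(K\vee -N)\le x$ is the comparison point we want. This requires care, because holding gives $V(x)$ rather than $V(x-N)$.

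The cleaner route is to use $V(X_{t+1})\le V(x+\tilde K)$ with $\tilde K=K\vee(-N)$ plus the extra term $[V(x)-V(x-N)]\one_{K<-x}$. Since $\Pr(K<-x)\to0$, this correction is $o(x^{r-1})$. Taylor expansion with all positive moments finite gives
\[
\E[V(x+\tilde K)]-V(x)\le -2\gamma r(1+x)^{r-1}+O((1+x)^{r-2}).
\]
Hence, for $x\ge x_0$, the drift of $V$ is at most $-\gamma r(1+x)^{r-1}\le0$. For $x<x_0$ the drift is bounded by a constant $b_r$, because positive moments of order $r$ are finite. The main obstacle is making the Taylor remainder uniform: large positive jumps contribute $\E[(1+x+K^+)^r]-(1+x)^r$, which we plan to bound by $\sum_{j\ge1}\binom{r}{j}(1+x)^{r-j}\E (K^+)^j$ using integer $r$. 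The second-order terms are then $O((1+x)^{r-2})$, with constants built from the moments $m_j$.

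\textbf{Step 2: crossing bound.} With $\E[V(X_{t+1})-V(X_t)\mid X_t]\le b_r$ everywhere, we get $\E V(X_t)\le 1+b_rt$. For the running maximum, we plan to stop at $\tau_T$: the process $V(X_{t\wedge\tau_T})-b_r(t\wedge\tau_T)$ is a supermartingale. This gives $\E V(X_{q\wedge\tau_T})\le1+b_rq$, and on $\{\tau_T\le q\}$ we have $V\ge(1+T)^r$. Markov's inequality then yields $\Pr(\tau_T\le q)\le(1+b_rq)/(1+T)^r$, which is even better than required. Thus \eqref{eq:polynomial-drift-crossing} holds with exponent $r-1$ after absorbing constants. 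The negative drift is what makes $b_r$ a finite global constant rather than growing with $x$.

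\textbf{Step 3: class.} From \eqref{eq:polynomial-drift-crossing}, $R_\alpha(q)+1\le (C_r(q+1)/\alpha)^{1/(r-1)}$. Since $r$ is arbitrary, $1+R_\alpha(q)=O((q+1)^\varepsilon)$ for every $\varepsilon>0$. Because $\Pr(K>0)>0$, \cref{lem:kt-block-crossing} makes every finite target reachable with a finite budget. \Cref{lem:reachability-dictionary}(iii) therefore gives $\mathbf P^*_{\mu,\alpha}=\RDEF$ for each $\alpha$, and intersecting over $\alpha$ gives $\mathbf P^*_\mu=\RDEF$.
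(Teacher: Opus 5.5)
Your proof is correct. It shares the paper's skeleton: the Lyapunov function $(1+x)^r$, a truncation of the negative tail that yields an integrable negative mean, and \cref{lem:reachability-dictionary}(iii) for the class. Both technical steps, however, are carried out differently.

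\textbf{Truncation.} The paper replaces every proposal $K<-m$ by $0$. This gives the exact comparison $\Delta_\mu V_r(i)\le\Delta_\nu V_r(i)$ at every state, because a removed proposal either lowers $V_r$ or is held. You instead clip at $-N$, so you must restrict to $x\ge N$ and pay the hold correction $rN(1+x)^{r-1}\Pr(K<-x)=o((1+x)^{r-1})$. Both routes then give drift at most $-c(1+x)^{r-1}$ for large $x$.

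\textbf{Crossing bound.} The paper keeps the quantitative inequality $\Delta_\mu V_r\le B_r-c_r(1+i)^{r-1}$ and sums it to bound $\sum_{t\le q}\E(1+X_t)^{r-1}$. It then applies a union bound over ticks, with Markov's inequality for each $(1+X_t)^{r-1}$. You use only that the drift is bounded above by a global constant $b_r$. You stop the supermartingale $V(X_{t\wedge\tau_T})-b_r(t\wedge\tau_T)$ at $\tau_T$ and apply Markov's inequality once, to $V(X_{q\wedge\tau_T})$. The integrability needed for the supermartingale step follows from the pathwise bound $X_t\le\sum_{s\le t}K_s^+$; state this explicitly, as the paper does.

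\textbf{Comparison.} Your route is shorter and yields $(1+b_rq)/(1+T)^r$, a full power of $1+T$ better, because it avoids the union bound. It also needs the negative drift only qualitatively, to make the drift eventually nonpositive. The paper's route gives, in addition, a time-averaged moment bound on the walk at every tick, not just at the hitting time. Since $r$ is arbitrary, the lost power is immaterial for the class identification, and your Step 3 coincides with the paper's argument.
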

\begin{proof}
By monotone convergence~\cite[Theorem~1.5.7]{Durrett2019}, choose a finite integer $m\ge1$ with
\[
 \E[K^-;K^-\le m]>\E K^+.
\]
Replace every proposal $K<-m$ by zero, leaving the other proposals
unchanged, to obtain a law $\nu$. It has integrable strictly negative
mean
$v_\nu=\E K^+-\E[K^-;K^-\le m]<0$ and the same positive moments.
For $V_r(i)=(1+i)^r$, write $\Delta_\mu V_r(i)$ for its conditional
one-step drift at $i$ under the held rule. Every removed negative
proposal contributes either a decrease of $V_r$ when executed or zero
when held. Thus
\begin{equation}
 \Delta_\mu V_r(i)\le\Delta_\nu V_r(i)\qquad(i\ge0).
 \label{eq:truncated-drift-domination}
\end{equation}
Both drifts are finite at each state: executed negative proposals lie
in $[-i,-1]$, and positive proposals have finite $r$th moment.

For every fixed integer $k$, as $i\to\infty$,
\[
 \frac{((1+i+k)^r-(1+i)^r)\one_{\{k\ge-i\}}}
 {(1+i)^{r-1}}\longrightarrow rk.
\]
For $k\ge0$, expansion bounds the quotient by
\[
 \sum_{a=1}^r\binom ra\frac{k^a}{(1+i)^{a-1}}
 \le(2^r-1)(k+k^r),
\]
using $k^a\le k+k^r$ for $1\le a\le r$.
For $k<0$ the mean-value theorem bounds its absolute value by $r|k|$
when executed; otherwise it is zero. These bounds are integrable under
$\nu$, since its negative proposals are bounded and its positive
$r$th moment is finite. Dominated convergence~\cite[Theorem~1.5.8]{Durrett2019} therefore gives
\[
 \frac{\Delta_\nu V_r(i)}{(1+i)^{r-1}}\longrightarrow rv_\nu<0.
\]
Choose $c_r>0$ smaller than $-rv_\nu$. Beyond a finite state there is
then drift at most $-c_r(1+i)^{r-1}$. For the finitely many remaining
states enlarge a finite $B_r\ge0$. By
\eqref{eq:truncated-drift-domination}, for all $i\ge0$,
\begin{equation}
 \Delta_\mu V_r(i)\le B_r-c_r(1+i)^{r-1}.
 \label{eq:polynomial-lyapunov-drift}
\end{equation}
No negative moment of the original law is required.

The pathwise bound $X_t\le\sum_{s=1}^tK_s^+$ in
\cref{lem:kt-domination} ensures $\E V_r(X_t)<\infty$ for every finite
$t$. For example, for $t\ge1$ the inequality
$(\sum_{s=1}^tK_s^+)^r\le t^{r-1}\sum_{s=1}^t(K_s^+)^r$
proves this. Conditional expectations are thus integrable, and summing
\eqref{eq:polynomial-lyapunov-drift} through deterministic time $q+1$
gives
\[
 c_r\sum_{t=0}^q\E(1+X_t)^{r-1}
 \le V_r(0)+B_r(q+1)-\E V_r(X_{q+1})
 \le1+B_r(q+1).
\]
A union bound and Markov's inequality now yield
\[
 \Pr(\tau_T\le q)
 \le\sum_{t=0}^q\Pr(X_t\ge T)
 \le\frac{1+B_r(q+1)}{c_r(1+T)^{r-1}}.
\]
Taking $C_r=(1+B_r)/c_r$ proves
\eqref{eq:polynomial-drift-crossing}, including $q=0$.

At confidence $\alpha$, whenever $R_\alpha(q)\ge1$ the bound gives
$1+R_\alpha(q)\le(C_r(q+1)/\alpha)^{1/(r-1)}$.
The case $R_\alpha(q)=0$ is absorbed by a constant. Since $r$ can be
arbitrarily large, the radius is subpolynomial. Positive supported
blocks reach every fixed target by \cref{lem:kt-block-crossing}.
\Cref{lem:reachability-dictionary}(iii) proves the class equality.
\end{proof}

Exponential contraction gives a sharper logarithmic radius. The held-boundary correction is handled directly, as in geometric drift estimates~\cite[Chapter~15]{MeynTweedie1993,Hajek1982}; reflected-walk estimates~\cite{KieferWolfowitz1956} do not transfer by \cref{rem:lindley-comparison}.

\begin{lemma}[Exponential bound for the held walk]
\label{lem:kt-stationary-dom-general}
Suppose $g(\kappa):=\E e^{\kappa K}<1$ for some $\kappa>0$.
There is a constant $C_\mu\ge1$, depending only on $\mu$ and the chosen $\kappa$, such
that, from zero,
\begin{equation}
 \sup_{t\ge0}\E e^{\kappa X_t}\le C_\mu,
 \qquad
 \Pr(\tau_T\le q)\le C_\mu(q+1)e^{-\kappa T}
 \quad(T,q\in\N).
 \label{eq:exponential-crossing}
\end{equation}
\end{lemma}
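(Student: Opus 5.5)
The plan is a one-step geometric Lyapunov drift inequality for $V(i)=e^{\kappa i}$ under the held rule, iterated to bound $\sup_t\E e^{\kappa X_t}$, followed by a union bound with Markov's inequality.

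\textbf{Step 1: conditional drift.} At position $i\ge0$ the next position is $i+K$ when $K\ge-i$ and $i$ otherwise, so
\[
 \E[e^{\kappa X_{t+1}}\mid X_t=i]
 =e^{\kappa i}\bigl(\E[e^{\kappa K};K\ge-i]+\Pr(K<-i)\bigr).
\]
Since $e^{\kappa K}<1$ on $\{K<-i\}$, the bracket equals $g(\kappa)+\E[1-e^{\kappa K};K<-i]$. This is at most $g(\kappa)+\Pr(K<-i)$.

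\textbf{Step 2: split by position.} Set $\gamma=(1+g(\kappa))/2<1$. The term $\Pr(K<-i)$ tends to zero as $i\to\infty$, so there is a finite $i_0$ with $g(\kappa)+\Pr(K<-i)\le\gamma$ for all $i\ge i_0$. For $i<i_0$ the bracket is at most $g(\kappa)+1$. Hence for every $i\ge0$,
\[
 \E[e^{\kappa X_{t+1}}\mid X_t=i]\le\gamma e^{\kappa i}+B,\qquad B=(1+g(\kappa))e^{\kappa i_0}.
\]
Step 2 is the main obstacle: the hold correction $\Pr(K<-i)$ is what prevents reusing reflected-walk estimates, and the point is that it only matters on finitely many states. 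Integrability of $e^{\kappa X_t}$ is automatic, since $g(\kappa)<\infty$ and the pathwise bound of \cref{lem:kt-domination} gives $e^{\kappa X_t}\le\prod_{s\le t}e^{\kappa K_s^+}$, whose expectation is at most $(1+g(\kappa))^t$.

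\textbf{Step 3: uniform moment bound.} Taking expectations gives $a_{t+1}\le\gamma a_t+B$ with $a_t=\E e^{\kappa X_t}$ and $a_0=1$. By induction $a_t\le\max\{1,B/(1-\gamma)\}$ for every $t$, and this maximum serves as $C_\mu\ge1$. It depends only on $\mu$ and $\kappa$.

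\textbf{Step 4: crossing bound.} By the union bound and Markov's inequality,
\[
 \Pr(\tau_T\le q)\le\sum_{t=0}^q\Pr(X_t\ge T)\le\sum_{t=0}^q e^{-\kappa T}\E e^{\kappa X_t}\le C_\mu(q+1)e^{-\kappa T}.
\]
This holds for every $T,q\in\N$, and the case $T=0$ is trivial because $C_\mu\ge1$.
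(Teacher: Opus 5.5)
Your proposal is correct and follows essentially the same route as the paper's proof. Both use the exact held-rule drift identity for $V(i)=e^{\kappa i}$, the contraction factor $(1+g(\kappa))/2$ beyond a finite boundary set, iteration of $\E V(X_{t+1})\le\lambda\,\E V(X_t)+b$, and then Markov's inequality with a union bound over ticks. The only differences, both harmless, are your constant $\max\{1,B/(1-\gamma)\}$ in place of $1+b_{\rm bd}/(1-\lambda)$ and your explicit integrability remark.
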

\begin{proof}
Put $V(i)=e^{\kappa i}$. The exact conditional identity is
\begin{align*}
 \frac{\E[V(X_{t+1})\mid X_t=i]}{V(i)}
 &=\E[e^{\kappa K};K\ge-i]+\Pr(K<-i)\\
 &=g(\kappa)+\E[1-e^{\kappa K};K<-i]
 \le g(\kappa)+\Pr(K<-i).
\end{align*}
Set $\lambda=(1+g(\kappa))/2<1$. Since $\Pr(K<-i)\to0$, choose an
integer $R_{\rm tail}\ge1$ so that the last expression is at most $\lambda$ for
$i\ge R_{\rm tail}$. For the finitely many $i<R_{\rm tail}$, the conditional expectation is
finite; for example it is at most $e^{\kappa R_{\rm tail}}(g(\kappa)+1)$.
For some finite $b_{\rm bd}\ge0$ we therefore have
\[
 \E[V(X_{t+1})\mid X_t=i]\le\lambda V(i)+b_{\rm bd}
 \quad(i\ge0).
\]
Iterating from $V(0)=1$ gives
$\E V(X_t)\le\lambda^t+b_{\rm bd}\sum_{s=0}^{t-1}\lambda^s
\le1+b_{\rm bd}/(1-\lambda)=:C_\mu$.
Markov's inequality gives $\Pr(X_t\ge T)\le C_\mu e^{-\kappa T}$.
Boole's inequality (the union bound) over $0\le t\le q$ proves the hitting
bound, also at $T=0$ and $q=0$.
\end{proof}

\begin{reframed}[Reframed]
\begin{corollary}[Reverse-definite classification]
\label{thm:kt-reverse-tfae-general}
Suppose $\E e^{\kappa K}<1$ for some $\kappa>0$, and
$\Pr(K>0)>0$. For every $0<\alpha<1$,
\[
 \mathbf P^*_{\mu,\alpha}=\mathbf P^*_\mu=\RDEF.
\]
Equivalently, a language $L$ belongs to this class if and only if some
machine accepts exactly $L$ and has accepting logical time bounded by a
constant on every $x\in L$. Replacing this constant bound by
$c(\log_2(|x|+1)+1)$ for a fixed $c$ gives the same class.
\end{corollary}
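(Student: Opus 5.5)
The plan is to combine \cref{lem:kt-stationary-dom-general} with the reachability dictionary. From \eqref{eq:exponential-crossing} I will read off a logarithmic bound on the radius. Fix $0<\alpha<1$ and $q\in\N$. If $T=R_\alpha(q)\ge1$, then $\alpha\le\Pr(\tau_T\le q)\le C_\mu(q+1)e^{-\kappa T}$, hence
\[
 R_\alpha(q)\le\kappa^{-1}\ln\bigl(C_\mu(q+1)/\alpha\bigr).
\]
This bound also covers $R_\alpha(q)=0$ once the right-hand side is replaced by its maximum with zero. It follows that $1+R_\alpha(q)=O(\ln(q+2))$, and this is $O((q+1)^\varepsilon)$ for every $\varepsilon>0$. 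The hypothesis $\Pr(K>0)>0$ gives a supported $k_0\ge1$, so \cref{lem:kt-block-crossing} reaches every finite target with a finite budget at confidence $\alpha$. \Cref{lem:reachability-dictionary}(iii) then gives $\mathbf P^*_{\mu,\alpha}=\RDEF$ for each $\alpha$. Since this holds at every confidence, the intersection over $\alpha$ gives $\mathbf P^*_\mu=\RDEF$ as well.

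For the equivalent formulations I will cite \cref{lem:finite-prefix-class}. It states that a language has a sound recognizer with constant accepting-time bound exactly when it is reverse definite, and this is the first equivalence. For the logarithmic version, there are two directions. A constant bound is trivially at most $c(\log_2(|x|+1)+1)$ for a suitable $c$. Conversely, a recognizer with $A_M(x)\le c(\log_2(|x|+1)+1)$ on $L$ has constant accepting time by \cref{lem:kt-log-constant}. So the logarithmic condition also characterises $\RDEF$.

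The main obstacle is mostly bookkeeping. Part (iii) needs the radius estimate uniformly in $q$, including small $q$ where $\ln(C_\mu(q+1)/\alpha)$ may be negative. Taking the positive part and enlarging the constant handles this. All the substantive analysis, the geometric drift bound for the held rule, is already done in \cref{lem:kt-stationary-dom-general}. The hypothesis $\E e^{\kappa K}<1$ alone does not force $\Pr(K>0)>0$; without the latter the law is frozen and \cref{prop:kt-stuck-walk} would give a different class. That is why the statement assumes $\Pr(K>0)>0$ and the finite-target step relies on it.
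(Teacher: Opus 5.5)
Your proof is correct and follows the paper's own argument. It takes the logarithmic radius bound $\kappa^{-1}\ln(C_\mu(q+1)/\alpha)$ from \cref{lem:kt-stationary-dom-general} and finite-target budgets from \cref{lem:kt-block-crossing}, feeds both into \cref{lem:reachability-dictionary}(iii), and then uses \cref{lem:finite-prefix-class} and \cref{lem:kt-log-constant} for the equivalent formulations. The paper also mentions an alternative route through \cref{thm:negative-all-moments}, checking that all positive moments are finite and that $\E K^->\E K^+$; your concern about small $q$ is unnecessary, since $C_\mu\ge1>\alpha$ already makes that logarithm positive.
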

\begin{proof}
The exponential bound makes every positive moment finite, and Jensen's inequality gives $\E K<0$ when $\E K^-$ is finite; otherwise $\E K^-=\infty>\E K^+$.
Thus \cref{thm:negative-all-moments} gives the class equality, with the sharper logarithmic radius supplied by \cref{lem:kt-stationary-dom-general}.
\Cref{lem:kt-stationary-dom-general} bounds the radius by
$\kappa^{-1}\ln(C_\mu(q+1)/\alpha)$, which is subpolynomial.
Positive jump probability gives finite-target budgets by
\cref{lem:kt-block-crossing}. Thus
\cref{lem:reachability-dictionary}(iii) proves the equality.
\Cref{lem:finite-prefix-class} identifies constant accepting time with
reverse definiteness. Constant time implies the stated logarithmic bound,
and \cref{lem:kt-log-constant} proves the converse.
\end{proof}
\end{reframed}

With a positive exponential moment, $(g(\kappa)-1)/\kappa\to\E K<0$ as $\kappa\downarrow0$, including extended mean $-\infty$, by dominated and monotone convergence.
If $g(\kappa_0)=1$ and $K\not\equiv0$, strict Cauchy--Schwarz gives $g(\kappa_0/2)<1$ and forces positive forward probability; frozen laws are covered by \cref{prop:kt-stuck-walk}.
The exponential and non-exponential calibrations are in \cref{app:calibration-3,app:calibration-4}.

\subsection{A polynomial tail can overcome negative drift}
\label{sec:polynomial-tail}

Drift alone does not classify laws with unbounded support. A sufficiently
large positive proposal reaches a target from any nonnegative position,
regardless of the earlier negative proposals.

\begin{proposition}[Polynomial lower bounds on the positive tail]
\label{prop:kt-polynomial-tail}
Suppose there are $a,c>0$ such that
$\Pr(K\ge T)\ge cT^{-a}$ for all sufficiently large integers $T$.
Then $\Ptime\subseteq\mathbf P^*_{\mu,\alpha}$ for every $0<\alpha<1$.
If also $\E[(K^+)^r]<\infty$ for some $r>0$, equality holds at every
confidence and for $\mathbf P^*_\mu$.
\end{proposition}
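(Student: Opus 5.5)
The lower inclusion follows from \cref{lem:proposal-maxima} together with \cref{lem:reachability-dictionary}(ii). The plan is to show that the quantile $u_\alpha(q)$ of the largest proposal $J_q$ grows at least like a fixed positive power of $q$. Since $u_\alpha(q)\le R_\alpha(q)$, this gives $R_\alpha(q)\ge c'q^\varepsilon$ eventually, and part (ii) then yields $\Ptime\subseteq\mathbf P^*_{\mu,\alpha}$.

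For the quantile bound, use the tail formula of \cref{lem:proposal-maxima}: an integer $u\ge1$ is eligible for $J_q$ exactly when
\[
 \bar F(u)=\Pr(K\ge u)\ge1-(1-\alpha)^{1/q}.
\]
Since $1-e^{-x}\le x$, the right side is at most $\ell_\alpha/q$. It therefore suffices to have $cu^{-a}\ge\ell_\alpha/q$, that is, $u\le(cq/\ell_\alpha)^{1/a}$.

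Take $u=\lfloor(cq/\ell_\alpha)^{1/a}\rfloor$. For large $q$ this integer is large enough for the tail hypothesis to apply, so it is eligible. Hence
\[
 R_\alpha(q)\ge u_\alpha(q)\ge\tfrac12(cq/\ell_\alpha)^{1/a}
\]
eventually, which is the hypothesis of part (ii) with $\varepsilon=1/a$.

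The hypothesis also gives $\Pr(K>0)>0$, so every fixed target has a finite budget by \cref{lem:kt-block-crossing}. That covers small input lengths, although part (ii) already absorbs them.

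For the upper inclusion, assume additionally that $\E[(K^+)^r]<\infty$ for some $r>0$. Then \cref{thm:kt-positive-moment} gives $\mathbf P^*_{\mu,\alpha}\subseteq\Ptime$. Combined with the lower inclusion, this gives equality at each fixed confidence.

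Equality for $\mathbf P^*_\mu$ then follows by intersecting over $\alpha$, since every $\mathbf P^*_{\mu,\alpha}$ equals $\Ptime$.

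No step is seriously hard. The only point needing care is that the polynomial tail bound is assumed only for sufficiently large integer targets. One has to check that the chosen $u$ exceeds that threshold once $q$ is large, which holds because $u\to\infty$. Part (ii) is stated with an "eventually" hypothesis, so the finitely many small $q$ are harmless.
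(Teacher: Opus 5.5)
Your proof is correct and takes essentially the paper's approach: both rest on the fact that a single proposal at least $T$ crosses from any nonnegative position, which gives $\Pr(\tau_T>q)\le(1-cT^{-a})^q$, and both get the upper inclusion from \cref{thm:kt-positive-moment}. The only difference is packaging: the paper shrinks $c$ so that the tail bound holds for every $T\ge1$ and writes down the budget $\lceil c^{-1}T^a\ell_\alpha\rceil$ directly, whereas you invert this into the radius bound $R_\alpha(q)\ge u_\alpha(q)\ge\tfrac12(cq/\ell_\alpha)^{1/a}$ and then apply \cref{lem:reachability-dictionary}(ii).
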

\begin{proof}
The hypothesis gives unbounded positive jumps, so by decreasing $c>0$
we can make the bound hold for every integer $T\ge1$. If any of $q$
proposals is at least $T$, the walk hits $T$ from any nonnegative start.
Independence therefore gives
\[
 \Pr_i(\tau_T>q)\le(1-\Pr(K\ge T))^q\le e^{-qcT^{-a}}.
\]
The budget $\lceil c^{-1}T^a \ell_\alpha\rceil$ suffices.
Substituting the logical-time bound of a polynomial-time decider and
choosing a dominating integer polynomial proves the inclusion. Target
zero requires no ticks. The moment assumption gives the other inclusion
by \cref{thm:kt-positive-moment}.
\end{proof}

If $\Pr(K\ge T)$ is bounded above and below by positive
constant multiples of $T^{-a}$ for some fixed $a>0$, the class is
$\Ptime$. For $0<r<\min(a,1)$, the tail-sum identity,
an application of Tonelli's theorem to nonnegative summands
\cite[Theorem~1.7.2]{Durrett2019}, is
\[
 \E[(K^+)^r]
 =\sum_{m=1}^\infty\bigl(m^r-(m-1)^r\bigr)\Pr(K\ge m).
\]
Since $m^r-(m-1)^r=O(m^{r-1})$, the moment is finite.
The same conclusion applies to any tail asymptotic to $cT^{-a}$.
A power-law positive tail with infinite mean therefore does not by itself
produce more than $\Ptime$.

In the following example, $\zeta(2)=\sum_{m=1}^{\infty}m^{-2}=\pi^2/6$ is the Riemann zeta value at $2$.

\begin{example}[Negative drift with class $\Ptime$]
\label{ex:kt-negative-polynomial-tail}
Consider the law
\[
 \mu(-1)=\frac34,\qquad
 \mu(j)=\frac14\left(\frac1{j^2}-\frac1{(j+1)^2}\right)
 \quad(j\ge1).
\]
The positive masses telescope to $1/4$, and for every integer $T\ge1$,
\[
 \Pr(K\ge T)=\frac1{4T^2},\qquad
 \E K^+=\sum_{T\ge1}\Pr(K\ge T)=\frac{\zeta(2)}4.
\]
Thus $\E|K|=(3+\zeta(2))/4<\infty$ and
$\E K=(\zeta(2)-3)/4<0$ (for example $\zeta(2)<2$ by integral comparison).
The first positive moment is finite. By \cref{prop:kt-polynomial-tail},
$\mathbf P^*_{\mu,\alpha}=\mathbf P^*_\mu=\Ptime$ at every confidence.
This law has no finite second positive moment. Negative mean alone
therefore cannot replace the tail hypotheses in either
\cref{thm:kt-reverse-tfae-general} or \cref{thm:negative-all-moments}. Reversing the roles in the first mixture of
\cref{ex:zero-drift-calibration} gives another computable zero-mean law,
$K=W$ with probability $1/(1+m)$ and $K=-1$ otherwise: it has infinite
variance and unbounded positive jumps, yet its tail
$\Pr(K\ge T)=T^{-2}/(1+m)$ and finite first moment give class $\Ptime$
by \cref{prop:kt-polynomial-tail}. Together with
\cref{thm:zero-drift-counterexample}, this shows that the zero-drift
alternatives are sufficient conditions, not a necessary characterization.
\end{example}

See \cref{app:calibration-5} for the order of the one-jump budget.

\input{intermediate}
\input{antichain}

\subsection{Classes below P need negative proposals}
Nontrivial iid classes below $\Ptime$ require negative proposals. The deletion comparison explains their role.

\begin{newresult}[New: supporting]
\begin{remark}[Negative proposals for classes below $\Ptime$]
\label{rem:below-P-negative}
If $\Pr(K<0)=0$ and $\Pr(K>0)=p>0$, then $X_q=S_q$ is at
least the number of positive proposals among the first $q$, which is
$\mathrm{Bin}(q,p)$. Hence
$R_\alpha(q)\ge\lfloor pq/2\rfloor$ for all large $q$, and
$\Ptime\subseteq\mathbf P^*_{\mu,\alpha}$ by
\cref{lem:reachability-dictionary}(ii).
Negative proposals never enlarge an iid class (\cref{prop:deletion}),
but an iid law with positive forward probability and a class strictly
below $\Ptime$ requires them. This applies to the laws of
\cref{thm:zero-drift-counterexample,thm:negative-all-moments,thm:antichain}.
\end{remark}
\end{newresult}

%% file: intermediate.tex
\subsection{A zero-drift class strictly between \texorpdfstring{$\RDEF$ and $\Ptime$}{RDEF and P}}
\label{sec:zero-drift-statement}

A sparse zero-drift law lies strictly between bounded-prefix recognition and polynomial time. Its accepting-time radius oscillates between growth scales, and the proof is in \cref{app:zero-drift}.

Fix a finite nonempty alphabet $\Sigma$ and let $L_{\mathrm{even}}=\{x\in\Sigma^*:|x|\text{ is even}\}$.
The next theorem shows that zero drift alone does not give $\Ptime$.

\begin{theorem}[A strictly intermediate zero-drift class]
\label{thm:zero-drift-counterexample}
There is one computable law $\mu$, independent of the finite nonempty
input alphabet $\Sigma$, supported on $\{-1,0\}\cup\{N_j:j\ge1\}$, for strictly increasing positive integers
$N_j$, such that
\[
 \E|K|=1,\qquad \E K=0,\qquad \E K^2=\infty,
 \qquad
 L_{\mathrm{even}}\notin\mathbf P^*_{\mu,\alpha}
 \quad(0<\alpha<1).
\]
For every finite nonempty $\Sigma$ and every $0<\alpha<1$,
$\RDEF\subsetneq\mathbf P^*_{\mu,\alpha}\subsetneq\Ptime$, and likewise
$\RDEF\subsetneq\mathbf P^*_\mu\subsetneq\Ptime$.
\end{theorem}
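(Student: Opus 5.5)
The plan is an explicit sparse law whose reachable radius alternates between two kinds of stage. At \emph{slow stages} only the first $j$ large atoms are effectively available, so the walk behaves like a negative-drift walk with bounded jumps. At \emph{fast stages} a single large atom makes a long target reachable within a polynomial budget. Concretely, put $\mu(-1)=1/2$, $\mu(N_j)=p_j:=2^{-j-1}/N_j$ for $j\ge1$, and $\mu(0)=1/2-\sum_jp_j$. Then $\E K^+=\sum_jp_jN_j=1/2=\E K^-$, so $\E|K|=1$ and $\E K=0$. Also $\E K^2\ge\sum_jp_jN_j^2=\sum_j2^{-j-1}N_j=\infty$ as soon as $N_j\to\infty$. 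The integers $N_j$, together with auxiliary lengths $T_j$ and $m_j$, are chosen recursively by a computable procedure. Computability of $\mu$ and independence from $\Sigma$ are then immediate.

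\emph{Slow stages and $L_{\mathrm{even}}\notin\mathbf P^*_{\mu,\alpha}$.} Let $\nu_j$ be the law obtained from $\mu$ by replacing every atom $N_i$ with $i>j$ by $0$. It has bounded jumps and mean $-\sum_{i>j}p_iN_i=-2^{-j-1}<0$. Both the law $\nu_j$ and this mean depend only on $N_1,\dots,N_j$ and on the fixed value of the remaining mass $\sum_{i>j}p_i$. To handle the latter, I will fix in advance a computable target value for $\sum_{i>j}p_i$, for instance by requiring $N_i\ge 2^{2i}$ and accounting for the difference in the mass at zero. By \cref{lem:kt-stationary-dom-general}, $\nu_j$ has some $\kappa_j>0$ and constant $C_j$ with $\Pr(\tau_T\le q)\le C_j(q+1)e^{-\kappa_jT}$. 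Run $\mu$ and $\nu_j$ with common proposals; they coincide until the first proposal equal to some $N_i$ with $i>j$. We then choose $T_j$ so large that $C_j(T_j^j+1)e^{-\kappa_jT_j}\le 2^{-j}$. After that we choose $N_{j+1}>T_j$ so large that $T_j^j\sum_{i>j}p_i\le2^{-j}$. This gives $\Pr_\mu(\tau_{T_j}\le T_j^j)\le2^{-j+1}$, hence $R_\alpha(T_j^j)<T_j$ once $2^{-j+1}<\alpha$.

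Any machine accepting exactly $L_{\mathrm{even}}$ must read the blank after an even-length input, by \cref{lem:kt-prefix-locality}. Hence $A_M(x)\ge n+1$ for every $x$ of even length $n$. Given a polynomial $Q$, take the even length $n=T_j$ (arrange $T_j$ even) with $j$ large enough that $Q(n)\le n^j$. Then $A_M(x)>R_\alpha(Q(n))$, contradicting \cref{lem:reachability}. Since $L_{\mathrm{even}}\in\Ptime$, this yields strictness at the top. The containment $\mathbf P^*_{\mu,\alpha}\subseteq\Ptime$ is \cref{thm:kt-positive-moment} with $r=1$.

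\emph{Fast stages and strictness over $\RDEF$.} The inclusion $\RDEF\subseteq\mathbf P^*_{\mu,\alpha}$ follows from \cref{lem:finite-prefix-class}. For the separation, in step $j$ we also pick a length $m_j$ with $T_{j-1}<m_j<N_j$ and $m_j^2\ge2^{j+1}N_j\cdot j$. This is possible because $N_j$ is taken huge relative to the preceding parameters; choosing $m_j\approx\sqrt{jN_j2^{j+1}}$ works. Any single proposal equal to $N_j$ crosses $m_j+1$ from any position, so by \cref{lem:proposal-maxima},
\[
\Pr(\tau_{m_j+1}>m_j^2)\le(1-p_j)^{m_j^2}\le e^{-j},
\]
which is at most $1-\alpha$ for large $j$. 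Now take $S=\{m_j\}$, chosen with a simple form such as $m_j=2^{a_j}$ with $a_j$ computable and fast-growing, and $L_S=\{x:|x|\in S\}$. Then $L_S$ is not reverse definite, since it is infinite while its complement contains arbitrarily long words with the same prefixes. A decider scans the input and tests whether $|x|\in S$ in time $O(n)$, up to a constant factor that is absorbed by shrinking $m_j$ slightly relative to $N_j$. Its accepting times on $L_S$ are therefore below $R_\alpha(n^2)$ for all large $n\in S$, and the finitely many small lengths are absorbed into the budget constant. Because every error bound above tends to $0$ with $j$, both arguments hold at every $\alpha$, and hence also for the intersection class $\mathbf P^*_\mu$.

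The main obstacle is the bookkeeping of the recursive choice. The drift of $\nu_j$ must be fixed before $N_{j+1}$ is chosen, even though $p_{j+1}$ depends on $N_{j+1}$. I address this by prescribing the products $p_iN_i=2^{-i-1}$ and letting the mass at $0$ absorb the rest. I also expect to need a careful check that the rational approximations to $\mu(0)$ are computable, since $\mu(0)$ is an infinite tail sum; I would use the bound $p_i\le2^{-3i}$ to control its error.
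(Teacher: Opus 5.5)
Your slow-stage half is the paper's argument up to an index shift. It uses the same law, the truncated laws $\nu_j$ with an exponential crossing bound, and the coupling that charges $q\sum_{i>j}p_i$. It also has even targets $T_j$ with budget $T_j^j$, and the distinguishing-cell bound $A_M(x)\ge n+1$ for $L_{\mathrm{even}}$.

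The genuine gap is in $\RDEF\subsetneq\mathbf P^*_{\mu,\alpha}$, at the claim that a decider tests $|x|\in S$ ``in time $O(n)$''. Your lengths $m_j$ are functions of $N_j$. In turn, $N_j$ is the output of the unbounded searches for $\kappa_{j-1}$, $C_{j-1}$ and $T_{j-1}$. To accept a word of length $m_j$, a machine must therefore regenerate the construction through stage $j$, and nothing in the proposal bounds that cost.

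The cost matters because the radius available at length $m_j$ is capped by the construction itself. Once $m_j\le T_j$ and $Q(m_j)\le T_j^j$, your slow-stage estimate gives $R_\alpha(Q(m_j))<T_j$. So, for instance, a decider that runs the linear search for $T_j$ before comparing lengths already fails.

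Prescribing $m_j=2^{a_j}$, with a cheap sequence $a_j$ fixed in advance, does not remove the difficulty; it only moves it. You would then need a priori upper bounds on the search outputs $T_{j-1}$, to fit the requirement $N_j\gtrsim T_{j-1}^{j-1}$ into the window $m_j<N_j\le m_j^2/(2^{j+1}j)$. No such bounds are given.

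What is missing is what the paper proves in \cref{sec:strictly-intermediate}:
\begin{itemize}
\item the dyadic search exponent is at most $j+\lceil\log_2N_{j-1}\rceil+2$;
\item the exact comparisons use $O(T_j(\log T_j)^2)$-bit integers;
\item hence all stages through $j$ cost at most $C(T_j+1)^D$, with $D$ independent of $j$.
\end{itemize}
The growth $N_j>2T_j^j$ makes this $o(N_j)$, and the threshold $j_0$ is fixed only after these constants.

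In your indexing you would need a fixed-degree polynomial bound in $T_{j-1}$ against $N_j\gtrsim T_{j-1}^{j-1}$. You also need the $O(n\log n)$ single-tape computation of $|x|$ (not $O(n)$), which your slack between $m_j$ and $N_j$ absorbs. A lighter repair is to build the cost control into the construction: choose $N_j$ above a fixed polynomial in the number of steps the construction has used so far. Some such argument is required either way.

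Two smaller points:
\begin{itemize}
\item $\nu_j(0)=\tfrac12-\sum_{i\le j}p_i$, so $\nu_j$ never depends on future atoms. The bookkeeping you call the main obstacle is automatic.
\item $\E K^2=\tfrac12+\sum_j2^{-j-1}N_j$ diverges because of a growth condition such as $N_j\ge2^{2j}$, not merely because $N_j\to\infty$.
\end{itemize}
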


\begin{remark}[Oscillating radius]
\label{rem:sparse-oscillation}
The radius of this law oscillates between growth scales. At
$q_j=T_j^j$, \eqref{eq:zero-subsequence-crossing} gives
$R_\alpha(q_j)<T_j=q_j^{1/j}$ for large $j$. Along
$q=\lceil2\ell_\alpha N_j^{3/2}\rceil$,
\eqref{eq:sparse-large-radius} gives $R_\alpha(q)\ge c_\alpha q^{2/3}$
for a constant $c_\alpha>0$. Thus neither the eventual positive-power
lower bound nor the subpolynomial upper bound in
\cref{lem:reachability-dictionary} holds. Its polynomial upper-bound
criterion still applies, but none of its exact-class criteria does;
the exact class remains open.
\end{remark}

%% file: antichain.tex
\subsection{Infinitely many incomparable classes}
\label{sec:antichain}

\Cref{thm:zero-drift-counterexample} gives one class strictly between
$\RDEF$ and $\Ptime$. Sparse negative-drift laws give infinitely many,
pairwise incomparable.

Let $\mathrm{tw}_1=32$ and $\mathrm{tw}_{j+1}=2^{\mathrm{tw}_j}$.
For $e\in\N$ let $J_e=\{2^e(2m+1):m\in\N\}$; these sets
partition $\{1,2,\dots\}$. Define the law $\mu^{\langle e\rangle}$ by
\[
 \begin{split}
 \mu^{\langle e\rangle}(-1)&=\tfrac12,\qquad
 \mu^{\langle e\rangle}(\mathrm{tw}_j^3)=\mathrm{tw}_j^{-4}
 \quad(j\in J_e),\\
 \mu^{\langle e\rangle}(0)&=\tfrac12-
 \sum_{j\in J_e}\mathrm{tw}_j^{-4},
 \end{split}
\]
with zero mass elsewhere. For an integer $j_0\ge1$ let
\[
 L^{\langle e\rangle}_{j_0}
 =\{x\in\Sigma^*:|x|=\mathrm{tw}_j
 \text{ for some }j\in J_e\text{ with }j\ge j_0\}.
\]

The sparse laws separate from one another at every pair of confidences.

\begin{newresult}[New result]
\begin{theorem}[An antichain of classes below $\Ptime$]
\label{thm:antichain}
Each $\mu^{\langle e\rangle}$ is computable, has $\E K<0$, and,
for $s>0$, has $\E[(K^+)^s]<\infty$ exactly when $s<4/3$.
Fix a finite nonempty alphabet $\Sigma$. There is $j_*$ such that
for every $j_0\ge j_*$, every $e\ne e'$, and all
$\alpha,\alpha'\in(0,1)$,
\[
 \begin{split}
 L^{\langle e\rangle}_{j_0}
 &\in\mathbf P^*_{\mu^{\langle e\rangle}}
 \subseteq\mathbf P^*_{\mu^{\langle e\rangle},\alpha},\\
 L^{\langle e\rangle}_{j_0}
 &\notin\mathbf P^*_{\mu^{\langle e'\rangle},\alpha'}.
 \end{split}
\]
Consequently, for any confidences, the classes
$\mathbf P^*_{\mu^{\langle e\rangle},\alpha}$, $e\in\N$, are
pairwise incomparable, and each satisfies
\[
 \RDEF\subsetneq\mathbf P^*_{\mu^{\langle e\rangle},\alpha}
 \subsetneq\Ptime.
\]
The same holds for the classes $\mathbf P^*_{\mu^{\langle e\rangle}}$.
\end{theorem}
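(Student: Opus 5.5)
The plan has three parts: the moment and drift facts, the membership of $L^{\langle e\rangle}_{j_0}$ in its own class, and its exclusion from every other class.

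\textbf{Moment and drift facts.} Computability holds because $\mathrm{tw}_j$ is computable and $j\in J_e$ is decidable. The only nontrivial mass is $\mu^{\langle e\rangle}(0)$. Its tail $\sum_{j\ge J}\mathrm{tw}_j^{-4}$ is bounded by roughly $2\mathrm{tw}_J^{-4}$, so $\mu^{\langle e\rangle}(0)$ can be approximated to any precision. The drift is $\E K=-\tfrac12+\sum_{j\in J_e}\mathrm{tw}_j^{-1}<0$, since $\mathrm{tw}_1=32$. The positive moments are $\E[(K^+)^s]=\sum_{j\in J_e}\mathrm{tw}_j^{3s-4}$. This sum converges for $s<4/3$, because the towers grow superexponentially. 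It diverges for $s\ge4/3$, because its terms are then at least one.

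\textbf{Membership.} There is a decider $M_e$ that scans the input, computes $n=|x|$, and checks whether $n=\mathrm{tw}_j$ for some $j\in J_e$ with $j\ge j_0$. Checking the tower condition costs only polynomial time in $n$, via repeated logarithms. The accepting time is therefore $A(x)\le C n^2$ for $|x|=\mathrm{tw}_j$, and at most $\mathrm{tw}_j^{3}$ once $j\ge j_*$. A single proposal $\mathrm{tw}_j^3$ crosses this from any position. Among $q$ iid proposals one equals $\mathrm{tw}_j^3$ with probability $1-(1-\mathrm{tw}_j^{-4})^q$, so $q=\lceil\ell_\alpha \mathrm{tw}_j^4\rceil=\lceil\ell_\alpha n^4\rceil$ suffices. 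The polynomial budget $Q(n)=\lceil\ell_\alpha\rceil(n^4+1)$ works for each $\alpha$, which gives membership in $\mathbf P^*_{\mu^{\langle e\rangle}}$.

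\textbf{Exclusion.} Fix $e'\ne e$ and suppose $(M,Q)$ witnesses $L^{\langle e\rangle}_{j_0}\in\mathbf P^*_{\mu^{\langle e'\rangle},\alpha'}$. I first need a lower bound on the accepting time. For $x$ of length $n=\mathrm{tw}_j$ with $j\in J_e$, I claim $A_M(x)\ge n+1$. If instead $A_M(x)\le n$, prefix truncation (\cref{lem:kt-prefix-locality}) gives the same accepting run on $x$ extended by any symbol. That word has length $n+1$, which is not a tower (for large $j$), so soundness is violated. Hence $\tau_{n+1}\le Q(n)$ must have probability at least $\alpha'$.

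The main obstacle is the upper bound on this crossing probability under $\mu^{\langle e'\rangle}$. The scales $\mathrm{tw}_i$ with $i\in J_{e'}$ avoid the index $j$, which is where the sparseness pays off. I split the proposals at threshold $\mathrm{tw}_{j}$.

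First, the large atoms have $i>j$, so $i\ge j+1$. The probability that any of them appears within $Q(n)$ ticks is at most $Q(n)\sum_{i>j}\mathrm{tw}_i^{-4}\le 2Q(n)\,2^{-4\mathrm{tw}_j}$. This is $o(1)$, because $Q$ is polynomial in $n=\mathrm{tw}_j$.

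Second, the small atoms have $i\le j-1$, so $\mathrm{tw}_i^3\le(\log_2 n)^3$. On the event that no large atom appears, I compare with the truncated law. It has bounded jumps $\le(\log n)^3$ and drift at most $-\tfrac12+\sum_i\mathrm{tw}_i^{-1}\le-\tfrac14$. I then apply the exponential Lyapunov bound of \cref{lem:kt-stationary-dom-general}, taking $\kappa$ of order $(\log n)^{-3}$ so that $g(\kappa)\le 1-c\kappa$. Care is needed to keep the constants $C_\mu$ uniform and the Taylor control honest. The resulting bound has the form $\Pr(\tau_{n+1}\le Q)\le C(Q+1)\exp(-c\,n/(\log n)^3)$, which is $o(1)$. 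A simpler fallback is to bound the sum of positive proposals: it is at most $Q(n)\,\E K^+$-like terms, and by Markov's inequality the sum exceeds $n$ only with small probability. This fallback fails, because $Q\cdot(\text{small mean})$ can exceed $n$, so the drift argument is genuinely needed.

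With both crossing probabilities tending to zero, the total probability falls below $\alpha'$ for large $j\in J_e$, which is a contradiction. The theorem requires a single $j_*$ that works uniformly over all $e\neq e'$, so the remaining care is to show that the 'large $j$' requirement above does not depend on $e'$.

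\textbf{Consequences.} Incomparability follows from the exclusion step. For the strict inclusions, the upper bound $\subseteq\Ptime$ follows from the finite moment with $s=1<4/3$ via \cref{thm:kt-positive-moment}. It is strict because $L^{\langle e+1\rangle}_{j_0}\in\Ptime$ lies outside the class. The lower bound $\RDEF\subseteq$ follows from \cref{lem:finite-prefix-class}. It is strict because $L^{\langle e\rangle}_{j_0}$ is not reverse definite: $a^{\mathrm{tw}_j}$ and $a^{\mathrm{tw}_j+1}$ share arbitrarily long prefixes but have opposite membership.
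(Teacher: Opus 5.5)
Your route is the paper's. You remove the atoms $\mathrm{tw}_i^3$ with $i>j$ at cost at most $2Q\,2^{-4\mathrm{tw}_j}$, and use $j\notin J_{e'}$ to cap the surviving jumps by $\mathrm{tw}_{j-1}^3=(\log_2\mathrm{tw}_j)^3$. You then apply an exponential Lyapunov bound with $\kappa\approx(\log_2\mathrm{tw}_j)^{-3}$, and combine it with the single-atom crossing for membership and the distinguishing-cell bound $A_M(x)\ge\mathrm{tw}_j+1$. Two steps, however, are not yet proofs.

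\textbf{Where uniformity is needed.} The uniformity you flag sits in the wrong place. Exclusion needs none. The putative witness $(M,Q)$ for $\mu^{\langle e'\rangle}$ at confidence $\alpha'$ is fixed first. Only then do you choose $j\in J_e$ with $j\ge j_0$ large, so that choice may depend on $e'$, $M$, $Q$ and $\alpha'$; your bound in fact depends on $j$ alone.

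The quantifier order ``there is $j_*$ such that for every $j_0\ge j_*$ and every $e$'' constrains the membership step instead. You define $j_*$ by $C\,\mathrm{tw}_j^2\le\mathrm{tw}_j^3$, where $C$ is the time constant of the machine testing $j\in J_e$ and $j\ge j_0$. Nothing you say prevents $C$, and hence $j_*$, from depending on $e$ and $j_0$. That dependence makes ``$j_0\ge j_*$'' circular.

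The paper proves that $C$ is independent of both. The residue of $j$ modulo $2^{e+1}$ and a counter saturating at $j_0$ are kept in the finite control and updated on the existing stage transitions. This adds states but no transitions, so $A_{M_{e,j_0}}(x)\le C(n+2)\log_2(n+2)$ with a single $C$. Alternatively, only finitely many accepted lengths violate $A\le\mathrm{tw}_j^3$, and those inputs have bounded accepting times. You could add to $Q$ a constant term reaching that bound via \cref{lem:kt-block-crossing}, which gives membership for every $j_0$.

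Relatedly, what the atom $\mathrm{tw}_j^3$ requires is $A_M(x)\le n^3$. ``Polynomial time in $n$'' does not give this. Your $Cn^2$, from a length count plus polylogarithmic tower arithmetic, does.

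\textbf{The Lyapunov constant.} You cannot cite \cref{lem:kt-stationary-dom-general} as a black box: its constant $C_\mu$ is not explicit, and the truncated law changes with $j$. The missing observation is that $-1$ is the only negative proposal. Holding therefore occurs only at $0$ and adds at most $\tfrac12(1-e^{-\kappa})\le\kappa/2$.

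With $\kappa=(\log_2\mathrm{tw}_j)^{-3}$, every positive atom has $\kappa K\le1$. Then $e^{x}\le1+2x$ on $[0,1]$, together with $e^{-\kappa}\le1-\kappa+\kappa^2/2$ and $\sum_i\mathrm{tw}_i^{-1}\le1/16$, gives $\E e^{\kappa K}\le1-\kappa/4$. Iteration gives $\sup_t\E e^{\kappa X_t}\le1+(\kappa/2)/(\kappa/4)=3$, uniformly in $j$. This yields $\Pr(\tau_{\mathrm{tw}_j+1}\le Q)\le3(Q+1)e^{-\kappa(\mathrm{tw}_j+1)}\to0$, which is the bound you need.
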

\end{newresult}

These laws have negative mean and finite positive moments of every
order below $4/3$, a regime that \cref{thm:negative-all-moments}
(all positive moments finite, class $\RDEF$) and
\cref{ex:kt-negative-polynomial-tail} (class $\Ptime$) leave open.
Fixed-law classes below $\Ptime$ therefore do not form a chain.
The proof is in \cref{app:antichain}.

%% file: logtails.tex
\section{Above P: the tail determines the class}
\label{sec:kt-tails}

Tail quantiles organise the iid classes above $\Ptime$. We first give the tail sandwich, then construct exponential, quasipolynomial and doubly exponential classes and identify the effectiveness boundary.

\subsection{Tail sandwich}
\label{sec:tail-sandwich}

For a nondecreasing function $f:\N\to\N$ let $\mathcal T(f)$ be the
set of languages $L$ for which some deterministic machine $M$ accepts
exactly $L$ and some allowed polynomial $Q$ satisfies
$A_M(x)\le f(Q(|x|))$ for every $x\in L$. By \cref{lem:reachability},
$\mathbf P^*_{\mu,\alpha}=\mathcal T(R_\alpha)$.

A proposal-tail quantile controls the class up to the multiplicative tick budget.

\begin{newresult}[New result]
\begin{proposition}[Tail sandwich]
\label{prop:tail-sandwich}
For every iid law $\mu$ and $0<\alpha<1$, with $u_\alpha$ as in
\cref{lem:proposal-maxima},
\[
 \mathcal T(u_\alpha)\subseteq\mathbf P^*_{\mu,\alpha}
 \subseteq\mathcal T(q\mapsto q\,u_\alpha(q)).
\]
\end{proposition}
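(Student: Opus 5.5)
The plan is to combine the identity $\mathbf P^*_{\mu,\alpha}=\mathcal T(R_\alpha)$ from \cref{lem:reachability} with the pointwise sandwich $u_\alpha(q)\le R_\alpha(q)\le q\,u_\alpha(q)$ of \cref{lem:proposal-maxima}. The only general fact needed is that $\mathcal T$ is monotone. If $f\le g$ pointwise, with both nondecreasing, then $\mathcal T(f)\subseteq\mathcal T(g)$. Indeed, a witness $(M,Q)$ with $A_M(x)\le f(Q(|x|))$ also satisfies $A_M(x)\le g(Q(|x|))$, with the same machine and the same budget.

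First, I will check that the three functions are admissible arguments of $\mathcal T$, that is, nondecreasing maps $\N\to\N$. For $R_\alpha$ this is part of \cref{lem:reachability}. For $u_\alpha$, note that $J_q=\max_{j\le q}K_j^+$ is pathwise nondecreasing in $q$, so $\Pr(J_q\ge u)$ is nondecreasing in $q$. The eligible set of integers $u$ therefore grows with $q$, and so does its maximum. The convention $u_\alpha(0)=0$ fits this, since every $u_\alpha(q)\ge0$. Finiteness of $u_\alpha$ is stated in \cref{lem:proposal-maxima}. The product $q\,u_\alpha(q)$ is a product of two nondecreasing nonnegative functions, so it is nondecreasing.

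Next, I will verify the pointwise sandwich for every $q\in\N$, including $q=0$. There all three quantities vanish, by $R_\alpha(0)=0$ and $u_\alpha(0)=0$. For $q\ge1$ the sandwich is exactly the quantile statement of \cref{lem:proposal-maxima}. Monotonicity of $\mathcal T$ then gives $\mathcal T(u_\alpha)\subseteq\mathcal T(R_\alpha)\subseteq\mathcal T(q\mapsto q\,u_\alpha(q))$. The middle term equals $\mathbf P^*_{\mu,\alpha}$.

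I see no real obstacle. The only points needing care are two edge cases. The first is the $q=0$ convention. The second is that when $\Pr(K>0)=0$ the function $u_\alpha$ is identically zero and all three classes reduce to $\{\varnothing,\Sigma^*\}$, consistent with \cref{prop:kt-stuck-walk}. I will mention that the argument does not rely on these cases separately, because the pointwise inequalities already hold there.
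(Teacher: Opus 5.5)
Your proposal is correct and follows the paper's proof essentially verbatim. Both use the pointwise sandwich $u_\alpha(q)\le R_\alpha(q)\le q\,u_\alpha(q)$ from \cref{lem:proposal-maxima}, with all three vanishing at $q=0$, note that $u_\alpha$ is nondecreasing because $\Pr(J_q\ge u)$ is, and apply monotonicity of $\mathcal T$ to $\mathbf P^*_{\mu,\alpha}=\mathcal T(R_\alpha)$; your extra checks, that $q\,u_\alpha(q)$ is nondecreasing and the frozen-law edge case, are harmless and consistent with the paper.
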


\begin{proof}
\Cref{lem:proposal-maxima} gives
$u_\alpha(q)\le R_\alpha(q)\le q\,u_\alpha(q)$ for $q\ge1$,
and all three vanish at $q=0$. The function $u_\alpha$ is
nondecreasing in $q$, because $\Pr(J_q\ge u)$ is nondecreasing in $q$.
Apply $\mathcal T$, which is monotone in $f$.
\end{proof}
\end{newresult}

Thus the class of an iid law is determined, up to the factor $q$,
by the quantiles of a single proposal tail. The next laws exploit this.

\subsection{One law with class \texorpdfstring{$\EXP$}{EXP}}
A tail of order $1/\log T$ crosses exponentially distant indices with a polynomial budget. The same tail bounds the class from above.

\begin{definition}[The exponential-time law]
\label{def:kt-exp-law}
Let $\mue(2^j)=1/(j(j+1))$ for integers $j\ge1$, with zero mass
elsewhere. For a real $z>0$, write
$h(z)=\max\{1,\lceil\log_2 z\rceil\}$.
\end{definition}

The law is normalized because
$\sum_{j=1}^N1/(j(j+1))=1-1/(N+1)$.
Its exact tail, including $0<z\le2$, is
\begin{equation}
 \Pr(K\ge z)=\sum_{j\ge h(z)}\frac1{j(j+1)}=\frac1{h(z)}.
 \label{eq:exp-tail}
\end{equation}
The ceiling matters at powers of two. Every positive
moment is infinite, since $2^{rj}/(j(j+1))$ does not tend to zero for
$r>0$. The extended mean is $+\infty$; there is no finite drift.

\begin{lemma}[Two bounds for the exponential-time law]
\label{lem:kt-exp-crossing}
For $\mu=\mue$, $T\ge1$, and integer $q\ge1$,
\begin{equation}
 \Pr(\tau_T>q)\le e^{-q/h(T)},\qquad
 \Pr(\tau_T\le q)\le\frac{q}{h(T/q)}.
 \label{eq:exp-crossing}
\end{equation}
If $\Pr(\tau_T\le q)\ge\alpha>0$, then
\begin{equation}
 T\le q\,2^{q/\alpha}.
 \label{eq:exp-extraction}
\end{equation}
For $q=0$, positive confidence is possible only at $T=0$.
\end{lemma}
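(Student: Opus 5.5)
Both bounds in \eqref{eq:exp-crossing} follow from \cref{lem:proposal-maxima} and the exact tail \eqref{eq:exp-tail}, so the plan is short.

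\textbf{Lower tick bound.} I would use the first inclusion $\{J_q\ge T\}\subseteq\{\tau_T\le q\}$. By independence,
\[
\Pr(\tau_T>q)\le\Pr(J_q<T)=(1-\Pr(K\ge T))^q=(1-1/h(T))^q\le e^{-q/h(T)}.
\]
Since $\mue$ has no negative mass, the walk is never held, but this fact is not even needed here.

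\textbf{Upper crossing bound.} I would use the second inclusion $\{\tau_T\le q\}\subseteq\{J_q\ge T/q\}$ and the union bound:
\[
\Pr(\tau_T\le q)\le q\Pr(K\ge T/q)=q/h(T/q).
\]
Here \eqref{eq:exp-tail} is applied at the real argument $z=T/q>0$, which the displayed tail formula explicitly allows.

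\textbf{Extraction.} Suppose $\Pr(\tau_T\le q)\ge\alpha$. The upper bound gives $h(T/q)\le q/\alpha$. If $T/q\le 2$, then $T\le 2q\le q2^{q/\alpha}$, because $q/\alpha\ge1$. Otherwise $h(T/q)=\lceil\log_2(T/q)\rceil\ge\log_2(T/q)$, so $T/q\le 2^{h(T/q)}\le 2^{q/\alpha}$, which is \eqref{eq:exp-extraction}.

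\textbf{The case $q=0$.} Here $X_0=0$, so $\Pr(\tau_T\le 0)=\one_{\{T=0\}}$, and positive confidence forces $T=0$.

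The only point needing care is the ceiling in $h$ at small or dyadic arguments. It is handled by the case split $T/q\le2$ versus $T/q>2$ together with $\lceil y\rceil\ge y$.
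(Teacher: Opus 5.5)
Your proposal is correct and follows essentially the paper's argument: the two inclusions of \cref{lem:proposal-maxima} with the exact tail \eqref{eq:exp-tail} give the failure and success bounds, and $h(T/q)\le q/\alpha$ gives the extraction. Your case split on $T/q\le2$ is valid but unnecessary, because $\log_2 z\le h(z)$ already holds for every real $z>0$, and the paper uses that inequality to conclude in one step.
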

\begin{proof}
Apply \cref{lem:proposal-maxima} with the tail \eqref{eq:exp-tail}.
The first inclusion and $1-u\le e^{-u}$ give the failure bound;
the second inclusion and a union bound give the success bound.
At positive confidence the latter
implies $h(T/q)\le q/\alpha$. Since
$\log_2(T/q)\le h(T/q)$ for every positive $T/q$, exponentiation gives
\eqref{eq:exp-extraction}.
\end{proof}

For a polynomial budget $Q$, \eqref{eq:exp-extraction} bounds each
accepting index by $Q(n)\,2^{Q(n)/\alpha}$. Conversely, the failure bound
gives exponential reachability: if $k_\alpha\ge\max\{1,\ell_\alpha\}$ is an integer,
then $R_\alpha(q)\ge2^{\lfloor q/k_\alpha\rfloor}$ for $q\ge k_\alpha$,
while \eqref{eq:exp-extraction} gives $R_\alpha(q)\le q2^{q/\alpha}$
for $q\ge1$.

\begin{proposition}[Sharp fixed-confidence scale for $\mue$]
\label{prop:exp-sharp-scale}
For $T,q\ge1$,
\begin{equation}
 1-\left(1-\frac1{h(T)}\right)^q
 \le\Pr(\tau_T\le q)
 \le1-\left(1-\frac1{h(T/q)}\right)^q.
 \label{eq:exp-sandwich}
\end{equation}
For every fixed $c>0$ and $0<\alpha<1$, as $T\to\infty$,
\[
 \Pr(\tau_T\le\lfloor c\log_2T\rfloor)\longrightarrow1-e^{-c},
 \qquad
 \frac{q_\alpha(T)}{\log_2T}\longrightarrow \ell_\alpha.
\]
\end{proposition}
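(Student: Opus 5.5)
The sandwich \eqref{eq:exp-sandwich} comes straight from \cref{lem:proposal-maxima} combined with the exact tail \eqref{eq:exp-tail}. The lower bound uses the first inclusion $\{J_q\ge T\}\subseteq\{\tau_T\le q\}$. The upper bound uses the second inclusion $\{\tau_T\le q\}\subseteq\{J_q\ge T/q\}$. In both cases we apply the identity $\Pr(J_q\ge z)=1-(1-\bar F(z))^q$ with $\bar F(z)=1/h(z)$. For $\mue$ every proposal is positive, so $K^+=K$, and \eqref{eq:exp-tail} holds for every real $z>0$, including the non-integer threshold $T/q$.

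For the limit, fix $c>0$ and set $q=\lfloor c\log_2T\rfloor$; this is at least $1$ for large $T$.
\begin{itemize}
\item On the left side, $h(T)=\lceil\log_2T\rceil=\log_2T+O(1)$. Hence $q/h(T)\to c$ and $q/h(T)^2\to0$. Writing $(1-1/h(T))^q=\exp\bigl(q\ln(1-1/h(T))\bigr)$ and using $\ln(1-u)=-u+O(u^2)$, the left side tends to $1-e^{-c}$.
\item On the right side, $\log_2(T/q)=\log_2T-\log_2q=\log_2T-O(\log\log T)$. Therefore $h(T/q)=\log_2T\,(1+o(1))$, and the same computation sends the right side to $1-e^{-c}$.
\end{itemize}
The squeeze theorem then gives the first limit.

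For the budget, the target is $q_\alpha(T)/\log_2T\to\ell_\alpha$. Note that $1-e^{-c}\gtrless\alpha$ exactly when $c\gtrless\ell_\alpha$.
\begin{itemize}
\item Upper bound: fix $c>\ell_\alpha$. Then $1-e^{-c}>\alpha$, so for all large $T$ we have $\Pr(\tau_T\le\lfloor c\log_2T\rfloor)\ge\alpha$, and hence $q_\alpha(T)\le c\log_2T$.
\item Lower bound: fix $c<\ell_\alpha$. Since $\Pr(\tau_T\le q)$ is nondecreasing in $q$, every $q\le\lfloor c\log_2T\rfloor$ has crossing probability at most the value at $\lfloor c\log_2T\rfloor$. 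That value tends to $1-e^{-c}<\alpha$, so $q_\alpha(T)>\lfloor c\log_2T\rfloor$ for large $T$.
\end{itemize}
Letting $c\to\ell_\alpha$ from both sides yields the limit. The finiteness of $q_\alpha(T)$ follows from the upper-bound step.

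The only delicate point is the upper-bound side of the limit. There the effective threshold is $T/q$ rather than $T$, so we must check that dividing by a budget of order $\log T$ changes $\log_2$ of the threshold only by $O(\log\log T)$, which is $o(\log T)$. We must also check that the ceiling in $h$ and the floor in $q$ contribute only $O(1)$ errors relative to $\log_2T$. Both checks are straightforward asymptotics, so I expect no real obstacle beyond writing them out.
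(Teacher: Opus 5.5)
Your proposal is correct and follows essentially the same route as the paper. You derive the sandwich from \cref{lem:proposal-maxima} with the exact tail \eqref{eq:exp-tail}, use $h(T)=\log_2T+O(1)$ and $h(T/q)=\log_2T-O(\log\log T)$ to send both failure probabilities to $e^{-c}$, and then squeeze $q_\alpha(T)$ between the budgets at $c$ slightly below and slightly above $\ell_\alpha$.
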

\begin{proof}
\Cref{lem:proposal-maxima} and \eqref{eq:exp-tail} give the sandwich,
including $T\le2$ and $T/q\le2$. For $\ell_T=\log_2T$ and $q=\lfloor c\ell_T\rfloor$,
we have
$h(T)=\ell_T+O(1)$ and $h(T/q)=\ell_T-\log_2q+O(1)$.
Since $\log_2q=O(\log \ell_T)$, both $h(T)/\ell_T$ and $h(T/q)/\ell_T$
tend to $1$.
Thus both failure probabilities tend to $e^{-c}$.
Fix
$0<\varepsilon<\ell_\alpha$. The limits at $\ell_\alpha-\varepsilon$ and
$\ell_\alpha+\varepsilon$ put $q_\alpha(T)$ between the corresponding
integer budgets for all sufficiently large $T$; divide by $\ell_T$ and let
$\varepsilon\downarrow0$.
\end{proof}

The logarithmic crossing budget gives exactly exponential-time recognition.

\begin{theorem}[Exact exponential-time class]
\label{thm:kt-exp}
For every $0<\alpha<1$,
\[
 \mathbf P^*_{\mue,\alpha}=\mathbf P^*_{\mue}=\EXP.
\]
\end{theorem}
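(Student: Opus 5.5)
The plan is to apply the reachability dictionary, \cref{lem:reachability-dictionary}(iv), at each fixed confidence $0<\alpha<1$, and then pass to the intersection over confidences. The hypothesis of (iv) is an eventual two-sided bound $2^{cq^\varepsilon}\le R_\alpha(q)\le 2^{C(q+1)^d}$, and both sides are already essentially provided by \cref{lem:kt-exp-crossing} and the remark following it. So the work is to put those bounds into exactly the shape (iv) requires.

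\emph{Lower bound.} Fix an integer $k_\alpha\ge\max\{1,\ell_\alpha\}$. Take $q\ge k_\alpha$ and the target $T=2^{m}$ with $m=\lfloor q/k_\alpha\rfloor\ge1$. Then $h(T)=m$, and the failure bound in \eqref{eq:exp-crossing} gives
\[
\Pr(\tau_T>q)\le e^{-q/m}\le e^{-k_\alpha}\le e^{-\ell_\alpha}=1-\alpha .
\]
Hence $R_\alpha(q)\ge 2^{\lfloor q/k_\alpha\rfloor}$. For $q\ge 2k_\alpha$ we have $\lfloor q/k_\alpha\rfloor\ge q/(2k_\alpha)$, so the lower bound of (iv) holds eventually with $\varepsilon=1$ and $c=1/(2k_\alpha)$.

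\emph{Upper bound.} For $q\ge1$, \eqref{eq:exp-extraction} gives
\[
R_\alpha(q)\le q\,2^{q/\alpha}\le 2^{q+q/\alpha}\le 2^{C(q+1)}
\]
with $C=1+1/\alpha$ and $d=1$; at $q=0$ we have $R_\alpha(0)=0$. Both inequalities of (iv) therefore hold for all large $q$. Part (iv) then yields $\mathbf P^*_{\mue,\alpha}=\EXP$ for every $\alpha$, and intersecting over $\alpha$ gives $\mathbf P^*_{\mue}=\EXP$, as noted after \cref{def:kt-ac1}.

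The main point to check is that (iv) really applies as stated. Its proof absorbs simulation overhead and the finitely many small budgets, and the lower inclusion needs the budget $Q$ chosen so that $cQ(n)\ge p(n)$ while $Q$ remains an integer polynomial. With $\varepsilon=d=1$ these requirements are immediate, so I expect no real obstacle beyond the bookkeeping at small $q$ already covered by the lemma's remarks.
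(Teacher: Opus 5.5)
Your proposal is correct and follows the paper's proof. You use the same two radius bounds, $R_\alpha(q)\ge 2^{\lfloor q/k_\alpha\rfloor}$ from the failure bound and $R_\alpha(q)\le q\,2^{q/\alpha}\le 2^{C(q+1)}$ from \eqref{eq:exp-extraction}, apply \cref{lem:reachability-dictionary}(iv) at each confidence, and then intersect; you also make explicit the constants $c=1/(2k_\alpha)$, $\varepsilon=d=1$ and $C=1+1/\alpha$, which the paper leaves implicit.
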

\begin{proof}
The bounds immediately preceding \cref{prop:exp-sharp-scale} give
an eventual exponential lower radius bound and
$R_\alpha(q)\le q2^{q/\alpha}$ for $q\ge1$. The latter is bounded by
$2^{C(q+1)}$ for a fixed $C$. Apply
\cref{lem:reachability-dictionary}(iv), separately at each confidence.
\end{proof}

The law $\mue$ has computable rational masses and an effective tail. Exact sampling and its infinite expected writing cost are discussed in \cref{app:calibration-6}.

\input{hierarchy}

\subsection{The effectiveness boundary}
\label{sec:effectiveness-statement}

Computability of the law bounds its class by decidability. We state the
individual-law results here and prove them in \cref{app:effectiveness}.

\begin{proposition}[Decidability for computable laws]
\label{prop:computable-rec}
If $\mu$ is computable in the sense of \cref{def:kt-computable-law}, then
$\mathbf P^*_{\mu,\alpha}\subseteq\REC$ for every fixed
$0<\alpha<1$, where $\REC$ denotes the decidable languages.
No moment, drift, or sampling-cost assumption is used.
\end{proposition}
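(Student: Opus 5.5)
The plan is to build, from a witness $(M,Q)$ for $L\in\mathbf P^*_{\mu,\alpha}$, a decider that runs $M$ for a computable number of steps. The only obstruction to deciding $L$ is that we do not know in advance how long to simulate $M$. We therefore need a computable function $F$ with $A_M(x)\le F(|x|)$ on every $x\in L$. Given such an $F$, the decider computes $F(|x|)$, simulates $M$ on $x$ for that many steps, and accepts exactly when $M$ has accepted by then. Soundness off $L$ is automatic because $A_M(x)=\infty$ there. By \cref{lem:reachability} it suffices that $F(n)\ge R_\alpha(Q(n))$, so the whole task reduces to showing that $q\mapsto R_\alpha(q)$, or some computable upper bound for it, is computable from the law.

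The key step is a computable upper bound on the radius. By \cref{lem:proposal-maxima}, $R_\alpha(q)\le q\,u_\alpha(q)$. Moreover $\Pr(J_q\ge u)\le q\Pr(K\ge u)$, so any integer $u$ with $\Pr(K\ge u)<\alpha/q$ bounds $u_\alpha(q)$ from above. The tail $\Pr(K\ge u)=1-\sum_{k<u}\mu(k)$ cannot be computed exactly in general, because it is an infinite sum. However, total mass one makes $\Pr(K\ge u)$ upper semicomputable: approximate $\sum_{-N\le k<u}\mu(k)$ from below with certified errors, using $d(k,p)$ from \cref{def:kt-computable-law}, and increase $N$ and the precision $p$. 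The resulting lower bounds on $\sum_{k<u}\mu(k)$ converge to the true value.

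The algorithm then searches over $u=1,2,\dots$ and over stages $(N,p)$ in dovetailed fashion. It halts at the first pair certifying $\sum_{-N\le k<u}(d(k,p)-2^{-p})>1-\alpha/q$. Such a certificate forces $\Pr(K\ge u)<\alpha/q$, hence $u_\alpha(q)<u$. The search halts because $\Pr(K\ge u)\to0$: some $u$ has $\Pr(K\ge u)<\alpha/(2q)$, and then a large enough $N$ and $p$ give a strict certificate. This certification argument is presumably what \cref{app:effectiveness} calls the effective tail bound.

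Setting $F(n)=Q(n)\cdot u^*(Q(n))$, with $u^*(q)$ the value returned by the search, gives a total computable bound, with the case $q=0$ handled by $R_\alpha(0)=0$. Since $\alpha$ is fixed, it can be hard-wired as a rational lower approximation: replacing $\alpha$ by a rational $\alpha_0\le\alpha$, with $\alpha_0>0$, only weakens the certified inequality and keeps it valid. The main obstacle is this certification step: computing point masses does not by itself yield tail bounds, and the argument relies essentially on normalisation to one, strict inequalities, and dovetailing. No moment or sampling-cost hypotheses enter, consistent with the statement.
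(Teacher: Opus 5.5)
Your proof is correct and follows the same route as the paper: for each budget $q$, use the point-mass approximations and normalisation to one to certify a tail cutoff, with a hard-wired positive rational in place of $\alpha$. You then bound the radius by $q$ times that cutoff and simulate the witness through the resulting computable logical-time bound. The paper certifies the mass of a symmetric window $[-s,s]$ with a single search index and applies the union bound directly via \eqref{eq:positive-sum}, whereas you dovetail over $(u,N,p)$ and route the same union bound through \cref{lem:proposal-maxima}; these are cosmetic differences.
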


Computable masses yield a computable cutoff $b_{\rm tail}(q)$ for each
finite budget $q$. Making $\Pr(K>b_{\rm tail}(q))$ sufficiently small
excludes targets beyond $qb_{\rm tail}(q)$ at the given confidence. A
decider simulates the witnessing machine through that logical-time bound.

\begin{corollary}[Computable clocks collectively capture decidability]
\label{cor:computable-union}
For every fixed finite nonempty input alphabet and every $0<\alpha<1$,
\[
 \bigcup_{\mu\ \mathrm{computable}}\mathbf P^*_{\mu,\alpha}=\REC,
 \qquad
 \bigcup_{\mu\ \mathrm{computable}}\mathbf P^*_\mu=\REC.
\]
The witnessing law may depend on the language.
\end{corollary}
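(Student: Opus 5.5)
The plan splits into the two inclusions. The containment $\bigcup_\mu\mathbf P^*_{\mu,\alpha}\subseteq\REC$ is immediate from \cref{prop:computable-rec}, applied law by law. Since $\mathbf P^*_\mu\subseteq\mathbf P^*_{\mu,\alpha}$, the same holds for the intersection classes. All the work is in the reverse inclusion. For that we fix a decidable $L$ and a total decider $M$, and we build a computable law $\mu_L$, depending on $L$, whose radius dominates the halting time of $M$ on every input length. The construction has to serve every confidence simultaneously, so that $L\in\mathbf P^*_{\mu_L}$.

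First, we define the envelope $H(n)=\max\{T_M(x):|x|\le n\}$. It is finite because $M$ is total, it is computable by running $M$ on the finitely many short words, and it is nondecreasing. We then replace it by a strictly increasing computable majorant $G(n)\ge H(n)+1$, for instance $G(n)=n+1+\sum_{m\le n}H(m)$. Second, we define the law by putting mass $1/(j(j+1))$ at the point $G(j)$ for $j\ge1$, mimicking $\mue$ with $2^j$ replaced by $G(j)$. The atoms are strictly increasing, so the tail is $\Pr(K\ge G(j))\ge 1/j$. The masses are rational and computable, and the atom locations are computable. Hence $d(k,p)$ can be produced exactly: given $k$, search $j\le k$ with $G(j)=k$, which terminates since $G(j)>j$. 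This gives computability in the sense of \cref{def:kt-computable-law}.

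Third, we bound the crossing probability. By the first inclusion of \cref{lem:proposal-maxima}, $\Pr(\tau_{G(j)}>q)\le(1-1/j)^q\le e^{-q/j}$. Taking $q=\lceil j\ell_\alpha\rceil$ reaches $G(j)$ at confidence $\alpha$. For an input $x$ of length $n$, we use $j=\max(n,1)$ and the budget $Q(n)=\lceil \ell_\alpha\rceil(n+1)$, which is an integer polynomial. Then $A_M(x)\le H(n)<G(j)\le R_\alpha(Q(n))$, and \cref{lem:reachability} places $L$ in $\mathbf P^*_{\mu_L,\alpha}$ with the sound witness $M$. The law does not depend on $\alpha$, so $L\in\mathbf P^*_{\mu_L}$ and both unions equal $\REC$.

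The main obstacle is computability of $\mu_L$ in the paper's sense. The concern is that a point mass must be approximable uniformly in $k$, which is why the atoms must be computably locatable and why strict increase of $G$ (with $G(j)>j$) is needed. The small-$n$ and $n=0$ edge cases, and the alphabet being fixed, are routine: $H$ is computed over $\Sigma^{\le n}$ for that fixed alphabet.
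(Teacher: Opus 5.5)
Your proposal is correct and takes essentially the same route as the paper. The upper inclusions come from \cref{prop:computable-rec}, and the lower inclusion places masses $1/(j(j+1))$ on a strictly increasing computable halting-time envelope, so that a single large proposal crosses within a linear budget. The only difference is the envelope: the paper uses $\mathcal E_j=1+\max\{\mathcal E_{j-1},t(j)\}$, while you use $G(n)=n+1+\sum_{m\le n}H(m)$. In both arguments the law does not depend on $\alpha$, which yields the intersection-class equality.
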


\begin{proposition}[An arbitrary law attaining all semidecidable languages]
\label{prop:arbitrary-re}
For every fixed finite nonempty input alphabet there is one law $\mu$
on the positive integers such that, for every $0<\alpha<1$,
\[
 \mathbf P^*_{\mu,\alpha}=\mathbf P^*_\mu=\RE.
\]
This law is not computable.
\end{proposition}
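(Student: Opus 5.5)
We build one law whose tail encodes the busy-beaver function; the heavy tail lets polynomially many proposals reach beyond every computable halting time. Fix a standard enumeration of machines and define $\beta(n)$ as the maximum, over the finitely many pairs $(M,x)$ with $M$ among the first $n$ machines, $|x|\le n$, and $M$ accepting $x$, of $A_M(x)$. Take $\beta(0)=1$, and enlarge $\beta$ to be strictly increasing with $\beta(n)\ge 2^{n}$. Put $N_j=\beta(j)$ and define $\mu(N_j)=1/(j(j+1))$ for $j\ge1$, mirroring $\mue$. The tail is then $\Pr(K\ge N_j)=1/j$. Note that $\beta$ is not computable, since it dominates the busy-beaver function, so $\mu$ is not computable.

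For the inclusion $\RE\subseteq\mathbf P^*_{\mu,\alpha}$, take a semidecider $M$ for $L\in\RE$. It satisfies soundness off $L$ automatically. Suppose $M$ is the $m$-th machine. For $x\in L$ with $|x|=n\ge m$ we have $A_M(x)\le\beta(n)=N_n$. By the first inclusion of \cref{lem:proposal-maxima}, a single proposal $N_j$ with $j\ge n$ crosses the threshold from any position. Hence
\[
 \Pr(\tau_{N_n}>q)\le(1-1/n)^q\le e^{-q/n}.
\]
The budget $Q(n)=\lceil\ell_\alpha\rceil(n+1)+c$ therefore suffices. The finitely many inputs with $|x|<m$ are handled by enlarging the constant $c$: their accepting indices are finitely many finite targets, and each is reachable with a finite budget because $\mu(N_j)>0$ for large $j$, by \cref{lem:kt-block-crossing}. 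This gives one polynomial $Q$ at each confidence, hence membership in $\mathbf P^*_\mu$ as well.

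For the reverse inclusion $\mathbf P^*_{\mu,\alpha}\subseteq\RE$, note that any witness machine $M$ accepts exactly $L$ by \cref{lem:reachability}, so $L$ is semidecidable. It follows that $\mathbf P^*_{\mu,\alpha}=\mathbf P^*_\mu=\RE$ for every confidence.

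Noncomputability then follows from \cref{prop:computable-rec}. If $\mu$ were computable, its class would lie in $\REC$, which is strictly smaller than $\RE$. Equivalently, one could observe that $\beta$ is recoverable from the masses of $\mu$.

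The main subtlety is the uniformity of the definition of $\beta$ across input lengths, which must use the enumeration index so that every fixed semidecider is eventually dominated. Every other step is routine.
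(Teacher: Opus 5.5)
Your proposal is correct and follows the paper's proof: a busy-beaver-type envelope over the first $j$ machines on inputs of length at most $j$, atoms of mass $1/(j(j+1))$ giving tail $1/j$, a linear budget, soundness for the inclusion into $\RE$, and noncomputability from \cref{prop:computable-rec}. The only difference is cosmetic. The paper takes $j=\max\{i,n,1\}$ with budget $k_\alpha(i+n)$, which covers short inputs uniformly, whereas you absorb them into a constant via \cref{lem:kt-block-crossing}; your $2^n$ enlargement is unnecessary.
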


The law places mass $1/(j(j+1))$ on a busy-beaver-type halting-time
envelope $(\mathcal E_j)$; every recognizer's accepting indices are then overtaken
by a single atom with probability $1/j$. The law is noncomputable by
\cref{prop:computable-rec}, since $\RE$ contains undecidable languages.

%% file: hierarchy.tex
\subsection{The hierarchy \texorpdfstring{$\mathbf{QP}_k$ and $\mathbf{2EXP}$}{QP and 2EXP}}
\label{sec:hierarchy}

Varying one tail produces infinitely many fixed-law classes. A further computable tail reaches doubly exponential time.

For functions of the target $T$, write $f(T)\asymp g(T)$ as $T\to\infty$ if $c\,g(T)\le f(T)\le C\,g(T)$ for some constants $0<c\le C<\infty$ and all sufficiently large $T$. The constants are independent of $T$ and may depend on fixed parameters unless uniformity is expressly asserted.

\begin{newresult}[New: supporting]
\begin{definition}[Hierarchy laws]
\label{def:hierarchy-laws}
For an integer $k\ge1$ let $J$ satisfy $\Pr(J=j)=2^{-j}$ for
$j\ge1$, and let $\mu_{[k]}$ be the law of $K=2^{J^k}$.
Put $h_k(T)=1$ for $T\le2$ and
$h_k(T)=\lceil(\log_2T)^{1/k}\rceil$ for $T>2$.
Then $\Pr(K\ge T)=\Pr(J\ge h_k(T))=2^{1-h_k(T)}$, and for
every $T\ge1$
\begin{equation}
 2^{-(\log_2T)^{1/k}}\le\Pr(K\ge T)
 \le2^{1-(\log_2T)^{1/k}}.
 \label{eq:hk-tail}
\end{equation}
The point masses are computable rational numbers, so $\mu_{[k]}$ is
computable. The law $\mu_{[1]}$ has
$\Pr(K\ge T)\asymp T^{-1}$ and class $\Ptime$ by
\cref{prop:kt-polynomial-tail}; its positive moments of orders
$0<r<1$ are finite.
\end{definition}
\end{newresult}

Each hierarchy law gives exactly one quasipolynomial-time level, independently of confidence.

\begin{newresult}[New result]
\begin{theorem}[A hierarchy of fixed-law classes]
\label{thm:qp-hierarchy}
For every integer $k\ge1$ and every $0<\alpha<1$,
\[
 \mathbf P^*_{\mu_{[k]},\alpha}=\mathbf P^*_{\mu_{[k]}}
 =\mathbf{QP}_k.
\]
Consequently
$\Ptime=\mathbf{QP}_1\subsetneq\mathbf{QP}_2\subsetneq\cdots
\subsetneq\EXP$: computable iid laws realise infinitely many distinct
fixed classes between $\Ptime$ and $\EXP$, none of which depends on
the confidence.
\end{theorem}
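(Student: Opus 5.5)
The plan is to apply the tail sandwich of \cref{prop:tail-sandwich} with the explicit tail \eqref{eq:hk-tail}. This reduces the theorem to estimating the integer quantile $u_\alpha(q)$ for $\mu_{[k]}$. Fix $0<\alpha<1$. By \cref{lem:proposal-maxima}, $\Pr(J_q\ge u)\ge\alpha$ exactly when $\Pr(K\ge u)\ge 1-(1-\alpha)^{1/q}$. For $q\ge1$ the right-hand side lies between $\ell_\alpha/(2q)$ (for large $q$) and $\ell_\alpha/q$, using $1-e^{-x}\asymp x$ for small $x$.

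For the lower bound on $u_\alpha(q)$, I would use the lower tail in \eqref{eq:hk-tail}. Any integer $u$ with $(\log_2u)^{1/k}\le\log_2(q/\ell_\alpha)$ has $\Pr(K\ge u)\ge\ell_\alpha/q$, and hence is eligible. So $u_\alpha(q)\ge 2^{(\log_2(q/\ell_\alpha))^k}-1$. For large $q$ this is at least $2^{c(\log_2 q)^k}$ with some $c>0$.

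For the upper bound, I would use the upper tail. Eligibility forces $2^{1-(\log_2u)^{1/k}}\ge\ell_\alpha/(2q)$, which gives $(\log_2 u)^{1/k}\le\log_2(4q/\ell_\alpha)$. Hence $u_\alpha(q)\le2^{(\log_2(4q/\ell_\alpha))^k}\le 2^{C(\log_2(q+2))^k}$. Multiplying by $q$ keeps the same form after enlarging $C$. Thus, for all $q\ge1$,
\[
 2^{c(\log_2(q+2))^k}-O(1)\le R_\alpha(q)\le 2^{C(\log_2(q+2))^k}.
\]

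Next I translate this into classes, mimicking \cref{lem:reachability-dictionary}. The key algebraic fact is that a polynomial budget $Q(n)\le a(n+1)^d$ satisfies $(\log_2(Q(n)+2))^k\le d'^k(\log_2(n+2))^k+O(1)$. So $\mathcal T$ of both bounding functions is a class of time bounds $2^{O(\log^k n)}$.

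For the inclusion $\mathbf P^*_{\mu_{[k]},\alpha}\subseteq\mathbf{QP}_k$, I take a witness $(M,Q)$. Its accepting indices are at most $2^{C'(\log_2(n+2))^k}$. I simulate $M$ through that many steps and reject on timeout. Computing the bound and the simulation overhead are quasipolynomial, and the overhead is absorbed by enlarging $c$ in the definition of $\mathbf{QP}_k$.

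For the reverse inclusion, I take a decider with time $2^{c_0(\log_2(n+2))^k}$. I choose $Q(n)=(n+2)^D$ with $D$ large enough that $c(D\log_2(n+2))^k\ge c_0(\log_2(n+2))^k+1$. I then enlarge the constant term to cover small $n$, which is possible since the radius tends to infinity. \Cref{lem:reachability} then gives membership. Because every confidence gives the same class, the intersection $\mathbf P^*_{\mu_{[k]}}$ is also $\mathbf{QP}_k$.

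Finally, the strict chain follows from the deterministic time hierarchy theorem. $\DTIME(2^{(\log n)^{k}}\cdot\mathrm{poly})$ is strictly contained in $\DTIME(2^{(\log n)^{k+1}})$, because $2^{c\log^k n}\log(2^{c\log^kn})=o(2^{\log^{k+1}n})$ for every $c$. The union over $c$ defining $\mathbf{QP}_k$ is therefore inside a single smaller class, and the same argument places every $\mathbf{QP}_k$ strictly inside $\EXP$.

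The main obstacle is the bookkeeping in the quantile estimate: the ceiling in $h_k$, the factor $q$ lost in the sandwich, and the small-$q$ cases. I expect these to be harmless because only the exponent constant changes. The hierarchy step needs care only in noting that $\mathbf{QP}_k$ is a union over $c$, which is still dominated by a single time-constructible bound $2^{\log^{k+1}n}$.
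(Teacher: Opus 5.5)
Your proposal is correct and follows essentially the paper's route: eligibility thresholds from \cref{lem:proposal-maxima} combined with the tail \eqref{eq:hk-tail} give a radius of order $2^{\Theta((\log q)^k)}$, simulation gives the upper inclusion, a budget $(n+2)^{D}$ gives the lower one, and the time hierarchy theorem gives strictness. The one step to tighten is the last: separations for each fixed $c$ do not by themselves separate the union $\mathbf{QP}_k$, so use a single time-constructible $f$ with $\mathbf{QP}_k\subseteq\DTIME(f)$ and $f\log f=o(g)$ for some $g$ with $\DTIME(g)\subseteq\mathbf{QP}_{k+1}$. For example, take the paper's $f=2^{B(n)^k\lceil\sqrt{B(n)}\rceil}$, or your $f=2^{\log^{k+1}n}$ paired with $g=2^{2\log^{k+1}n}$ rather than with $2^{\log^{k+1}n}$ itself, and absorb the multitape-to-single-tape overhead into the constant of $\mathbf{QP}_{k+1}$.
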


\begin{proof}
\emph{Quantile bounds.} By \cref{lem:proposal-maxima}, an integer
$u\ge1$ is eligible for $u_\alpha(q)$ exactly when
$\Pr(K\ge u)\ge p_q:=1-(1-\alpha)^{1/q}$.
Bernoulli's inequality and $1-e^{-x}\le x$ give
$\alpha/q\le p_q\le\ell_\alpha/q$.
If $q\ge\ell_\alpha$, the integer
$u=\lfloor2^{(\log_2(q/\ell_\alpha))^k}\rfloor\ge1$ satisfies
$(\log_2u)^{1/k}\le\log_2(q/\ell_\alpha)$, hence
$\Pr(K\ge u)\ge\ell_\alpha/q\ge p_q$ by \eqref{eq:hk-tail}.
Conversely, an eligible $u\ge1$ satisfies
$2^{1-(\log_2u)^{1/k}}\ge\alpha/q$, so
$\log_2u\le(\log_2(2q/\alpha))^k$. Hence
\begin{equation}
 \begin{split}
 \bigl\lfloor2^{(\log_2(q/\ell_\alpha))^k}\bigr\rfloor
 &\le R_\alpha(q)\le q\,2^{(\log_2(2q/\alpha))^k}\qquad
 (q\ge\ell_\alpha,\ q\ge1).
 \end{split}
 \label{eq:qp-radius}
\end{equation}
The upper bound holds for every $q\ge1$.

\emph{Upper inclusion.} Let $(M,Q)$ be a witness and choose integers
$a,d\ge1$ with $Q(n)\le a(n+2)^d$. Put
$c_1=d+\max\{0,\log_2(2a/\alpha)\}$.
Since $\log_2(n+2)\ge1$, the upper bound gives, whenever $Q(n)\ge1$,
\[
 \begin{split}
 A_M(x)&\le Q(n)\,2^{(\log_2(2Q(n)/\alpha))^k}\\
 &\le2^{c_1\log_2(n+2)}\,2^{(c_1\log_2(n+2))^k}\\
 &\le2^{2c_1^k(\log_2(n+2))^k}
 \end{split}
\]
for $x\in L$ of length $n$; when $Q(n)=0$, $A_M(x)=0$.
A decider computes $\lceil\log_2(n+2)\rceil$, which is at most
$2\log_2(n+2)$, and simulates $M$ through
$2^{c_2\lceil\log_2(n+2)\rceil^k}$ steps for a fixed integer
$c_2\ge2c_1^k$. It accepts exactly when that simulation accepts,
including at index zero. Soundness makes it a decider for $L$, and its
running time on the model of \cref{def:kt-config-path} is polynomial
in that bound. Hence $L\in\mathbf{QP}_k$.

\emph{Lower inclusion.} Let a decider run within
$2^{c(\log_2(n+2))^k}$ transitions, with $c\ge1$ an integer.
Put $Q(n)=\lceil\ell_\alpha\rceil(n+2)^c$, an allowed polynomial.
Then $Q(n)\ge\ell_\alpha$ and
$\log_2(Q(n)/\ell_\alpha)\ge c\log_2(n+2)$, so \eqref{eq:qp-radius}
gives
\[
 \begin{split}
 R_\alpha(Q(n))
 &\ge\lfloor2^{c^k(\log_2(n+2))^k}\rfloor\\
 &\ge\lfloor2^{c(\log_2(n+2))^k}\rfloor,
 \end{split}
\]
which bounds every accepting index. \Cref{lem:reachability} gives
$L\in\mathbf P^*_{\mu_{[k]},\alpha}$, with a witness depending on
$\alpha$ only through $Q$.

\emph{Strictness.} Use integer time bounds
\[
 B(n)=\lceil\log_2(n+2)\rceil,\qquad
 f_k(n)=2^{B(n)^k\lceil\sqrt{B(n)}\rceil}.
\]
These functions and their squares are time-constructible on multitape
machines: their integer exponents are computable in polynomial time in
$n+2$, and a binary counter supplies a clock in $O(f_k(n))$ time.
Every language in $\mathbf{QP}_k$ lies in $\DTIME(f_k)$, also on
multitape machines. Since $f_k\log f_k=o(f_k^2)$, the multitape time
hierarchy theorem~\cite[Section~3.1]{AroraBarak2009} gives a language
in $\DTIME(f_k^2)\setminus\DTIME(f_k)$.
The quadratic single-tape simulation places it in
$\DTIME(f_k^4)\subseteq\mathbf{QP}_{k+1}$, since
$\log_2 f_k(n)=\Theta((\log_2(n+2))^{k+1/2})$.
The same argument with $f(n)=2^n$ separates every $\mathbf{QP}_k$
from $\EXP$.
\end{proof}
\end{newresult}

\begin{newresult}[New result]
\begin{corollary}[A fixed law with class $\mathbf{2EXP}$]
\label{cor:2exp-law}
Let $J$ have the law \eqref{eq:heavy},
$\Pr(J=j)=1/(j(j+1))$, and let $\mu$ be the law of $K=2^{2^J}$.
This law is computable, and
$\mathbf P^*_{\mu,\alpha}=\mathbf P^*_\mu=\mathbf{2EXP}$
for every $0<\alpha<1$.
\end{corollary}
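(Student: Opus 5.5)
The plan follows the proof of \cref{thm:qp-hierarchy}: first pin down the radius $R_\alpha(q)$ with the tail sandwich of \cref{lem:proposal-maxima}, then convert the radius bounds into class inclusions. With $\Pr(J\ge j)=1/j$, as in \eqref{eq:exp-tail}, the tail of $K=2^{2^J}$ is exactly $\Pr(K\ge u)=1/h(\log_2 u)$ for $u\ge2$. Hence $\Pr(K\ge u)=1/\lceil\log_2\log_2u\rceil$ once $u>4$, and $\Pr(K\ge u)=1$ for $1\le u\le4$. Computability is immediate: the point masses $1/(j(j+1))$ are rational and sit at computable atoms $2^{2^j}$, and a mass query at $k$ is answered by testing whether $k$ has this form.

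\emph{Quantile bounds.} Put $p_q=1-(1-\alpha)^{1/q}$, so that $\alpha/q\le p_q\le\ell_\alpha/q$. An integer $u$ is eligible for $u_\alpha(q)$ exactly when $1/\lceil\log_2\log_2u\rceil\ge p_q$. For the lower bound take $u=2^{2^{m}}$ with $m=\lfloor q/\ell_\alpha\rfloor\ge1$. Then $1/m\ge\ell_\alpha/q\ge p_q$, which gives $R_\alpha(q)\ge u_\alpha(q)\ge2^{2^{\lfloor q/\ell_\alpha\rfloor}}$ for $q\ge\ell_\alpha$. For the upper bound, any eligible $u$ satisfies $\lceil\log_2\log_2u\rceil\le q/\alpha$, hence $u\le2^{2^{q/\alpha}}$. \Cref{lem:proposal-maxima} then gives $R_\alpha(q)\le q\,2^{2^{q/\alpha}}\le2^{2^{C(q+1)}}$ for a constant $C$ depending on $\alpha$.

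\emph{Inclusions.} This is an analogue of \cref{lem:reachability-dictionary}(iv) one exponential higher, carried out by hand.

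For the upper inclusion, let $(M,Q)$ be a witness. By \cref{lem:reachability}, every accepting index satisfies $A_M(x)\le2^{2^{C(Q(n)+1)}}$. A decider simulates $M$ through that many steps, counting in binary, and rejects on timeout. Soundness makes it a decider for $L$, and its running time is $2^{2^{p(n)}}$ for a polynomial $p$ that absorbs the polynomial simulation overhead. Hence $L\in\mathbf{2EXP}$.

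For the lower inclusion, take a decider with bound $2^{2^{p(n)}}$. Set $Q(n)=\lceil\ell_\alpha\rceil(p(n)+1)$. Then $\lfloor Q(n)/\ell_\alpha\rfloor\ge p(n)$, so $R_\alpha(Q(n))\ge2^{2^{p(n)}}$, which bounds every accepting index. \Cref{lem:reachability} gives $L\in\mathbf P^*_{\mu,\alpha}$.

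Both inclusions hold at every confidence, so $\mathbf P^*_{\mu,\alpha}=\mathbf{2EXP}$ for each $\alpha$, and intersecting over $\alpha$ gives $\mathbf P^*_\mu=\mathbf{2EXP}$.

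The main obstacle is the exact bookkeeping of the tail at small arguments, where the ceilings matter: the cases $u\le4$, the value of $h$ at powers of two, and the boundary cases $q<\ell_\alpha$. The fix is to use the exact formula $\Pr(J\ge j)=1/j$ and to absorb finitely many small budgets into constants, exactly as in the proof of \cref{thm:kt-exp}. The one real check is that the factor $q$ from \cref{lem:proposal-maxima} is harmless, since $q\,2^{2^{q/\alpha}}\le2^{2^{q/\alpha+1}}$ for $q\ge1$.
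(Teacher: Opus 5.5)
Your proposal is correct and takes essentially the paper's route: the tail identity $\Pr(K\ge u)=1/h(\log_2u)$, the radius bounds $R_\alpha(q)\ge2^{2^{\lfloor q/\ell_\alpha\rfloor}}$ for $q\ge\ell_\alpha$ and $R_\alpha(q)\le q\,2^{2^{q/\alpha}}$ via \cref{lem:proposal-maxima}, then timeout simulation for the upper inclusion and a budget proportional to $\lceil\ell_\alpha\rceil p(n)$ for the lower. One detail to state explicitly, as the paper does: choose your constant $C$ to be an integer such as $\lceil1/\alpha\rceil$, so the decider's timeout bound is computable even when $\alpha$ is not.
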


\begin{proof}
Here $\Pr(K\ge T)=1/h(\log_2T)$ for $T>1$, with $h$ from
\cref{def:kt-exp-law}. The increasing atoms and their rational masses
are computable. If $q\ge\ell_\alpha$, $q\ge1$, and
$m=\lfloor q/\ell_\alpha\rfloor$, the target $2^{2^m}$ has tail
$1/m\ge\ell_\alpha/q$, so
$R_\alpha(q)\ge2^{2^{\lfloor q/\ell_\alpha\rfloor}}$.
Conversely, for $q\ge1$, an eligible $u\ge2$ has
$\log_2\log_2u\le h(\log_2u)\le q/\alpha$, so
$R_\alpha(q)\le q\,2^{2^{q/\alpha}}$ by \cref{lem:proposal-maxima}.
For a decider within $2^{2^{p(n)}}$ transitions, take an allowed
integer polynomial $p\ge1$ and budget
$Q(n)=\lceil\ell_\alpha\rceil p(n)$; its radius reaches that bound.
Conversely, simulating a witness through
$Q(n)2^{2^{Q(n)/\alpha}}$ steps, with $1/\alpha$ replaced by a
larger integer and index zero checked separately, gives a doubly
exponential-time decider. Polynomial simulation overhead is absorbed
in the polynomial exponent. Equality at each confidence also gives
the intersection equality.
\end{proof}
\end{newresult}

%% file: backward.tex
\section{Backward moves}
\label{sec:backward}

Deleting backward moves never worsens reachability for iid clocks (\cref{prop:deletion}) or for kernels with monotone forward ability (\cref{thm:monotone-rows}). A length-one return can reopen a heavy-sampling phase and change the class from polynomial to exponential time. In the gated family, backward moves are necessary to exceed $\Ptime$ (\cref{sec:gated}).

We use the kernel classes defined after \cref{def:kt-ac1}, with the
clock started at zero. Superscripts on probabilities and expectations
specify a kernel parameter when needed; subscripts specify the starting
index. For an iid law, we also write
$\CC_{\mu,\alpha}=\CC_{\eta_\mu,\alpha}=\mathbf P^*_{\mu,\alpha}$.

\subsection{Deletion and monotone forward ability}
Deleting backward moves preserves the forward-displacement law at each index. A monotone ordering of these laws couples the deleted clock above the original.

\begin{newresult}[New: supporting]
\begin{definition}[Deletion of backward moves]
\label{def:deletion}
For a stochastic kernel $\eta$ on $\N$, define $\mathsf D\eta$ by
$(\mathsf D\eta)(i,j)=\eta(i,j)$ for $j>i$,
$(\mathsf D\eta)(i,i)=\sum_{j\le i}\eta(i,j)$, and
$(\mathsf D\eta)(i,j)=0$ for $j<i$.
We call $\eta$ \emph{forward-only} if $\mathsf D\eta=\eta$.
\end{definition}
\end{newresult}

\begin{newresult}[New: supporting]
\begin{definition}[Forward ability]
\label{def:forward-ability}
For a stochastic kernel $\eta$ on $\N$ and $i\in\N$, the
\emph{forward-displacement law} $\varphi^\eta_i$ on $\N$ is
\[
 \varphi^\eta_i(k)=\eta(i,i+k)\quad(k\ge1),\qquad
 \varphi^\eta_i(0)=\sum_{j\le i}\eta(i,j).
\]
Thus $(\mathsf D\eta)(i,i+k)=\varphi^\eta_i(k)$ for every $k\ge0$,
and $\eta$ and $\mathsf D\eta$ have the same forward-displacement laws.
We say that $\eta$ has \emph{monotone forward ability} if
$\varphi^\eta_i\preceq_{\mathrm{st}}\varphi^\eta_{i'}$ whenever $i\le i'$,
that is, $\varphi^\eta_i([k,\infty))\le\varphi^\eta_{i'}([k,\infty))$
for all $k\in\N$.
\end{definition}
\end{newresult}

Monotone forward ability ensures that returns cannot improve reachability.

\begin{newresult}[New result]
\begin{theorem}[Backward moves need non-monotone forward ability]
\label{thm:monotone-rows}
If $\eta$ has monotone forward ability, then for all $q,T\in\N$,
with both clocks started at zero,
\[
 \Pr_\eta(\tau_T\le q)\le\Pr_{\mathsf D\eta}(\tau_T\le q).
\]
Consequently $\CC_{\eta,\alpha}\subseteq\CC_{\mathsf D\eta,\alpha}$
and the crossing budget of $\eta$ is at least that of $\mathsf D\eta$,
for every $0<\alpha<1$. If every member of a family $\mathcal F$ has
monotone forward ability, the same inclusion holds for common budgets
over $\mathcal F$ and over $\{\mathsf D\eta:\eta\in\mathcal F\}$.
\end{theorem}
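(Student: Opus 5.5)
We construct a Markovian coupling of the two clocks, $(X_t)$ driven by $\eta$ and $(Y_t)$ driven by $\mathsf D\eta$, both started at zero, and maintain the invariant $\max_{s\le t}X_s\le Y_t$ for every $t$. Since $\mathsf D\eta$ is forward-only, $Y_t$ is nondecreasing, so $Y_t=\max_{s\le t}Y_s$. The invariant therefore gives $\{\tau^X_T\le q\}\subseteq\{\tau^Y_T\le q\}$, and taking probabilities yields the displayed inequality. The class inclusion $\CC_{\eta,\alpha}\subseteq\CC_{\mathsf D\eta,\alpha}$ and the comparison of crossing budgets then follow from \cref{lem:reachability}: $R^{\eta}_\alpha(q)\le R^{\mathsf D\eta}_\alpha(q)$ for every $q$, and the same witness $(M,Q)$ serves. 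For a family $\mathcal F$, the same inequality holds member by member, so any budget that works simultaneously for every $\eta\in\mathcal F$ also works for every $\mathsf D\eta$. That gives the common-budget statement.

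The coupling step uses a single uniform $U_{t+1}\sim\mathrm{Unif}(0,1)$ per tick and quantile transforms. Write $Z_t=\max_{s\le t}X_s$, and suppose the invariant $Z_t\le Y_t$ holds at time $t$, with current positions $X_t=i$ and $Y_t=i'$. Then $i\le Z_t\le i'$. Let $F_i^{-1}$ be the quantile function of $\varphi^\eta_i$. Set the forward displacement of $Y$ to $k'=F_{i'}^{-1}(U_{t+1})$, so that $Y_{t+1}=i'+k'$; this has the correct law $\varphi^{\mathsf D\eta}_{i'}=\varphi^\eta_{i'}$ by \cref{def:forward-ability}.

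For $X$, set $k=F_i^{-1}(U_{t+1})$, which has law $\varphi^\eta_i$. If $k\ge1$, move to $i+k$. If $k=0$, draw the target from $\{j\le i\}$ according to $\eta(i,\cdot)$ restricted there and renormalised, using independent extra randomness. This reproduces $\eta(i,\cdot)$ exactly. Monotone forward ability, $\varphi^\eta_i\preceq_{\mathrm{st}}\varphi^\eta_{i'}$, gives $k\le k'$ pointwise under the common uniform. Hence $X_{t+1}\le i+k\le i'+k'=Y_{t+1}$ when $k\ge1$, and $X_{t+1}\le i\le Y_{t+1}$ when $k=0$. Combined with $Z_t\le Y_t\le Y_{t+1}$, this gives $Z_{t+1}\le Y_{t+1}$.

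The main obstacle is that the monotone forward ability hypothesis compares the laws at $i$ and $i'$ only when $i\le i'$. One might fear that after a backward step $X$ sits at a position where the laws are not comparable with $Y$'s. The invariant avoids this because $Y$ is compared with the running maximum $Z_t$, not with $X_t$. The current position $X_t$ always lies below $Z_t\le Y_t$, so the needed ordering $i\le i'$ always holds. The point to check is that $X_{t+1}\le Y_{t+1}$ is enough to keep the maximum invariant; it is, since $Y_{t+1}\ge Y_t\ge Z_t$. Because every step holds for each realisation under the coupling, no measurability issues arise beyond checking that the marginal transition laws are correct, which the quantile construction ensures.
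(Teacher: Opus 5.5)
Your proposal is correct and is essentially the paper's proof: the same common-uniform quantile coupling of forward displacements, with independent extra randomness choosing the non-forward target when the sampled displacement is zero, followed by the same deductions from \cref{lem:reachability} and from taking infima over the family. The only difference is cosmetic: you carry the invariant $\max_{s\le t}X_s\le Y_t$, whereas the paper inducts on $X_t\le Y_t$ and then uses that $Y$ is nondecreasing. Since $Y$ is nondecreasing the two invariants are equivalent, so the ``obstacle'' you describe does not actually arise.
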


\begin{proof}
Write $\varphi_i=\varphi^\eta_i$ and let
$F_i^{-1}(u)=\min\{k:\varphi_i([0,k])\ge u\}$ for $0<u<1$.
Monotone forward ability gives $F_i^{-1}(u)\le F_{i'}^{-1}(u)$
for $i\le i'$. Let $U_1,U_2,\dots$ and $V_1,V_2,\dots$ be
independent uniform variables on $(0,1)$. Put $X_0=Y_0=0$ and
$Y_{t+1}=Y_t+F_{Y_t}^{-1}(U_{t+1})$.
Let $k=F_{X_t}^{-1}(U_{t+1})$. If $k\ge1$ put $X_{t+1}=X_t+k$.
If $k=0$, choose $X_{t+1}\in\{0,\dots,X_t\}$ using $V_{t+1}$,
with probabilities $\eta(X_t,j)/\varphi_{X_t}(0)$; this case has
probability zero when $\varphi_{X_t}(0)=0$. Then $X$ is a Markov
chain with kernel $\eta$ and $Y$ one with kernel $\mathsf D\eta$.

We show $X_t\le Y_t$ by induction. In both cases
$X_{t+1}\le X_t+F_{X_t}^{-1}(U_{t+1})$, so
\[
 X_{t+1}\le X_t+F_{X_t}^{-1}(U_{t+1})
 \le Y_t+F_{Y_t}^{-1}(U_{t+1})=Y_{t+1}.
\]
The chain $Y$ is nondecreasing, so $\max_{s\le t}X_s\le Y_t$ and
$\{\tau_T(X)\le q\}\subseteq\{\tau_T(Y)\le q\}$.
The class and budget statements follow from \cref{lem:reachability}
with the same witnesses, and the common-budget statement follows by
taking infima over $\mathcal F$.
\end{proof}
\end{newresult}

\begin{reframed}[Reframed]
\begin{corollary}[Independent-proposal deletion principle]
\label{prop:deletion}
For every independent-proposal law $\mu$ and every $0<\alpha<1$,
\[
 \CC_{\mu,\alpha}\subseteq\CC_{\mu^+,\alpha},
\]
where $\mu^+$ is the law of $K^+=\max(K,0)$.
Consequently, if deleting the negative proposals already yields a class
contained in $\Ptime$, adding those iid negative proposals cannot produce
a larger class.
\end{corollary}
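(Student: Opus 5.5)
The plan is to apply \cref{thm:monotone-rows} to the held kernel $\eta=\eta_\mu$ of \eqref{eq:kernel}. This needs two checks. First, $\eta_\mu$ has monotone forward ability. Second, its deletion $\mathsf D\eta_\mu$ is the held kernel $\eta_{\mu^+}$ of the law $\mu^+$ of $K^+$. Once both hold, the inclusion $\CC_{\mu,\alpha}=\CC_{\eta_\mu,\alpha}\subseteq\CC_{\mathsf D\eta_\mu,\alpha}=\CC_{\mu^+,\alpha}$ follows directly. The notation convention $\CC_{\mu,\alpha}=\CC_{\eta_\mu,\alpha}$ is stated at the start of \cref{sec:backward}.

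\emph{Forward-displacement laws.} For $k\ge1$, \eqref{eq:kernel} gives $\varphi^{\eta_\mu}_i(k)=\eta_\mu(i,i+k)=\mu(k)$, which does not depend on $i$. The remaining mass is $\varphi^{\eta_\mu}_i(0)=\sum_{j\le i}\eta_\mu(i,j)=1-\sum_{k\ge1}\mu(k)=\Pr(K\le0)$. This also does not depend on $i$: the held mass $\Pr(K<-i)$ and the executed negative jumps to $j\in[0,i)$ together make up all of $\Pr(K<0)$. So every row has the same forward-displacement law, namely the law of $K^+$. Stochastic monotonicity in $i$ then holds with equality, and monotone forward ability is immediate.

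\emph{Identifying the deletion.} By \cref{def:deletion} and \cref{def:forward-ability}, $(\mathsf D\eta_\mu)(i,i+k)=\varphi_i(k)=\Pr(K^+=k)$ for every $k\ge0$, and there are no moves below $i$. On the other side, $\eta_{\mu^+}$ has no negative proposals, so nothing is held. Its diagonal is $\mu^+(0)+\Pr(K^+<-i)=\Pr(K\le0)$, and its off-diagonal entries are $\mu^+(j-i)$. The two kernels therefore coincide. One could also invoke \cref{prop:kt-hold-representation}, but the direct comparison is simpler.

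\emph{The ``consequently'' clause.} If $\CC_{\mu^+,\alpha}\subseteq\Ptime$, then the inclusion just proved gives $\CC_{\mu,\alpha}\subseteq\Ptime$ as well. More generally, $\CC_{\mu,\alpha}$ can never exceed $\CC_{\mu^+,\alpha}$. The only step needing care is the bookkeeping at $\varphi_i(0)$: one must confirm that both held and executed backward proposals fold into the zero displacement, which is exactly what \cref{def:forward-ability} prescribes. Beyond that I see no real obstacle.
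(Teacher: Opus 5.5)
Your proposal is correct and follows the paper's proof: the paper likewise observes that every row of $\eta_\mu$ has forward-displacement law equal to the law of $K^+$ and that $\mathsf D\eta_\mu=\eta_{\mu^+}$, and then invokes \cref{thm:monotone-rows}. The paper also gives a direct pathwise alternative, using common proposals and the bound $X_t\le\sum_{s\le t}K_s^+$ from \cref{lem:kt-domination}; your check of the bookkeeping at $\varphi_i(0)$ is exactly what the main route needs.
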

\begin{proof}
For held kernels, $\varphi^{\eta_\mu}_i$ is the law of $K^+$ at every position and $\mathsf D\eta_\mu=\eta_{\mu^+}$, so \cref{thm:monotone-rows} applies.
Alternatively, use the same proposals for both clocks. The negative-suppressed position
is $X_t^+=\sum_{s\le t}K_s^+$. Each actual increment of the held walk is at
most $K_s^+$. By the pathwise bound of \cref{lem:kt-domination},
$X_t\le X_t^+$ for every $t$. Hence $\tau_T(X^+)\le\tau_T(X)$
pathwise. The same recognizing machine and budget therefore work for the
negative-suppressed clock.
\end{proof}
\end{reframed}

Both constructions below violate forward monotonicity: the parity rows alternate a heavy law with bounded steps, while a gate's doubling exceeds the next row's bound once $g+1>w$.
The next proposition shows that bounded positive displacements cannot raise the class beyond $\Ptime$.

\begin{proposition}[Bounded forward steps cannot reach beyond $\Ptime$]
\label{prop:bounded-forward}
For any kernel, position-dependent or not, whose positive displacements
are all at most a fixed $w$, $\CC_{\eta,\alpha}\subseteq\Ptime$.
\end{proposition}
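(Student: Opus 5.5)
The argument is a deterministic pathwise bound on the radius followed by \cref{lem:reachability-dictionary}(i). Started at zero, every executed transition of the kernel either moves backward, holds, or moves forward by at most $w$. By induction on $t$, the position therefore satisfies $X_t\le wt$ almost surely: $X_0=0$, and $X_{t+1}\le X_t+w$ because $\eta(i,j)=0$ whenever $j>i+w$. This holds for position-dependent kernels just as for held walks, since it uses only the support of each row. It follows that $\max_{0\le s\le q}X_s\le wq$ surely, so $\Pr(\tau_T\le q)=0$ for every $T>wq$.

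Any target $T$ eligible for $R_\alpha(q)$ has positive crossing probability, since $\alpha>0$. Hence $R_\alpha(q)\le wq\le w(q+1)$ for every $q\in\N$, including $q=0$, where $R_\alpha(0)=0$. This is hypothesis (i) of \cref{lem:reachability-dictionary} with $C=w$ and $d=1$, which gives $\CC_{\eta,\alpha}\subseteq\Ptime$ for every $0<\alpha<1$. Concretely, a witness $(M,Q)$ has accepting index at most $wQ(|x|)$ on yes-instances. A decider that simulates $M$ for $wQ(n)$ steps and rejects on timeout runs in polynomial time, and it is correct by soundness.

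There is no real obstacle here. The only point needing care is that the bound must hold for the running maximum, not just the final position, because an acceptance visit may be followed by backward moves. The surely-valid bound $X_s\le ws$ for each $s\le q$ covers this. It also does not matter whether the kernel has backward moves or holds, so the statement applies to $\eta$ and to $\mathsf D\eta$ alike. The intersection class $\CC_\eta$ is contained in $\Ptime$ a fortiori.
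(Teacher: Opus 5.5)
Your proof is correct and matches the paper's argument. The sure bound $\max_{t\le q}X_t\le wq$ confines every accepting index of a witness $(M,Q)$ to at most $wQ(|x|)$, and simulating $M$ that far, rejecting on timeout, gives a polynomial-time decider by soundness; invoking \cref{lem:reachability-dictionary}(i) just repackages this same simulation step.
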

\begin{proof}
For every path, $\max_{t\le q}X_t\le wq$. A witness can therefore accept
a yes-instance only at a logical index at most $wQ(|x|)$. Simulate its
deterministic path for this many transitions and reject on timeout.
Soundness makes this a polynomial-time decider.
\end{proof}

\subsection{Phase reuse in the parity clock}
A return by one restores the phase that samples the heavy tail. Reusing that phase changes the class from polynomial to exponential time.
\label{sec:activation}

Let $Y=2^J$, where
\begin{equation}
 \Pr(J=m)=\frac1{m(m+1)},\qquad m=1,2,\ldots.
 \label{eq:heavy}
\end{equation}
Recall $h(z)=\max\{1,\lceil\log_2 z\rceil\}$ from
\cref{def:kt-exp-law}. Telescoping gives, for every real $z>0$,
\begin{equation}
 \Pr(Y\ge z)=\frac1{h(z)}.
 \label{eq:tail}
\end{equation}
This is the positive-jump law $\mue$ of \cref{def:kt-exp-law}.

Fix $\beta\in[0,1/2]$. Define a time-homogeneous Markov kernel $\eta_\beta$
on $\N$ by the complete transition table in \cref{tab:kernel}.

\begin{table}[ht]
\centering
\caption{Transition probabilities of the activation kernel $\eta_\beta$}
\label{tab:kernel}
\begin{tabularx}{\textwidth}{@{}p{0.26\textwidth}Xp{0.28\textwidth}@{}}
\toprule
Current index $i$ & Next index & Probability\\
\midrule
$i=0$ & $1$ & $1$\\[3pt]
$i\ge1$ odd & $i+1+2^m$, each $m\ge1$ & $1/[m(m+1)]$\\[3pt]
$i\ge2$ even & $i+2$ & $1/2$\\
 & $i-1$ & $\beta$\\
 & $i$ & $1/2-\beta$\\
\bottomrule
\end{tabularx}
\end{table}

In this notation $\eta_0=\mathsf D\eta_\beta$.

The kernel is fixed independently of the input and machine. Its entries
are rational for rational $\beta$ and computable for computable $\beta$.
At positive indices, its rows depend only on parity. The deletion
$\eta_0$ replaces each return by a hold.
\Cref{prop:kt-hold-representation} identifies this parity-dependent
kernel as outside the iid held family: its odd and even rows have
different probabilities for the same positive displacements.
\Cref{prop:deletion} explains why merely adding iid negative proposals
cannot produce this increase.

\begin{figure}[ht]
\centering
\includegraphics[width=\linewidth]{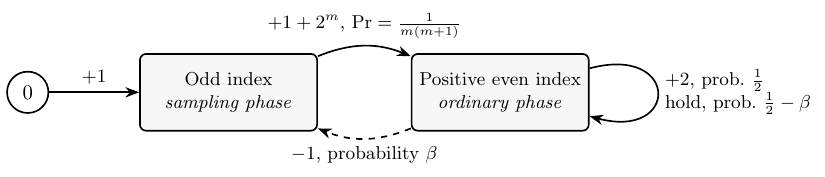}
\caption{The two phases of the clock $\eta_\beta$. From an odd index the
clock samples $Y=2^J$ with law $\mue$ and moves by $1+Y$; from a positive even index it
moves forward by two or, with probability $\beta$, returns by one. A
backward step (dashed) restores the phase in which a fresh heavy jump is
available; the index itself keeps growing after each completed cycle}
\label{fig:phases}
\end{figure}

At an odd index a fresh heavy jump lands at an even index; a return by one makes that sample available again. From any state below $T$, two ticks give probability at least $\beta/h(T)$ of crossing, while the return and the following jump's $+1$ cancel.

A length-one return changes the parity clock's class from $\Ptime$ to $\EXP$.

\begin{theorem}[Backward-enabled class transition]
\label{thm:activation}
For every fixed $0<\alpha<1$ and every fixed $\beta\in(0,1/2]$,
\[
 \CC_{\eta_0,\alpha}=\Ptime,
 \qquad
 \CC_{\eta_\beta,\alpha}=\EXP.
\]
The two clocks differ only in that $\eta_\beta$ may step back by one
from a positive even index, giving a strict increase from $\Ptime$ to
$\EXP$.
At $\beta=0$ only the first odd sampling state is visited, after which
the walk stays even. At $\beta>0$ length-one returns repeatedly reopen
the odd sampling phase. The increase is phase reuse through accessibility.
\end{theorem}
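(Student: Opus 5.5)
Both equalities reduce to radius estimates via \cref{lem:reachability-dictionary}: for $\eta_0$ I will show linear upper and lower bounds on $R_\alpha(q)$, and for $\eta_\beta$ I will show $2^{cq}\le R_\alpha(q)\le 2^{C(q+1)}$ eventually.

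\emph{The deleted clock $\eta_0$.} From $0$ the clock moves deterministically to $1$. At tick $2$ it jumps to $2+Y_1$ with $Y_1\sim\mue$; this index is even, since $Y_1=2^{J}$ is even. After that, every row visited is an even row of $\eta_0$, so each step is $+2$ with probability $1/2$ or a hold with probability $1/2$. Hence
\[
 X_t=2+Y_1+2B_{t-2},\qquad B_{t-2}\sim\mathrm{Bin}(t-2,1/2),
\]
which gives $\max_{s\le q}X_s\le 2+Y_1+2q$. For the upper bound, $\Pr(Y_1\ge z)=1/h(z)$, so at confidence $\alpha$ the threshold $T$ can be crossed only if $\Pr(Y_1\ge T-2-2q)\ge\alpha$. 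This forces $h(T-2-2q)\le1/\alpha$, so $T\le 2+2q+2^{1/\alpha}$. The radius is therefore linear, and part~(i) of the dictionary gives $\CC_{\eta_0,\alpha}\subseteq\Ptime$. For the lower bound, $2B_{q-2}\ge q/2$ with probability tending to one, so $R_\alpha(q)\ge q/2-O(1)$ eventually, and part~(ii) gives $\Ptime\subseteq\CC_{\eta_0,\alpha}$. This part is routine.

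\emph{Lower radius bound for $\eta_\beta$.} I will use the two-tick observation stated just before the theorem. From any state $i<T$, within two ticks the chain reaches an odd index and then takes a heavy jump of size at least $T$, with probability at least $\beta/h(T)$. I will check this by cases:
\begin{itemize}
\item If $i$ is odd, one tick suffices with probability $1/h(T)\ge\beta/h(T)$.
\item If $i\ge2$ is even, step back to $i-1$ (probability $\beta$), then jump.
\item If $i=0$, go to $1$, then jump.
\end{itemize}
Conditioning at even ticks by the Markov property, as in \cref{lem:kt-zero-crossing}, gives
\[
 \Pr(\tau_T>2j)\le\bigl(1-\beta/h(T)\bigr)^j\le e^{-j\beta/h(T)}.
\]
So a budget $q=2\lceil\beta^{-1}\ell_\alpha h(T)\rceil=O(\log T)$ suffices, and hence $R_\alpha(q)\ge 2^{cq}$ eventually.

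\emph{Upper radius bound for $\eta_\beta$.} Every increment is at most $1+Y$, where the heavy samples are fresh and independent at each odd visit. This gives the pathwise bound
\[
 \max_{s\le q}X_s\le 2q+\sum_{s\le q}Y_s\le 2q+qJ_q,\qquad J_q=\max_{s\le q}Y_s,
\]
where $Y_s$ denotes the heavy sample drawn if tick $s$ starts at an odd index. To make this rigorous I will couple the chain with an iid sequence $Y_1,Y_2,\dots\sim\mue$ used at step $s$ whenever that step starts at an odd index. Then \cref{lem:proposal-maxima} together with \eqref{eq:exp-tail} gives $\Pr(\tau_T\le q)\le q/h((T-2q)/q)$. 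Extracting as in \cref{lem:kt-exp-crossing} yields $T\le 2q+q2^{q/\alpha}\le 2^{C(q+1)}$. Part~(iv) of the dictionary then gives $\CC_{\eta_\beta,\alpha}=\EXP$.

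The main obstacle is the lower bound for $\eta_\beta$. It needs the uniform two-tick estimate from every state below $T$ together with the block-conditioning argument; the remaining steps are bookkeeping.
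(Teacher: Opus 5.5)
Your proposal is correct and follows the paper's proof closely. The ingredients match: for $\beta=0$, the ceiling $\max_{t\le q}X_t\le 2q+Y$ with a tail cutoff depending only on $\alpha$; for $\beta>0$, the uniform two-tick amplification from every state below $T$, which gives $\EXP\subseteq\CC_{\eta_\beta,\alpha}$, and the iid-coupled union-bound ceiling $T\le O(q\,2^{q/\alpha})$ for the reverse inclusion. The differences are minor: you get $\Ptime\subseteq\CC_{\eta_0,\alpha}$ from the binomial law of large numbers rather than the paper's bound $\E_i\tau_T\le2T$ plus Markov, and you route everything through \cref{lem:reachability-dictionary} instead of building witnesses directly. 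The word ``strict'' in the statement also relies on $\Ptime\subsetneq\EXP$ from the time hierarchy theorem, which you leave implicit.
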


\begin{newresult}[New: supporting]
\begin{remark}[Forward re-entry in the parity family]
\label{rem:forward-reentry}
\Cref{thm:activation} compares $\eta_\beta$ with its deletion
$\mathsf D\eta_\beta=\eta_0$. It does not show that backward displacement
is needed. Replace each move $i\to i-1$ from a positive even index by
$i\to i+1$. The resulting kernel is forward-only, and it still enters
an odd sampling index with probability $\beta$ at each even-state tick.
The proofs of \cref{lem:amplify,lem:ceiling} apply with only notational
changes: from an even index the step to $i+1$ followed by a jump
$2^m\ge T$ crosses $T$, and every positive displacement is still at most
$1+Y_t$. Its class is therefore $\EXP$ as well. \Cref{sec:gated} gives a
family in which backward displacement is necessary.
\end{remark}
\end{newresult}

The deterministic time hierarchy theorem gives strictness. The proof
uses the access rule for a fixed kernel and applies to every real $\beta$
in the parameter interval.

\begin{lemma}
\label{lem:baseline}
For $\eta_0$, a single copy of $Y$ can be chosen such that
\[
 \max_{t\le q}X_t\le2q+Y\qquad(q\ge0).
\]
Moreover $\E_i\tau_T\le2T$ for every $0\le i<T$.
\end{lemma}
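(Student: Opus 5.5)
For $\eta_0$ we trace the path explicitly. From $0$ the clock moves deterministically to $1$ at tick $1$. At the odd index $1$ it draws one heavy sample $Y=2^J$ with the law \eqref{eq:heavy} and lands at $2+Y$, which is even. Under $\eta_0$ there are no returns, so every later position is even and positive. From there each tick either adds $2$ (probability $1/2$) or holds. The odd phase is therefore used exactly once, and we let $Y$ denote the sample drawn at that single visit. This choice of sample is the content of ``a single copy of $Y$ can be chosen''.

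\textbf{Pathwise bound.} For $q=0$ we have $X_0=0$, and for $q=1$ we have $X_1=1$; both lie below $2q+Y$ since $Y\ge2$. For $q\ge2$, $X_2=2+Y$, and each later tick adds at most $2$. Hence $X_q\le 2+Y+2(q-2)\le 2q+Y$. The bound $\max_{t\le q}X_t\le2q+Y$ follows because the path is nondecreasing.

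\textbf{Expected crossing time.} We split by the starting index $0\le i<T$.
\begin{itemize}
\item If $i$ is even and positive, the clock stays on even indices and advances by $2$ on independent ticks with success probability $1/2$. Reaching $T$ needs at most $\lceil (T-i)/2\rceil$ advances. Each advance takes a geometric waiting time of mean $2$, so $\E_i\tau_T\le 2\lceil (T-i)/2\rceil\le T-i+1\le T\le 2T$, using $i\ge1$.
\item If $i$ is odd, one tick moves the clock to an even index at least $i+3$, or past $T$. Applying the even case from there gives $\E_i\tau_T\le 1+T\le2T$.
\item If $i=0$, one tick reaches the odd index $1$ (or crosses $T$ when $T=1$). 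The odd case then gives $\E_0\tau_T\le 1+1+(T-3)^+$, up to the rounding in the even case, which is at most $2T$ for every $T\ge1$.
\end{itemize}

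\textbf{Main obstacle.} The only delicate step is the bookkeeping of the ceiling and of the first two deterministic ticks in the cases $i=0$ and $i$ odd with small $T$. We will handle these by checking $T\le 3$ directly and using the geometric-sum estimate for larger $T$. The generous constant $2T$ leaves enough slack that this should be routine.
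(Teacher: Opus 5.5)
Your proof is correct. The pathwise bound is exactly the paper's argument: positions $1$ and $2+Y$ at ticks one and two, then increments $0$ or $2$.

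For the expectation bound you take a different route. The paper does not trace the path. It couples each tick with an independent fair bit $\xi_t$, lets $\xi_t$ select the $+2$ move at even states, and notes that at $0$ and at odd states the mandatory increment is at least $1\ge\xi_t$. Hence $X_t\ge\sum_{s\le t}\xi_s$ from every starting index, and $\tau_T$ is dominated by the waiting time for $T$ successes, whose mean is $2T$. That is a single uniform comparison. It needs no case split and does not use the fact that the odd phase is visited only once; it needs only that every increment dominates a fair bit.

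Your computation exploits that structure instead. After the first even state $j$, the hitting time is exactly a sum of $\lceil (T-j)/2\rceil$ geometric waits of mean $2$. This gives the sharper bounds $\E_i\tau_T\le T-i+1$ for even $i\ge2$ and $\E_i\tau_T\le T+2$ in general, roughly half the paper's constant and consistent with the linear scale $q^{(0)}_\alpha(T)\sim T$ found later. The price is the three-way case analysis.

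Your closing worry about small $T$ is unnecessary. The chain $\E_0\tau_T\le1+\E_1\tau_T\le T+2\le2T$ for $T\ge2$, together with $\E_0\tau_1=1$, already covers every case.
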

\begin{proof}
Starting at zero, the first two positions are $1$ and $2+Y$; all later
increments are $0$ or $2$. The maximum bound follows, with an unused
independent $Y$ if $q<2$.

For the expectation bound, couple each tick with an independent fair
Bernoulli variable $\xi_t$. At even indices let $\xi_t$ select the $+2$ move.
At odd indices and at zero the mandatory positive increment is at least
$\xi_t$. The clock's position is therefore at least $\sum_{s\le t}\xi_s$. The
time to $T$ Bernoulli successes has mean $2T$ and bounds $\tau_T$.
This argument also works from every starting index below $T$.
\end{proof}

\begin{proof}[Proof of \cref{thm:activation} for $\beta=0$]
Fix $\alpha$. Choose a constant $b_{\alpha,\rm tail}=2^r$ with
$\Pr(Y>b_{\alpha,\rm tail})=1/(r+1)<\alpha$. If $T>2q+b_{\alpha,\rm tail}$, then
\[
 \Pr^{(0)}(\tau_T\le q)\le\Pr(Y>b_{\alpha,\rm tail})<\alpha.
\]
An accepting index for a witness of budget $Q$ is consequently at most
$2Q(|x|)+b_{\alpha,\rm tail}$. Simulating the witnessing machine up to this
polynomial bound decides the language. This proves
$\CC_{\eta_0,\alpha}\subseteq\Ptime$.

Conversely, let a polynomial-time decider have logical-time bound $H(n)$.
\Cref{lem:baseline} and Markov's inequality give
$\Pr(\tau_T>q)\le2T/q$. Taking an integer constant
$k_{\alpha,\rm lin}\ge2/(1-\alpha)$ and budget
$k_{\alpha,\rm lin}(H(n)+1)$ supplies completeness; soundness is inherited from the
decider. Hence $\Ptime\subseteq\CC_{\eta_0,\alpha}$.
\end{proof}

\begin{lemma}[Uniform two-tick amplification]
\label{lem:amplify}
Let $\beta>0$, $T\ge1$, and $0\le i<T$. For every integer $r\ge0$,
\begin{equation}
 \Pr_i(\tau_T>2r)
 \le\left(1-\frac\beta{h(T)}\right)^r
 \le\exp\left(-\frac{\beta r}{h(T)}\right).
 \label{eq:amplify}
\end{equation}
In particular, with $q_\alpha(T)$ as in \eqref{eq:quantile-def},
\begin{equation}
 q_\alpha(T)\le
 2\left\lceil\frac{h(T)}\beta \ell_\alpha\right\rceil.
 \label{eq:enabledquant}
\end{equation}
\end{lemma}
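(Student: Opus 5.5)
The plan is to show that from every state $0\le i<T$, the next two ticks cross $T$ with conditional probability at least $\beta/h(T)$. The display \eqref{eq:amplify} then follows by conditioning at the deterministic even ticks $0,2,4,\dots$ and using the tower property, exactly as in the block argument of \cref{lem:kt-zero-crossing}. Conditioning on the complete past up to tick $2s$ and on $\tau_T>2s$, the position is some index below $T$. If the one-step-pair success bound is uniform over such indices, the failure probability over $r$ pairs is at most $(1-\beta/h(T))^r$, and $1-u\le e^{-u}$ gives the exponential form.

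The uniform two-tick bound splits by the parity of the current index $i<T$.

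\emph{Case $i=0$.} The clock moves to $1$. From the odd index $1$, the next move is to $2+2^m$, which is at least $T$ with probability $\Pr(Y\ge T-2)\ge\Pr(Y\ge T)=1/h(T)\ge\beta/h(T)$, since $\beta\le1/2$.

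\emph{Case $i$ odd.} The first tick already crosses when $i+1+Y\ge T$. This has probability at least $\Pr(Y\ge T)=1/h(T)$ by \eqref{eq:tail}.

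\emph{Case $i\ge2$ even.} With probability $\beta$ the clock steps back to the odd index $i-1$. From there the next jump lands at $i-1+1+Y=i+Y$, which is at least $T$ whenever $Y\ge T$; this is where the return and the $+1$ cancel. The product is $\beta/h(T)$.

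In every case the two-tick crossing probability is at least $\beta/h(T)$, because $\beta\le1$. I would also note that crossing at the first of the two ticks counts, since $\tau_T$ records any visit.

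For \eqref{eq:enabledquant}, take $r=\lceil h(T)\ell_\alpha/\beta\rceil$. Then $\exp(-\beta r/h(T))\le e^{-\ell_\alpha}=1-\alpha$, so $\Pr_0(\tau_T\le 2r)\ge\alpha$, and the definition \eqref{eq:quantile-def} gives $q_\alpha(T)\le 2r$.

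The only delicate point is the conditioning step. Hitting events in successive pairs are not independent, so I will phrase the argument via the Markov property at the times $2s$: on $\{\tau_T>2s\}$ the state lies in $\{0,\dots,T-1\}$, and the uniform lower bound applies from each such state. I expect no other obstacle; the parity bookkeeping, in particular that $i=0$ behaves like an odd-sampling entry, is the main thing to check.
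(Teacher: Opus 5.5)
Your proposal is correct and follows the paper's proof essentially step for step. It uses the same uniform two-tick bound from every state below $T$, split into three cases: an odd index (a heavy jump on the next tick), a positive even index (the length-one return followed by a heavy jump, with the $-1$ and $+1$ cancelling), and zero (the forced move to $1$ followed by a heavy jump). It then conditions at the deterministic even ticks and solves $e^{-\beta r/h(T)}\le 1-\alpha$ to obtain the quantile bound.
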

\begin{proof}
At an odd $i<T$, a heavy jump with $Y\ge T-i-1$ crosses the target on the
next tick; since $T-i-1\le T$, this has probability at least $1/h(T)$
(and probability one when $T-i-1\le0$). At
a positive even $i<T$, the two-event sequence ``backward by one, then a
heavy jump'' has probability $\beta/h(T)$ and crosses within two ticks:
from $i-1$ (odd) the jump $2^m\ge T$ reaches at least $i+2^m\ge T$. At
zero the first tick goes to $1$ and the second offers a heavy jump, again
with probability $1/h(T)$. Thus from every state below $T$ the next two
ticks cross with probability at least $\beta/h(T)$.

Condition successively on the past at the deterministic boundaries
$0,2,\ldots,2r$. On failure the position is still below $T$, so the same
conditional bound applies. Iteration gives the first inequality;
$1-u\le e^{-u}$ gives the second. Solving for failure probability
$1-\alpha$ yields \eqref{eq:enabledquant}.
\end{proof}

For a decider with logical bound $H(n)=2^{p(n)}$, take an integer
polynomial $p\ge1$ and an integer
$k_{\alpha,\beta}\ge\max\{1,\beta^{-1}\ell_\alpha\}$.
The allowed budget $Q(n)=2k_{\alpha,\beta}p(n)$ contains
$r=k_{\alpha,\beta}p(n)$ two-tick blocks. Since $h(H(n))=p(n)$,
\eqref{eq:amplify} bounds failure by $e^{-\beta k_{\alpha,\beta}}\le1-\alpha$.
The same machine is sound, so $\EXP\subseteq\CC_{\eta_\beta,\alpha}$.

\begin{lemma}[Exponential ceiling]
\label{lem:ceiling}
For every $\beta\in[0,1/2]$, $q\ge1$, and $T>q$,
\begin{equation}
 \Pr^{(\beta)}(\tau_T\le q)
 \le \frac{q}{h\!\left(\frac{T-q}{q}\right)}.
 \label{eq:ceiling}
\end{equation}
Thus success probability at least $\alpha$ implies
\begin{equation}
 T\le q\bigl(1+2^{q/\alpha}\bigr).
 \label{eq:logicalbound}
\end{equation}
\end{lemma}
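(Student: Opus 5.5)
We bound the running maximum pathwise by the initial offset plus the sum of the heavy samples, and then apply the proposal-maxima comparison of \cref{lem:proposal-maxima}.

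\emph{Coupling.} Attach an independent copy $Y_t$ of $Y=2^J$, with law \eqref{eq:tail}, to each tick $t\ge1$. Realise the kernel $\eta_\beta$ so that at an odd index the increment is $1+Y_t$. At a positive even index the increment is one of $+2$, $0$, $-1$, chosen by separate independent randomness. At zero the increment is $+1$.

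\emph{Pathwise bound.} Every positive displacement at tick $t$ is then at most $1+Y_t$, since $2\le 1+Y_t$ because $Y_t\ge2$. Backward moves and holds only decrease the position. By induction, as in \eqref{eq:positive-sum},
\[
 \max_{0\le s\le q}X_s\le\sum_{t=1}^q(1+Y_t)=q+\sum_{t=1}^qY_t\le q+q\max_{t\le q}Y_t .
\]
The bound on each increment holds even though the clock may use $Y_t$ only on odd ticks: it is a pathwise domination by the i.i.d.\ family $(Y_t)$ regardless of which ticks actually sample. The laws of the unused $Y_t$ are irrelevant, so independence is not an issue.

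\emph{Probability bound.} On $\{\tau_T\le q\}$ we have $q\max_t Y_t\ge T-q$, so some $Y_t\ge (T-q)/q$. Because $T>q$, the argument $(T-q)/q$ is a positive real. A union bound with \eqref{eq:tail} therefore gives \eqref{eq:ceiling}:
\[
 \Pr^{(\beta)}(\tau_T\le q)\le q\,\Pr\!\left(Y\ge \frac{T-q}{q}\right)=\frac{q}{h\!\left(\frac{T-q}{q}\right)} .
\]

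\emph{Extraction of \eqref{eq:logicalbound}.} Suppose the success probability is at least $\alpha$ and $T>q$. Then \eqref{eq:ceiling} gives $h((T-q)/q)\le q/\alpha$. Since $\log_2 z\le h(z)$ for every real $z>0$, as used in \cref{lem:kt-exp-crossing}, we get $(T-q)/q\le 2^{q/\alpha}$, that is, $T\le q(1+2^{q/\alpha})$. If instead $T\le q$, the inequality holds trivially, so \eqref{eq:logicalbound} holds whenever the success probability is at least $\alpha$.

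The only delicate point is the coupling step, namely checking that every positive move is dominated by $1+Y_t$, including the $+1$ move at zero and the $+2$ move at even indices. Once that is confirmed, the rest is routine.
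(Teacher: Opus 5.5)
Your proposal is correct and follows essentially the same route as the paper. Both proofs couple one iid copy $Y_t$ to each tick, dominate every positive displacement by $1+Y_t$ to get $\max_{t\le q}X_t\le q+\sum_{t\le q}Y_t$, apply a union bound with \eqref{eq:tail}, and invert through $\log_2 z\le h(z)$, handling $T\le q$ separately.
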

\begin{proof}
Choose iid copies $Y_1,\ldots,Y_q$, one for each tick, using a copy for
the actual heavy jump only when the current index is odd. Every positive
displacement is at most $1+Y_t$: this covers the initial $+1$, the
even-state $+2$, and the odd-state $1+2^m$. Consequently
\[
 \max_{t\le q}X_t\le q+\sum_{t=1}^qY_t.
\]
If the target is reached, some $Y_t\ge(T-q)/q$; otherwise the sum is less
than $T-q$. The union bound and \eqref{eq:tail} prove \eqref{eq:ceiling}.
Success at level $\alpha$ forces
$h((T-q)/q)\le q/\alpha$, hence $(T-q)/q\le2^{q/\alpha}$.
If $T\le q$, \eqref{eq:logicalbound} also holds.
\end{proof}

\begin{proof}[Completion of \cref{thm:activation}]
For a witness of polynomial budget $Q$, every accepting index is bounded
by $Q(n)(1+2^{Q(n)/\alpha})$ when $Q(n)\ge1$. Replace $1/\alpha$ by any
larger hardcoded integer and simulate the deterministic path through that
bound, rejecting on timeout. This is an exponential-time decider, by
soundness. Thus $\CC_{\eta_\beta,\alpha}\subseteq\EXP$; the preceding
amplification argument proves the opposite inclusion for $\beta>0$.
\end{proof}

\label{sec:sharp}

Recall $q_\alpha(T)$ from \eqref{eq:quantile-def}; we write
$q^{(\beta)}_\alpha$ for the clock $\eta_\beta$.

The same phase reuse gives exact leading constants for the crossing budget.

\begin{theorem}[Sharp linear and logarithmic scales]
\label{thm:sharp}
Fix $0<\alpha<1$. As integer $T\to\infty$,
\[
 \frac{q^{(0)}_\alpha(T)}{T}\longrightarrow1,
 \qquad
 \frac{q^{(\beta)}_\alpha(T)}{\log_2T}
 \longrightarrow
 \left(1+\frac1\beta\right)\ell_\alpha
 \quad(\beta>0).
\]
\end{theorem}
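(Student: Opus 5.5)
The plan is to treat the two limits separately. In both cases I would sandwich $\Pr(\tau_T\le q)$ along budgets $q=\lfloor cT\rfloor$, respectively $q=\lfloor c\log_2T\rfloor$, and show that it tends to $0$ or $1$, respectively to an explicit function of $c$. I would then invert, exactly as in the proof of \cref{prop:exp-sharp-scale}: take limits at $c\pm\varepsilon$ around the critical value, divide by the scale, and let $\varepsilon\downarrow0$.

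\emph{The case $\beta=0$.} The path is $0\to1\to2+Y$, and afterwards every tick adds $+2$ or $0$ with probability $1/2$ each, independently of $Y$.

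For the upper budget, fix $c>1$ and $q=\lfloor cT\rfloor$. The position at tick $q$ is at least $2+2B_{q-2}$ with $B_{q-2}\sim\mathrm{Bin}(q-2,1/2)$. By the weak law, $2B_{q-2}\ge T$ with probability tending to one, so $q^{(0)}_\alpha(T)\le cT$ eventually.

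For the lower budget, fix $c<1$ and $q=\lfloor cT\rfloor$. The position never exceeds $2+Y+2B_{q-2}$. Choose $\varepsilon$ with $c+\varepsilon<1$. Then $\Pr(Y\ge\varepsilon T)=1/h(\varepsilon T)\to0$ by \eqref{eq:tail}, and $\Pr(2B_{q-2}\ge(1-\varepsilon)T-2)\to0$ again by the weak law. So the crossing probability tends to $0<\alpha$, and $q^{(0)}_\alpha(T)\ge cT$ eventually.

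\emph{The case $\beta>0$: cycle structure.} After the initial $0\to1$, every heavy jump lands on an even index. At an even index each tick independently returns to an odd index with probability $\beta$ and otherwise stays even, by $+2$ or a hold. A return is always followed by exactly one heavy-jump tick, which lands even again. Hence the parity sequence, and so the number $N_q$ of heavy jumps performed by tick $q$, is a function of the even-row choices alone. The key structural claim is that $N_q$ is independent of the iid heavy values $Y_1,Y_2,\dots$. Moreover, $N_q-1$ counts completed renewal cycles whose lengths are iid $G+1$ with $G\sim\Geom(\beta)$ on $\{1,2,\dots\}$, so the mean cycle length is $1+1/\beta$. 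The renewal weak law (or Chebyshev applied to partial sums of cycle lengths) gives $N_q/q\to\beta/(1+\beta)$ in probability as $q\to\infty$.

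\emph{The case $\beta>0$: sandwich.} Take $q=\lfloor c\log_2T\rfloor$.

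For the lower bound on crossing: the jump from an odd index $i-1$ adds $1+Y\ge Y$, so any heavy value $Y_k\ge T$ crosses. Conditioning on $N_q$ and using independence,
\[
\Pr(\tau_T\le q)\ge1-\E\bigl[(1-1/h(T))^{N_q}\bigr].
\]
For the upper bound on crossing: every positive displacement other than heavy jumps is at most $2$, so $\max_{t\le q}X_t\le2q+\sum_{k\le N_q}(1+Y_k)$. Crossing therefore forces some $Y_k\ge(T-3q)/N_q$ with $k\le N_q$, giving
\[
\Pr(\tau_T\le q)\le1-\E\bigl[(1-1/h((T-3q)/q))^{N_q}\bigr].
\]
In both bounds $h(\cdot)/\log_2T\to1$, because $\log_2q=O(\log\log T)$, as in \cref{prop:exp-sharp-scale}. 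Combined with $N_q/\log_2T\to c\beta/(1+\beta)$ in probability and bounded convergence, both sides tend to
\[
1-\exp\bigl(-c\beta/(1+\beta)\bigr).
\]
Setting this equal to $\alpha$ gives the critical value $c=(1+\beta^{-1})\ell_\alpha$, and the $\varepsilon$-inversion finishes the argument.

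I expect the main obstacle to be the bookkeeping in this last step. One point is rigorously justifying that $N_q$ is independent of the $Y$'s. A cleaner route may be to generate the clock from two independent sources, the even-row choices and a Y-sequence consumed in order. The other point is controlling the boundary effects, namely the initial $0\to1$ step and the last incomplete cycle, which affect $N_q$ only by $O(1)$ and so do not disturb the limit.
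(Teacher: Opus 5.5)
Your proposal is correct and follows essentially the same route as the paper. For $\beta=0$ both use the binomial representation $X_q=2+Y+2F_{q-2}$. For $\beta>0$ both use the renewal count $N_q$ of completed heavy samples, independent of the samples, which gives the sandwich $1-\E(1-1/h(T))^{N_q}\le\Pr(\tau_T\le q)\le1-\E(1-1/h((T-cq)/q))^{N_q}$, followed by bounded convergence and the $\varepsilon$-inversion. The differences are cosmetic: the paper uses $c=2$, exploiting the cancellation of each return with the following $+1$, and the strong law for $N_q/q$, whereas your cruder $c=3$ and a weak-law or Chebyshev argument work equally well.
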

\begin{proof}
For $\beta=0$ and $q\ge2$, the clock is nondecreasing and
\[
 X_q=2+Y+2F_{q-2},\qquad F_{q-2}\sim\operatorname{Bin}(q-2,1/2).
\]
For $q=\lfloor cT\rfloor$, $Y/T\to0$ in probability and
$2F_{q-2}/T\to c$ in probability. Thus the crossing probability tends to
zero if $c<1$, and to one if $c>1$. Squeezing the fixed-confidence
quantile between these budgets gives the first limit.

For $\beta>0$, the first heavy jump completes at tick 2. After the
$j$th heavy jump, the number $G_j$ of even-state ticks up to and
including the next backward move has distribution $\Geom(\beta)$ on
$\{1,2,\ldots\}$. One further tick completes the next heavy jump.
The cycle lengths $\varrho_j=G_j+1$ are iid, with mean
$d_\beta=1+\beta^{-1}$. They can be generated independently of the
samples $Y_1,Y_2,\ldots$: every even-state row has the same
probabilities, and each heavy displacement changes an odd index to an
even one regardless of its magnitude.

Let $N_q$ count heavy samples completed by tick $q$. It is zero for
$q<2$, and for $q\ge2$,
\[
 N_q=1+\max\left\{m\ge0:\sum_{j=1}^m\varrho_j\le q-2\right\}.
\]
With $S_m^{\varrho}=\sum_{j=1}^m\varrho_j$ and $m=N_q-1$, one has
$S_m^{\varrho}\le q-2<S_{m+1}^{\varrho}$. The strong law for the integrable cycle
lengths~\cite[Theorem~2.4.1]{Durrett2019} implies
$S_m^{\varrho}/m\to d_\beta$ and hence $N_q/q\to1/d_\beta$ almost surely.
This count is independent of all heavy samples.

At each completed-cycle endpoint the return $-1$ and the next forward
$+1$ cancel. The position is $2+\sum_{r\le N_q}Y_r$ plus the
completed even-state $+2$ steps. At a pending odd state it is one less
than the preceding even position. A completed sample $Y_r\ge T$
therefore forces a hit, while the running maximum is at most
$2q+\sum_{r\le N_q}Y_r$. Since $N_q\le q$, a hit at $T>2q$
forces some completed sample to be at least $(T-2q)/q$. Independence
of $N_q$ and the samples gives, for $T>2q$ and $q\ge1$,
\begin{equation}
 1-\E\left(1-\frac1{h(T)}\right)^{N_q}
 \le\Pr(\tau_T\le q)
 \le1-\E\left(1-\frac1{h((T-2q)/q)}\right)^{N_q}.
 \label{eq:sharp-sandwich}
\end{equation}
When $N_q=0$ the conditional failure probability is one, including the
convention $0^0=1$.

Put $L_T=\log_2T$ and $q=\lfloor cL_T\rfloor$, $c>0$ fixed.
For large $T$ the sandwich applies, and both denominators $H_T=h(T)$
and $H_T=h((T-2q)/q)$ satisfy $H_T/L_T\to1$: subtracting $2q$
and dividing by $q$ changes the logarithm by only $O(\log L_T)$.
For either denominator, $N_q/L_T\to c/d_\beta$ in probability and
$N_q/H_T^2\le q/H_T^2\to0$. Since $\ln$ is natural,
\[
 N_q\ln(1-1/H_T)=-N_q/H_T+O(N_q/H_T^2)
 \ \longrightarrow\ -c/d_\beta
\]
in probability. The powers lie in $[0,1]$. For any error margin
$\delta>0$, their expected distance from $e^{-c/d_\beta}$ is at most
$\delta+\Pr(|(1-1/H_T)^{N_q}-e^{-c/d_\beta}|>\delta)$.
It tends to at most $\delta$, and then to zero as $\delta\downarrow0$.
Thus both bounds converge and
\[
 \Pr(\tau_T\le\lfloor c\log_2T\rfloor)\longrightarrow1-e^{-c/d_\beta}.
\]
The target logarithm is base two, while $\ell_\alpha=-\ln(1-\alpha)$
uses the natural logarithm. Set $c_*=d_\beta\ell_\alpha$ and fix
$0<\varepsilon<c_*$. The limiting probabilities at
$c_*-\varepsilon$ and $c_*+\varepsilon$ have strictly positive gaps
below and above $\alpha$. Choose probability-error margins smaller than
half those gaps before taking $T$ large. The actual probabilities are
then respectively below and above $\alpha$. Monotonicity in the integer
budget places $q_\alpha^{(\beta)}(T)$ between the corresponding floor
budgets; dividing by $L_T$ removes their at-most-one rounding errors.
Letting $\varepsilon\downarrow0$ proves the second limit.
\end{proof}

\subsection{Backward steps are necessary in a gated family}
\label{sec:gated}

Sparse gates provide the only large forward moves.
We now give a family of position-dependent clocks in which no
forward-only member exceeds $\Ptime$, however its bounded forward moves
are arranged, while steps back by one reach $\EXP$.

\begin{newresult}[New: supporting]
\begin{definition}[Gated family]
\label{def:gated}
Let $\Lambda=\{2^k:k\ge1\}$ be the set of \emph{gates}. For an integer
$w\ge1$, let $\mathcal K_w$ be the set of time-homogeneous stochastic
kernels $\eta$ on $\N$ such that
\begin{enumerate}[label=(\alph*),leftmargin=*]
\item $\eta(g,2g+1)=1$ for every gate $g\in\Lambda$;
\item for every $i\notin\Lambda$, $\eta(i,j)=0$ whenever $j>i+w$.
\end{enumerate}
\end{definition}

At a non-gate index the row may depend on $i$ arbitrarily and may contain
backward moves of any length; only its forward displacement is bounded.
A gate doubles the index and lands one step beyond the next gate $2g$.
The forward-only members of $\mathcal K_w$ include every kernel obtained
from a member by deleting its backward moves, or by replacing them with
forward moves of length at most $w$.

For $\beta\in(0,1]$, define
$\eta^{\mathrm{gate}}_\beta\in\mathcal K_1$ as follows. Put
$\eta^{\mathrm{gate}}_\beta(0,1)=\eta^{\mathrm{gate}}_\beta(1,2)=1$ and
$\eta^{\mathrm{gate}}_\beta(g,2g+1)=1$ for $g\in\Lambda$. At each landing
index $s_m=2^m+1$ with $m\ge2$,
\[
 \eta^{\mathrm{gate}}_\beta(s_m,s_m-1)=\beta,\qquad
 \eta^{\mathrm{gate}}_\beta(s_m,s_m)=1-\beta,
\]
and $\eta^{\mathrm{gate}}_\beta(i,i+1)=1$ at every other index. Its only
backward moves are steps back by one, from $s_m$ to the gate $2^m$.
\end{newresult}

In this family, every forward-only clock stays polynomial, and a step back accesses exponential time.

\begin{newresult}[New result]
\begin{theorem}[Backward steps are necessary in the gated family]
\label{thm:gated}
Fix an integer $w\ge1$ and $0<\alpha<1$.
\begin{enumerate}[label=(\roman*),leftmargin=*]
\item If $\eta\in\mathcal K_w$ is forward-only, then surely
$\max_{t\le q}X_t\le3(w+1)q$ for every $q\in\N$. Hence
$q_\alpha(T)\ge T/(3(w+1))$ for every $T\ge1$, and
$\CC_{\eta,\alpha}\subseteq\Ptime$. Equality holds for the member of
$\mathcal K_1$ that moves by $+1$ at every non-gate index.
\item Every $\eta\in\mathcal K_w$ satisfies $X_t\le w(2^t-1)$ surely,
so $\CC_{\eta,\alpha}\subseteq\EXP$.
\item For every $\beta\in(0,1]$,
$\CC_{\eta^{\mathrm{gate}}_\beta,\alpha}=\EXP$, and the crossing budget
of $\eta^{\mathrm{gate}}_\beta$ satisfies
\[
 \frac{q_\alpha(T)}{\log_2T}\longrightarrow1+\frac1\beta
 \qquad(T\to\infty).
\]
\end{enumerate}
In particular, $\mathsf D\eta^{\mathrm{gate}}_\beta$ and the kernel
obtained from $\eta^{\mathrm{gate}}_\beta$ by replacing each step
$s_m\to s_m-1$ with $s_m\to s_m+1$ are forward-only members of
$\mathcal K_1$, so their classes are contained in $\Ptime$. Within
$\mathcal K_w$, a class not contained in $\Ptime$ requires backward
moves, and steps back by one suffice for $\EXP$.
\end{theorem}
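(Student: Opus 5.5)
For part (i), I will prove the sure bound $\max_{t\le q}X_t\le 3(w+1)q$ by a potential argument along each path. A forward-only clock is nondecreasing, so the running maximum is simply $X_q$. I split the ticks into non-gate ticks, which gain at most $w$ each, and gate ticks. Gate ticks need care: a gate move from $g$ gains $g+1$, which is unbounded. The key observation is that gates are visited in strictly increasing order, since the path is nondecreasing and each gate move jumps past the next gate. After leaving gate $2^k$ the clock sits at $2^{k+1}+1$. If it later uses gate $2^{k'}$ with $k'\ge k+2$, it must first travel by bounded steps from $2^{k+1}+1$ up to $2^{k'}$.

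So I charge the gain $2^{k'}+1$ at gate $2^{k'}$ to the travel preceding it. Reaching $2^{k'}$ from the previous landing $2^{k+1}+1\le 2^{k'-1}+1$ requires a displacement of at least $2^{k'-1}-1$. That displacement is made in non-gate ticks of size at most $w$, because no gate lies strictly between them: gate $2^{k+1}$ has already been passed and the next gate used is $2^{k'}$. Hence the number $n$ of intervening non-gate ticks satisfies $nw\ge 2^{k'-1}-1$. The gate gain $2^{k'}+1$ is then at most $2nw+3\le 3(w+1)(n+1)$, with the $+1$ counting the gate tick itself. The first gate used has bounded gain: reaching it from $0$ also costs travel. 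If that first gate is $2^1$, the gain is $3$ and is charged to the gate tick. Summing over disjoint blocks of ticks gives $X_q\le 3(w+1)q$. I expect this charging bookkeeping to be the main obstacle, in particular the boundary cases where the first gate is $2$ and the travel count is zero, and the off-by-one in $2^{k'-1}-1$. If the constant $3$ is tight in one case, I would instead aim for gain at most $3(w+1)$ per tick, amortised.

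The rest of part (i) follows from earlier results. The radius bound gives $q_\alpha(T)\ge T/(3(w+1))$. Containment in $\Ptime$ follows from \cref{lem:reachability-dictionary}(i). For the $+1$ member, $X_t\ge t$, so $\Ptime$ is reached with a linear budget by \cref{lem:reachability-dictionary}(ii).

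Part (ii) is a deterministic induction. From $i$, the next index is either $2i+1$ (a gate) or at most $i+w$, so $X_{t+1}\le 2X_t+w$, which gives $X_t\le w(2^t-1)$. Then \cref{lem:reachability-dictionary}(iv) gives the upper half of $\EXP$ via a direct simulation.

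For part (iii), I follow the proof of \cref{thm:sharp}. From $0$ the clock goes deterministically $0\to1\to2\to5$, reaching the landing $s_2$ after three ticks. At a landing $s_m$ it waits a $\Geom(\beta)$ number of ticks until it steps back to $2^m$. One more tick takes it to $s_{m+1}=2^{m+1}+1$. Each doubling cycle therefore has iid length $\varrho=G+1$ with mean $1+\beta^{-1}$, and $X_t$ is determined by the number $N_t$ of completed cycles: it is about $2^{N_t+2}$. Crossing $T$ is equivalent to $N_t\ge\log_2 T+O(1)$. The renewal strong law gives the concentration $N_q/q\to\beta/(1+\beta)$, and in fact the crossing time is concentrated. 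This yields $q_\alpha(T)/\log_2 T\to1+\beta^{-1}$ independently of $\alpha$. It also gives an eventual radius bound of $2^{cq}$, so $\EXP$ follows by \cref{lem:reachability-dictionary}(iv) together with (ii). The final remarks follow from (i), since the deletion and forward-replacement kernels are forward-only members of $\mathcal K_1$.
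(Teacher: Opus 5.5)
Your proposal is correct and follows the paper's proof essentially step for step. It uses the same charging of each gate's gain $g+1\le 2\omega+3$ to the bounded-step travel from the previous landing (occupied gates are at least a factor four apart), the recursion $X_{t+1}\le 2X_t+w$, and the decomposition into iid $\Geom(\beta)+1$ cycles controlled by the strong law.

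The differences are minor. For $\EXP\subseteq\CC_{\eta^{\mathrm{gate}}_\beta,\alpha}$ the paper applies Markov's inequality to the expected arrival tick at $s_{p(n)}$, which is at most $(4+\beta^{-1})p(n)$. You instead extract an eventual radius bound $2^{cq}$ from the strong law; both routes work. In your charging step, the fact actually needed is that no gate is \emph{occupied} between two consecutive used gates. Gates can lie in between and be skipped by bounded steps.
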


\begin{proof}
(i) A forward-only path is nondecreasing. A gate $g$ moves to $2g+1>g$
at the next tick, so each gate is occupied at no more than one tick.
Fix $t$ and list the gates occupied before tick $t$ as
$g_1<\dots<g_N$; thus $N\le t$. Put $\lambda_0=0$ and
$\lambda_r=2g_r+1$. Between the departure from $g_{r-1}$ (or tick zero
when $r=1$) and the arrival at $g_r$, every move starts at a non-gate
index, so its displacement lies in $[0,w]$. Let
$\omega_r=g_r-\lambda_{r-1}$ be their total. For $r=1$,
$\omega_1=g_1\ge g_1/2-1$. For $r\ge2$, write $g_{r-1}=2^j$ and
$g_r=2^k$. Since $g_r>\lambda_{r-1}=2^{j+1}+1$, we have $k\ge j+2$,
so $\lambda_{r-1}\le2^{k-1}+1$ and $\omega_r\ge g_r/2-1$. In both
cases the gain at the $r$th gate satisfies
\[
 (2g_r+1)-g_r=g_r+1\le2\omega_r+3.
\]
Let $\Omega_t$ be the total displacement of the non-gate moves made
before tick $t$. Then $\Omega_t\le wt$, and the stretches counted by
the $\omega_r$ are disjoint, so $\sum_r\omega_r\le\Omega_t$.
Therefore
\[
 X_t=\Omega_t+\sum_{r=1}^N(g_r+1)
 \le3\Omega_t+3N\le3(w+1)t.
\]
Since the path is nondecreasing, this bounds $\max_{t\le q}X_t$. Thus
$\Pr(\tau_T\le q)=0$ whenever $T>3(w+1)q$, which gives the budget
bound. A witness with budget $Q$ accepts every yes-instance at an index
at most $3(w+1)Q(|x|)$. Simulating through that index, as in
\cref{prop:bounded-forward}, decides the language in polynomial time.
For the kernel that moves by $+1$ at every non-gate index, every move
increases the index, so $X_t\ge t$ surely. A decider with polynomial
logical bound $H$ is therefore observed by tick $H(n)$ with probability
one, and $\Ptime\subseteq\CC_{\eta,\alpha}$.

(ii) At a gate $X_{t+1}=2X_t+1$, and elsewhere
$X_{t+1}\le X_t+w$. Both give $X_{t+1}\le2X_t+w$, hence
$X_t\le w(2^t-1)$. A witness with budget $Q$ accepts at indices at most
$w2^{Q(|x|)}$, and simulating through that bound gives an
exponential-time decider.

(iii) Write $\eta=\eta^{\mathrm{gate}}_\beta$ and
$d_\beta=1+\beta^{-1}$. From zero the path visits $0,1,2,s_2=5$ at ticks
$0,1,2,3$. For $r\ge1$, the clock holds at $s_{r+1}$ until it steps to
the gate $2^{r+1}$, which happens after $G_r$ ticks, and the next tick
moves it to $s_{r+2}$. Every landing row has the same probabilities, so
$G_1,G_2,\dots$ are iid with law $\Geom(\beta)$ on
$\{1,2,\dots\}$. The arrival tick at $s_m$ is therefore
\[
 \sigma_m=3+\sum_{r=1}^{m-2}(G_r+1)\qquad(m\ge2).
\]
For $m\ge3$, before $\sigma_m$ every visited index is at most
$s_{m-1}$. For an integer $T\ge6$, put $m(T)=\lceil\log_2(T-1)\rceil\ge3$. Then
$s_{m(T)-1}<T\le s_{m(T)}$, so $\tau_T=\sigma_{m(T)}$. By the strong
law of large numbers~\cite[Theorem~2.4.1]{Durrett2019},
$\sigma_m/m\to d_\beta$ almost surely, and $m(T)/\log_2T\to1$. Hence
$\tau_T/\log_2T\to d_\beta$ almost surely, and so in probability.
For $0<\varepsilon<d_\beta$, the crossing probabilities at the budgets
$\lfloor(d_\beta-\varepsilon)\log_2T\rfloor$ and
$\lceil(d_\beta+\varepsilon)\log_2T\rceil$ tend to $0$ and $1$, so for
large $T$ the quantile $q_\alpha(T)$ lies between these budgets.
Dividing by $\log_2T$ and letting $\varepsilon\downarrow0$ proves the
limit; it does not depend on $\alpha$.

For $\EXP\subseteq\CC_{\eta,\alpha}$, let a decider run within
$H(n)=2^{p(n)}$ transitions, where $p$ is an integer polynomial with
$p\ge2$ (after enlarging the bound). Since $s_{p(n)}\ge2^{p(n)}$, we
have $\tau_{H(n)}\le\sigma_{p(n)}$, and
$\E\sigma_{p(n)}\le3+d_\beta p(n)\le(3+d_\beta)p(n)$. Choose an integer
$k_{\alpha,\beta}\ge(3+d_\beta)/(1-\alpha)$ and put
$Q(n)=k_{\alpha,\beta}p(n)$. Markov's inequality gives
$\Pr(\tau_{H(n)}>Q(n))\le1-\alpha$. Accepting indices are at most
$H(n)$, crossing times are monotone in the target, and the decider is
sound, so $(M,Q)$ is a witness. Part~(ii) gives the reverse inclusion.

The final assertions follow from part~(i), since both modified kernels
are forward-only members of $\mathcal K_1$.
\end{proof}
\end{newresult}

\begin{newresult}[New: supporting]
\begin{remark}[Three levels of backward power]
\label{rem:three-levels}
\label{sec:what-activation-establishes}
For iid clocks, deletion never lowers the class (\cref{prop:deletion}). With fixed parity rows, deletion lowers $\EXP$ to $\Ptime$, although forward re-entry also restores $\EXP$ (\cref{thm:activation,rem:forward-reentry}). In the gated family every forward-only clock stays in $\Ptime$, so exceeding it requires backward steps (\cref{thm:gated}). A forward clock overshoots each gate and pays to walk to the next; a step back reopens the gate just passed. These clocks reuse indices of one path, without causal consistency or a physical speedup. For example, at $\beta=\alpha=1/2$ and $T=2^{n^2}$, \cref{thm:sharp} gives budgets asymptotic to $3(\ln2)n^2$ with returns and $2^{n^2}$ without them. Such power requires failure of monotone forward ability (\cref{thm:monotone-rows}) and persists with sparse, rare returns (\cref{sec:sparse-backward}).
\end{remark}
\end{newresult}

\subsection{Sparse and rare backward moves}
\label{sec:sparse-backward}

Returns can be supported on a density-zero set of indices. This is sparsity in
index space, not a statement that backward moves have zero frequency along
a trajectory. In the doubling family the return probabilities may vanish at
the specified landing indices. Squaring gates make the support sparser while
giving doubly exponential reach.

\begin{newresult}[New: supporting]
\label{txt:backward-indices}
For a kernel $\eta$ on $\N$, let
\[
 \mathrm{Back}(\eta)=\{i\in\N:\eta(i,j)>0\text{ for some }j<i\}
\]
be its set of \emph{backward indices}. A set $B\subseteq\N$ has
\emph{natural density zero} if $|B\cap[0,N)|/N\to0$. A backward move
of probability zero occurs at no tick almost surely, and the empty set
is the only counting-measure-null subset of $\N$. The relevant notions
of rarity are therefore density zero of $\mathrm{Back}(\eta)$ in index
space and small backward probabilities at its indices. We use
$\mathrm{Back}(\eta)$ to distinguish the index set from the parameter
set $\mathcal B$ in \cref{sec:fuzzy}.
\end{newresult}

\begin{newresult}[New result]
\begin{corollary}[Density-zero backward indices suffice]
\label{cor:density-zero}
For every $\beta\in(0,1]$,
$\mathrm{Back}(\eta^{\mathrm{gate}}_\beta)=\{2^m+1:m\ge2\}$, so
$|\mathrm{Back}(\eta^{\mathrm{gate}}_\beta)\cap[0,N)|\le\log_2N$
for $N\ge2$, and this set has natural density zero. Nevertheless,
$\CC_{\eta^{\mathrm{gate}}_\beta,\alpha}=\EXP$ for every
$0<\alpha<1$, whereas every $\eta\in\mathcal K_w$ with
$\mathrm{Back}(\eta)=\varnothing$ has
$\CC_{\eta,\alpha}\subseteq\Ptime$.
\end{corollary}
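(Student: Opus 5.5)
The corollary combines a direct inspection of the kernel $\eta^{\mathrm{gate}}_\beta$ with \cref{thm:gated}. First I will identify $\mathrm{Back}(\eta^{\mathrm{gate}}_\beta)$ from the definition. The rows at $0$, $1$, at every gate $g\in\Lambda$, and at every index other than a landing index $s_m=2^m+1$ with $m\ge2$ put all their mass on a strictly larger index ($1$, $2$, $2g+1$, or $i+1$). So none of these is a backward index. At each $s_m$ with $m\ge2$ the row gives mass $\beta>0$ to $s_m-1=2^m<s_m$, since $\beta\in(0,1]$, and the remaining mass $1-\beta$ to $s_m$ itself. Hence $\mathrm{Back}(\eta^{\mathrm{gate}}_\beta)=\{2^m+1:m\ge2\}$ exactly. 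I will check that the landing indices are disjoint from the gates and from $\{0,1\}$, so no row is assigned twice: $2^m+1$ is odd and at least $5$.

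Next comes the counting bound. For $N\ge2$, an element $2^m+1<N$ with $m\ge2$ forces $2^m<N-1<N$, so $m<\log_2N$. The admissible $m$ lie in $\{2,\dots,\lceil\log_2N\rceil-1\}$, which has at most $\log_2N$ elements; for small $N$ the set is empty. Dividing by $N$ and using $\log_2N/N\to0$ gives natural density zero.

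The class statements then follow from earlier results. By \cref{thm:gated}(iii), $\CC_{\eta^{\mathrm{gate}}_\beta,\alpha}=\EXP$ for every $\beta\in(0,1]$ and every $0<\alpha<1$. For the last claim, take $\eta\in\mathcal K_w$ with $\mathrm{Back}(\eta)=\varnothing$. Then $\eta(i,j)=0$ for all $j<i$, so $(\mathsf D\eta)(i,i)=\eta(i,i)$ and $\mathsf D\eta=\eta$; that is, $\eta$ is forward-only in the sense of \cref{def:deletion}. \Cref{thm:gated}(i) then gives $\CC_{\eta,\alpha}\subseteq\Ptime$.

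I expect no real obstacle. The only point needing care is the exact identification of the backward set, namely that no row other than those at the $s_m$ places mass below its own index. This holds because the gate rows go to $2g+1$ and all default rows step by $+1$.
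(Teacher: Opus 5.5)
Your proof is correct and follows the same route as the paper. You read off the backward rows directly from the definition of $\eta^{\mathrm{gate}}_\beta$, count the landing indices $2^m+1$ below $N$, and cite \cref{thm:gated}(iii) for $\EXP$ and \cref{thm:gated}(i) for the inclusion in $\Ptime$. Your explicit check that $\mathrm{Back}(\eta)=\varnothing$ forces $\mathsf D\eta=\eta$ supplies a step the paper leaves implicit.
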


\begin{proof}
The rows of $\eta^{\mathrm{gate}}_\beta$ with mass below the diagonal
are exactly those at $s_m=2^m+1$, $m\ge2$. There are at most
$\log_2N$ such indices below $N$. The class statements are
\cref{thm:gated}(i) and (iii).
\end{proof}
\end{newresult}

\begin{newresult}[New result]
\begin{proposition}[Vanishing backward probability]
\label{prop:vanishing-beta}
Fix $c\ge0$. Define $\eta^{\mathrm{gate}}_{(c)}\in\mathcal K_1$
like $\eta^{\mathrm{gate}}_\beta$, except that at the landing index
$s_m=2^m+1$ ($m\ge2$) it moves to $s_m-1$ with probability
$\beta_m=m^{-c}$ and holds with probability $1-\beta_m$. Thus the
backward probability at $s_m$ equals $(\log_2(s_m-1))^{-c}$, which
tends to zero when $c>0$. For every $0<\alpha<1$,
\[
 \CC_{\eta^{\mathrm{gate}}_{(c)},\alpha}=\EXP.
\]
The forward-only kernels $\mathsf D\eta^{\mathrm{gate}}_{(c)}$ and
the kernel obtained from $\eta^{\mathrm{gate}}_{(c)}$ by replacing
each step $s_m\to s_m-1$ with $s_m\to s_m+1$ have classes contained
in $\Ptime$.
\end{proposition}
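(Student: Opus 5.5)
The plan follows the proof of \cref{thm:gated}(iii), with the geometric holding times replaced by independent but non-identical ones. The upper inclusion costs nothing. $\eta^{\mathrm{gate}}_{(c)}$ belongs to $\mathcal K_1$: gates double, and every non-gate row moves forward by at most one. Hence \cref{thm:gated}(ii) gives $\CC_{\eta^{\mathrm{gate}}_{(c)},\alpha}\subseteq\EXP$. The two forward-only modifications are also members of $\mathcal K_1$. Deletion replaces the backward step by a hold, and the replacement step $s_m\to s_m+1$ has displacement one at a non-gate index. So \cref{thm:gated}(i) places both classes inside $\Ptime$.

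For the lower inclusion I would reuse the arrival-tick description. From zero the path visits $0,1,2,s_2$ at ticks $0,1,2,3$. At $s_{r+1}$ it holds for a time $G_r\sim\Geom(\beta_{r+1})$ on $\{1,2,\dots\}$ and then moves to $s_{r+2}$ through the gate $2^{r+1}$. The $G_r$ are independent, since each landing row is used only during its own holding stretch. This gives
\[
 \sigma_m=3+\sum_{r=1}^{m-2}(G_r+1),\qquad
 \E\sigma_m=3+\sum_{r=1}^{m-2}\bigl(1+(r+1)^c\bigr)\le3+(m-2)(1+m^c).
\]
The identity $\tau_T=\sigma_{m(T)}$ and the monotonicity of crossing times carry over verbatim, because the path structure is unchanged.

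Now take a decider with running time $H(n)=2^{p(n)}$, where $p\ge2$ is an integer polynomial. Then $\tau_{H(n)}\le\sigma_{p(n)}$ and $\E\sigma_{p(n)}\le3+p(n)(1+p(n)^c)$. I would choose an integer $c'\ge c$ and set $Q(n)=k_\alpha\bigl(3+p(n)+p(n)^{c'+1}\bigr)$ with $k_\alpha\ge1/(1-\alpha)$; this is an allowed integer polynomial. Markov's inequality then gives $\Pr(\tau_{H(n)}>Q(n))\le1-\alpha$, and soundness of the decider completes the witness. This proves $\EXP\subseteq\CC_{\eta^{\mathrm{gate}}_{(c)},\alpha}$.

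The step that needs care is checking that vanishing $\beta_m$ still leaves a polynomial expected crossing time. The point is that the gate index $m=p(n)$ is only polynomial in $n$, so $\beta_m^{-1}=m^c$ is polynomial. The only wrinkle is that $c$ may be non-integer, and rounding it up to $c'$ handles this. No strong-law argument is needed, because the plain Markov bound suffices.
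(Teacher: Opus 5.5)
Your proposal is correct and matches the paper's proof essentially step for step: the arrival-tick representation $\sigma_m=3+\sum_{r=1}^{m-2}(G_r+1)$ with independent $G_r\sim\Geom(\beta_{r+1})$, the bound $\E\sigma_m=O(m^{c+1})$, Markov's inequality with a budget of degree $\lceil c\rceil+1$ in $p(n)$, and \cref{thm:gated}(i),(ii) for the remaining inclusions. The only difference is the normalization of the budget: the paper uses $\E\sigma_m\le5m^{c+1}$ and $Q(n)=k_\alpha p(n)^{\lceil c\rceil+1}$ with $k_\alpha\ge5/(1-\alpha)$, and in your version $k_\alpha$ should be taken to be an integer so that $Q$ is an allowed polynomial.
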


\begin{proof}
As in the proof of \cref{thm:gated}(iii), the arrival tick at $s_m$ is
$\sigma_m=3+\sum_{r=1}^{m-2}(G_r+1)$, where now the $G_r$ are
independent and $G_r$ has law $\Geom(\beta_{r+1})$ on
$\{1,2,\dots\}$, so $\E G_r=(r+1)^c$. Hence, for $m\ge2$,
\[
 \E\sigma_m\le3+(m-2)(m^c+1)\le3+2m^{c+1}\le5m^{c+1}.
\]
Put $e=\lceil c\rceil+1$. Let a decider run within $2^{p(n)}$
transitions, where $p$ is an integer polynomial with $p\ge2$. Since
$s_{p(n)}\ge2^{p(n)}$, we have
$\tau_{2^{p(n)}}\le\sigma_{p(n)}$ and
$\E\sigma_{p(n)}\le5p(n)^e$. Choose an integer
$k_\alpha\ge5/(1-\alpha)$ and put $Q(n)=k_\alpha p(n)^e$.
Markov's inequality gives
$\Pr(\tau_{2^{p(n)}}>Q(n))\le1-\alpha$. Crossing times are
monotone in the target, and soundness is inherited from the decider,
so $\EXP\subseteq\CC_{\eta^{\mathrm{gate}}_{(c)},\alpha}$.
Since $\eta^{\mathrm{gate}}_{(c)}\in\mathcal K_1$,
\cref{thm:gated}(ii) gives the reverse inclusion, and
\cref{thm:gated}(i) gives the statement about the forward-only kernels.
\end{proof}
\end{newresult}

The return probability may therefore decay polynomially in $\log_2 i$
along the landing indices. The mechanism is unchanged; only the mean
cycle lengths grow, from a constant to $O(m^c)$. At fixed confidence,
the crossing budget for $T=2^m$ is $O(m^{c+1})$, still polynomial in
$\log_2T$.

\begin{newresult}[New: supporting]
\begin{definition}[Squaring gates]
\label{def:sq-gated}
Let $\Lambda_2=\{2^{2^k}:k\ge1\}=\{4,16,256,\dots\}$. For an
integer $w\ge1$, let $\mathcal K^{(2)}_w$ be the set of
time-homogeneous stochastic kernels $\eta$ on $\N$ such that
$\eta(g,g^2+1)=1$ for every $g\in\Lambda_2$, and, for every
$i\notin\Lambda_2$, $\eta(i,j)=0$ whenever $j>i+w$.
\end{definition}

A squaring gate lands one step beyond the next gate $g^2$. For
$\beta\in(0,1]$, define
$\eta^{\mathrm{sq}}_\beta\in\mathcal K^{(2)}_1$ as follows. Put
$\eta^{\mathrm{sq}}_\beta(g,g^2+1)=1$ for $g\in\Lambda_2$. At
each landing index $s'_m=2^{2^m}+1$ with $m\ge2$, it moves to
$s'_m-1$ with probability $\beta$ and holds with probability
$1-\beta$. At every other index it moves to $i+1$. Thus
$\mathrm{Back}(\eta^{\mathrm{sq}}_\beta)=\{2^{2^m}+1:m\ge2\}$,
whose counting function below $N$ is at most $\log_2\log_2N$ for
$N\ge2$.
\end{newresult}

Squaring gates give doubly exponential time using only doubly sparse backward indices.

\begin{newresult}[New result]
\begin{theorem}[Doubly sparse backward indices give $\mathbf{2EXP}$]
\label{thm:sq-gated}
Fix an integer $w\ge1$ and $0<\alpha<1$.
\begin{enumerate}[label=(\roman*),leftmargin=*]
\item If $\eta\in\mathcal K^{(2)}_w$ is forward-only, then surely
$\max_{t\le q}X_t\le wq+4w^2q^2$, so
$\CC_{\eta,\alpha}\subseteq\Ptime$. Equality holds for the member
that moves by $+1$ at every non-gate index.
\item Every $\eta\in\mathcal K^{(2)}_w$ satisfies
$X_t\le(2w)^{2^t}$ surely, so
$\CC_{\eta,\alpha}\subseteq\mathbf{2EXP}$.
\item For every $\beta\in(0,1]$,
$\CC_{\eta^{\mathrm{sq}}_\beta,\alpha}=\mathbf{2EXP}$, and the
crossing budget of $\eta^{\mathrm{sq}}_\beta$ satisfies
\[
 \frac{q_\alpha(T)}{\log_2\log_2T}\longrightarrow1+\beta^{-1}
 \qquad(T\to\infty).
\]
\end{enumerate}
\end{theorem}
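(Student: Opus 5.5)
The plan follows the proof of \cref{thm:gated} part by part. The doubling gain $g+1$ is replaced by the squaring gain $g^2+1-g$, and the doubling counting of landings is replaced by counting squaring landings $s'_m=2^{2^m}+1$.

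\emph{Part (i).} A forward-only path is nondecreasing, and each gate is occupied at most once. Fix $t$ and list the gates occupied before tick $t$ as $g_1<\dots<g_N$. Let $\omega_r$ be the total non-gate displacement between leaving $g_{r-1}$ (or tick zero) and arriving at $g_r$, and let $\Omega_t\le wt$ be the total non-gate displacement before tick $t$.
\begin{itemize}
\item The landing $g_{r-1}^2+1$ lies beyond the next gate $g_{r-1}^2$. Hence $g_r\ge g_{r-1}^4$, so $g_{r-1}^2\le\sqrt{g_r}$.
\item It follows that $\omega_r\ge g_r-\sqrt{g_r}-1\ge g_r/2$ for $r\ge2$, because $g_r\ge16$. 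Also $\omega_1=g_1$.
\item Summing over disjoint stretches gives $\sum_r g_r\le2\sum_r\omega_r\le2\Omega_t\le2wt$.
\item Each gate gain satisfies $g_r^2+1-g_r\le g_r^2$ since $g_r\ge4$.
\end{itemize}
Therefore
\[
X_t=\Omega_t+\sum_r(g_r^2+1-g_r)\le wt+\Bigl(\sum_r g_r\Bigr)^2\le wt+4w^2t^2 .
\]
Monotonicity extends this to $\max_{t\le q}X_t$. Simulating a witness through the logical index $wQ+4w^2Q^2$, as in \cref{prop:bounded-forward}, gives containment in $\Ptime$. The $+1$ member satisfies $X_t\ge t$ surely, which gives equality.

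\emph{Part (ii).} I would prove the slightly stronger claim $X_t+w\le(2w)^{2^t}$ by induction; the base case is $w\le2w$. At a gate,
\[
X_{t+1}+w=X_t^2+1+w\le(X_t+w)^2,
\]
because $2wX_t+w^2\ge1+w$. At a non-gate index,
\[
X_{t+1}+w\le X_t+2w\le(X_t+w)^2 .
\]
Squaring the inductive bound completes the step. A witness then accepts at indices at most $(2w)^{2^{Q(n)}}$. Simulating through this bound is doubly exponential in a polynomial exponent, so the class lies in $\mathbf{2EXP}$.

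\emph{Part (iii).} I would mirror \cref{thm:gated}(iii).
\begin{itemize}
\item The deterministic prefix is $0,1,2,3,4$, followed by the jump $4\to17=s'_2$.
\item At $s'_{r+1}$ the clock holds for $G_r\sim\Geom(\beta)$ ticks (iid, since all landing rows agree), steps back to the gate $2^{2^{r+1}}$, and then squares to $s'_{r+2}$.
\item The arrival ticks are therefore $\sigma_m=\sigma_2+\sum_{r=1}^{m-2}(G_r+1)$, with $\sigma_2$ a constant.
\item Before $\sigma_m$ every visited index is at most $s'_{m-1}$. With $m(T)=\lceil\log_2\log_2(T-1)\rceil$ and $T$ large, $\tau_T=\sigma_{m(T)}$.
\end{itemize}
The strong law gives $\sigma_m/m\to1+\beta^{-1}$, and $m(T)/\log_2\log_2T\to1$. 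The quantile limit then follows exactly as before.

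For $\mathbf{2EXP}\subseteq\CC$, take a decider within $2^{2^{p(n)}}$ with $p\ge2$. Then $\tau\le\sigma_{p(n)}$ and $\E\sigma_{p(n)}=O(p(n))$. Markov's inequality with budget $Q=k_{\alpha,\beta}p$ completes the inclusion, and part (ii) gives the reverse inclusion.

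The main obstacle is the bookkeeping in (i). The argument must show that every squared gain is paid for by linear travel, $g_r\le2\omega_r$. This rests on the fact that a forward-only path cannot use two consecutive squaring gates.
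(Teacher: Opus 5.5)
Your proposal follows the paper's proof in all three parts. In (i) you use the same charging $g_r\le2\omega_r$ via $g_r\ge g_{r-1}^4$; you square $\sum_r g_r$ where the paper squares $\sum_r\omega_r$, which is equivalent. In (iii) you use the same renewal representation $\sigma'_m=5+\sum_{r=1}^{m-2}(G_r+1)$, the strong law, and the Markov bound.

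One step in (ii) is false as written. At a non-gate index you assert $X_t+2w\le(X_t+w)^2$, which fails for $w=1$ and $X_t=0$: the $+1$ member has $X_1+w=2>1=(X_0+w)^2$. So $Y_t=X_t+w$ does not satisfy $Y_{t+1}\le Y_t^2$ at every tick. The bound you want is still true, and the repair is immediate. Either argue as the paper does, $Y_{t+1}\le Y_t^2+w\le2Y_t^2$, hence $2Y_{t+1}\le(2Y_t)^2$ and $2Y_t\le(2w)^{2^t}$. Or use your induction hypothesis directly at non-gate indices, $Y_{t+1}\le2Y_t\le2(2w)^{2^t}\le(2w)^{2^{t+1}}$, which holds because $(2w)^{2^t}\ge2$.
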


\begin{proof}
(i) Argue as in \cref{thm:gated}(i). A forward-only path is
nondecreasing, and each gate is occupied at no more than one tick.
List the gates occupied before tick $t$ as $g_1<\dots<g_N$, put
$\lambda_0=0$ and $\lambda_r=g_r^2+1$, and let
$\omega_r=g_r-\lambda_{r-1}$ be the total non-gate displacement
between the departure from $g_{r-1}$ and the arrival at $g_r$, starting
at tick zero when $r=1$. For $r=1$, $\omega_1=g_1$. For $r\ge2$,
write $g_{r-1}=2^{2^j}$. Since
$g_r>\lambda_{r-1}=2^{2^{j+1}}+1$, we have
$g_r\ge2^{2^{j+2}}$, so $\lambda_{r-1}\le\sqrt{g_r}+1$ and
$\omega_r\ge g_r-\sqrt{g_r}-1\ge g_r/2$ because $g_r\ge256$.
In both cases $g_r\le2\omega_r$, so the gain at the $r$th gate
satisfies $g_r^2+1-g_r\le g_r^2\le4\omega_r^2$. With
$\Omega_t\le wt$ the total non-gate displacement before tick $t$,
the disjointness of the stretches gives
\[
 X_t\le\Omega_t+4\sum_r\omega_r^2
 \le\Omega_t+4\Omega_t^2\le wt+4w^2t^2.
\]
The path is nondecreasing, so this bounds the running maximum. A
witness with budget $Q$ therefore accepts at indices at most
$wQ(n)+4w^2Q(n)^2$, and simulation gives a polynomial-time decider.
For the member moving by $+1$ at non-gate indices, $X_t\ge t$ surely,
which gives $\Ptime\subseteq\CC_{\eta,\alpha}$.

(ii) Put $Y_t=X_t+w$. At a gate,
$X_{t+1}=X_t^2+1\le Y_t^2$; elsewhere
$X_{t+1}\le X_t+w=Y_t\le Y_t^2$. Hence
$Y_{t+1}\le Y_t^2+w\le2Y_t^2$, so
$2Y_{t+1}\le(2Y_t)^2$. Since $Y_0=w$, induction gives
$2Y_t\le(2w)^{2^t}$. Thus $X_t\le(2w)^{2^t}$. A witness with
budget $Q$ accepts at indices at most $(2w)^{2^{Q(n)}}$, and
simulating through that bound is a doubly exponential-time decider.

(iii) Write $d_\beta=1+\beta^{-1}$. From zero the path visits
$0,1,2,3,4$ at ticks $0,\dots,4$ and $s'_2=17$ at tick $5$. For
$r\ge1$, the clock waits $G_r$ ticks at $s'_{r+1}$ until it steps to
the gate $2^{2^{r+1}}$, and the next tick moves it to $s'_{r+2}$.
The $G_r$ are iid $\Geom(\beta)$ on $\{1,2,\dots\}$, so the
arrival tick at $s'_m$ is
\[
 \sigma'_m=5+\sum_{r=1}^{m-2}(G_r+1)\qquad(m\ge2).
\]
For $m\ge3$, before $\sigma'_m$ every visited index is at most
$s'_{m-1}$. For an integer $T\ge18$, put
$m(T)=\lceil\log_2\log_2(T-1)\rceil\ge3$. Then
$s'_{m(T)-1}<T\le s'_{m(T)}$, so $\tau_T=\sigma'_{m(T)}$.
The strong law of large numbers and $m(T)/\log_2\log_2T\to1$ give
$\tau_T/\log_2\log_2T\to d_\beta$ almost surely. The quantile
limit follows as in \cref{thm:gated}(iii).

For $\mathbf{2EXP}\subseteq\CC_{\eta^{\mathrm{sq}}_\beta,\alpha}$,
let a decider run within $2^{2^{p(n)}}$ transitions, where $p$ is an
integer polynomial with $p\ge2$. Since
$s'_{p(n)}\ge2^{2^{p(n)}}$, we have
$\tau_{2^{2^{p(n)}}}\le\sigma'_{p(n)}$, and
$\E\sigma'_{p(n)}\le5+d_\beta p(n)\le(5+d_\beta)p(n)$. With an
integer $k\ge(5+d_\beta)/(1-\alpha)$ and $Q(n)=kp(n)$, Markov's
inequality gives completeness at confidence $\alpha$. Soundness is
inherited from the decider. Part~(ii) gives the reverse inclusion.
\end{proof}
\end{newresult}

Fixed iid laws also reach $\mathbf{2EXP}$ (\cref{cor:2exp-law}).
\Cref{thm:sq-gated} concerns the comparison inside $\mathcal K^{(2)}_w$,
where forward-only members stay in $\Ptime$.

\begin{newresult}[New: supporting]
\begin{remark}[How sparse can backward indices be?]
\label{rem:how-sparse}
In both gated families the backward indices are the landing points, so
their sparsity is set by the spacing of the gates. Forward-only members
stay in $\Ptime$ because the gain at a gate is polynomial in the walk
needed to reach it: linear for doubling gates, quadratic for squaring
gates. These constructions give counting orders $\log N$ and
$\log\log N$, respectively. A gate $g\mapsto2^g$ would give a
forward-only clock an exponential gain after a walk of order $g$,
outside this polynomial-gain argument. We leave open whether backward
indices of arbitrarily small counting order can raise the class beyond
$\Ptime$ in a family whose forward-only members stay in $\Ptime$.
\end{remark}
\end{newresult}

%% file: discussion.tex
\clearpage
\section{Conclusions and open problems}
\label{sec:discussion}

Crossing reach connects a fixed random clock to its observation class.
The sparse computable constructions give one zero-drift intermediate class
and an infinite negative-drift antichain below $\Ptime$. The gated constructions
give a different comparison: under their prescribed gate jumps and bounded
forward displacement away from gates, every forward-only member stays within
$\Ptime$, whereas length-one returns attain $\EXP$ or $\mathbf{2EXP}$.
The parity example establishes a gain over deletion, with forward re-entry
also attaining $\EXP$. In both gated families, sparsity concerns backward-index support in index
space. The doubling family also permits vanishing return probabilities at
its specified landing indices. These classifications use the unit-cost
observation convention.

The classifications leave six questions. Each concerns an unresolved
regime or the boundary of the present constructions.
\begin{enumerate}[leftmargin=*]
\item Determine the exact class of the sparse zero-drift law of
\cref{thm:zero-drift-counterexample}.
\item Classify negative-mean laws with only some finite positive moments,
including the antichain regime of \cref{thm:antichain}.
\item Classify integrable zero-drift laws with infinite variance and
unbounded positive jumps.
\item Determine how confidence affects iid classes beyond the classified
cases. \Cref{app:calibration-7} gives a position-dependent example;
\cref{app:calibration-8} transfers iid bounds but leaves the sparse case open.
\item Determine the sparsest backward-index sets that can raise a gated
family beyond $\Ptime$ (\cref{rem:how-sparse}).
\item Determine whether the law of \cref{prop:infinite-moments-p} is
computable.
\end{enumerate}
Random input access or charging configuration generation would require a
new analysis of the cost model.

%% file: app_zero_drift.tex
\section{The computable zero-drift construction}
\label{app:zero-drift}

We prove \cref{thm:zero-drift-counterexample} by placing sparse atoms and
balancing the mean. At chosen even targets $T_j$, the budgets $q_j=T_j^j$
eventually dominate every fixed polynomial evaluated at $T_j$, yet the
crossing probability is below $2^{-j}$. Finite searches make the law
computable, and recursive choices preserve all earlier bounds.

\begin{proof}[Proof of \cref{thm:zero-drift-counterexample}]
Subscripts on probabilities and expectations in this proof specify the
proposal law.
Put $a_j=2^{-j-1}$ for $j\ge1$. Then
$\sum_{j\ge1}a_j=1/2$ and $\sum_{i\ge j}a_i=2^{-j}$.
We will choose $N_j$ recursively and set
\begin{equation}
 \mu(-1)=\frac12,\qquad
 \mu(N_j)=\frac{a_j}{N_j},\qquad
 \mu(0)=\frac12-\sum_{j\ge1}\frac{a_j}{N_j},
 \label{eq:zero-drift-law}
\end{equation}
with zero mass elsewhere. Since $N_j\ge1$, the series is at most $1/2$.
The zero mass is therefore nonnegative and the total mass is one, with
\[
 \E K^+=\sum_{j\ge1}a_j=\frac12=\E K^-,
 \qquad \E|K|=1,\qquad \E K=0.
\]

\emph{Effective finite approximations.}
Set $N_0=T_0=0$. At stage $j$, only $N_1,\ldots,N_{j-1}$ have been
chosen. Define the finite rational law $\nu_j$ below, with all atom
indices in the sums starting at one:
\[
 \nu_j(-1)=\frac12,\qquad
 \nu_j(N_i)=\frac{a_i}{N_i}\ (i<j),\qquad
 \nu_j(0)=\frac12-\sum_{i<j}\frac{a_i}{N_i}.
\]
It has drift $-1/2+\sum_{i<j}a_i=-2^{-j}<0$.
In the completed law, this is the distribution obtained by replacing every proposal
$N_i$ with $i\ge j$ by zero.

Search over $m=1,2,\ldots$ for a rational $r_j=1+2^{-m}$ such that
\[
 \lambda_j=\E_{\nu_j}[r_j^K]<1.
\]
Every test uses exact rational arithmetic on a finite sum. The derivative
of $r\mapsto\E_{\nu_j}[r^K]$ at $r=1$ is $-2^{-j}$, so the sum is
less than one for all $r>1$ sufficiently close to one. The search
therefore terminates.

For $V(i)=r_j^i$, at every $i\ge1$ all proposals of $\nu_j$ are
admissible, since its only negative proposal is $-1$. At zero that
proposal holds. Hence, with $b_j=(1-r_j^{-1})/2$, the exact identity is
\[
 \E_{\nu_j}[V(X_{t+1})\mid X_t=i]
 =\lambda_j V(i)+b_j\one_{\{i=0\}}
 \le\lambda_j V(i)+b_j.
\]
Iteration from $X_0=0$ gives the computable rational bound
$\sup_t\E_{\nu_j}V(X_t)\le C_j:=1+b_j/(1-\lambda_j)$.
Markov's inequality and Boole's inequality (the union bound)
\cite[Theorem~1.1.1]{Durrett2019} over ticks yield
\begin{equation}
 \Pr_{\nu_j}(\tau_T\le q)\le C_j(q+1)r_j^{-T}
 \quad(T\ge1,\ q\ge0).
 \label{eq:effective-negative-crossing}
\end{equation}

\emph{Choosing the targets and atoms.}
Search in increasing order for an even integer $T_j\ge\max(2,j)$ with
$T_j>T_{j-1}$ and
\begin{equation}
 C_j(T_j^j+1)r_j^{-T_j}<2^{-j-1}.
 \label{eq:zero-stage-target}
\end{equation}
For fixed $j$ and $r_j>1$, $(T^j+1)r_j^{-T}\to0$ along the even
integers, so this exact rational search terminates. Put $q_j=T_j^j$ and
choose the least integer
\begin{equation}
 N_j>\max\{N_{j-1},\,2q_j,\,2^{2j}\}.
 \label{eq:zero-stage-atom}
\end{equation}
These rules define computable increasing sequences independently of
the machine, input, polynomial budget, and confidence.

\emph{The completed law.}
For every $j$, monotonicity of the atoms and the sum of the $a_i$ give
\[
 \sum_{i\ge j}\mu(N_i)
 =\sum_{i\ge j}\frac{a_i}{N_i}\le\frac{2^{-j}}{N_j}.
\]
Couple a walk with law $\mu$ to one with law $\nu_j$ by replacing a
proposal $N_i$, $i\ge j$, by zero and leaving all other proposals
unchanged. The transformed sequence has independent proposals with law
$\nu_j$.
The two held walks agree until the first replacement. A union bound over
the first $q_j$ proposals, followed by
\eqref{eq:effective-negative-crossing}--\eqref{eq:zero-stage-atom}, gives
\begin{equation}
 \Pr_\mu(\tau_{T_j}\le q_j)
 \le C_j(q_j+1)r_j^{-T_j}
       +q_j\sum_{i\ge j}\mu(N_i)
 <2^{-j-1}+2^{-j-1}=2^{-j}.
 \label{eq:zero-subsequence-crossing}
\end{equation}
The sum includes every future atom of the infinite law. Later choices
therefore preserve this bound at every earlier stage.

\emph{Even length and the quantifiers.}
Fix any $0<\alpha<1$ and suppose $(M,Q)$ were a witness for
$L_{\mathrm{even}}$. Let $x$ be an accepted word of positive even length
$n$, and choose $a\in\Sigma$. The tapes for $x$ and $xa$ initially differ
only at cell $n+1$: it is blank for $x$ and contains $a$ for $xa$.
Until that cell is read, deterministic transitions, including any writes
to earlier cells, are identical on the two inputs. Accepting $x$ before
reading cell $n+1$ would therefore accept the odd-length word $xa$,
contrary to soundness. The $t$th transition can read at most cell $t$,
so reading cell $n+1$ requires at least $n+1$ transitions. Therefore $A_M(x)\ge n+1$; the construction only needs
$A_M(x)\ge n$. We use only positive even lengths below;
$\varepsilon$ needs no lower bound.

Choose integers $c\ge1$ and $d\ge0$ with $Q(n)\le c n^d$ for $n\ge1$;
the sum of the coefficients, enlarged to one, suffices for $c$.
For all sufficiently large $j$, $j\ge d$ and $2^{j-d}\ge c$.
Since $T_j\ge2$, this implies
$Q(T_j)\le cT_j^d\le T_j^j=q_j$.
The same conclusion holds for any polynomial upper bound after enlarging
it to an integer polynomial as in \cref{sec:preliminaries}.
For the yes-input $x_j=a^{T_j}$, monotonicity in the target and budget
now gives
\[
 \Pr_\mu(\tau_{A_M(x_j)}\le Q(T_j))
 \le\Pr_\mu(\tau_{T_j}\le q_j)<2^{-j}.
\]
Taking $j$ large enough that $2^{-j}<\alpha$ contradicts completeness.
This excludes every witness at every fixed confidence for the same law.
The finite first positive moment gives containment in $\Ptime$ by
\cref{thm:kt-positive-moment}; $L_{\mathrm{even}}$ makes the containment
strict. The universal lower bound of \cref{lem:finite-prefix-class} gives
$\RDEF\subseteq\mathbf P^*_{\mu,\alpha}$, since $\Pr(K>0)>0$.

\emph{Computability and variance.}
Each stage above terminates by a proved strict-inequality search over
finite rational data. To compute $\mu(k)$ for a positive integer $k$,
generate the $N_j$ until the first $N_j\ge k$. If equality holds, return
$a_j/N_j$; otherwise return zero. This terminates because $N_j>2^{2j}$.
For $k=-1$ return $1/2$, and for $k<-1$ return zero.
At zero, after computing the first $s$ atoms, the remaining mass satisfies
\[
 0\le\sum_{i>s}\frac{a_i}{N_i}\le\sum_{i>s}a_i=2^{-s-1}.
\]
Thus $1/2-\sum_{i=1}^s a_i/N_i$ approximates $\mu(0)$ with error at
most $2^{-s-1}$. Given a requested error $2^{-p}$, choose $s\ge p$.
This gives uniform computability of the point probabilities.
Finally
\[
 \sum_{j\ge1}N_j^2\mu(N_j)=\sum_{j\ge1}a_jN_j=\infty,
 \qquad a_jN_j>2^{j-1}.
\]
The mean is zero, so this infinite second moment is also infinite
variance.
\end{proof}

\subsection{A strictly intermediate class}
\label{sec:strictly-intermediate}

The same law admits a language outside $\RDEF$. We use input lengths
large enough to pay for regenerating an atom, but small enough that a
polynomial observation budget crosses that atom.

\begin{proof}[Strict lower inclusion in \cref{thm:zero-drift-counterexample}]
Fix a finite nonempty alphabet $\Sigma$. The law and the sequences
$(T_j)$ and $(N_j)$ above do not depend on $\Sigma$.
Since $\mu(N_j)=2^{-j-1}/N_j$ and $N_j>2^{2j}$,
\[
 2^{j+1}<2N_j^{1/2},\qquad
 \mu(N_j)>\frac1{2N_j^{3/2}}.
\]
A proposal $N_j$ crosses threshold $N_j$ from every nonnegative position.
For any fixed $0<\alpha<1$, independence therefore gives
\begin{equation}
 \Pr_\mu(\tau_{N_j}\le q)
 \ge1-(1-\mu(N_j))^q
 \ge1-\exp(-q\mu(N_j))\ge\alpha
 \quad\text{if }q\ge2\ell_\alpha N_j^{3/2}.
 \label{eq:sparse-large-radius}
\end{equation}
Thus $R_\alpha(q)\ge N_j$ at these integer budgets.

We next bound the cost of the exact searches, including a single-tape
implementation. For $j\ge2$, put $N=N_{j-1}\ge1$ and let $m_j$ be the
first successful exponent in the search for $r_j=1+2^{-m_j}$. If
$0<\varepsilon\le1/N$, Taylor's formula gives
$(1+\varepsilon)^k\le1+k\varepsilon+(e/2)k^2\varepsilon^2$ for $1\le k\le N$.
Also $(1+\varepsilon)^{-1}\le1-\varepsilon+\varepsilon^2$. Since
$\sum_{i<j}a_i=1/2-2^{-j}$ and
$\sum_{i<j}a_iN_i\le N/2$, we obtain
\[
 \E_{\nu_j}(1+\varepsilon)^K-1
 \le-2^{-j}\varepsilon+\left(\frac12+\frac e4N\right)\varepsilon^2
 \le-2^{-j}\varepsilon+2N\varepsilon^2.
\]
Taking $\varepsilon\le2^{-j}/(4N)$ makes this negative. Consequently
\begin{equation}
 m_j\le j+\lceil\log_2N\rceil+2.
 \label{eq:sparse-search-exponent}
\end{equation}
Stage one has fixed finite cost.

The earlier atom has mass $a_{j-1}/N=2^{-j}/N$ in $\nu_j$.
Every term in $\lambda_j$ is nonnegative, so $\lambda_j<1$ implies
$r_j^N<2^jN$. On the other hand, \eqref{eq:zero-stage-target} and
$C_j\ge1$ imply $r_j^{T_j}>2^{j+1}$. Taking natural logarithms yields
\begin{equation}
 T_j>\frac{(j+1)\ln2}{\ln r_j}
 >\frac{(j+1)(\ln2)N}{j\ln2+\ln N}.
 \label{eq:sparse-stage-size}
\end{equation}
By \eqref{eq:zero-stage-atom}, $N>2^{2(j-1)}$, so $j=O(\log N)$.
In particular $T_j\ge cN/\log N$ for an absolute $c>0$ at large
stages. This also gives $T_j\ge\sqrt N$ eventually, and then
$\log N\le2\log T_j$. It follows that
\begin{equation}
 N_{j-1}=O(T_j\log T_j),\qquad
 j+m_j=O(\log T_j).
 \label{eq:sparse-stage-control}
\end{equation}
All constants are independent of the stage.

Here is a bit-cost bound for the searches. For a tested exponent
$m\le m_j$, the powers $(1+2^{-m})^{N_i}$ can be calculated by binary
exponentiation with exact integer numerators and denominators. A common
denominator for the finite expectation has bit length
$O(mN+j\log(N+1)+j)$: one can use the common dyadic power $2^{mN}$,
the product of the earlier $N_i$, a dyadic denominator for the $a_i$,
and $2^m+1$ for the negative atom. The rational $C_j$ has the same
order of bit length. At a tested target $T\le T_j$, the comparison in
\eqref{eq:zero-stage-target} uses integers with at most
\[
 B_j=O\bigl(m_j(N+T_j)+j\log(N+T_j+1)\bigr)
     =O\bigl(T_j(\log T_j)^2\bigr)
\]
bits. This includes $T^j$, the rational powers, and the exact strict
comparison; no numerical precision search is needed. There are at most
$m_j$ exponent tests and $T_j$ target tests, each using at most $j$
terms and binary exponentiation. Schoolbook integer arithmetic bounds
these operations by a polynomial in $N,T_j,j,m_j$ of a degree independent
of $j$. Finally setting $N_j$ by \eqref{eq:zero-stage-atom} uses
$O(j\log T_j+\log(N+1)+j)$ bits. By
\eqref{eq:sparse-stage-control}, the entire stage costs at most
$C(T_j+1)^D$ for fixed $C,D$. Polynomial single-tape simulation overhead
can be included by increasing $D$~\cite{AroraBarak2009}.

Choose an integer $j_0$ sufficiently large, as specified below, and put
\begin{equation}
 n_j=\lfloor N_j^{1/2}\rfloor,\qquad
 L_G=\{x\in\Sigma^*:|x|=n_j\text{ for some }j\ge j_0\}.
 \label{eq:sparse-gap-language}
\end{equation}
A machine first computes $n=|x|$ and then regenerates the stages by the
exact searches above. At stage $i$ it computes $\lfloor\sqrt{N_i}\rfloor$
by integer arithmetic. It accepts if this equals $n$ and $i\ge j_0$,
and rejects as soon as it exceeds $n$. The search terminates since the
$N_i$ increase without bound. This machine accepts precisely $L_G$.

For completeness, input-length computation need not incur quadratic
single-tape cost. Mark all input cells on an auxiliary symbol track, and
make repeated sweeps that retain every second surviving mark. The parity
of the number of marks before each sweep records the next binary digit of
$n$ on a separate track near the left end. There are $O(\log(n+2))$
sweeps, each costing $O(n+1)$ transitions. After recording $n$, erase the
other input tracks in one sweep and return to the left end. This takes
$O((n+1)\log(n+2))$ transitions and leaves the binary length in a local
workspace. The subsequent arithmetic is simulated in that workspace,
so its polynomial simulation cost does not square the input scan.

On a yes-input of length $n_j$, all stage comparisons with $n_j$ can
be included in a fixed polynomial in $T_j$. Indeed
$\log N_j=O(j\log T_j+\log(N_{j-1}+1)+j)$ by the least-integer
choice in \eqref{eq:zero-stage-atom}. This bounds both the length of
$n_j$ and the cost of its comparisons and of the integer square roots.
Since $T_i\le T_j$ for $i\le j$ and $j\le T_j$, there are constants
$C',D'$ such that the total single-tape arithmetic time through stage $j$
is at most $C'T_j^{D'}$. But $N_j>2T_j^j$, so
\[
 \frac{C'T_j^{D'}}{N_j}<\frac{C'}2T_j^{D'-j}\longrightarrow0.
\]
The length-computation cost is also $o(N_j)$ because
$n_j\le\sqrt{N_j}$. Choose $j_0$ large enough that the complete
accepting run, including input access and simulation overhead, satisfies
\begin{equation}
 A_M(x)\le N_j\quad\text{when }|x|=n_j, j\ge j_0.
 \label{eq:sparse-accepting-time}
\end{equation}
This choice depends on the machine implementation, not on $\alpha$.
There is no circularity. The stage generator and the comparisons do not
depend on $j_0$. The final test $i\ge j_0$ uses a counter in the finite
control that saturates at $j_0$ and costs no additional transitions.
Hence the constants $C',D'$ do not depend on $j_0$, and $j_0$ may be
chosen after them.

The accepted lengths have arbitrarily large gaps. In fact
$T_{j+1}\ge\sqrt{N_j}$ eventually by \eqref{eq:sparse-stage-control},
and $N_{j+1}>2T_{j+1}^{j+1}$, so
$N_{j+1}/N_j>2N_j^{(j-1)/2}\to\infty$.
Enlarge $j_0$ so that the $n_j$ are strictly increasing with gaps.
The language $L_G$ is infinite because $\Sigma$ is nonempty, but it
contains no cylinder $v\Sigma^*$: such a cylinder contains words of every
length at least $|v|$. An infinite reverse-definite language contains a
cylinder by \eqref{eq:reverse-definite}, since its exceptional set $F$
is finite. Thus $L_G\notin\RDEF$.

For confidence $\alpha$, take the integer polynomial
\[
 Q(n)=\lceil2\ell_\alpha\rceil(n+1)^4.
\]
On an accepted length $n=n_j$, we have $N_j^{3/2}<(n+1)^3$, and hence
$Q(n)\ge2\ell_\alpha N_j^{3/2}$. Combining
\eqref{eq:sparse-large-radius} and \eqref{eq:sparse-accepting-time} gives
$A_M(x)\le N_j\le R_\alpha(Q(|x|))$.
\Cref{lem:reachability} proves $L_G\in\mathbf P^*_{\mu,\alpha}$.
The same machine works at every confidence, with only $Q$ changing,
so $L_G\in\mathbf P^*_\mu$. The strict upper inclusions follow from
$L_{\mathrm{even}}$ as in the preceding proof. The law is independent
of the alphabet, proving the claimed universal quantifier.
\end{proof}

%% file: app_antichain.tex
\section{Proof of \texorpdfstring{\cref{thm:antichain}}{the antichain theorem}}
\label{app:antichain}
\ifshownew\noindent\textcolor{newcol}{\footnotesize\bfseries New appendix: proof of Theorem~\ref{thm:antichain}.}\par\fi

For $n\ge1$, let $\log_2^{(0)}n=n$ and $\log_2^{(r+1)}n=\log_2(\log_2^{(r)}n)$ while the current value exceeds $1$, and put $\log^{*}n=\min\{r\ge0:\log_2^{(r)}n\le1\}$. Thus $\log^{*}n$ counts the successive binary logarithms needed to bring $n$ to at most $1$, with $\log^{*}1=0$. The smaller-input case in the proof is handled separately.

\begin{proof}
\emph{Basic properties.} The tower values satisfy
$\mathrm{tw}_j\ge2^{j+4}$, so
$\sum_{j\ge1}\mathrm{tw}_j^{-1}\le1/16$.
Thus the zero mass is positive and
$\E K\le-\tfrac12+\tfrac1{16}<0$.
For $s>0$,
$\E[(K^+)^s]=\sum_{j\in J_e}\mathrm{tw}_j^{3s-4}$,
which is finite exactly when $s<4/3$.
The nonzero point masses are rational, and for $k>0$ the value
$\mu^{\langle e\rangle}(k)$ is computed by generating
$\mathrm{tw}_j$ until $\mathrm{tw}_j^3\ge k$, then testing equality
and $j\in J_e$. At $-1$ the mass is $1/2$, and at smaller integers
it is zero. The tail of the zero-mass series after stage $j$ is at most
$2\,\mathrm{tw}_{j+1}^{-4}$. This effective remainder computes the
zero mass to any prescribed error and proves computability.
A positive atom gives
$\RDEF\subseteq\mathbf P^*_{\mu^{\langle e\rangle},\alpha}$
by \cref{lem:finite-prefix-class}, and the finite first positive
moment gives containment in $\Ptime$ by \cref{thm:kt-positive-moment}.

\emph{A slow-stage bound.} Fix $e$ and an index $j\ge2$ with
$j\notin J_e$. We show
\begin{equation}
 \begin{split}
 \Pr_{\mu^{\langle e\rangle}}
 (\tau_{\mathrm{tw}_j}\le\mathrm{tw}_j^{\,j})
 &\le\varepsilon_j^{\mathrm{ac}}\\
 &:=6\,\mathrm{tw}_j^{\,j}
 \exp\!\left(-\frac{\mathrm{tw}_j}{(\log_2\mathrm{tw}_j)^3}\right)
 +2\,\mathrm{tw}_j^{\,j}\,2^{-4\mathrm{tw}_j},
 \end{split}
 \label{eq:slow-stage}
\end{equation}
and $\varepsilon_j^{\mathrm{ac}}\to0$.
Put $q=\mathrm{tw}_j^{\,j}$. The superscript $\mathrm{ac}$
distinguishes the auxiliary quantities here from those in the earlier
appendices.

First, the atoms $\mathrm{tw}_i^3$ with $i>j$ have total mass at
most $2\,\mathrm{tw}_{j+1}^{-4}=2\cdot2^{-4\mathrm{tw}_j}$.
Replace each such proposal by zero to obtain an iid law
$\nu_j^{\mathrm{ac}}$. The two held walks agree until the first
replaced proposal, so a union bound over the first $q$ proposals
contributes the second term of \eqref{eq:slow-stage}.

Second, the positive atoms of $\nu_j^{\mathrm{ac}}$ are
$\mathrm{tw}_i^3$ with $i<j$. Since $j\notin J_e$, all are at most
\[
 b_j^{\mathrm{ac}}=\mathrm{tw}_{j-1}^3
 =(\log_2\mathrm{tw}_j)^3\ge32^3.
\]
Put $\kappa_j=1/b_j^{\mathrm{ac}}\le1/4$.
For $0\le x\le1$, $e^x\le1+x+x^2\le1+2x$, and
$e^{-\kappa_j}\le1-\kappa_j+\kappa_j^2/2$.
Hence, using $\sum_i\mathrm{tw}_i^{-1}\le1/16$,
\[
 \begin{split}
 \E_{\nu_j^{\mathrm{ac}}}e^{\kappa_j K}-1
 &\le-\tfrac{\kappa_j}2+\tfrac{\kappa_j^2}4
 +2\kappa_j\sum_{i<j}\mathrm{tw}_i^{-1}\\
 &\le-\tfrac{\kappa_j}2+\tfrac{\kappa_j}{16}
 +\tfrac{\kappa_j}8\le-\tfrac{\kappa_j}4.
 \end{split}
\]
Write $g=\E_{\nu_j^{\mathrm{ac}}}e^{\kappa_j K}
\le1-\kappa_j/4$ and $V(i)=e^{\kappa_j i}$.
The only negative proposal is $-1$, so
$\E[V(X_{t+1})\mid X_t=i]=gV(i)$ for $i\ge1$, and at zero
the held proposal adds at most
$\tfrac12(1-e^{-\kappa_j})\le\kappa_j/2$.
Iterating from $V(0)=1$ gives
$\sup_t\E V(X_t)\le1+(\kappa_j/2)/(\kappa_j/4)=3$.
Markov's inequality and a union bound give
\[
 \Pr_{\nu_j^{\mathrm{ac}}}(\tau_T\le q)
 \le3(q+1)e^{-\kappa_j T},
\]
which at $T=\mathrm{tw}_j$ is at most the first term of
\eqref{eq:slow-stage}, since $q\ge1$.

Finally, $j\le\log_2\mathrm{tw}_j$, so
$\ln(\mathrm{tw}_j^{\,j})\le(\log_2\mathrm{tw}_j)^2$,
while $\mathrm{tw}_j/(\log_2\mathrm{tw}_j)^3$ grows faster than
this square. The term $4(\ln2)\mathrm{tw}_j$ likewise dominates
$\ln(\mathrm{tw}_j^{\,j})$. Thus both terms of
$\varepsilon_j^{\mathrm{ac}}$ tend to zero.

\emph{Membership.} A machine $M_{e,j_0}$ computes $n=|x|$ in binary
in $O((n+1)\log(n+2))$ transitions, as in
\cref{sec:strictly-intermediate}. It then generates
$\mathrm{tw}_1,\mathrm{tw}_2,\dots$ in binary in a workspace near
the left end. It rejects as soon as some $\mathrm{tw}_j>n$, or
as soon as $\mathrm{tw}_j<n$ and $\mathrm{tw}_j$ is at least the
bit length of $n$, since then $\mathrm{tw}_{j+1}>n$.
When $\mathrm{tw}_j=n$, it accepts exactly if $j\in J_e$ and
$j\ge j_0$, and otherwise rejects.
All arithmetic is on numbers of $O(\log(n+2))$ bits, and there
are at most $\log^*n+1$ stages for $n\ge2$; smaller inputs are
handled in constant time. The local arithmetic, including single-tape
simulation, has polynomial cost in $\log(n+2)$ and is absorbed
in the scan bound.

The tests on $j$ are implemented in finite control: a residue modulo
$2^{e+1}$ tests $j\in J_e$, and a stage counter saturating at $j_0$
tests $j\ge j_0$. Their updates share the existing stage transitions.
Taking the product with this finite control adds states but no
transitions to the common length-and-tower computation. Hence
\[
 A_{M_{e,j_0}}(x)\le C(n+2)\log_2(n+2)
\]
for a constant $C$ independent of both $e$ and $j_0$.
Choose $j_*\ge2$ with
$C(\mathrm{tw}_j+2)\log_2(\mathrm{tw}_j+2)\le\mathrm{tw}_j^3$
for all $j\ge j_*$. For $|x|=\mathrm{tw}_j$ with $j\in J_e$
and $j\ge j_0\ge j_*$, a proposal equal to $\mathrm{tw}_j^3$
crosses the accepting index from every nonnegative position. Hence
\[
 \begin{split}
 \Pr(\tau_{A_{M_{e,j_0}}(x)}\le q)
 &\ge1-(1-\mathrm{tw}_j^{-4})^q\ge\alpha\\
 &\hspace{12mm}\text{whenever }q\ge\ell_\alpha\mathrm{tw}_j^4.
 \end{split}
\]
The budget $Q(n)=\lceil\ell_\alpha\rceil n^4$ is therefore
adequate, and $M_{e,j_0}$ is a decider, so it is sound.
Only $Q$ depends on $\alpha$, so
$L^{\langle e\rangle}_{j_0}\in\mathbf P^*_{\mu^{\langle e\rangle}}$.

\emph{Non-membership.} Suppose $(M,Q)$ witnesses
$L^{\langle e\rangle}_{j_0}
\in\mathbf P^*_{\mu^{\langle e'\rangle},\alpha'}$ with $e'\ne e$.
Let $j\in J_e$ with $j\ge j_0$, choose $a\in\Sigma$, and put
$x=a^{\mathrm{tw}_j}$. Then $x\in L^{\langle e\rangle}_{j_0}$,
but $xa\notin L^{\langle e\rangle}_{j_0}$ because
$\mathrm{tw}_j+1$ is not a tower value. The distinguishing-cell
argument of \cref{app:zero-drift} gives
$A_M(x)\ge\mathrm{tw}_j+1$.
For all large $j$, $Q(\mathrm{tw}_j)\le\mathrm{tw}_j^{\,j}$.
Since $j\notin J_{e'}$, \eqref{eq:slow-stage} for
$\mu^{\langle e'\rangle}$ gives
\[
 \Pr_{\mu^{\langle e'\rangle}}
 (\tau_{A_M(x)}\le Q(\mathrm{tw}_j))
 \le\varepsilon_j^{\mathrm{ac}}<\alpha'
\]
for large $j\in J_e$, contradicting completeness.

\emph{Conclusion.} The language $L^{\langle e\rangle}_{j_0}$
is infinite and contains no cylinder $v\Sigma^*$, so it is not
reverse definite by \cref{def:reverse-definite}.
It belongs to $\Ptime$ via $M_{e,j_0}$.
Membership and non-membership give pairwise incomparability, also
for the intersection classes. The language
$L^{\langle e\rangle}_{j_0}$ gives strict inclusion above $\RDEF$,
and, for any $e'\ne e$,
$L^{\langle e'\rangle}_{j_0}
\in\Ptime\setminus\mathbf P^*_{\mu^{\langle e\rangle},\alpha}$
gives strict inclusion below $\Ptime$.
\end{proof}

%% file: app_effectiveness.tex
\section{Effectiveness proofs}
\label{app:effectiveness}

This appendix proves \cref{prop:computable-rec,prop:arbitrary-re} and
records the equivalence between \cref{def:kt-computable-law} and the
standard notion of a computable probability measure on the integers.

On the discrete integers, an effectively open set is a computably
enumerable set of integers. The standard characterization of a computable
probability measure requires the measures of these sets to be uniformly
lower semicomputable: an algorithm produces increasing rational lower
bounds converging to the
measure~\cite{HoyrupRojas2009}.
\Cref{def:kt-computable-law} implies this property. At stage
$s\ge1$, let $E_s$ be the finite set of distinct integers enumerated so
far, indexed by order of first appearance as $k_j$. Approximate each
$\mu(k_j)$ in $E_s$ by $d_{s,j}$ with error at most
$e_{s,j}=2^{-s-j-1}$, and put
$L_s=\sum_{k_j\in E_s}\max\{0,d_{s,j}-e_{s,j}\}$. Then
$0\le\mu(E_s)-L_s\le2^{-s}$: each clamped summand is at most $\mu(k_j)$,
and since $\max\{0,d_{s,j}-e_{s,j}\}\ge d_{s,j}-e_{s,j}\ge\mu(k_j)-2e_{s,j}$,
the total deficit is at most $\sum_j2e_{s,j}=2^{-s}$. Since
$E_s\uparrow E$, the successive maxima of $L_s$ increase to $\mu(E)$,
uniformly in the enumerator.
Conversely, singletons and their complements are
effectively open, uniformly in the integer. Lower approximations to
both measures, whose sum is one, give upper and lower approximations to
each point probability. Searching until their difference is less than
$2^{-p}$ terminates and gives the presentation in
\cref{def:kt-computable-law}. Thus the two definitions agree here.

\begin{proof}[Proof of \cref{prop:computable-rec}]
Fix a rational $a$ with $0<a<\alpha$. We use a rational $a<\alpha$
because $\alpha$ itself need not be computable, and the decider must
compare against a fixed rational. We first give an algorithm which,
on a positive integer $q$, finds an integer $b_{\rm tail}(q)\ge1$ with
\begin{equation}
 \Pr(K>b_{\rm tail}(q))<\frac{a}{2q}.
 \label{eq:computable-tail-cutoff}
\end{equation}
For $s=1,2,\ldots$, approximate every mass in $[-s,s]\cap\Z$ with
error at most $2^{-s}/(2s+1)$. The algorithm in
\cref{def:kt-computable-law} provides this by choosing a sufficiently
large integer precision. If $d_{k,s}$ is the resulting approximation,
compute the rational lower bound
\[
 L_s=\sum_{k=-s}^s
       \max\left\{0,d_{k,s}-\frac{2^{-s}}{2s+1}\right\}.
\]
Then
$0\le\mu([-s,s])-L_s\le2^{1-s}$.
Continuity from below gives $\mu([-s,s])\to1$, so also $L_s\to1$.
Stop at the first $s$ for which the exact rational test
$L_s>1-a/(2q)$ succeeds and put $b_{\rm tail}(q)=s$.
The search terminates, and the mass outside $[-s,s]$ is then less than
$a/(2q)$, proving \eqref{eq:computable-tail-cutoff}.

If each of the first $q$ proposals is at most $b_{\rm tail}(q)$, then
\eqref{eq:positive-sum} bounds the maximal position by $qb_{\rm tail}(q)$.
For every integer $T>qb_{\rm tail}(q)$, a union bound gives
\[
 \Pr(\tau_T\le q)
 \le q\Pr(K>b_{\rm tail}(q))<a/2<\alpha.
\]
Now fix a language witness $(M,Q)$ at confidence $\alpha$ and define
the computable integer function
\[
 H(n)=\begin{cases}
 0,&Q(n)=0,\\
 Q(n)b_{\rm tail}(Q(n)),&Q(n)\ge1.
 \end{cases}
\]
Completeness forces $A_M(x)\le H(|x|)$ on every yes-instance.
A deterministic decider reads the input to determine its length,
computes $H(|x|)$, and simulates $M$ from its original initial
configuration through that logical index, checking index zero as well.
It accepts exactly when this simulation accepts and otherwise rejects.
The simulation includes every yes-instance's accepting index, while
soundness rules out acceptance on a no-instance. The computation of $H$ and the timed simulation both terminate.
\end{proof}

\begin{proof}[Proof of \cref{prop:arbitrary-re}]
For every law, soundness and completeness imply that a witnessing machine
accepts exactly its language, so the class is contained in $\RE$.
For the reverse inclusion, enumerate all machines as $M_1,M_2,\ldots$.
Let $e_j$ be the maximum halting index among halting computations of the
first $j$ machines on inputs of length at most $j$, or zero if none halt.
The set is finite. Set $\mathcal E_0=0$ and
$\mathcal E_j=1+\max\{\mathcal E_{j-1},e_j\}$. This is a halting-time envelope of the
classical busy-beaver type~\cite[Section~VI]{Rado1962}. Define the law by
$\mu(\mathcal E_j)=1/(j(j+1))$ for $j\ge1$, with zero mass elsewhere.
The atoms are strictly increasing, the masses sum to one, and
$\Pr(K\ge \mathcal E_j)=1/j$.

Fix a recognizing machine $M_i$, a confidence $\alpha$, and an accepted
input of length $n$. Put $j=\max\{i,n,1\}\le i+n$.
Its accepting index is at most $e_j<\mathcal E_j$. Any proposal at least $\mathcal E_j$
therefore reaches its accepting tail. Choose an integer
$k_\alpha\ge\max\{1,\ell_\alpha\}$ and the integer polynomial
$Q(n)=k_\alpha(i+n)$. Independent proposals give failure probability at most
$(1-1/j)^{Q(n)}\le e^{-Q(n)/j}\le e^{-k_\alpha}\le1-\alpha$.
No-instance paths have no accepting index, so the same law realizes every
semidecidable language at every fixed confidence. Since such languages
include undecidable ones even on a unary alphabet,
\cref{prop:computable-rec} implies that this law is noncomputable.
\end{proof}

\begin{proof}[Proof of \cref{cor:computable-union}]
\Cref{prop:computable-rec} proves both upper inclusions. For the lower
inclusions let $M$ be a total decider for $L\in\REC$. For $j\ge1$,
compute the maximum $t(j)$ of $T_M(x)$ over the finite set $|x|\le j$
by running $M$ on all those words. Set
\[
 \mathcal E_0=0,\qquad \mathcal E_j=1+\max\{\mathcal E_{j-1},t(j)\},\qquad
 \mu(\mathcal E_j)=\frac1{j(j+1)}\quad(j\ge1),
\]
with zero mass elsewhere. The atoms are strictly increasing computable
positive integers. The masses telescope to one and
$\Pr(K\ge \mathcal E_j)=1/j$. Given a positive integer $k$, generate the atoms
until $\mathcal E_j\ge k$, which terminates since $\mathcal E_j\ge j$. Return
$1/[j(j+1)]$ if $\mathcal E_j=k$ and zero otherwise; nonpositive integers have
zero mass. This computes the point probabilities.

On an accepted input of length $n$, put $j=\max(1,n)\le n+1$.
Then $A_M(x)\le t(j)<\mathcal E_j$. Any proposal at least $\mathcal E_j$ reaches the
accepting tail from any nonnegative position. Choose an integer
$k_\alpha\ge\max(1,\ell_\alpha)$ and the polynomial budget
$Q(n)=k_\alpha(n+1)$. Its failure probability is at most
\[
 (1-1/j)^{Q(n)}\le e^{-Q(n)/j}\le e^{-k_\alpha}\le1-\alpha.
\]
Soundness holds because $M$ is a decider. The same computable law works at all confidences, with only the budget
changing.
\end{proof}

%% file: app_moment.tex
\section{Infinite positive moments are not sufficient}
\label{app:moment-boundary}

The moment obstruction in \cref{thm:kt-positive-moment} has no converse
for arbitrary laws. We construct a law with all positive moments infinite
and class $\Ptime$. Its computability remains open.

\begin{proposition}
\label{prop:infinite-moments-p}
For every fixed finite nonempty input alphabet $\Sigma$, there is a law
$\mu$ supported on the positive integers such that
$\E K^r=\infty$ for every $r>0$, but
$\mathbf P^*_{\mu,\alpha}=\mathbf P^*_\mu=\Ptime$ for every
$0<\alpha<1$.
\end{proposition}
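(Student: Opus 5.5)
The plan is to build $\mu$ with sparse huge atoms whose masses are just large enough to make every positive moment infinite, but small enough that no atom contributes to reach at the scales that matter for polynomial budgets. Concretely, choose rapidly increasing integers $N_1<N_2<\cdots$ and put
\[
 \mu(1)=1-\sum_{j\ge1}p_j,\qquad \mu(N_j)=p_j,
\]
with $p_j=2^{-j}/2$, say, and $N_j$ chosen recursively so that $N_j^{1/j}p_j\ge1$. This forces $\E K^r\ge\sum_j p_jN_j^r=\infty$ for every $r>0$, since $p_jN_j^r\ge1$ once $j\ge1/r$. The law is supported on the positive integers, so the walk is nondecreasing with $X_q\ge q$. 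The lower inclusion $\Ptime\subseteq\mathbf P^*_{\mu,\alpha}$ is then immediate from $R_\alpha(q)\ge q$ and \cref{lem:reachability-dictionary}(ii).

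The upper inclusion is the substantive part. An atom $N_j$ is a single jump that could carry an accepting index of a non-polynomial machine within the budget. I would prevent this by making the sequence so sparse that no single language benefits. For a budget $q$, the radius satisfies
\[
 R_\alpha(q)\le q+\sum_{t\le q}(\text{atom sizes used}),
\]
and with probability at least $1-qP_{>j}$ no atom beyond index $j$ is used, where $P_{>j}=\sum_{i>j}p_i\le 2^{-j}$. So I would bound $R_\alpha(q)$ by $q(1+N_{j(q)})$, where $j(q)$ is the least $j$ with $q2^{-j}<1-\alpha$, giving $j(q)\approx\log_2 q+O(1)$. Thus $R_\alpha(q)\le q\,N_{\log_2 q+c_\alpha}$. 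This is not polynomial in general, so the recursion must be tied to machines.

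The key step, and the main obstacle, is choosing $N_j$ so that the class collapses to $\Ptime$. Reach at budget $q$ extends to $N_{\log q}$, yet a language witnessed this way must be decidable in polynomial time. Here I would use the busy-beaver-type freedom (the law need not be computable, hence the open question). Enumerate all pairs $(M_i,\text{polynomial } Q_i)$. Choose $N_j$ larger than $2^{2^{j}}$, and also so large that the following holds: for each $i\le j$ and each input length $n$ with $Q_i(n)$ in the window that makes atom $j$ usable, $M_i$ either accepts before index $n^{c}$ or does not accept before $N_j$. This cannot be forced for arbitrary $M_i$.

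So the alternative I would actually try is to make the atoms rare enough that using atom $j$ needs a budget exceeding every polynomial at the input lengths where it could be needed. Choose $p_j$ and $N_j$ jointly, taking $p_j$ tiny (e.g.\ $p_j=N_j^{-1/j}$, which still gives infinite moments) and $N_j$ enormous. Atom $j$ then becomes likely only when $q\gtrsim N_j^{1/j}$, and then it reaches $N_j=(N_j^{1/j})^j\approx q^j$. At budget $q$ the usable atoms are those with $N_j^{1/j}\lesssim q/\ell_\alpha$, and they reach at most $\max_j\{N_j: N_j\le (q/\alpha)^j\}$. This is still superpolynomial across $j$, so a diagonal choice is needed. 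The growth of $N_j$ must be so fast (say $N_{j+1}^{1/(j+1)}\ge 2^{N_j}$) that in each budget window $[N_j^{1/j},N_{j+1}^{1/(j+1)})$ the radius is $\Theta(q^j)$ only on a thin initial part. Then, via the distinguishing-cell/locality arguments, every language whose accepting times exploit these windows is shown to be decidable by simulating $M$ for $Q(n)^{j}$ steps with $j$ bounded in terms of $\deg Q$. I expect to spend most effort making this window argument rigorous, likely needing $N_j$ to depend noncomputably on halting behaviour as in \cref{prop:arbitrary-re}.
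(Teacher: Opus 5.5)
\noindent Your moment computation and the lower inclusion are fine: every increment is at least one, so $R_\alpha(q)\ge q$. The upper inclusion $\mathbf P^*_{\mu,\alpha}\subseteq\Ptime$, which is the substance of the statement, is not proved, and the window plan cannot prove it.

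No radius bound can work for any law in the statement. If $\Pr(K\ge T)<T^{-1/d}$ held for all large $T$, the tail-sum formula would give $\E K^r<\infty$ for $r<1/d$. So for every $d$ there are infinitely many $T$ with $\Pr(K\ge T)\ge T^{-1/d}$. For such $T$, $q=\lceil\ell_\alpha T^{1/d}\rceil$ ticks reach $T$ with probability at least $\alpha$, so $R_\alpha$ exceeds every polynomial along a subsequence of budgets.

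In your construction this is explicit. With $p_j=N_j^{-1/j}$, the linear budget $\lceil\ell_\alpha\rceil n$ at length $n_j=\lceil N_j^{1/j}\rceil$ reaches $N_j\ge(n_j-1)^j$ for every $j$. Hence the step ``simulate for $Q(n)^j$ steps with $j$ bounded in terms of $\deg Q$'' is false. Take an explicit recursion such as $N_{j+1}=2^{(j+1)N_j}$, where $n_{j+1}=2^{N_j}$ is cheaply recognisable from $|x|$. Then a standard time-hierarchy diagonal language, supported on the lengths $n_j$ and decided there in far fewer than $N_j=n_j^{\,j}$ steps, lies in $\mathbf P^*_{\mu,\alpha}\setminus\Ptime$. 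The distinguishing-cell arguments give only \emph{lower} bounds on accepting times, so they cannot repair this. And, as you note, a dichotomy of accepting times for all machines on all lengths in a window cannot be forced.

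The missing idea is that a single yes-instance refutes a witness. So you only need to defeat each candidate witness once, and the choices must be made in the order: diagonal input, then mass bound, then atom size.
\begin{itemize}
\item Enumerate all triples $(M_j,Q_j,m_j)$ with $L(M_j)\notin\Ptime$, $Q_j$ an integer polynomial and $m_j\ge2$. Let $W_{j-1}$ bound the atoms chosen so far, with $W_0=1$.
\item Since $L(M_j)\notin\Ptime$, some accepted input $x_j$ has $A_{M_j}(x_j)>W_{j-1}q_j$, where $q_j=Q_j(|x_j|)$. Otherwise simulating $M_j$ through $W_{j-1}Q_j(n)$ steps would decide $L(M_j)$ in polynomial time.
\item Take $\varepsilon_j\le\min\{\varepsilon_{j-1}/2,\,1/(2m_j(q_j+1))\}$ and $p_j=\varepsilon_j/2$. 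Then all atoms of index at least $j$, including every future one, have total mass at most $\varepsilon_j$.
\item Only then choose $N_j>W_{j-1}$ with $p_jN_j^{1/j}\ge1$, and set $W_j=N_j$.
\end{itemize}
Unless one of the first $q_j$ proposals is such a later atom, those proposals sum to at most $W_{j-1}q_j<A_{M_j}(x_j)$. Hence $\Pr(\tau_{A_{M_j}(x_j)}\le q_j)\le q_j\varepsilon_j<1/m_j$. A witness $(M,Q)$ for a non-$\Ptime$ language at confidence $\alpha\ge1/m$ occurs as some triple and fails completeness at that triple's $x_j$.

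No halting-time envelope is needed. Noncomputability enters only through restricting the enumeration to $L(M_j)\notin\Ptime$. The resulting law still has superpolynomial radius along the budgets $q\approx1/p_j$. The class collapses to $\Ptime$ because no fixed machine can exploit those budgets, not because of any bound on $R_\alpha$.
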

\begin{proof}
For a machine $M$, let $L(M)$ be the set of words it eventually accepts.
Take a sequence $(M_j,Q_j,m_j)_{j\ge1}$ containing every triple in which
$L(M_j)\notin\Ptime$, $Q_j$ is an integer polynomial budget, and
$m_j\ge2$ is an integer. Countability gives such a sequence.
Start with $W_0=1$ and $\varepsilon_0=1/2$. At stage $j$, choose an accepting
input $x_j$ such that, with $q_j=Q_j(|x_j|)$,
\[
 A_{M_j}(x_j)>W_{j-1}q_j.
\]
Such an input exists. Otherwise simulation through $W_{j-1}Q_j(n)$,
accepting exactly when $M_j$ accepts within that bound, would decide
$L(M_j)$ in polynomial time.
Choose a positive rational
\[
 \varepsilon_j\le\min\left\{\frac{\varepsilon_{j-1}}2,
                 \frac1{2m_j(q_j+1)}\right\},
 \qquad p_j=\varepsilon_j/2,
\]
and an integer $N_j>W_{j-1}$ so large that
$p_jN_j^{1/j}\ge1$. Set $W_j=N_j$.
After completing these choices, define one fixed law by
\[
 \mu(N_j)=p_j\ (j\ge1),\qquad
 \mu(1)=1-\sum_{j\ge1}p_j.
\]
All other masses are zero. Since $\varepsilon_j\le2^{-j-1}$, the sum of the
$p_j$ is at most $1/4$, so this is a probability law with positive mass
at one. Moreover $\varepsilon_\ell\le2^{-(\ell-j)}\varepsilon_j$ for $\ell\ge j$,
so $\sum_{\ell\ge j}p_\ell\le\varepsilon_j$. This bound includes all
future stages and protects every earlier allowance.

At stage $j$, every support point other than the later points
$N_\ell$, $\ell\ge j$, is at most $W_{j-1}$. If none of the first
$q_j$ proposals uses a later point, their sum is less than
$A_{M_j}(x_j)$. Thus, including when $q_j=0$,
\[
 \Pr(\tau_{A_{M_j}(x_j)}\le q_j)
 \le q_j\sum_{\ell\ge j}p_\ell
 \le q_j\varepsilon_j<1/m_j.
\]
Any language witness has $L=L(M)$, by soundness and completeness.
If it witnessed a language outside $\Ptime$ at confidence $\alpha>0$,
choose $m\ge2$ with $1/m\le\alpha$. Its triple $(M,Q,m)$ appears in
the sequence and the preceding inequality contradicts completeness.
Every fixed-confidence class is therefore contained in $\Ptime$.
All increments are at least one, so a polynomial-time decider reaches
its halting index within its ordinary time bound with probability one.
This proves the reverse inclusion at every confidence.

Finally, for any $r>0$ and every integer $j\ge1/r$,
$p_jN_j^r\ge p_jN_j^{1/j}\ge1$. Infinitely many terms of the moment
sum are at least one, so $\E K^r=\infty$.
The decidability bound for computable laws in \cref{prop:computable-rec}
does not settle the computability of this law: its language class is
already contained in $\Ptime\subseteq\REC$.
\end{proof}

%% file: app_calibration.tex
\section{Calibrating examples and confidence}
\label{app:calibration}

These examples calibrate the estimates and isolate the role of confidence and access costs. Each uses the conventions of \cref{sec:preliminaries}.

\subsection{The moment-bound constant}
\label{app:calibration-1}

A one-tick example attains the universal moment-bound coefficient.

\begin{remark}[The numerical constant in the moment bound]
\label{rem:moment-constant}
For any integer $T\ge1$ and $0<p\le1$, the law with masses $p$ at $T$
and $1-p$ at zero has $m_r=pT^r$ and
$\Pr(\tau_T\le q)=1-(1-p)^q$. At $q=1$ it attains equality in
\eqref{eq:moment-crossing}, so the universal coefficient $1$ cannot be
decreased. This is a family of laws as $T$ varies, not an asymptotic
equality for one fixed finite-moment law.
\end{remark}

\subsection{Zero-drift alternatives}
\label{app:calibration-2}

The fair walk calibrates the quadratic scale and distinguishes the two zero-drift hypotheses.

\begin{example}[The quadratic scale and the two alternatives]
\label{ex:zero-drift-calibration}
For the fair law $\mu(-1)=\mu(1)=1/2$,
\[
 \E_i\tau_T=T(T+1)-i(i+1)\quad(0\le i<T),\qquad
 \Pr_0(\tau_T\le q)\le\min\{1,q/T^2\}\quad(T\ge1).
\]
Indeed supported forward blocks give a finite mean hitting time.
Writing $e_i=\E_i\tau_T$, its first-step recurrence is $e_T=0$,
$e_0=1+(e_0+e_1)/2$ and
$e_i=1+(e_{i-1}+e_{i+1})/2$ for $0<i<T$; the displayed expression
is its unique solution. For the probability bound, the conditional gain
of $X_t^2$ is $1$ at positive states and $1/2$ at zero. Summing up to
$q\wedge\tau_T$ gives $\E_0X_{q\wedge\tau_T}^2\le q$, whereas a hit
contributes $T^2$. Thus
$\alpha T^2\le q_\alpha(T)\le\lceil T(T+1)/(1-\alpha)\rceil$,
so $q_\alpha(T)=\Theta_\alpha(T^2)$.
Conditional blocks sharpen the upper dependence to
$O(T^2(1+\ell_\alpha))$.

The two sufficient alternatives in \cref{prop:kt-zero-drift} are
distinct. Let $\Pr(W=j)=j^{-2}-(j+1)^{-2}$ for $j\ge1$ and
$m=\E W=\zeta(2)$. Telescoping gives $\Pr(W\ge j)=j^{-2}$;
the tail-sum formula (Fubini--Tonelli theorem~\cite[Theorem~1.7.2]{Durrett2019})
gives finite mean and infinite second moment.
Taking $K=1$ with probability $m/(1+m)$ and $K=-W$ otherwise gives
zero mean, bounded positive jumps, and infinite variance.
In contrast, masses $2^{-j-1}$ at each of $j$ and $-j$, $j\ge1$,
give zero mean and finite variance with unbounded positive jumps.
Both laws are computable and have class $\Ptime$ by the theorem.
For the first, the series for $m$ has remainder at most $1/N$ after
$N$ terms, which supplies certified approximations.
\end{example}

\subsection{Biased nearest-neighbour walk}
\label{app:calibration-3}

A nearest-neighbour law calibrates the exponential crossing scale.

\begin{example}[Exponential scale for a biased nearest-neighbor law]
\label{ex:negative-scale}
Let $\mu(1)=p$, $\mu(-1)=1-p$, where $0<p<1/2$, and put
$\rho=p/(1-p)$. Then $q_\alpha(T)=\Theta_{p,\alpha}(\rho^{-T})$.
To see the lower bound, the probabilities $\pi_i=(1-\rho)\rho^i$
are stationary: $\pi_i p=\pi_{i+1}(1-p)$ at every edge, including
the holding boundary at zero. For these nearest-neighbor updates,
$i\mapsto\max(0,i+K)$ is nondecreasing. Coupling the walk from zero
with one started in $\pi$ therefore gives
$\Pr_0(X_t\ge T)\le\rho^T$, and a union bound gives
$\Pr_0(\tau_T\le q)\le q\rho^T$ for $T\ge1$.
For the upper bound, supported forward blocks give finite mean.
Writing $e_i=\E_i\tau_T$, the first-step recurrence gives
\[
 \E_0\tau_T=\frac{1-p}{(1-2p)^2}(\rho^{-T}-1)
              -\frac{T}{1-2p}.
\]
For example, writing $d_i=e_i-e_{i+1}$ yields
$d_0=1/p$ and $pd_i-(1-p)d_{i-1}=1$; summing the solution up to
$T-1$ gives this formula. Markov's inequality completes the quantile
upper bound. This calibrates the exponential spatial order in
\eqref{eq:exponential-crossing}, not its particular choice of $\kappa$.
The monotone coupling is specific to nearest-neighbor jumps: under the
held proposal $-2$, starts $1$ and $2$ become $1$ and $0$.
\end{example}

\subsection{Beyond exponential moments}
\label{app:calibration-4}

The all-moments classification applies beyond exponential moments.

\begin{example}[Beyond exponential moments]
\label{ex:negative-no-exp-moment}
Let
\[
 \mu(-1)=\tfrac12,\qquad
 \mu(2^j)=2^{-j^2-2j-4}\ (j\ge1),\qquad
 \mu(0)=\tfrac12-\sum_{j\ge1}2^{-j^2-2j-4}.
\]
The positive mass is less than $1/2$. Its positive mean is at most
$\sum_{j\ge1}2^{-2j-4}=1/48$, using $j^2\ge j$, whereas
$\E K^-=1/2$. Every positive moment is finite because
$\sum_{j\ge1}2^{-j^2+(s-2)j-4}<\infty$ for each $s>0$.
For any $\kappa>0$, the terms
$e^{\kappa2^j}2^{-j^2-2j-4}$ do not tend to zero, so
$\E e^{\kappa K}=\infty$. The law is computable: all nonzero
point masses are rational, and the remainder of the zero-mass series
after $N$ terms is at most
$\sum_{j>N}2^{-2j-4}=2^{-2N-4}/3$.
\Cref{thm:negative-all-moments} gives class $\RDEF$, while the
exponential-contraction criterion does not apply.
\end{example}

\subsection{One-jump budgets}
\label{app:calibration-5}

The next comparison identifies when a one-jump budget has the correct order.

\begin{remark}[When a one-jump budget has the right order]
\label{rem:polynomial-tail-scale}
Fix $0<a<1$ and the positive law
$\mu(j)=j^{-a}-(j+1)^{-a}$, $j\ge1$. Its exact tail is $T^{-a}$.
If $\sum_{j=1}^qK_j\ge T$, then $\sum_{j=1}^q\min(K_j,T)\ge T$.
The tail-sum formula and an integral comparison give
$\E\min(K,T)=\sum_{m=1}^T m^{-a}\le T^{1-a}/(1-a)$.
Markov's inequality therefore bounds the hitting probability by
$qT^{-a}/(1-a)$. The one-jump lower bound from
\cref{prop:kt-polynomial-tail} gives
$q_\alpha(T)=\Theta_{a,\alpha}(T^a)$ for this fixed law.
For the same tail form with $a>1$, the mean is finite. The first-moment
bound gives $q_\alpha(T)\ge\alpha T/\E K$, while $K\ge1$ gives
$q_\alpha(T)\le T$. Thus the one-jump $T^a$ budget can be nonoptimal.
Its tail hypothesis is not necessary for class $\Ptime$ either:
$\delta_1$ already gives equality.
\end{remark}

\subsection{Sampling and access cost}
\label{app:calibration-6}

Exact sampling and charged access have different costs.

\begin{remark}[Effectiveness and access]
\label{rem:exp-effectiveness}
The law $\mue$ has computable rational point masses and an effective tail.
It can be sampled exactly with fair bits in almost surely finite time.
For example, take $U$ uniform on $(0,1)$, determine $J=\lfloor1/U\rfloor$
by reading successive binary digits until the containing interval
$(1/(j+1),1/j]$ is determined, and output $K=2^J$.
Except at the countably many endpoints, this determination takes finitely
many digits; those endpoints have probability zero.
The interval lengths give $\Pr(J=j)=1/(j(j+1))$.

Writing a full binary sample of $K$ has infinite expected cost: its
length is $J+1$, and
$\E J=\sum_{m\ge1}\Pr(J\ge m)=\sum_{m\ge1}1/m=\infty$.
The $\EXP$ classification uses the free configuration access of
\cref{sec:preliminaries}.
\end{remark}

\subsection{Confidence for position-dependent clocks}
\label{app:calibration-7}

Different confidence levels can give different classes for a position-dependent kernel.

\begin{remark}[Confidence can matter for a position-dependent kernel]
\label{rem:confidence-lanes}
Define a kernel $\zeta$ by: from zero, move to $1$ or $2$ with
probability $1/2$ each; from $i\equiv1\pmod4$, move to $i+4$;
from even $i\ge2$, move to $i+Y$, $Y\sim\mue$; from
$i\equiv3\pmod4$, hold. Every row is specified, and the last states
are unreachable from zero. The two reachable lanes never mix.
Then $\CC_{\zeta,\alpha}=\EXP$ for $0<\alpha<1/2$, and
$\CC_{\zeta,\alpha}=\Ptime$ for $1/2\le\alpha<1$.

Since $X_q\ge q$ surely, polynomial-time deciders have a sure
polynomial budget. If $\alpha<1/2$, choose an integer
$k\ge\max\{1,\ell_{2\alpha}\}$ and use $Q(n)=1+kp(n)$ for an
exponential logical bound $2^{p(n)}$, $p\ge1$. Conditional on the even
lane, one of its $kp(n)$ samples crosses this bound with probability at
least $2\alpha$, by \eqref{eq:exp-crossing}.
Conversely, unused independent samples give
$\max_{t\le q}X_t\le4q+2+\sum_{j\le q}Y_j$.
The union-bound argument of \cref{lem:ceiling} forces
$T\le4q+2+q2^{q/\alpha}$ at confidence $\alpha$, giving an exponential
logical-time ceiling for a polynomial budget.
For finite $q\ge1$ and $T>4q+2$, the slow lane fails, and the fast
lane has positive failure probability: all its $q-1$ samples can equal
two. Total success is therefore strictly below $1/2$ at every such finite
horizon.
At confidence at least $1/2$, accepting time is therefore at most
$4Q(n)+2$, giving a polynomial-time simulation.
\end{remark}

\subsection{Confidence transfer for iid clocks}
\label{app:calibration-8}

For iid laws, a restart comparison transfers several radius bounds between confidence levels.

\begin{remark}[Confidence dependence for iid laws]
\label{rem:iid-confidence}
For every classified iid law in this paper, $\mathbf P^*_{\mu,\alpha}$
is independent of $\alpha$, whereas \cref{rem:confidence-lanes} gives a
position-dependent kernel whose class depends on the confidence.
Standard restart amplification for iid held walks needs care, since the
held update is not monotone in its starting point. For
$\mu=\tfrac12\delta_2+\tfrac12\delta_{-3}$,
$\Pr_1(\tau_4\le3)=3/8$, while $\Pr_0(\tau_4\le3)=1/2$.
Proposal maxima can still be amplified: a single proposal at least $T$
crosses from every nonnegative position. For $0<\alpha<\alpha'<1$,
choose an integer $k\ge1$ with $(1-\alpha)^k\le1-\alpha'$.
Independence gives
\[
 \Pr(J_{kq}\ge u)=1-(1-\Pr(J_q\ge u))^k,
\]
so $u_\alpha(q)\le u_{\alpha'}(kq)$. Together with
\cref{lem:proposal-maxima}, this yields
\[
 R_{\alpha'}(q)\le R_\alpha(q)\le qR_{\alpha'}(kq)
 \qquad(q\ge1).
\]
The left inequality follows from monotonicity in confidence; the right
follows from $R_\alpha(q)\le q u_\alpha(q)\le q u_{\alpha'}(kq)
\le qR_{\alpha'}(kq)$. At $q=0$, both radii vanish.
Polynomial upper bounds and exponential lower bounds for the radius
therefore hold at one confidence if and only if they hold at every
confidence. Exponential upper bounds are preserved as well, so the
hypotheses of \cref{lem:reachability-dictionary}(i) and (iv) are
independent of confidence. The factor $q$ is too coarse to transfer a
polynomial lower bound of arbitrary positive degree, a subpolynomial
upper bound, or an exact class below $\Ptime$. For the sparse law of
\cref{thm:zero-drift-counterexample}, whose class lies between $\RDEF$
and $\Ptime$ with strict upper containment, confidence dependence remains
open.
\end{remark}

%% file: fuzzy.tex
\section{Common budgets over parameter sets}
\label{sec:fuzzy}

We ask whether one machine and one budget work for every fixed clock
in a nonempty allowed parameter set. For the two families below, the
criterion depends on the set's infimum.

\label{sec:two-layers}
\label{sec:joint-limits}
Let $\Theta$ denote the parameter domain of the clock family, and let its possibility distribution $\pi:\Theta\to[0,1]$ assign a degree of possibility to each fixed parameter. For a cut level $\gamma\in(0,1]$, write $\Theta_\gamma=\{\theta\in\Theta:\pi(\theta)\ge\gamma\}$; we consider nonempty cuts~\cite{Zadeh1978}. These grades specify allowed fixed parameter values; the probability law governs the clock at the chosen parameter throughout a run, rather than choosing a new parameter at every tick. The lazy-speed domain is $\Theta=[0,1]$, with positive-speed examples restricted to $(0,1]$; the activation domain is $\Theta=[0,1/2]$, with the positive-return example restricted to $(0,1/2]$. Only the infimum matters here: $\pi(p)=p$ gives lazy cuts $[\gamma,1]$ with linear budgets, while a constant possibility on $(0,1]$ allows speeds approaching zero and no common finite budget. For parity clocks, a cut $(0,1/2]$ has common class $\Ptime$ although each member has class $\EXP$, whereas $\pi(\beta)=2\beta$ gives positive-infimum cuts $[\gamma/2,1/2]$ with common class $\EXP$.

\subsection{A lazy-clock benchmark}
\label{sec:lazy}

At every fixed positive speed, the lazy clock has class $\Ptime$.
A family of such clocks can fail to admit a common finite budget.

The family is
\begin{equation}
 \mu_p=(1-p)\delta_0+p\delta_1,\qquad 0\le p\le1,
 \label{eq:lazy}
\end{equation}
with the held update \eqref{eq:hold}. Its position after $Q$ ticks is
binomial with parameters $Q$ and $p$; there are no backward proposals.
Let $\mathcal S\subseteq[0,1]$ be a nonempty allowed set, for instance a cut
$\Theta_\gamma$, put $p_*=\inf \mathcal S$, and define the common-budget class
$\mathcal U_{\mathcal S,\alpha}$ by \cref{def:kt-ac1} with the budget condition
replaced by
\[
 \inf_{p\in \mathcal S}\Pr^{(p)}\bigl(X_{Q(|x|)}\ge A_M(x)\bigr)\ge\alpha
 \quad\text{on yes-instances},
\]
one machine and one budget for all $p\in \mathcal S$. Also define
$q_{\alpha,\mathcal S}(T)=\min\{Q\in\N:\inf_{p\in \mathcal S}\Pr^{(p)}(X_Q\ge T)\ge\alpha\}$.
Because this walk never decreases, the event is threshold crossing by tick
$Q$. The criterion depends on $p_*$.

\begin{remark}[Uniformity criterion for lazy clocks]
\label{thm:lazy}
For the family \eqref{eq:lazy} and every $0<\alpha<1$:
\begin{enumerate}[label=(\roman*),leftmargin=*]
\item If $p_*>0$, then
 $\frac{\alpha T}{p_*}\le q_{\alpha,\mathcal S}(T)
 \le\left\lceil\frac{T}{p_*(1-\alpha)}\right\rceil$,
 for $T\ge1$. Moreover $p_*q_{\alpha,\mathcal S}(T)/T\to1$,
uniformly over all nonempty allowed sets $\mathcal S$ with $p_*>0$,
in the precise sense
 \[
 \lim_{T\to\infty}
 \sup_{\substack{\varnothing\ne \mathcal S\subseteq(0,1]\\p_*:=\inf \mathcal S>0}}
 \left|\frac{p_*q_{\alpha,\mathcal S}(T)}{T}-1\right|=0.
 \]
\item If $p_*=0$, then $q_{\alpha,\mathcal S}(T)=\infty$ for every $T\ge1$.
\item $\mathcal U_{\mathcal S,\alpha}=\Ptime$ if $p_*>0$, and
 $\mathcal U_{\mathcal S,\alpha}=\{\varnothing,\Sigma^*\}$ if $p_*=0$, while every
 single $p>0$ gives $\mathbf P^*_{\mu_p,\alpha}=\Ptime$.
\end{enumerate}
\end{remark}
\begin{proof}[Proof sketch]
Couple successes with common uniforms; binomial crossing probabilities are nondecreasing and continuous in $p$, so their infimum is the value at $p_*$, attained or not.
For $p_*>0$, Markov's inequality gives both displayed bounds, and $H_T$ is a sum of $T$ geometric waits with mean $T/p$ and variance $T(1-p)/p^2$.
Chebyshev's bound $\Pr(|H_T-T/p|\ge\varepsilon T/p)\le1/(\varepsilon^2T)$ is uniform in $p>0$; squeezing the quantiles and allowing rounding error $p_*/T\le1/T$ proves the uniform limit.
When $p_*=0$, $\Pr^{(p)}(X_Q\ge T)\le Qp$ for $T\ge1$, so no common finite budget exists.
Bounded jumps give the upper class bound by \cref{prop:bounded-forward}, and a positive infimum supplies a common polynomial budget to any polynomial-time decider; at zero infimum, \cref{prop:kt-stuck-walk}'s time-zero argument gives only the trivial languages.
\end{proof}

\subsection{Common budgets for the activation family}
\label{sec:joint}

For the activation clocks of \cref{sec:activation}, let the allowed
backward probabilities form a nonempty set $\mathcal B\subseteq[0,1/2]$
and put $b_*=\inf \mathcal B$. A possibility cut is one such set.

\begin{definition}[Common-budget class for the activation family]
\label{def:joint}
For $0<\alpha<1$, a language $L$ belongs to $\mathcal U_{\mathcal B,\alpha}$ if
some deterministic machine $M$ and integer polynomial $Q$ satisfy
$A_M(x)=\infty$ on every $x\notin L$, and, on every $x\in L$,
$A_M(x)<\infty$ and
\[
 \inf_{\beta\in \mathcal B}\Pr^{(\beta)}\bigl(\tau_{A_M(x)}\le Q(|x|)\bigr)\ge\alpha.
\]
One machine and one budget serve every $\beta\in \mathcal B$. Also let
$q_{\alpha,\mathcal B}(T)=\min\{q\in\N:\inf_{\beta\in \mathcal B}\Pr^{(\beta)}(\tau_T\le
q)\ge\alpha\}$.
\end{definition}

\begin{theorem}[Common budgets over backward-probability cuts]
\label{thm:joint}
Let $\mathcal B\subseteq[0,1/2]$ be nonempty with $b_*=\inf \mathcal B$, and fix
$0<\alpha<1$. Then:
\begin{enumerate}[label=(\roman*),leftmargin=*]
\item if $b_*>0$, then $\mathcal U_{\mathcal B,\alpha}=\EXP$ and
 \[
 \lim_{T\to\infty}\frac{q_{\alpha,\mathcal B}(T)}{\log_2T}
 =\left(1+\frac1{b_*}\right)\ell_\alpha;
 \]
\item if $b_*=0$, then $\mathcal U_{\mathcal B,\alpha}=\Ptime$ and
 $q_{\alpha,\mathcal B}(T)=q_\alpha^{(0)}(T)$ for every $T\in\N$, and
 $q_{\alpha,\mathcal B}(T)/T\to1$.
\end{enumerate}
Constants may depend on $\mathcal B$ and $\alpha$; in (ii) they may be chosen to
depend only on $\alpha$. Neither the attainment of the infimum nor the
boundary convention of the cut is used.
\end{theorem}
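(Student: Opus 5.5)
The plan is to reduce both parts to crossing probabilities of individual clocks $\eta_\beta$, controlled uniformly in $\beta$ by two facts. The first is a monotone coupling in the backward probability. The second is the parameter-free ceiling of \cref{lem:ceiling}, which holds for every $\beta\in[0,1/2]$ with constants independent of $\beta$. The ceiling gives $\mathcal U_{\mathcal B,\alpha}\subseteq\EXP$ for every $\mathcal B$: a common witness is in particular a witness for any single $\beta\in\mathcal B$. It therefore remains to obtain the lower inclusions and the budget asymptotics.

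\emph{Part (i), $b_*>0$.} For the class, \cref{lem:amplify} gives $\Pr_i^{(\beta)}(\tau_T>2r)\le(1-\beta/h(T))^r\le(1-b_*/h(T))^r$ for every $\beta\in\mathcal B$. The budget $Q(n)=2k\,p(n)$ with $k\ge b_*^{-1}\ell_\alpha$ therefore serves all of $\mathcal B$ at once, exactly as in the proof of $\EXP\subseteq\CC_{\eta_\beta,\alpha}$. This gives $\mathcal U_{\mathcal B,\alpha}=\EXP$.

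For the limit I will use the sandwich \eqref{eq:sharp-sandwich}, in which only $N_q$ depends on $\beta$, through the iid cycle lengths $1+\Geom(\beta)$. I couple the geometric variables by common uniforms, so that the cycles are pathwise shorter for larger $\beta$ and hence $N_q^{(\beta)}\ge N_q^{(b_*)}$ for $\beta\ge b_*$. The left side of the sandwich is nonincreasing in $N_q$, so it is smallest at $b_*$, even when $b_*\notin\mathcal B$, and the proof of \cref{thm:sharp} gives
\[
 \inf_{\beta\in\mathcal B}\Pr^{(\beta)}\bigl(\tau_T\le\lfloor c\log_2T\rfloor\bigr)\ \ge\ 1-\E\Bigl(1-\frac1{h(T)}\Bigr)^{N_q^{(b_*)}}\ \longrightarrow\ 1-e^{-c/d_{b_*}}.
\]
For the reverse direction, given $\varepsilon>0$ I pick $\beta\in\mathcal B$ with $d_\beta>d_{b_*}-\varepsilon$. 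The infimum is at most the single-$\beta$ probability, which tends to $1-e^{-c/d_\beta}$. Squeezing the quantile as in \cref{thm:sharp} and then letting $\varepsilon\downarrow0$ yields the limit $(1+1/b_*)\ell_\alpha$. Attainment of the infimum is never used.

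\emph{Part (ii), $b_*=0$.} The key claim is that $\Pr^{(\beta)}(\tau_T\le q)\ge\Pr^{(0)}(\tau_T\le q)$ for all $\beta$, $q$ and $T$. I expect this step to be the main obstacle, because the held/return rows are not monotone pathwise under the naive common-uniform coupling.

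The coupling I would try runs the two clocks together until the $\beta$-clock takes a return at some even index $i$; the $0$-clock holds at that tick. On the next tick the $\beta$-clock makes a heavy jump to $i+2^m\ge i+2$, while the $0$-clock, still at $i$, gains at most $2$. I then resynchronise so that the $\beta$-clock remains ahead and the running maxima keep the order, with positions differing by the nonnegative excess $2^m-2$. The remaining task is to check that later even-state moves can be matched: both clocks use the same even-row randomness from positions whose difference is nonnegative, and the first heavy jump is shared.

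Given the claim, $\inf_{\beta\in\mathcal B}\Pr^{(\beta)}(\tau_T\le q)\ge\Pr^{(0)}(\tau_T\le q)$. Conversely, for fixed finite $q$ the crossing probability is continuous in $\beta$ at $0$, since finitely many ticks give a difference bounded by $q\beta$, and $b_*=0$ lets $\beta\to0$ within $\mathcal B$. Hence the infimum equals $\Pr^{(0)}(\tau_T\le q)$. This gives $q_{\alpha,\mathcal B}=q_\alpha^{(0)}$ exactly, and \cref{thm:sharp} gives $q_{\alpha,\mathcal B}(T)/T\to1$.

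For the class in part (ii), the lower inclusion $\Ptime\subseteq\mathcal U_{\mathcal B,\alpha}$ follows from the $\beta=0$ budget of \cref{thm:activation} together with the domination. For the upper inclusion, a common witness must in particular cross $A_M(x)$ at confidence $\alpha$ under $\beta\to0$. By the continuity just shown it then crosses under $\eta_0$ at confidence $\alpha$, so the linear bound $2Q+b_{\alpha,\rm tail}$ from the $\beta=0$ proof caps the accepting index. This gives $\mathcal U_{\mathcal B,\alpha}\subseteq\Ptime$, with constants depending only on $\alpha$.
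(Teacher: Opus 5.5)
Your proposal is correct and follows essentially the paper's route: the same even-row coupling for the domination over $\eta_0$ (common coin for $+2$, returns only on coin failure, and the odd-state jump of at least $3$ restoring the order of running maxima), finite-horizon continuity to pass a common witness to $\eta_0$ in (ii), amplification at rate $b_*$, and common-uniform ordering of the renewal counts for the lower envelope in (i). The one variation is that in (i) you bound the infimum from above by a single $\beta\in\mathcal B$ near $b_*$ and then let $\beta\downarrow b_*$, whereas the paper compares directly with $\eta_{b_*}$ via \cref{lem:coupling}; both work. Note also that the lower sandwich bound $1-\E(1-1/h(T))^{N_q}$ is nondecreasing, not nonincreasing, in $N_q$, which is what your conclusion actually uses.
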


Every fixed $\beta>0$ has $\CC_{\eta_\beta,\alpha}=\EXP$ by
\cref{thm:activation}; the theorem gives common class $\Ptime$ when
the allowed positive values have infimum zero. The proof rests on
finite-horizon continuity, a comparison of running maxima with the
zero-parameter clock, and the resulting expected crossing-time bound uniform in $\beta$.

\begin{lemma}[Finite-horizon parameter continuity]
\label{lem:coupling}
For every $\beta,\beta'\in[0,1/2]$, $q\ge0$, and $T\in\N$,
\[
 \bigl|\Pr^{(\beta)}(\tau_T\le q)-\Pr^{(\beta')}(\tau_T\le q)\bigr|
 \le\min\{1,q|\beta-\beta'|\}.
\]
Consequently, for nonempty $\mathcal B$ and $b_*=\inf \mathcal B$,
$\inf_{\beta\in \mathcal B}\Pr^{(\beta)}(\tau_T\le q)
\le\Pr^{(b_*)}(\tau_T\le q)$.
\end{lemma}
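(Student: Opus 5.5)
I would prove the continuity bound by a synchronous coupling of the two parameter values on common randomness. Let $U_1,U_2,\dots$ be iid uniform on $(0,1)$ and $Y_1,Y_2,\dots$ iid copies of $Y=2^J$ with law \eqref{eq:heavy}. Both clocks start at $0$. At each tick they read the same $U_{t+1}$ and $Y_{t+1}$ and update as follows. From $0$ they move to $1$. From an odd index $i$ they move to $i+1+Y_{t+1}$. From a positive even index $i$ they move to $i+2$ if $U_{t+1}<1/2$, to $i-1$ if $1/2\le U_{t+1}<1/2+\beta$ (respectively $1/2+\beta'$), and otherwise they hold. Each marginal is then the Markov chain $\eta_\beta$ (respectively $\eta_{\beta'}$) of \cref{tab:kernel}, since every row of the table is realised with the correct probabilities.

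The key observation is that the two paths agree up to the first tick at which they are at a common positive even index and the uniform falls in the discrepancy interval between $1/2+\min(\beta,\beta')$ and $1/2+\max(\beta,\beta')$. This interval has length $|\beta-\beta'|$. Before disagreement both paths occupy the same index, so at odd indices and at zero they make identical moves, and at even indices they differ only on this event. Let $D_t$ be the event that the discrepancy occurs at tick $t$ while the paths still agree. The event $D_t$ is contained in $\{U_t \text{ lies in the interval}\}$, which has probability $|\beta-\beta'|$. A union bound over $t=1,\dots,q$ shows that the two paths coincide on ticks $0,\dots,q$ except on an event of probability at most $q|\beta-\beta'|$.

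On the complement of that event the indicators $\one\{\tau_T\le q\}$ agree for both clocks. Hence
\[
\bigl|\Pr^{(\beta)}(\tau_T\le q)-\Pr^{(\beta')}(\tau_T\le q)\bigr|\le\Pr(\text{paths differ by tick }q)\le q|\beta-\beta'|.
\]
The trivial bound $1$ gives the minimum. For $q=0$ both probabilities equal $\one\{T=0\}$, and the bound holds trivially.

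For the consequence, take $\beta_n\in\mathcal B$ with $\beta_n\downarrow b_*$; such a sequence exists by the definition of the infimum, also when $b_*\in\mathcal B$. By the continuity bound with $q$ fixed,
\[
\inf_{\beta\in\mathcal B}\Pr^{(\beta)}(\tau_T\le q)\le\Pr^{(\beta_n)}(\tau_T\le q)\le\Pr^{(b_*)}(\tau_T\le q)+q(\beta_n-b_*).
\]
Letting $n\to\infty$ gives the claimed inequality. Note that $b_*\in[0,1/2]$, so $\eta_{b_*}$ is defined.

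The only step needing care is verifying that the coupled marginals are exactly the kernels $\eta_\beta$, in particular that the hold probability $1/2-\beta$ is nonnegative. This holds because $\beta\le 1/2$. Also, agreement is needed only up to tick $q$, so later divergence, for instance after a return, is irrelevant.
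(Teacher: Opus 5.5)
Your proof is correct and follows essentially the same route as the paper. The paper couples the two clocks maximally whenever their histories agree (the rows differ only at positive even states, with total variation $|\beta-\beta'|$), takes a union bound over the $q$ ticks, and obtains the infimum bound from a sequence in $\mathcal B$ tending to $b_*$; your common-uniform construction is an explicit realisation of that maximal coupling.
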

\begin{proof}
Rows at zero and at odd states agree. At each positive even state the
rows transfer mass $|\beta-\beta'|$ between the hold and the move $-1$;
their total variation distance is $|\beta-\beta'|$. Whenever the histories
agree, couple the next states maximally. A union bound bounds the chance
of disagreement by tick $q$ by $q|\beta-\beta'|$. On agreement, the
threshold-crossing events agree. Probabilities differ by at most one as well.
Since $b_*$ is in the closure of $\mathcal B$, take a sequence of allowed
parameters tending to it. Continuity at the fixed horizon gives the
upper bound on the infimum.
\end{proof}

\begin{lemma}[Comparison with the zero-parameter clock]
\label{lem:baseline-order}
For every $\beta\in[0,1/2]$, $q,T\in\N$,
$\Pr^{(\beta)}(\tau_T\le q)\ge\Pr^{(0)}(\tau_T\le q)$.
If $\inf \mathcal B=0$, their lower envelope is exactly
$\inf_{\beta\in \mathcal B}\Pr^{(\beta)}(\tau_T\le q)=\Pr^{(0)}(\tau_T\le q)$.
\end{lemma}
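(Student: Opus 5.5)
We prove \cref{lem:baseline-order} by a pathwise coupling under which the $\beta$-clock's running maximum is at least the zero-parameter clock's running maximum at every tick. \Cref{thm:monotone-rows} does not apply, because the parity rows fail monotone forward ability. We instead exploit the cycle structure used in the proof of \cref{thm:sharp}.

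\textbf{Common randomness.} We generate both clocks from the same iid heavy samples $Y_1,Y_2,\dots$, where the $r$th sample is used at the $r$th odd-state visit of each clock. At even-state ticks we use the same uniforms $U_t$. When $U_t<1/2$, both clocks take the $+2$ move. When $U_t\ge1/2$, the zero-parameter clock holds, while the $\beta$-clock returns by one if $U_t<1/2+\beta$ and holds otherwise. Each marginal then has the correct kernel. The complication is that the two clocks may sit in different phases at the same tick, so their tick indices do not align.

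\textbf{Key claim.} Under a suitable coupling, the running maxima satisfy $Z^{(\beta)}_t\ge Z^{(0)}_t$ for all $t$. From zero the first two ticks are identical, bringing both clocks to $2+Y_1$. After that, $\eta_0$ moves by $+2$ at each even tick with probability $1/2$. I would then compare the two clocks as follows.

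Let $\eta_0$ follow its own uniforms. For the $\beta$-clock, I would define the even-tick uniforms so that its $+2$ moves happen at the same ticks as in $\eta_0$ while it is in an even phase. The $\beta$-clock additionally spends ticks on returns and subsequent heavy jumps. A return by one followed by a jump $1+Y$ yields a net gain of $Y\ge2$. During the return tick and the jump tick, the zero-parameter clock gains at most $2+2=4$. This is not directly dominated, since $Y=2$ with probability $1/2$.

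To repair this, I would compare positions in a tick-shifted way. Let $E_t$ count the $+2$ moves taken by the $\eta_0$ clock. I would show that $X^{(\beta)}_t\ge2+Y_1+2F_t-1$, where $F_t$ counts the $+2$ moves of the $\beta$-clock. This follows because each completed cycle contributes $Y_r\ge2$ and the pending odd state costs only one. I would then build the coupling so that the $\beta$-clock's $+2$ moves occur at least as often as those of the $\eta_0$ clock, in the sense that $2F_t+\sum_{\text{completed}}Y_r\ge2E_t$. To arrange this, on each return tick the $\beta$-clock uses up one $\eta_0$ $+2$-opportunity at most, which the subsequent $Y\ge2$ compensates. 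The jump tick is a second lost tick, and it is covered by the $+1$ built into the jump together with the additional $+1$ from the cycle bookkeeping, where the return $-1$ and the jump $+1$ cancel.

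\textbf{Main obstacle.} The main obstacle is making this accounting exact, including the $-1$ at a pending odd state, so that the running maximum dominates rather than the current position. At the pending odd state the $\beta$-clock is one below its previous even position, but its maximum was already attained there. If the pathwise accounting breaks, I would fall back on a distributional argument. Conditioned on the number of return ticks, the $\beta$-clock's maximum stochastically dominates $2+Y_1+2\,\mathrm{Bin}(q-2-2k,1/2)+2k$, which in turn dominates $2+Y_1+2\,\mathrm{Bin}(q-2,1/2)$. This holds because each sacrificed pair of ticks yields a sure gain of at least $2$, which exceeds the mean gain of $1$ on those ticks. A stochastic comparison also suffices here, since $\mathrm{Bin}(2,1/2)\le2$ surely.

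\textbf{Second assertion.} The lower-envelope statement then follows immediately. The inequality just proved shows that the infimum over $\mathcal B$ is at least $\Pr^{(0)}(\tau_T\le q)$. \Cref{lem:coupling} with $b_*=0$ gives the reverse inequality.
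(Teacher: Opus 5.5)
There is a genuine gap. Your \textbf{Common randomness} coupling is exactly the right one, but you then misanalyse it. The argument breaks at the claim that over a return tick and the following jump tick the zero-parameter clock ``gains at most $2+2=4$''. Under your own coupling a return occurs only when $1/2\le U_t<1/2+\beta$, and on every such tick the zero-parameter clock holds. Over the pair it therefore gains at most $2$, on the jump tick only, while the $\beta$-clock nets $-1+1+Y\ge2$.

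So there is no phase misalignment to repair. The missing idea is a one-step invariant: for $t\ge2$, an even $\beta$-position is at least $X^{(0)}_t$, and an odd one is at least $X^{(0)}_t-1$. A shared coin success moves both clocks by $+2$. A coin failure holds the baseline while the $\beta$-clock holds or drops by one. From an odd state, the displacement $1+Y'\ge3$ exceeds the baseline's at most $+2$ and restores the even comparison. For running maxima, an odd $\beta$-position one below the baseline comes right after an even position at least the baseline, which held on that return tick. The baseline is nondecreasing, so the maxima are ordered.

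Neither of your repairs closes the gap as written. Your bookkeeping inequality $2F_t+\sum Y_r\ge2E_t$ is actually true under the shared coupling, since the $\beta$-clock misses a baseline $+2$ only on jump ticks. Your justification does not establish it, though. It counts two lost ticks per cycle and covers the second with an extra $+1$ ``from the cycle bookkeeping''. In the same breath the return $-1$ and the jump $+1$ cancel, so that unit does not exist. If the baseline really could gain $4$ over the pair, a cycle with $Y=2$ (probability $1/2$) would leave the $\beta$-clock behind.

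The distributional fallback is false. The quantity $2\,\mathrm{Bin}(q-2-2k,1/2)+2k$ does not stochastically dominate $2\,\mathrm{Bin}(q-2,1/2)$. For $q=4$ and $k=1$, the first is $2$ surely, while the second equals $4$ with probability $1/4$. A sure gain that only matches a mean gain never yields stochastic dominance. The bound $\mathrm{Bin}(2,1/2)\le2$ would help only if each sacrificed pair guaranteed two forward units, not one.

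Your derivation of the lower-envelope equality, from the inequality together with \cref{lem:coupling} at $b_*=0$, is correct.
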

\begin{proof}
Write $X_t^{(0)}$ for the baseline. Couple both clocks' first two ticks
using the same initial heavy sample, so both positions at tick two are
$2+Y$ and are even. Subsequently use independent fair coins, one per
tick: the baseline moves $+2$ on success and holds on failure. At an
even state the $\beta$-clock also moves $+2$ on success; on failure it
returns by one with conditional probability $2\beta$ and otherwise
holds. This gives the required probabilities $1/2,\beta,1/2-\beta$.
At an odd state it takes its mandatory displacement $1+Y'\ge3$ with a
fresh independent heavy sample. All extra choices are independent of
the coins and earlier samples, so each marginal has its stated kernel.

Inductively, at every $t\ge2$ an even $\beta$-position is at least
$X_t^{(0)}$, and an odd $\beta$-position is at least $X_t^{(0)}-1$.
From an even state, coin success preserves the comparison; coin failure
holds the baseline and either holds the other clock or lowers it by one.
From an odd state, the displacement at least three restores the even
comparison even if the baseline moves by two. If the $\beta$-position
is one below the baseline, its preceding even position was at least that
baseline position: the baseline held on the return tick. Thus at every
tick its running maximum is at least the baseline position, which is
itself nondecreasing. The running maxima, and hence the hitting events,
are ordered. The equality of lower envelopes when $\inf \mathcal B=0$ follows
from this inequality and \cref{lem:coupling}.
\end{proof}

\begin{lemma}[Uniform linear expected crossing time]
\label{lem:unif-expectation}
From zero, for every $\beta\in[0,1/2]$ and integer $T\ge1$,
\[
 \E^{(\beta)}\tau_T\le\E^{(0)}\tau_T\le2T,
 \qquad
 \Pr^{(\beta)}(\tau_T>q)\le\frac{2T}{q}\quad(q\ge1).
\]
\end{lemma}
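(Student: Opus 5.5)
The plan is to reuse two earlier results: the pathwise coupling in the proof of \cref{lem:baseline-order}, and the expectation bound for the deleted clock in \cref{lem:baseline}.

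\textbf{Step 1: compare expectations.} Under the coupling of \cref{lem:baseline-order}, at every tick $t\ge2$ the running maximum of the $\beta$-clock is at least the baseline position $X^{(0)}_t$. For $t\le1$ both clocks occupy the same positions $0$ and $1$. Hence, for every integer $T\ge1$, the event $\{\tau_T^{(0)}\le t\}$ is contained in $\{\tau_T^{(\beta)}\le t\}$ for every $t$. This gives the pathwise bound $\tau^{(\beta)}_T\le\tau^{(0)}_T$ under the coupling. Taking expectations yields $\E^{(\beta)}\tau_T\le\E^{(0)}\tau_T$, because each coordinate of the coupling has the correct marginal kernel.

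\textbf{Step 2: bound the baseline.} \Cref{lem:baseline} already gives $\E_i\tau_T\le2T$ for the deleted clock $\eta_0$, from every start $0\le i<T$, and in particular from zero. Its Bernoulli coupling is the reason: each tick advances the baseline by at least a fair coin, so $\tau_T$ is at most the waiting time for $T$ successes, whose mean is $2T$.

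\textbf{Step 3: the tail bound.} Markov's inequality gives
\[
\Pr^{(\beta)}(\tau_T>q)\le\frac{\E^{(\beta)}\tau_T}{q}\le\frac{2T}{q}
\qquad(q\ge1).
\]
An alternative is to skip the expectation and use the event inclusion directly: $\Pr^{(\beta)}(\tau_T>q)\le\Pr^{(0)}(\tau_T>q)$, followed by Markov's inequality for the baseline.

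\textbf{Main obstacle.} The only delicate point is that the comparison in \cref{lem:baseline-order} concerns running maxima, not positions. An odd $\beta$-position may sit one below the baseline, so the argument must invoke the ordering of running maxima, which that lemma established, rather than a positionwise order. I would also note that the event inclusion holds tick by tick, so integrating $\tau=\sum_{t\ge0}\one_{\{\tau>t\}}$ transfers the ordering to expectations without assuming finiteness in advance.
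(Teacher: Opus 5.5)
Your proposal is correct and follows essentially the same route as the paper. The coupling of \cref{lem:baseline-order} orders the crossing events tick by tick, integrating this gives $\E^{(\beta)}\tau_T\le\E^{(0)}\tau_T$, \cref{lem:baseline} supplies the bound $2T$, and Markov's inequality gives the tail. The paper states the comparison as the marginal inequality $\Pr^{(\beta)}(\tau_T>q)\le\Pr^{(0)}(\tau_T>q)$ and sums tails by Tonelli rather than using the pathwise $\tau^{(\beta)}_T\le\tau^{(0)}_T$, which is exactly the alternative you already note.
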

\begin{proof}
\Cref{lem:baseline-order} gives
$\Pr^{(\beta)}(\tau_T>q)\le\Pr^{(0)}(\tau_T>q)$ at every integer $q\ge0$.
Sum these tails. Tonelli's theorem for nonnegative
summands~\cite[Theorem~1.7.2]{Durrett2019} gives
\[
 \E^{(\beta)}\tau_T
 =\sum_{q=0}^\infty\Pr^{(\beta)}(\tau_T>q)
 \le\sum_{q=0}^\infty\Pr^{(0)}(\tau_T>q)
 =\E^{(0)}\tau_T\le2T,
\]
where the last inequality is \cref{lem:baseline}. Markov's inequality
proves the tail bound.
\end{proof}

\begin{proof}[Proof of \cref{thm:joint}(ii)]
First, $\mathcal B$ contains a sequence $\beta_j\to b_*=0$ (or contains zero). Let
$(M,Q)$ witness $L\in\mathcal U_{\mathcal B,\alpha}$, and fix a yes-instance $x$
with budget $q=Q(|x|)$. For every $j$,
$\Pr^{(\beta_j)}(\tau_{A_M(x)}\le q)\ge\alpha$, and
\cref{lem:coupling} gives
$\Pr^{(0)}(\tau_{A_M(x)}\le q)\ge\alpha-q\beta_j$. Letting $j\to\infty$ shows
that the same $(M,Q)$ witnesses $L\in\CC_{\eta_0,\alpha}=\Ptime$. Hence
$\mathcal U_{\mathcal B,\alpha}\subseteq\Ptime$.

Conversely, let $M$ decide $L$ within the integer polynomial bound $H(n)$.
Choose an integer $k_{\alpha,\rm lin}\ge2/(1-\alpha)$ and set $Q(n)=k_{\alpha,\rm lin}(H(n)+1)$, an integer
polynomial with $Q(n)\ge1$. On a yes-instance with accepting index
$T=A_M(x)\le H(n)$: if $T=0$ then $\tau_0=0$; otherwise
\cref{lem:unif-expectation} and monotonicity in the target give, for
every $\beta\in[0,1/2]$,
\[
 \Pr^{(\beta)}(\tau_T>Q(n))\le\frac{2T}{Q(n)}
 \le\frac{2H(n)}{k_{\alpha,\rm lin}(H(n)+1)}\le1-\alpha.
\]
The same $M,Q$ therefore work uniformly over every $\mathcal B$, and
$\Ptime\subseteq\mathcal U_{\mathcal B,\alpha}$.

By \cref{lem:baseline-order}, the lower envelope equals the baseline
hitting probability at every finite horizon, so
$q_{\alpha,\mathcal B}(T)=q^{(0)}_\alpha(T)$ for all $T\in\N$.
The limit in \cref{thm:sharp} gives the exact linear leading constant.
\end{proof}

\begin{proof}[Proof of \cref{thm:joint}(i)]
Let $b_*>0$, so every $\beta\in \mathcal B$ lies in $[b_*,1/2]$.

For $\EXP\subseteq\mathcal U_{\mathcal B,\alpha}$, let $M$ decide $L$ within
$H(n)=2^{p(n)}$ transitions, where $p$ has nonnegative integer coefficients and $p(n)\ge1$
for every $n\in\N$. Choose an integer
$k_{\alpha,\mathcal B}\ge\max\{1,b_*^{-1}\ell_\alpha\}$ and put
$Q(n)=2k_{\alpha,\mathcal B}p(n)$, an allowed polynomial. For any $\beta\in \mathcal B$ and any integer $r$ with
$2r\le Q(n)$, \cref{lem:amplify} gives, from every state below a target
$T\ge1$,
\[
 \Pr^{(\beta)}(\tau_T>2r)\le\exp\!\left(-\frac{\beta r}{h(T)}\right)
 \le\exp\!\left(-\frac{b_*r}{h(T)}\right).
\]
Take $T=H(n)$, so $h(T)=p(n)$, and
$r=k_{\alpha,\mathcal B}p(n)\in\N$. Then
$2r\le Q(n)$ and $b_*r\ge b_*k_{\alpha,\mathcal B}p(n)\ge p(n)\ell_\alpha$, so
$\Pr^{(\beta)}(\tau_{H(n)}>Q(n))\le1-\alpha$ for every $\beta\in \mathcal B$
simultaneously. Since $A_M(x)\le H(n)$ and crossing times are monotone in
the target, $(M,Q)$ witnesses $L\in\mathcal U_{\mathcal B,\alpha}$.

For the reverse inclusion, $\mathcal B$ is nonempty; fix any $\beta_0\in \mathcal B$, so
$\beta_0\ge b_*>0$. A common witness for $\mathcal B$ also witnesses membership
for the single clock $\eta_{\beta_0}$, and
$\CC_{\eta_{\beta_0},\alpha}=\EXP$ by \cref{thm:activation}, so
$\mathcal U_{\mathcal B,\alpha}\subseteq\EXP$.

It remains to identify the leading constant. Couple the geometric
waiting times of \cref{thm:sharp} using common iid uniforms $U_{r,s}$:
put $G_r^{(\beta)}=\min\{s\ge1:U_{r,s}\le\beta\}$.
Then $G_r^{(\beta)}$ and the cycle lengths
$D_r^{(\beta)}=G_r^{(\beta)}+1$ are nonincreasing in $\beta>0$.
The resulting completed-heavy-sample counts $N_q^{(\beta)}$ are
nondecreasing in $\beta$. For each parameter this is the renewal representation in
\cref{thm:sharp}, independent of the heavy samples.

For $T>2q$ and $q\ge1$, the lower bound in
\eqref{eq:sharp-sandwich} and this count ordering give
\[
 L_{b_*}(T,q):=1-\E\left(1-\frac1{h(T)}\right)^{N_q^{(b_*)}}
 \le\inf_{\beta\in \mathcal B}\Pr^{(\beta)}(\tau_T\le q)
 \le\Pr^{(b_*)}(\tau_T\le q),
\]
where the last inequality is \cref{lem:coupling}. For
$q=\lfloor c\log_2T\rfloor$ with fixed $c>0$, the fixed-parameter
renewal argument in \cref{thm:sharp} gives
$N_q^{(b_*)}/\log_2T\to c/(1+1/b_*)$ in probability and
$h(T)/\log_2T\to1$. For natural logarithms, the expansion
$N_q^{(b_*)}\ln(1-1/h(T))=-N_q^{(b_*)}/h(T)+o(1)$
holds in probability, since $N_q^{(b_*)}\le q=O(\log_2T)$.
The powers are bounded by one, so the expectation argument in
\cref{thm:sharp} applies and their expectations tend to
$\exp(-c/(1+1/b_*))$. Both bounding expressions
converge to $1-\exp(-c/(1+1/b_*))$, so the lower envelope has that limit.

Set $c_*=(1+1/b_*)\ell_\alpha$. For any
$0<\varepsilon<c_*$, the limiting lower-envelope probabilities at
$c_* -\varepsilon$ and $c_*+\varepsilon$ are respectively below and
above $\alpha$. Fix error margins smaller than half these two positive
gaps. For all sufficiently large $T$ the lower-envelope probabilities
fall below and above $\alpha$, respectively.
Monotonicity in the integer budget squeezes the common
quantile between the corresponding floor budgets for all sufficiently
large $T$. Dividing by $\log_2T$ and letting $\varepsilon\downarrow0$
proves the limit.
\end{proof}